\documentclass[a4paper,UKenglish,cleveref,autoref,thm-restate,colorlinks]{lipics-v2021}

\pdfoutput=1 %
\hideLIPIcs  %

\title{Perturbation equivalence in labelled Markov chains} %

\titlerunning{Perturbation equivalence in labelled Markov chains} %

\author{{Syyeda Zainab} Fatmi}{University of Oxford, United Kingdom}{}{https://orcid.org/0000-0001-7899-8665}{Clarendon Fund}%

\author{Stefan Kiefer}{University of Oxford, United Kingdom}{}{https://orcid.org/0000-0003-4173-6877}{}

\author{{James C.~A.} Main}{University of Oxford, United Kingdom}{}{https://orcid.org/0009-0000-8471-4833}{}

\author{David Parker}{University of Oxford, United Kingdom}{}{https://orcid.org/0000-0003-4137-8862}{}

\authorrunning{{S.~Z.} Fatmi, S.~Kiefer, {J.~C.~A.} Main, D. Parker} %

\Copyright{{Syyeda Zainab} Fatmi, Stefan Kiefer, {James C.~A.} Main, David Parker} %

\ccsdesc[100]{Theory of computation~Probabilistic computation}
\ccsdesc[100]{Theory of computation~Logic and verification}
\ccsdesc[100]{Theory of computation~Verification by model checking}

\keywords{labelled Markov chain, language equivalence, probabilistic bisimilarity}

\category{} %

\relatedversion{} %

\supplement{}%
\supplementdetails[linktext={UL/UB implementation}]{Source code}{https://github.com/zainabfatmi/prism/blob/bisimulation/prism/src/explicit/bisim/UniversalBisimulation.java}

\funding{\emph{Stefan Kiefer} and \emph{James C.~A.~Main}: EPSRC grant EP/Z536179/1}

\acknowledgements{}%

\nolinenumbers

\EventEditors{John Q. Open and Joan R. Access}
\EventNoEds{2}
\EventLongTitle{42nd Conference on Very Important Topics (CVIT 2016)}
\EventShortTitle{CVIT 2016}
\EventAcronym{CVIT}
\EventYear{2016}
\EventDate{December 24--27, 2016}
\EventLocation{Little Whinging, United Kingdom}
\EventLogo{}
\SeriesVolume{42}
\ArticleNo{23}

\usepackage[utf8]{inputenc}
\usepackage[T1]{fontenc}
\usepackage{amsmath, amssymb, amsthm, verbatim, mathtools, thm-restate}
\usepackage{IEEEtrantools}
\usepackage{mathrsfs, dsfont}
\usepackage[ruled, vlined]{algorithm2e}
\SetKw{KwGuess}{guess}
\SetKwComment{Comment}{$\rhd$ }{}
\SetCommentSty{}
\usepackage{breakcites}
\usepackage{multirow}

\usepackage{todonotes}

\usepackage{tikz}
\usetikzlibrary{automata,positioning,backgrounds}

\numberwithin{equation}{section}
\numberwithin{figure}{section}

\usepackage{pdflscape}
\usepackage{longtable}
\usepackage{booktabs}
\newcolumntype{R}[1]{>{\raggedleft\let\newline\\\arraybackslash\hspace{0pt}}m{#1}}
\newcolumntype{M}[1]{>{\centering\let\newline\\\arraybackslash\hspace{0pt}}m{#1}}
\newcolumntype{Y}{>{\centering\arraybackslash}X}
\usepackage[table]{xcolor}

\definecolor{custom-main}{rgb}{0.0, 0.0, 0.0}
\definecolor{custom-sub}{rgb}{0.0, 0.0, 0.0}
\definecolor{custom-purple}{rgb}{0.8, 0.1, 0.8} %
\definecolor{custom-amethyst}{rgb}{0.6, 0.4, 0.8}
\definecolor{custom-red}{rgb}{0.9, 0.1, 0.1}
\definecolor{custom-green}{rgb}{0.2, 0.7, 0.4}
\definecolor{custom-turquoise}{rgb}{0.3, 0.8, 0.8}

\tikzset{
  >=stealth,
  initial text=,
  left sided/.style={
    draw=none,
    append after command={
      [shorten <= -0.5\pgflinewidth]
      (\tikzlastnode.north west) edge[dashed](\tikzlastnode.south west)
    }
  },
  two sided/.style={
    draw=none,
    append after command={
      [shorten <= -0.5\pgflinewidth]
      (\tikzlastnode.north west) edge[dashed](\tikzlastnode.south west)
      (\tikzlastnode.north east) edge[dashed](\tikzlastnode.south east)
    }
  },
  right sided/.style={
    draw=none,
    append after command={
      [shorten <= -0.5\pgflinewidth]
      (\tikzlastnode.north east) edge[dashed](\tikzlastnode.south east)
    }
  },
  every state/.style={circle, minimum size=1cm},
  every path/.style={thick},
  initial text=,
  node distance=1cm,
    dotnode/.style={
    state,
    white,
    text=black,
    inner sep=0,
    align=center
  },
  emptynode/.style={
    minimum size=0,
    inner sep=0,
    outer sep=0
  },
    diagonal fill/.style 2 args={
  fill=#2, path picture={
    \fill[#1, sharp corners] (path picture bounding box.south west) -| %
    (path picture bounding box.north east) -- cycle;
    }
  },
  reversed diagonal fill/.style 2 args={
    fill=#2, path picture={
      \fill[#1, sharp corners] (path picture bounding box.north west) |- 
      (path picture bounding box.south east) -- cycle;
    }
  }
}

\tikzstyle{stochasticc} = [fill, circle, minimum size=0.1cm, inner sep=0.05cm, outer sep=0cm]
\tikzstyle{stochastics} = [fill, rectangle, minimum size=0.1cm, inner sep=0.05cm, outer sep=0cm]

\DeclareMathOperator*{\argmin}{argmin}

\newcommand{\init}{\mathsf{init}}

\newcommand{\IR}{\mathbb{R}}

\newcommand{\IN}{\mathbb{N}}
\newcommand{\IQ}{\mathbb{Q}}

\newcommand{\INpos}{\IN_{>0}}

\newcommand{\ccInt}[2]{\left[#1, #2\right]}
\newcommand{\ocInt}[2]{\left]#1, #2\right]}

\newcommand{\leLex}{\mathrel{\leq_{\mathsf{lex}}}}

\newcommand{\nc}{\textsf{NC}}
\newcommand{\logspace}{\textsf{L}}
\newcommand{\nlogspace}{\textsf{NL}}
\newcommand{\conlogspace}{\textsf{co-NL}}
\newcommand{\ptime}{\textsf{P}}

\newcommand{\np}{\textsf{NP}}
\newcommand{\coNP}{\textsf{co-NP}}

\newcommand{\ceql}{\ensuremath{\mathsf{C}_{=}\mathsf{L}}}
\newcommand{\tczero}{\ensuremath{\mathsf{TC}^{0}}}

\newcommand{\subsets}[1]{2^{#1}}

\newcommand{\dist}[1]{\mathcal{D}(#1)}
\newcommand{\measure}{\mu} %
\newcommand{\measureB}{\nu} %
\newcommand{\cyl}[1]{\mathsf{Cyl}\left(#1\right)}

\newcommand{\supp}[1]{\mathsf{supp}(#1)}

\newcommand{\event}{E}

\newcommand{\iPos}{j}
\newcommand{\iLast}{r}
\newcommand{\iSeq}{n} %

\newcommand{\word}{w}

\newcommand{\objective}{\event}

\newcommand{\spanVect}[1]{\mathsf{span}(#1)}

\newcommand{\oneVect}{\mathbf{1}}

\newcommand{\vectSpace}{V}

\newcommand{\mchain}{\mathcal{M}}
\newcommand{\trans}{\tau}
\newcommand{\transB}{\sigma}
\newcommand{\states}{S}
\newcommand{\state}{s}
\newcommand{\stateB}{t}
\newcommand{\stateC}{q}
\newcommand{\mchainTuple}{(\states, \trans)}

\newcommand{\lbl}{\ell}
\newcommand{\lblSet}{L}
\newcommand{\lmc}{(\states, \trans, \lbl, \lblSet)}

\newcommand{\diag}{\states^2_\Delta}

\newcommand{\probL}[1]{\mathbb{P}_{#1}}
\newcommand{\probPV}[2]{\probL{#1}^{#2}}
\newcommand{\probLV}[2]{\probL{#1}^{#2}}

\newcommand{\cons}[1]{\mathsf{cons}(#1)} %

\newcommand{\stable}{\mathsf{Stable}}

\newcommand{\langEquiv}[1]{\equiv_{#1}}
\newcommand{\bisim}[1]{\sim_{#1}}
\newcommand{\eClass}{C}
\newcommand{\eClassB}{D}
\newcommand{\eClassC}{E}
\newcommand{\eClassTwo}[3]{C_{#1, #2}}

\newcommand{\plays}[1]{\mathsf{Paths}(#1)}
\newcommand{\play}{\rho}
\newcommand{\hist}{h}

\newcommand{\mdp}{\mathcal{P}}
\newcommand{\actions}{A}
\newcommand{\action}{\mathsf{m}}

\newcommand{\strat}{\pi}

\newcommand{\ltrans}[2]{M_{#1}^{#2}}

\newcommand{\literal}{\lambda}
\newcommand{\lit}[1]{\literal_{#1}}
\newcommand{\litset}{\Lambda}
\newcommand{\posset}{P}
\newcommand{\negset}{N}
\newcommand{\neglit}[1]{\bar{#1}}

\newcommand\vartextvisiblespace[1][.5em]{%
  \makebox[#1]{%
    \kern.07em
    \vrule height.3ex
    \hrulefill
    \vrule height.3ex
    \kern.07em
  }%
}

\makeatletter
\def\squarecorner#1{
    \pgf@x=\the\wd\pgfnodeparttextbox%
    \pgfmathsetlength\pgf@xc{\pgfkeysvalueof{/pgf/inner xsep}}%
    \advance\pgf@x by 2\pgf@xc%
    \pgfmathsetlength\pgf@xb{\pgfkeysvalueof{/pgf/minimum width}}%
    \ifdim\pgf@x<\pgf@xb%
        \pgf@x=\pgf@xb%
    \fi%
    \pgf@y=\ht\pgfnodeparttextbox%
    \advance\pgf@y by\dp\pgfnodeparttextbox%
    \pgfmathsetlength\pgf@yc{\pgfkeysvalueof{/pgf/inner ysep}}%
    \advance\pgf@y by 2\pgf@yc%
    \pgfmathsetlength\pgf@yb{\pgfkeysvalueof{/pgf/minimum height}}%
    \ifdim\pgf@y<\pgf@yb%
        \pgf@y=\pgf@yb%
    \fi%
    \ifdim\pgf@x<\pgf@y%
        \pgf@x=\pgf@y%
    \else
        \pgf@y=\pgf@x%
    \fi
    \pgf@x=#1.5\pgf@x%
    \advance\pgf@x by.5\wd\pgfnodeparttextbox%
    \pgfmathsetlength\pgf@xa{\pgfkeysvalueof{/pgf/outer xsep}}%
    \advance\pgf@x by#1\pgf@xa%
    \pgf@y=#1.5\pgf@y%
    \advance\pgf@y by-.5\dp\pgfnodeparttextbox%
    \advance\pgf@y by.5\ht\pgfnodeparttextbox%
    \pgfmathsetlength\pgf@ya{\pgfkeysvalueof{/pgf/outer ysep}}%
    \advance\pgf@y by#1\pgf@ya%
}
\makeatother

\pgfdeclareshape{square}{
    \savedanchor\northeast{\squarecorner{}}
    \savedanchor\southwest{\squarecorner{-}}

    \foreach \x in {east,west} \foreach \y in {north,mid,base,south} {
        \inheritanchor[from=rectangle]{\y\space\x}
    }
    \foreach \x in {east,west,north,mid,base,south,center,text} {
        \inheritanchor[from=rectangle]{\x}
    }
    \inheritanchorborder[from=rectangle]
    \inheritbackgroundpath[from=rectangle]
}

\begin{document}

\maketitle

\begin{abstract}
Behavioural equivalences such as language equivalence and probabilistic bisimilarity are fundamental techniques for reducing the size of probabilistic models prior to verification.
However, these equivalences are sensitive to the precise values of transition probabilities, making them unsuitable in applications where probabilities are obtained from measurements or subject to approximation.
Motivated by settings in which the support graph of a labelled Markov chain is known but the transition probabilities are uncertain, we study robust variants of these equivalences.
We introduce universal perturbation equivalence, which captures a variant of equivalence that is resilient to all perturbations of transition probabilities: two states or distributions are universally (perturbation) equivalent if they remain equivalent under every assignment of transition probabilities consistent with the support graph.
We also consider the dual notion of existential (perturbation) equivalence, which holds whenever there exists an assignment of transition probabilities that yields equivalence.

We establish that, for states, universal language equivalence coincides with universal probabilistic bisimilarity and develop a characterisation that yields a polynomial-time partition refinement algorithm.
We implement the algorithm and demonstrate experimentally that it is effective as a technique for robust model reduction.
We further show that universal language equivalence for distributions is closely related to the state case, and prove $\nlogspace$-completeness of deciding universal equivalence for both states and distributions.
Turning to existential equivalence, we prove that, for states, existential language equivalence coincides with existential probabilistic bisimilarity and deterministic witness transition functions always suffice, leading to an $\np$-completeness result.
In contrast, we show that existential language equivalence for distributions is $\exists\mathbb{R}$-complete.
Together, these results provide a complete complexity landscape for perturbation behavioural equivalence in labelled Markov chains.
\end{abstract}

\newpage

\section{Introduction}
\subparagraph*{Probabilistic models.}
\emph{Labelled Markov chains} (LMCs) are a fundamental model for probabilistic systems in verification, with states labelled with atomic propositions, which capture known properties or facts about the state, and transitions governed by probabilities that reflect the stochastic dynamics of the system.
\emph{Labelled Markov decision processes} (MDPs) generalise labelled Markov chains with the addition of non-determinism.

A major challenge in the verification of such systems is the state-space explosion problem, which often renders model checking computationally expensive or even infeasible.
One of the techniques used to mitigate this problem is to identify and merge system states that are behaviourally indistinguishable \cite{BK08}, thereby reducing the time required to verify properties of the system \cite{DBLP:conf/tacas/KatoenKZJ07,AKJB25}.
We study two prominent notions of behavioural equivalence.

\subparagraph*{Behavioural equivalence.}
\emph{Language equivalence}, also known as \emph{trace equivalence}, for labelled Markov chains compares two states or distributions over states by asking whether they induce the same probability for every measurable set of traces over the label alphabet. Intuitively, two states (or distributions) are considered language-equivalent if their observable executions are indistinguishable.
The origins of this language-based notion of equivalence can be traced back to work in automata theory by Sch{\"{u}}tzenberger \cite{DBLP:journals/iandc/Schutzenberger61b} and Paz \cite{Paz71}.
In practice and theory, language equivalence is attractive because it can be decided in polynomial time \cite{DBLP:journals/iandc/Schutzenberger61b, DBLP:journals/siamcomp/Tzeng92, DBLP:journals/ijfcs/DoyenHR08} and in the lower complexity classes \nc~\cite{DBLP:journals/ipl/Tzeng96} and \ceql~\cite{DBLP:conf/stacs/CernyS26}.

\emph{Probabilistic bisimilarity}, or \emph{bisimilarity} for short, is one of the most widely studied notions of behavioural equivalence for probabilistic models, introduced by Larsen and Skou \cite{DBLP:conf/popl/LarsenS89}, who adapted the notion of \emph{lumpability} \cite{KS60} to the setting of labelled Markov chains. Compared to language equivalence, bisimilarity is a finer equivalence relation that captures not only a system's observable behaviour, but also its internal probabilistic branching structure. Intuitively, two states are considered bisimilar if they have the same labels and they transition into each equivalence class with the same probabilities.
Bisimilarity can also be computed in
polynomial time \cite{DBLP:conf/cav/Baier96, DBLP:conf/fossacs/ChenBW12} and is implemented in state-of-the-art probabilistic verification tools such as PRISM \cite{DBLP:conf/cav/KwiatkowskaNP11} and Storm \cite{DBLP:journals/sttt/HenselJKQV22}.

\subparagraph*{Robustness.}
Behavioural equivalences are notoriously sensitive to the exact values of the transition probabilities. In fact, in \cite{DBLP:conf/qest/JaegerMLM14,DBLP:conf/fsttcs/Kiefer021,DBLP:conf/cav/FatmiKPB25}, it was observed that small changes to these probabilities can drastically change the behaviour of states. This motivates the need for more robust notions of equivalences when probabilities arise from measurements or approximations.

In practice, transition probabilities are often uncertain even when the underlying graph is known. \emph{Interval} and \emph{parametric Markov chains} model such uncertainty explicitly using probability intervals \cite{DBLP:conf/lics/JonssonL91} and expressions over parameters \cite{DBLP:conf/ictac/Daws04}, respectively. In this paper, we take a deliberately qualitative viewpoint, assuming that the graph of possible transitions is known while transition probabilities are unknown or subject to change. This corresponds to an \emph{incomplete Markov chain} \cite{DBLP:conf/tacas/BenediktLW13}, a special case of an interval Markov chain where allowed transitions range over $[0,1]$ and disallowed edges have the interval $[0,0]$.

\subparagraph*{Perturbation equivalence.}
We study the influence of perturbations on behavioural equivalences through concepts of \emph{perturbation equivalence}.
A robust form of equivalence can be obtained by requiring that states remain equivalent under \emph{all perturbations} of the transition function: we say that two states are \emph{universally perturbation language equivalent} (UL) if they are language-equivalent for all probability assignments for (existing) transitions of the Markov chain.
For $\varepsilon>0$, we also consider $\varepsilon$-UL, which requires pairs of states to be language-equivalent under every perturbed transition function in the $\varepsilon$-neighbourhood of the original transition function. We show that, for every $\varepsilon>0$, UL and $\varepsilon$-UL coincide. Thus, perhaps surprisingly, robustness to arbitrarily small perturbations implies robustness to all perturbations.
We also investigate the dual problem of the existence of a perturbation that makes states language-equivalent: we say that two states are \emph{existentially perturbation language equivalent} (EL) if there exists a valuation of the transition probabilities under which the states are equivalent.
We further consider perturbation language equivalence for distributions over states.
Analogous variants are defined for bisimilarity, yielding \emph{universal perturbation bisimilarity} (UB), $\varepsilon$-UB and \emph{existential perturbation bisimilarity} (EB).

\begin{figure}[ht]
  \centering
  \begin{tikzpicture}[every state/.style={minimum size=0.7cm}]
    \node[state] at (5.5,3) (lml) {$p$};
    \node[state] at (7.5,3) (lmr) {$w$};
    \node[state,fill=lipicsYellow] at (6.5,1.5) (lb) {};
    \path[-stealth] (lml) edge (lb);
    \path[-stealth] (lmr) edge (lb);
    \path[-stealth] (lb) edge[bend right] node[below right] {$\frac{1}{3}$} (lmr);
    \path[-stealth] (lb) edge[bend left] node[below left] {$\frac{1}{3}$} (lml);
    \path[-stealth] (lb) edge[loop below] node[below] {$\frac{1}{3}$} (lb);
    \node[state] at (1,3) (q) {$q$};
    \node[state] at (3,3) (r) {$r$};
    \node[state] at (0,1.5) (x) {$x$};
    \node[state] at (2,1.5) (a) {};
    \node[state] at (4,1.5) (b) {$z$};
    \node[state] at (1,0) (y) {$y$};
    \node[state] at (-1,0) (d) {};
    \node[state, fill=lipicsYellow] at (3,0) (c) {};
    \path[-stealth] (q) edge node[above left] {$\frac{1}{3}$} (x);
    \path[-stealth] (q) edge node[above right] {$\frac{2}{3}$} (a);
    \path[-stealth] (r) edge (b);
    \path[-stealth] (x) edge node[above right] {$\frac{1}{2}$} (y);
    \path[-stealth] (x) edge node[above left] {$\frac{1}{2}$} (d);
    \path[-stealth] (d) edge[loop left] (d);
    \path[-stealth] (a) edge (c);
    \path[-stealth] (b) edge node[above left] {$\frac{2}{3}$} (c);
    \path[-stealth] (y) edge[loop left] (y);
    \path[-stealth] (c) edge[loop right] (c);
    \path[-stealth,overlay] (b) edge[out=310,in=320,looseness=1.4] (y);
    \node at (4.5,-0.5) (z) {$\frac{1}{3}$};
    \node[state] at (10,3) (s) {$s$};
    \node[state] at (9,1.5) (t) {$t$};
    \node[state] at (11,1.5) (u) {$u$};
    \node[state] at (11,-0.1) (v) {$v$};
    \node[state, fill=lipicsYellow] at (9,-0.1) (w) {};
    \path[-stealth] (s) edge (t);
    \path[-stealth] (t) edge node[above] {$\frac{1}{2}$} (u);
    \path[-stealth] (t) edge node[left] {$\frac{1}{2}$} (w);
    \path[-stealth] (u) edge (w);
    \path[-stealth] (w) edge[loop left] (w);
    \path[-stealth] (v) edge node[right] {$\frac{1}{3}$} (u);
    \path[-stealth] (v) edge node[below] {$\frac{2}{3}$} (w);
  \end{tikzpicture}
  \caption{A labelled Markov chain to illustrate the concepts of perturbation equivalence. All omitted transition probabilities are 1. Colours indicate the labelling of states.}
  \label{figure:intro-example}
\end{figure}

Consider the labelled Markov chain in \cref{figure:intro-example}.
The states $p$ and $w$ illustrate universal equivalence. They are both UL and UB, meaning that they remain language-equivalent and bisimilar for all perturbations of the transition function, provided that no new transitions are added. %
Similarly, the states $x$ and $y$ are both UL and UB.
While the states $q$ and $r$ are language-equivalent, hence EL, they are not UL.
This can be witnessed by any perturbation of the outgoing probabilities of $q$.
Although $q$ and $r$ are not bisimilar, they are EB as they can be made bisimilar, e.g., by assigning probability~$0$ to the transitions from $q$ to $x$ and from $z$ to $y$ (and assigning probability~$1$ to the other outgoing transitions of $q$ and $z$).
The states $t$ and $u$ are neither language-equivalent nor bisimilar under the given transition probabilities. Nevertheless, they are both EL and EB, since assigning probability~$0$ to the transition from $t$ to $u$ makes them equivalent in both senses.
Similarly, suitable perturbations of the transition probabilities make the states pairs $(s, v)$ and $(t, v)$ both language-equivalent and bisimilar, so they are also EL and EB.
In contrast, no variation of the transition probabilities can make $s$ and $t$ equivalent and, thus, they are neither EL nor EB.
Therefore, neither EL nor EB are equivalence relations.
This observation has important algorithmic consequences.  Efficient practical approaches, such as partition refinement, rely on the target relation being an equivalence relation. As a result, these techniques cannot be applied directly for EL and EB.

\subparagraph*{Contributions}
Our contributions can be summarised as follows.
First, we prove that, for states, UL coincides with UB (\cref{corollary:ule:upb}) and we develop a structural characterisation (\cref{theorem:ule-states}) that underpins an efficient decision procedure, establishing that the problem is $\nlogspace$-complete.
Second, we show that, for any $\varepsilon > 0$, UL, $\varepsilon$-UL and $\varepsilon$-UB are equivalent for states (\cref{proposition:ule:epsilon}), hence, robustness to arbitrarily small perturbations guarantees robustness to all perturbations.
Third, we implement a polynomial-time partition refinement algorithm for computing these universal equivalence relations and report experimental evidence demonstrating that it is efficient and useful in practice for reducing the size of the state space before verification under uncertain transition probabilities.
Fourth, we show that two distributions are UL if and only if they assign the same probability to every equivalence class induced by UL for states (\cref{theorem:ule-distributions}) and, consequently, UL for distributions is also $\nlogspace$-complete.
Fifth, we establish that EL for states coincides with EB (\cref{thm:ele-epb}) and show that witness transition functions for these existential equivalence notions can always be chosen to be deterministic (\cref{thm:ele-det}), which allows us to prove that these decision problems are $\np$-complete.
Finally, we analyse the EL problem for distributions and prove that it is $\exists\IR$-complete,\footnote{The hardness of EL for distributions is established by a reduction from the non-negative matrix factorisation (NMF) problem (cf.\ Section~\ref{appendix:ele:nmf}). The NMF problem is \np-hard~\cite{DBLP:journals/siamjo/Vavasis09} and shown to be complete for the existential theory of the reals ($\exists\IR$-complete) in an unpublished paper~\cite{shitov18NMF,shitov21NMF}. We assume $\exists\IR$-completeness of the NMF problem in the sequel.} revealing a sharp increase in computational complexity compared to the universal case.
\cref{table:complexity} reiterates our complexity results along with theorem references.

\begin{table}[bt]
  \caption{Summary of complexity results.
    We use the subscripts $S$ and $\mathcal{D}$ to denote the variants of the language equivalence problems that study the equivalence of states and distributions respectively.
  }\label{table:complexity}
  \centering
    \begin{tabular}{ M{0.35\textwidth} M{0.3\textwidth} M{0.26\textwidth} }
      \toprule
      UB, UL$_S$, and UL$_\mathcal{D}$ & EB and EL$_S$ & EL$_\mathcal{D}$ \\
            \nlogspace-c. (Thms.\ \ref{theorem:ule:nl-states} and \ref{theorem:ule:d:complexity})
            & \np-c. (Thm.\ \ref{theorem:complexity:ele-states})
            & $\exists\IR$-c. (Thm.\ \ref{theorem:nmf-ele})
      \\
      \bottomrule
  \end{tabular}
\end{table}

\subparagraph*{Related work.}
Kiefer and Tang \cite{DBLP:conf/fsttcs/Kiefer020} study whether there exist memoryless strategies under which two labelled MDPs are equivalent, or inequivalent, with respect to language equivalence or bisimilarity.
Varying the transition probabilities of an LMC while preserving its support can be viewed as resolving the choices of a labelled deterministic MDP. While this gives a reduction of our perturbation equivalence problems to their setting, it does not yield our finer complexity bounds or structural results for LMCs (cf.\ Appendix~\ref{appendix:mdps}). %

Our work is also related to the study of behavioural equivalences for probabilistic models with uncertain transition probabilities. Hashemi et al.\ investigate bisimulation and minimisation for interval MDPs \cite{DBLP:journals/corr/HashemiHK14,DBLP:conf/lata/HashemiH0STW16}.
They show that minimisation for interval MDPs is $\coNP$-complete in general, but becomes solvable in polynomial time when the branching degree is bounded by a constant. Unlike our setting, they retain explicit probability intervals.

A different notion of bisimulation for LMCs, called \emph{robust (probabilistic) bisimilarity}, was introduced in \cite{DBLP:conf/cav/FatmiKPB25}.
It characterises continuity of the \emph{probabilistic bisimilarity distance} (see \cite{DBLP:conf/concur/DesharnaisGJP99} for a definition) for bisimilar state pairs \cite{concur}.
We briefly discuss the relationship between UB and robust bisimilarity.
While both equivalence relations are a refinement of bisimilarity, they capture different notions of robustness and are, therefore, not directly comparable.
Intuitively, UB requires two states to remain bisimilar under every structure-preserving perturbation of the transition probabilities. In contrast, loosely speaking, robust bisimilarity requires two states to remain behaviourally similar under all perturbations of the transition probabilities, including those that add transitions. We illustrate this distinction in Appendix~\ref{appendix:robust}.

\subparagraph*{Outline.}
The remainder of the paper is organised as follows. \cref{section:preliminaries} recalls basic definitions. \cref{section:problems} formalises universal and existential perturbation language equivalence and probabilistic bisimilarity. \cref{section:ue} studies universal equivalence, developing a fixed-point characterisation of UL for states, its coincidence with UB, complexity and algorithmic results, along with an experimental evaluation. \cref{section:ee} studies existential equivalence, establishing the coincidence of EB and EL for states, via the existence of deterministic witness transition functions, and complexity results. \cref{section:conclusion} concludes and outlines directions for future work. Omitted proofs appear in the appendix.

\section{Preliminaries}
\label{section:preliminaries}

\subparagraph*{Probability.} Let $A$ be a countable set.
We write $\dist{A}$ for the set of distributions over $A$, i.e., the set of functions $\measure\colon A\to\ccInt{0}{1}$ such that $\sum_{a\in A}\measure(a) = 1$.
The support of a distribution $\measure\in\dist{A}$ is $\supp{\measure} = \{a\in A\mid \measure(a)> 0\}$.

\subparagraph*{Markov chains.}
A (finite) \textit{labelled Markov chain} is a tuple $\mchain = \lmc$ where $\states$ is a finite set of states, $\trans\colon\states\to\dist{\states}$ is a probabilistic transition function, $\lblSet$ is a finite set of labels and $\lbl\colon\states\to\lblSet$ is a labelling function.
For all $\state,\stateB\in\states$, we write $\trans(\state, \stateB)$ for $\trans(\state)(\stateB)$.
For all $\state\in\states$, the elements in $\supp{\trans(\state)}$ are the \textit{successors} of $\state$.
The transition function $\trans$ is \textit{deterministic} if, for all $\state\in\states$, $\trans(\state)$ is a Dirac distribution, i.e., each state has one successor.

A \textit{path} of $\mchain$ is a sequence $\state_0\state_1\state_2\ldots\in\states^\omega$ such that for all $\iPos\in\IN$, $\trans(\state_{\iPos}, \state_{\iPos+1}) > 0$, and a \textit{history} of $\mchain$ is a finite prefix of a path.
We let $\plays{\mchain}$ denote the set of paths of $\mchain$.
We extend the labelling function from states to histories and paths in the usual way, i.e., for all paths and histories $\state_0\state_1\ldots$, we let $\lbl(\state_0\state_1\ldots) = \lbl(\state_0)\lbl(\state_1)\ldots$.

Given an initial distribution $\measure_\init\in\dist{\states}$, $\mchain$ induces a distribution $\probPV{\measure_\init}{\trans}$ over $\plays{\mchain}$ (with its usual sigma-algebra) in the usual way.
We include the transition function $\trans$ in the notation $\probPV{\measure_\init}{\trans}$ as, in the sequel, we will vary the transition function of the Markov chain while leaving the state space and labelling untouched.
For any measurable $\event\subseteq\lblSet^\omega$, we abuse notation and write $\probPV{\measure_\init}{\trans}(\event)$ for 
\(
  \probPV{\measure_\init}{\trans}\left(\{\play\in\plays{\mchain}\mid\lbl(\play)\in\event\}\right).
\)
For any finite word $\word\in\lblSet^*$, we write $\probLV{\measure_\init}{\trans}(\word)$ for the probability $\probLV{\measure_\init}{\trans}(\word\lblSet^\omega)$ of the infinite continuations of $\word$.

\subparagraph*{Language-equivalence and probabilistic bisimilarity.}
Let $\measure$ and $\measureB\in\dist{\states}$ be initial distributions of $\mchain$.
We say that $\measure$ and $\measureB$ are \textit{$\trans$-language-equivalent}, denoted $\measure\langEquiv{\trans}\measureB$, if for all measurable $\objective\subseteq\lblSet^\omega$, $\probLV{\measure}{\trans}(\objective) = \probLV{\measureB}{\trans}(\objective)$.
We note that $\measure\langEquiv{\trans}\measureB$ if and only for all $\word\in\lblSet^*$, $\probLV{\measure}{\trans}(\word) = \probLV{\measureB}{\trans}(\word)$.
Two states $\state, \stateB\in\states$ are $\trans$-language-equivalent, denoted $\state\langEquiv{\trans}\stateB$, if their associated Dirac distributions are language-equivalent.

An equivalence relation $R\subseteq\states\times\states$ is a \emph{(probabilistic) bisimulation} with respect to $\trans$, if for all $(\state, \stateB)\in R$, $\lbl(\state) = \lbl(\stateB)$ and, for all equivalence classes $\eClass\in\states/R$, $\trans(\state, \eClass) = \trans(\stateB, \eClass)$.
Two states $\state, \stateB\in\states$ are $\trans$-bisimilar, denoted by $\state\bisim{\trans}\stateB$, if there exists a bisimulation $R$ such that $(\state, \stateB)\in R$.
The relation $\bisim{\trans}$ is the coarsest bisimulation on $\mchain$ and is called \textit{$\trans$-bisimilarity}.

\section{Perturbation equivalence problems}
\label{section:problems}
Let $\mchain = \lmc$ be an LMC.
We consider two types of problems for language-equivalence and bisimilarity: the \textit{universal perturbation equivalence problem} and the \textit{existential perturbation equivalence problem}.
Intuitively, the universal perturbation equivalence problems for these relations ask if two states are equivalent under all transition functions and the existential perturbation equivalence problems ask if there exists some transition function that makes the states equivalent.

If we quantify over all transition functions $\transB\colon\states\to\dist{\states}$ with no conditions, both of these problems are (almost) trivial.
For the universal problems, if two labels appear in $\mchain$, then any two distinct states $\state, \stateB$ can be made inequivalent by directing all the probability from $\state$ to itself and all probability from $\stateB$ to some state $\stateC\in\states$ with $\lbl(\stateC)\neq\lbl(\state)$.
On the other hand, for the existential problems, two states can be made equivalent if and only if they share the same label: this is witnessed by any transition function that gives both states a self-loop.

For these reasons, we limit our attention to transitions functions that are consistent with the transition function $\trans$ of $\mchain$.
A transition function is consistent with $\trans$ if it can be obtained from $\trans$ by changing transition probabilities without introducing new transitions.
Formally, a transition function $\transB\colon\states\to\dist{\states}$ is \textit{consistent with $\trans$} if for all $\state\in\states$,
\(\supp{\transB(\state)}\subseteq\supp{\trans(\state)}.\)
We let $\cons{\trans}$ denote the set of transition functions that are consistent with $\trans$.

We now formalise the four main notions of perturbation equivalence we consider.
The first two are based on language-equivalence.
\begin{definition}
  Let $\measure, \measureB\in\dist{\states}$.
  We say that $\measure$ and $\measureB$ are \emph{universally perturbation language-equivalent} (UL) if, for all $\transB\in\cons{\trans}$, we have $\measure\langEquiv{\transB}\measureB$. We say that $\measure$ and $\measureB$ are \emph{existentially perturbation language-equivalent} (EL) if there exists $\transB\in\cons{\trans}$ such that $\measure\langEquiv{\transB}\measureB$.
  Two states are UL (resp.~EL) whenever their associated Dirac distributions are UL (resp.~EL).
\end{definition}

We now provide the analogous definitions for bisimilarity.
\begin{definition}
  Let $\state, \stateB\in\states$.
  We say that $\state$ and $\stateB$ are \emph{universally perturbation bisimilar} (UB) if, for all $\transB\in\cons{\trans}$, we have $\state\bisim{\transB}\stateB$. We say that $\state$ and $\stateB$ are \emph{existentially perturbation bisimilar} (EB) if there exists $\transB\in\cons{\trans}$ such that $\state\bisim{\transB}\stateB$.
\end{definition}

In the following, we refer to the problem of deciding universal perturbation language equivalence as the UL problem.
We define the EL, UB and EB problems similarly.
We therefore use the acronyms UL, EL, UB, and EB to denote both the corresponding relations and decision problems, with the intended interpretation determined by the context.

\section{Universal equivalence}
\label{section:ue}
This section is concerned with the universal perturbation equivalence relations.
In \cref{section:ule:states}, we focus on the UL and UB relations for states, develop efficient algorithms to compute these relations and demonstrate their efficiency in practice for state-space reduction.
We study the UL relation for distributions in \cref{section:ule:dist}.
Technical details are deferred to Appendix~\ref{appendix:ue}.

We fix an LMC $\mchain = \lmc$ for the remainder of the section.

\subsection{Universal equivalence for states}\label{section:ule:states}
We first provide a fixed-point characterisation of the UL relation over states and use it to show that the UL relation over states is exactly the UB relation.
We then provide a polynomial-time partition refinement algorithm to compute the UL relation and show that deciding UL for states is $\nlogspace$-complete.
Finally, we implement our partition refinement algorithm to compute the UL and UB equivalence relations and report experimental evidence demonstrating that it is efficient and useful in practice for state-space reduction prior to probabilistic model checking, when the transition probabilities may be uncertain, obtained from measurements, or subject to approximation.

\subparagraph*{Fixed point characterisation of universal language equivalence.} %
We show that the restriction of the UL relation to states can be computed as the greatest fixed point of some operator $\subsets{\states\times\states}\to\subsets{\states\times\states}$.
We let $U = \{(\state, \stateB)\in\states^2\mid \state \text{ and } \stateB\text{ are UL}\}$ denote this restriction.

We first introduce some notation to define the fixed point operator.
We let $\diag = \{\, (s, s) \mid s \in S \,\}$ and $\mathord{\sim_0} = \{(s, t) \in S \times S \mid \ell(s) = \ell(t)\}$ respectively denote the equality and label-equality relations on $\states$, and let $\ccInt{\diag}{\mathord{\sim_0}} = \{R \subseteq S \times S \mid \diag\subseteq R \subseteq\mathord{\sim_0}\}$.
We consider the operator $\stable\colon \ccInt{\diag}{\mathord{\sim_0}}\to \ccInt{\diag}{\mathord{\sim_0}}$ defined, for all $R\in\ccInt{\diag}{\mathord{\sim_0}}$, by
\[
  \stable(R) = \diag \cup \{\, (s, t) \in R \mid \supp{\trans(s)} \times \supp{\trans(t)} \subseteq R\}.
\]

The operator $\stable$ is monotone under set inclusion and, thus, has a greatest fixed point by the Knaster-Tarski fixed point theorem \cite[Theorem~2.35]{DBLP:books/daglib/0023601}.
Our goal is to show that this greatest fixed point is $U$.
We establish this by reasoning on universal perturbation language equivalence over words of fixed length, which is formally defined, for all $n\in\INpos$, as the relation
  $U_n = \left\{(\state, \stateB)\in\states\times\states\mid
  \forall\word\in\lblSet^n,\,\forall\transB\in\cons{\trans},\,
  \probLV{\state}{\transB}(\word) = \probLV{\stateB}{\transB}(\word)
  \right\}$.
By definition, we have $U_1 = \mathord{\sim_0}$ and $U = \bigcap_{n\in\INpos}U_n$.
To show that $U$ is the greatest fixed point of $\stable$, it suffices to prove that for all $n\in\INpos$, $U_{n} = \stable^{n-1}(U_1)$.
We sketch a proof by induction.

\begin{restatable}{proposition}{theoremULStates}
  \label{theorem:ule-states}
  For all $n\in\INpos$, we have $U_{n} = \stable^{n-1}(U_1)$
  Thus, $U= \bigcap_{n\in\INpos}U_n$ is the greatest fixed point of $\stable$.
\end{restatable}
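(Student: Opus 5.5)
We argue by induction on $n$, showing both inclusions $U_{n+1}\subseteq\stable(U_n)$ and $\stable(U_n)\subseteq U_{n+1}$, given $U_n=\stable^{n-1}(U_1)$. The base case $n=1$ is the observation $U_1=\mathord{\sim_0}$. The workhorse identity is the one-step unfolding: for $\transB\in\cons{\trans}$, a state $\state$, a label $a$ and $\word\in\lblSet^{n}$,
\[
  \probLV{\state}{\transB}(a\word) = [\lbl(\state)=a]\sum_{\stateC\in\supp{\trans(\state)}}\transB(\state,\stateC)\,\probLV{\stateC}{\transB}(\word).
\]

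\emph{Inclusion $\stable(U_n)\subseteq U_{n+1}$.} Diagonal pairs are trivially in $U_{n+1}$. Take $(\state,\stateB)\in U_n$ with $\supp{\trans(\state)}\times\supp{\trans(\stateB)}\subseteq U_n$. Then all successors of $\state$ and of $\stateB$ are pairwise $U_n$-related. Hence for fixed $\transB$ and $\word\in\lblSet^n$ they share a common value $c_\word=\probLV{\stateC}{\transB}(\word)$. Both sums in the unfolding then equal $c_\word$, since $\transB(\state)$ and $\transB(\stateB)$ are distributions supported inside the respective supports. The labels agree because $U_n\subseteq\mathord{\sim_0}$, so $\probLV{\state}{\transB}(a\word)=\probLV{\stateB}{\transB}(a\word)$. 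Note that $\stable(U_n)\subseteq U_n$ is not needed here, only the definition.

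\emph{Inclusion $U_{n+1}\subseteq\stable(U_n)$.} Since $U_{n+1}\subseteq U_n$ (take prefixes), it suffices to show: if $\state\neq\stateB$, $(\state,\stateB)\in U_{n+1}$ and $(\stateC,\stateC')\in\supp{\trans(\state)}\times\supp{\trans(\stateB)}$, then $(\stateC,\stateC')\in U_n$. Suppose not, so there are $\transB$ and $\word\in\lblSet^n$ with $\probLV{\stateC}{\transB}(\word)\neq\probLV{\stateC'}{\transB}(\word)$. We would like to modify $\transB$ at $\state$ and $\stateB$ only, making $\transB'(\state)$ the Dirac at $\stateC$ and $\transB'(\stateB)$ the Dirac at $\stateC'$, both of which are consistent.

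This is the main obstacle: changing $\transB$ at $\state,\stateB$ may change the values $\probLV{\stateC}{\transB}(\word)$ if paths from $\stateC$ or $\stateC'$ revisit $\state$ or $\stateB$. I would handle it by a perturbation and polynomial-identity argument. Set $\transB_\lambda(\state)=(1-\lambda)\transB(\state)+\lambda\delta_{\stateC}$ and similarly at $\stateB$ with $\delta_{\stateC'}$ (or use independent parameters). Then $\probLV{\state}{\transB_\lambda}(a\word)-\probLV{\stateB}{\transB_\lambda}(a\word)$ is a polynomial in $\lambda$ that must vanish identically on $[0,1]$ by $(\state,\stateB)\in U_{n+1}$.

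The cleaner route is to use independent parameters $x=\transB(\state)$ and $y=\transB(\stateB)$ ranging over their simplices, with all other rows fixed. Write $f_\stateC(x,y)=\probLV{\stateC}{\transB}(\word)$, which is polynomial in $(x,y)$ of degree $<n$ in each. The identity $\sum_\stateC x_\stateC f_\stateC(x,y)=\sum_{\stateC'} y_{\stateC'} f_{\stateC'}(x,y)$ holds on the product of simplices. Comparing top-degree behaviour, e.g. by homogenising and extracting the coefficient structure in $x$ versus $y$, should force all $f_\stateC$ for $\stateC$ in either support to be equal as functions.

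A simpler alternative, which I would try first, is to pass to pure induction on words. Choose the shortest $\word$ distinguishing some successor pair over all $\transB$. The differing paths of length $n$ from $\stateC$ only revisit $\state,\stateB$ at positions where, by minimality and $U_{n+1}\subseteq U_k$ for $k\le n$, the contributions from $\state$ and $\stateB$ coincide. The difference is then unaffected when we redirect $\state,\stateB$ to Diracs, giving a contradiction.

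Finally, $U=\bigcap_n U_n=\bigcap_n\stable^{n-1}(\sim_0)$ is the greatest fixed point, since the lattice is finite and the decreasing chain stabilises.
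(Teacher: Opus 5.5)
Your inclusion $\stable(U_n)\subseteq U_{n+1}$, the reduction of the converse to successor pairs, and the closing fixed-point remark are fine. The converse itself, however, is not proved. At the step you flag as the main obstacle, neither of your routes shows that redirecting the rows of $\state$ and $\stateB$ leaves the relevant probabilities unchanged.

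The polynomial route ends in ``should force''. An identity $\sum_{\stateC} x_{\stateC} f_{\stateC}=\sum_{\stateC'} y_{\stateC'} f_{\stateC'}$ on a product of simplices does not by itself force the $f$'s to coincide: abstractly, $x_1f_1+x_2f_2=f'$ holds on $x_1+x_2=1$ with $f_1=x_2$, $f_2=x_1$, $f'=2x_1x_2$. You need structural information on how the $f$'s depend on $(x,y)$, and that is exactly what is missing.

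In the minimal-word route you invoke minimality, but the conclusion you draw from it, that ``the contributions from $\state$ and $\stateB$ coincide'', is the wrong property. Knowing $\probLV{\state}{\transB}(v)=\probLV{\stateB}{\transB}(v)$ for short $v$ says nothing about whether these values change when $\transB(\state)$ and $\transB(\stateB)$ are replaced. That change is precisely what a revisit could cause.

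The missing ingredient is an invariance lemma, which is the paper's Lemma~\ref{lemma:all-shorter-agree} applied once per modified row. It says: if all successors of $\state$ and of $\stateB$ are pairwise $U_{k-1}$-related (with $U_0=\states^2$), then for all $\transB,\transB'\in\cons{\trans}$ agreeing outside $\{\state,\stateB\}$, every state assigns every word of length at most $k$ the same probability under $\transB$ and $\transB'$. It is proved by induction on word length:
\begin{itemize}
\item For $r\notin\{\state,\stateB\}$, unfold one step and apply the induction hypothesis.
\item For $r=\state$, use $\probLV{\state}{\transB}(bv)=[\lbl(\state)=b]\,\probLV{\hat q}{\transB}(v)$ for a fixed $\hat q\in\supp{\trans(\state)}$. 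This holds because all successors agree on $v$ under every consistent function, as $|v|\le k-1$. Use the same identity under $\transB'$, then the induction hypothesis at $\hat q$.
\item The case $r=\stateB$ is symmetric.
\end{itemize}

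With this lemma your minimal-word route is complete. Take $(\stateC,\stateC')$, $\transB$ and $\word$ attaining the minimal length $k\le n$; minimality gives the lemma's hypothesis. Set $\transB'(\state)=\delta_{\stateC}$ and $\transB'(\stateB)=\delta_{\stateC'}$, and let $a=\lbl(\state)=\lbl(\stateB)$. Then $\probLV{\state}{\transB'}(a\word)=\probLV{\stateC}{\transB}(\word)\neq\probLV{\stateC'}{\transB}(\word)=\probLV{\stateB}{\transB'}(a\word)$, contradicting $(\state,\stateB)\in U_{n+1}\subseteq U_{k+1}$. The lemma also trivialises your polynomial route, since the $f$'s become constant.

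The paper instead obtains the lemma's hypothesis from the outer induction, via $(\state,\stateB)\in U_n=\stable(U_{n-1})$. After a case split, it perturbs only the row of $\state$, shifting mass from one successor to another. Your completed version needs neither the induction hypothesis nor the case split. But the lemma is the heart of the argument and cannot be left as a one-line assertion.
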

\begin{proof}[Proof sketch]
  The proof is by induction on $n$.
The base case is direct.
We then assume by induction that $U_{n} = \stable^{n-1}(U_1)$ and must show that $U_{n+1} = \stable(U_n)$.
Let $\state, \stateB\in\states$ such that $\state\neq\stateB$ (pairs of the form $(\state, \state)$ are trivially in $U_n$ and $\stable(U_n)$ for all $n\geq 1$).

Suppose that $(\state, \stateB)\in\stable(U_n)$. By the definition of $\stable$, the labels of $\state$ and $\stateB$ coincide and all of their successors agree over words of length $n$ under all consistent transition functions.
It follows that $\state$ and $\stateB$ agree over words of length $n+1$ under all consistent transition functions.

For the other case, suppose that $(\state, \stateB)\notin\stable(U_{n})$.
If $(\state, \stateB)\notin U_n$, we directly obtain $(\state, \stateB)\notin U_{n+1}$.
We therefore assume that $(\state, \stateB)\in U_n$.
We thus have $\lbl(\state) = \lbl(\stateB) = a$ (because $U_n\subseteq U_1$ and $U_1$ is label equality).
By the definition of $\stable$, there exist $\state'\in\supp{\trans(\state)}$ and $\stateB'\in\supp{\trans(\stateB')}$ such that $(\state', \stateB')\notin U_n$.
We consider a transition function $\transB\in\cons{\trans}$ such that $\state'$ and $\stateB'$ do not agree over some $\word\in\lblSet^{n}$ under $\transB$.
If $\state$ and $\stateB$ do not agree on $a\word$ under $\transB$, this shows that $(\state, \stateB)\notin U_{n+1}$.
Otherwise, $\state$ has a successor $\state''$ such that $\probLV{\state''}{\transB}(\word) \neq \probLV{\state'}{\transB}(\word)$ or $\stateB$ has a successor $\stateB''$ such that $\probLV{\stateB''}{\transB}(\word) \neq \probLV{\stateB'}{\transB}(\word)$.
Assuming that $\state$ has such a successor $\state''$, we consider a perturbation of $\transB$ obtained by shifting some outgoing probability in $\transB$ from $\state$ between $\state''$ and $\state'$.
We show that this changes the probability of $a\word$ only from $\state$, due to $\state$ and $\stateB$ agreeing over words of length $n$ for all consistent transition functions (see Lemma~\ref{lemma:all-shorter-agree} in Appendix~\ref{appendix:ue} for a formal justification).
This shows that $(\state, \stateB)\notin U_{n+1}$.
\end{proof}

\cref{theorem:ule-states} suggests a polynomial-time partition refinement algorithm (cf.~\cref{algorithm:stable}) to compute $U$.
\cref{theorem:ule-states} also implies the equivalence of UL and UB for states, in contrast to standard language equivalence and probabilistic bisimilarity \cite[Figure~1]{DBLP:journals/ijfcs/DoyenHR08}.
\begin{theorem}\label{corollary:ule:upb}
  Two states are UL if and only if they are UB.
\end{theorem}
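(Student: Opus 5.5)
We prove the two inclusions separately, relying on \cref{theorem:ule-states}, which identifies $U$ (the UL relation on states) with the greatest fixed point of $\stable$.

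\emph{UB implies UL.} This direction holds for each transition function separately, with no reference to perturbations. For any $\transB\in\cons{\trans}$, $\transB$-bisimilar states are $\transB$-language-equivalent. To see this, show by induction on $|\word|$ that $\probLV{\state}{\transB}(\word)=\probLV{\stateB}{\transB}(\word)$ for all $\state\bisim{\transB}\stateB$. If $\word=a\word'$, then $\probLV{\state}{\transB}(\word)$ equals $[\lbl(\state)=a]\sum_{\eClass}\transB(\state,\eClass)\,\probLV{\eClass}{\transB}(\word')$. Here the sum ranges over the bisimilarity classes $\eClass$, and $\probLV{\eClass}{\transB}(\word')$ denotes the common value on $\eClass$, which is well defined by the induction hypothesis. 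This expression depends only on the class of $\state$. Hence, if $\state,\stateB$ are bisimilar under every consistent $\transB$, they are language-equivalent under every consistent $\transB$, i.e.\ UL.

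\emph{UL implies UB.} Let $R=U$. By \cref{theorem:ule-states}, $U=\stable(U)$ and $U\subseteq\mathord{\sim_0}$. We must show that $U$ is a $\transB$-bisimulation for every $\transB\in\cons{\trans}$.
\begin{itemize}
  \item $U$ is an equivalence relation, since it is an intersection of the relations $\langEquiv{\transB}$ restricted to states.
  \item Pairs in $U$ have equal labels, because $U\subseteq\mathord{\sim_0}$.
  \item Take $(\state,\stateB)\in U$ with $\state\neq\stateB$. The fixed-point property gives $\supp{\trans(\state)}\times\supp{\trans(\stateB)}\subseteq U$.
\end{itemize}

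The last point is the key step. If some successor $\state'$ is related to some $\stateB'$, then every successor of $\state$ is $U$-related to $\stateB'$, and every successor of $\stateB$ is $U$-related to $\state'$. By transitivity, all successors of $\state$ and of $\stateB$ lie in a single $U$-class $\eClass$. Consistency gives $\supp{\transB(\state)}\subseteq\supp{\trans(\state)}$ and likewise for $\stateB$. Since $\transB(\state)$ is a distribution, its support is nonempty. Hence $\transB(\state,\eClass)=\transB(\stateB,\eClass)=1$, and both assign $0$ to every other class. Pairs $(\state,\state)$ are trivial. Thus $U$ is a $\transB$-bisimulation, so $U\subseteq\bisim{\transB}$ for all consistent $\transB$. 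Therefore UL states are UB.

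The main subtlety lies in this step: we must observe that the product condition in $\stable$ forces all successors into one class. This is what makes the bisimulation condition independent of the actual probabilities. The rest of the argument is routine.
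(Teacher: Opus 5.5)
Your proof is correct and follows the paper's argument. UB implies UL because bisimilarity implies language equivalence for each fixed $\sigma$. Conversely, the fixed-point property $U=\stable(U)$ from \cref{theorem:ule-states} makes $U$ a $\sigma$-bisimulation for every consistent $\sigma$. You simply spell out the details the paper leaves implicit: that $U$ is an equivalence relation, and that the product condition places all successors of $s$ and $t$ in a single $U$-class.
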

\begin{proof}
  Bisimilarity implies language equivalence, hence, UB states are UL.
  Conversely, the definition of $\stable$ and \cref{theorem:ule-states} imply that, for all $\transB\in\cons{\trans}$, $U$ is a bisimulation with respect to $\transB$.
  Thus, UL states are also UB.
\end{proof}

We now consider the robustness of language equivalence and bisimilarity to small perturbations.
Given $\varepsilon > 0$, we say that two states are $\varepsilon$-UL (resp.~$\varepsilon$-UB) if they remain language-equivalent (resp. bisimilar) under any $\transB\in\cons{\trans}$ such that any probability of $\transB$ differs from its counterpart in $\trans$ by at most $\varepsilon$.
In the proof of \cref{theorem:ule-states}, to witness that two states are not UL, we use a transition function that is consistent with $\trans$. However, its transition probabilities may differ substantially from those of $\trans$.
We can strengthen the proof of \cref{theorem:ule-states} to show that arbitrarily small perturbations of $\trans$ (i.e., consistent transition functions that are very similar to $\trans$) suffice to witness that two states are not UL.
The key observation is that, in the proof of \cref{theorem:ule-states}, we perturb (at most) one transition of a consistent transition function obtained from the inductive hypothesis. %
Whenever such a perturbation is necessary, any non-zero perturbation is sufficient for the argument.
This observation can be used to establish the following additional characterisations of UL and UB.

\begin{restatable}{proposition}{propositionEpsilonU}\label{proposition:ule:epsilon}
  For all $\varepsilon > 0$, two states are UL if and only if they are $\varepsilon$-UL (resp.~$\varepsilon$-UB).
\end{restatable}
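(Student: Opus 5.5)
We prove the chain UL $\Rightarrow$ $\varepsilon$-UB $\Rightarrow$ $\varepsilon$-UL $\Rightarrow$ UL. The first implication follows from \cref{corollary:ule:upb}: UL states are UB, and UB quantifies over all of $\cons{\trans}$, which contains every $\varepsilon$-perturbation. The second implication holds because $\transB$-bisimilarity implies $\transB$-language equivalence for each fixed $\transB$. The substantive direction is $\varepsilon$-UL $\Rightarrow$ UL. We prove its contrapositive by strengthening the induction of \cref{theorem:ule-states}.

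For $\delta>0$, let $N_\delta = \{\transB\in\cons{\trans}\mid \max_{\state,\stateB}|\transB(\state,\stateB)-\trans(\state,\stateB)|\le\delta\}$. Define $U^\delta_n$ exactly as $U_n$, but quantifying only over $\transB\in N_\delta$. Trivially $U_n\subseteq U^\delta_n$. We show by induction on $n$ the statement: for every $\delta>0$, $U^\delta_n = U_n$. It suffices to show that for every $(\state,\stateB)\notin U_n$ there are a word $\word\in\lblSet^n$ and $\transB\in N_\delta$ with $\probLV{\state}{\transB}(\word)\neq\probLV{\stateB}{\transB}(\word)$. The base case $n=1$ holds because label inequality does not depend on $\transB$, so $\transB=\trans$ works.

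For the step, take $(\state,\stateB)\notin U_{n+1}=\stable(U_n)$.
\begin{itemize}
\item If $(\state,\stateB)\notin U_n$, the hypothesis applies directly.
\item Otherwise $\lbl(\state)=\lbl(\stateB)=a$, and there are successors $\state'\in\supp{\trans(\state)}$ and $\stateB'\in\supp{\trans(\stateB)}$ with $(\state',\stateB')\notin U_n$. By the hypothesis with $\delta/2$, pick $\transB\in N_{\delta/2}$ and $\word\in\lblSet^n$ on which $\state'$ and $\stateB'$ disagree. Since $\transB$ is close to $\trans$, $\supp{\transB(\cdot)}\subseteq\supp{\trans(\cdot)}$, but positivity of $\transB(\state,\state')$ may fail.
\end{itemize}
To handle this, we first replace $\transB$ by a convex combination $(1-\eta)\transB+\eta\trans$ with small $\eta$. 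This has full support $\supp{\trans(\cdot)}$ and stays in $N_{\delta/2}$. Disagreement of $\state',\stateB'$ on $\word$ is preserved for all but finitely many $\eta$, because $\probLV{\state'}{\transB}(\word)-\probLV{\stateB'}{\transB}(\word)$ is a polynomial in $\eta$ that is nonzero at $\eta=0$.

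Now argue as in the proof sketch of \cref{theorem:ule-states}. If $\state,\stateB$ already disagree on $a\word$, we are done. Otherwise $\state$ (say) has successors $\state',\state''$ with different probabilities of $\word$. We shift a mass $\theta\in(0,\delta/2]$ from $\state''$ to $\state'$ in $\transB(\state)$, or the reverse; full support makes this possible in one of the two directions. We must check that this changes $\probLV{\state}{}(a\word)$ by $\theta\cdot(\probLV{\state'}{}(\word)-\probLV{\state''}{}(\word))\neq 0$ while leaving $\probLV{\stateB}{}(a\word)$ unchanged. For this we invoke Lemma~\ref{lemma:all-shorter-agree}: since $(\state,\stateB)\in U_n$, the words of length $n$ from the successors do not depend on the modified row $\transB(\state)$ in a way that distinguishes $\state$ from $\stateB$. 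More precisely, if $\stateB$'s probability also changes, the two changes differ. Hence at least one of the two functions, before or after the shift, separates $\state,\stateB$ on $a\word$, and both lie in $N_\delta$.

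We expect the main obstacle to be this last point. Shifting $\state$'s row affects every path revisiting $\state$, including paths from $\stateB$ and from the successors. The argument therefore needs the precise content of Lemma~\ref{lemma:all-shorter-agree}, namely that under $U_n$-agreement the row of $\state$ contributes equally to both sides on shorter prefixes. We would first try to reuse that lemma verbatim, noting that it is stated for arbitrary consistent functions and hence for those in $N_\delta$.

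Finally, $U=\bigcap_n U_n=\bigcap_n U^\varepsilon_n$, and the latter is exactly the $\varepsilon$-UL relation.
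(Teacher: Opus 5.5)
Your proposal is correct and is essentially the paper's proof: induction on $n$ with the neighbourhood radius halved, a witness for the successor pair $(s',t')$ taken from the induction hypothesis, and one small mass shift in the row of $s$ whose effect is isolated by Lemma~\ref{lemma:all-shorter-agree}. That lemma applies verbatim, because $s\neq t$ and $(s,t)\in U_n=\stable(U_{n-1})$ give $\supp{\trans(s)}\times\supp{\trans(t)}\subseteq U_{n-1}$; hence $\probLV{t}{\transB}(a\word)$ is exactly unchanged, and your hedge that ``the two changes differ'' is not needed. The full-support convex combination is superfluous, since one only removes mass from some $s''\in\supp{\transB(s)}$ and adds it to $s'\in\supp{\trans(s)}$. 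The inductive step for $n=1$ needs the direct argument used in the paper (words of length $1$ do not depend on the transition function, and the row of $t$ is untouched), because the lemma assumes $n\geq 2$.
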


\begin{remark}
  \cref{proposition:ule:epsilon} implies that to witness that two states are not UL, it is sufficient to change the transition probabilities of $\trans$ without disabling any of its transitions.
  In particular, two states $\state$ and $\stateB$ are UL if and only if $\state\langEquiv{\transB}\stateB$ holds for all $\transB\in\cons{\trans}$ such that, for all $\stateC\in\states$, we have $\supp{\transB(\stateC)} = \supp{\trans(\stateC)}$.
  \hfill$\lhd$
\end{remark}

By adapting the reasoning used to establish \cref{proposition:ule:epsilon}, we can also show that deterministic transition functions suffice to witness that states are not UL.
\begin{restatable}{proposition}{propositionDeterministicU}\label{proposition:ule:deterministic}
  If two states $\state$ and $\stateB$ are not UL, then there exist a deterministic $\transB\in\cons{\trans}$ and a word $\word\in\lblSet^*$ such that $\probLV{\state}{\transB}(\word) = 1$ and $\probLV{\stateB}{\transB}(\word) = 0$. %
\end{restatable}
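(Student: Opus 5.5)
We prove the statement by induction on the smallest $n$ with $(\state,\stateB)\notin U_n$, using \cref{theorem:ule-states} ($U_n=\stable^{n-1}(U_1)$). We prove the following stronger claim: \emph{for every $n\in\INpos$ and every pair $(\state,\stateB)\notin U_n$, there exist a deterministic $\transB\in\cons{\trans}$ and a word $\word\in\lblSet^{n}$ with $\probLV{\state}{\transB}(\word)=1$ and $\probLV{\stateB}{\transB}(\word)=0$.}

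Under a deterministic $\transB$, each state generates exactly one infinite trace. Hence, for any state and any $\word$, the probability of $\word$ is either $0$ or $1$. The required conclusion therefore says that the unique traces from $\state$ and from $\stateB$ differ within the first $n$ letters. In that case we take $\word$ to be the length-$n$ prefix of the trace of $\state$.

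\textbf{Base case $n=1$.} Here $\lbl(\state)\neq\lbl(\stateB)$. Any deterministic $\transB\in\cons{\trans}$ works, with $\word=\lbl(\state)$. Such a $\transB$ exists because every $\trans(\stateC)$ has non-empty support.

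\textbf{Inductive step.} Suppose $(\state,\stateB)\notin U_{n+1}=\stable(U_n)$ and $(\state,\stateB)\in U_n$; otherwise we are done by the induction hypothesis. Since $U_n\subseteq U_1$, the labels of $\state$ and $\stateB$ agree, and $\state\neq\stateB$. By the definition of $\stable$, there are successors $\state'\in\supp{\trans(\state)}$ and $\stateB'\in\supp{\trans(\stateB)}$ with $(\state',\stateB')\notin U_n$. The induction hypothesis gives a deterministic $\transB'\in\cons{\trans}$ under which the traces of $\state'$ and $\stateB'$ differ within their first $n$ letters.

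We define $\transB$ to agree with $\transB'$ except that $\transB(\state)=\delta_{\state'}$ and $\transB(\stateB)=\delta_{\stateB'}$. This $\transB$ is still deterministic and consistent with $\trans$. If the traces of $\state'$ and $\stateB'$ under $\transB$ still differ within $n$ letters, then the traces of $\state$ and $\stateB$ under $\transB$ differ within $n+1$ letters, as required.

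\textbf{The obstacle.} Redirecting $\state$ and $\stateB$ may change the traces of $\state'$ and $\stateB'$, since their $\transB'$-runs might pass through $\state$ or $\stateB$ within the first $n-1$ steps. This is the key step. We resolve it using $(\state,\stateB)\in U_n$.

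Let $\transB''$ be $\transB'$ with only $\transB''(\state)=\delta_{\state'}$ changed. Because $(\state,\stateB)\in U_n$ and $\transB''$ is consistent, the traces of $\state$ and $\stateB$ under $\transB''$ agree on their first $n$ letters. Consequently their successors produce the same first $n-1$ letters. The same holds under $\transB'$.

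So any run that enters $\state$ or $\stateB$ at some step $k\le n-1$ has its remaining prefix, of length at most $n-k\le n-1$, determined up to this agreement. Redirecting $\state$ and $\stateB$ to other successors therefore does not alter the first $n$ letters of any trace. To make this rigorous, we will establish a small lemma in the spirit of Lemma~\ref{lemma:all-shorter-agree}: if $(\state,\stateB)\in U_n$, then replacing $\transB(\state)$ or $\transB(\stateB)$ by any consistent distribution leaves every state's length-$n$ word probabilities unchanged. We prove it by induction on word length, using that all successors of $\state$ and $\stateB$ are pairwise in $U_{n-1}$ (since $(\state,\stateB)\in U_n=\stable(U_{n-1})$).

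With this lemma, we change $\transB'$ at $\state$ and then at $\stateB$, applying the lemma at each step. The traces of $\state'$ and $\stateB'$ still differ within $n$ letters, which completes the induction.

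Finally, since $(\state,\stateB)\notin U$ means $(\state,\stateB)\notin U_n$ for some $n$, the statement follows.
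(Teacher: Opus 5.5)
Your proof is correct. It shares the paper's skeleton: induction along $U_{n+1}=\stable(U_n)$, with Lemma~\ref{lemma:all-shorter-agree} neutralising the side effects of redirecting a transition. The difference is in how you organise the inductive step.

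The paper reuses the template of the proof of Proposition~\ref{proposition:ule:epsilon}. It keeps the deterministic function from the induction hypothesis if that function already separates $s$ and $t$ on $a\word$. Otherwise it shifts the whole mass of $s$ from its current successor $s''$ to $s'$ (taking $\delta=1$, which is possible because the function is deterministic). It then uses both items of Lemma~\ref{lemma:all-shorter-agree} to obtain the difference $\probLV{s'}{\transB}(\word)-\probLV{s''}{\transB}(\word)\neq 0$. The $1/0$ normal form is only extracted afterwards, by taking the unique length-$n$ word generated from $s$.

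You instead redirect both $s$ and $t$ unconditionally, which brings three gains:
\begin{itemize}
\item you need no case split;
\item you need only the first item of Lemma~\ref{lemma:all-shorter-agree}, applied once at $s$ and once with the roles swapped; this is legitimate because its hypothesis is symmetric and refers only to the supports of $\trans$, not to the current transition function;
\item you carry the $1/0$ form through the induction.
\end{itemize}
Your route is shorter and more direct. The paper's route changes only one state, which is what lets a single argument serve all three results: Proposition~\ref{theorem:ule-states}, the small-perturbation witnesses of Proposition~\ref{proposition:ule:epsilon}, and the present statement. Fully redirecting two states would not stay within an $\varepsilon$-neighbourhood of $\trans$.

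One small point: your auxiliary lemma invokes $U_{n-1}$, which is undefined when $n=1$. That case is trivial, since probabilities of length-one words do not depend on the transition function, but it should be stated.
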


\subparagraph*{Algorithms and complexity.}%

The characterisation from \cref{theorem:ule-states} suggests a polynomial-time algorithm to compute the UL relation based on computing the greatest fixed point of the $\stable$ operator.
We outline this approach in \cref{algorithm:stable}.
We report experimental results for this approach below. %
In the remainder of the section, we provide tight complexity bounds on the problem of deciding if two states are UL: we show that the problem is $\nlogspace$-complete.

\begin{algorithm}[t]
\caption{Algorithm to compute UL, the greatest fixed point of $\stable$.}
\label{algorithm:stable}
\KwData{A labelled Markov chain $\mchain = \lmc$, the equality relation on $\states$, i.e.~$\diag$, and the label-equality relation on $\states$, i.e.~$\mathord{\sim_0}$.}
$R \gets \mathord{\sim_0}$\;
\Repeat{$R = R_{\mathrm{old}}$}{
  $R_{\mathrm{old}} \gets R$\;
  $R \gets S^2_\Delta \cup \{\, (s, t) \in R_{\mathrm{old}} \mid \supp{\tau(s)} \times \supp{\tau(t)} \subseteq R_{\mathrm{old}} \,\}$\;
}
\Return $R$.
\end{algorithm}

Our non-deterministic algorithm is based on the two following basic behaviours of UL states: \emph{``synchronisation''} and \emph{``trace-uniqueness''}.
These behaviours arise from the recursive structure of $\stable$. %
In the case of synchronisation, the states generate the same finite prefix until all of their respective paths reach a common state.
Once the paths meet, the subsequent behaviour is identical thus indistinguishable based on the starting state.
Synchronisation is illustrated, e.g., by states $p$ and $w$ of \cref{figure:intro-example}, which synchronise after one step.
If synchronisation does not occur, then all paths from both UL states generate the same unique infinite word, as states $x$ and $y$ of \cref{figure:intro-example}. %
It follows that, intuitively, two states are not UL if they can reach different labels without synchronising. Our algorithm builds on the fact that such a discrepancy can be witnessed by histories of bounded length.

\begin{restatable}{lemma}{lemmaULStateWitnesses}\label{lemma:ule:state-witnesses}
  Let $\state, \stateB\in\states$.
  Then $\state$ and $\stateB$ are not UL if and only if there exist histories $\hist = \state_1\state_2\ldots\state_\iLast$ and $\hist'=\stateB_1\stateB_2\ldots\stateB_\iLast$ with $\state_1 = \state$, $\stateB_1=\stateB$ and $\iLast\leq|\states|$ such that $\lbl(\hist) \neq \lbl(\hist')$ and for all $1\leq\iPos\leq\iLast$, $\state_\iPos\neq\stateB_\iPos$.
\end{restatable}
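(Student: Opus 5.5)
We derive the lemma from the fixed-point characterisation in \cref{theorem:ule-states}. The central object is the pair graph on $\states\times\states$: there is an edge $(\state,\stateB)\to(\state',\stateB')$ whenever $\state'\in\supp{\trans(\state)}$ and $\stateB'\in\supp{\trans(\stateB)}$. In this graph, the complement of $U$ should be exactly the set of pairs from which one can reach a label-mismatched pair without passing through the diagonal $\diag$. We establish this and then bound the length of a witness path.

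\emph{($\Leftarrow$)} Suppose histories $\hist,\hist'$ as in the statement exist. Let $\iPos$ be the least index with $\lbl(\state_\iPos)\neq\lbl(\stateB_\iPos)$. We show by backward induction on $k$, from $\iPos$ down to $1$, that $(\state_k,\stateB_k)\notin U$.
\begin{itemize}
\item For $k=\iPos$, the labels differ, so $(\state_\iPos,\stateB_\iPos)\notin\mathord{\sim_0}\supseteq U$.
\item For $k<\iPos$, we have $\state_k\neq\stateB_k$, so the pair is not in $\diag$. Its successor pair $(\state_{k+1},\stateB_{k+1})$ lies in $\supp{\trans(\state_k)}\times\supp{\trans(\stateB_k)}$ and is not in $U$ by the induction hypothesis. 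Hence $(\state_k,\stateB_k)\notin\stable(U)=U$.
\end{itemize}
Here we use that $U$ is a fixed point of $\stable$. At $k=1$ this gives $(\state,\stateB)\notin U$, so $\state$ and $\stateB$ are not UL.

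\emph{($\Rightarrow$)} Let $B$ be the set of pairs $(\state,\stateB)$ admitting a finite sequence of pairs in the pair graph that starts at $(\state,\stateB)$, avoids $\diag$ throughout, and ends at a label-mismatched pair. Let $R=(\states\times\states)\setminus B$. We check two facts.
\begin{itemize}
\item $R\in\ccInt{\diag}{\mathord{\sim_0}}$. Diagonal pairs are never in $B$, and every pair of $R$ is label-matched (otherwise the length-zero sequence would put it in $B$).
\item $R\subseteq\stable(R)$. If $(\state,\stateB)\in R$ with $\state\neq\stateB$, then every successor pair lies in $R$: if some successor pair were in $B$, prepending $(\state,\stateB)$ would give a witness placing $(\state,\stateB)$ in $B$.
\end{itemize}
Thus $R$ is a post-fixed point of $\stable$, so $R\subseteq U$ by Knaster--Tarski and \cref{theorem:ule-states}. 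Consequently, any pair not in $U$ lies in $B$, and projecting its witness sequence to each component gives histories $\hist,\hist'$ with $\state_\iPos\neq\stateB_\iPos$ for all $\iPos$ and $\lbl(\hist)\neq\lbl(\hist')$.

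The remaining step, which I expect to be the main obstacle, is the bound $\iLast\leq|\states|$. A naive shortest-path argument in the pair graph only yields length at most $|\states|^2$. To do better I would reuse the rank structure from \cref{theorem:ule-states}. The pair $(\state,\stateB)$ drops out of $\stable^{n-1}(U_1)$ first at some level $n$, and a shortest witness has length exactly this $n$, since each backward step in the ($\Leftarrow$) argument corresponds to one application of $\stable$. So it suffices to show that the chain $U_1\supsetneq U_2\supsetneq\cdots$ stabilises after at most $|\states|-1$ strict steps.

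The natural tool is to prove that each $U_n$ is an equivalence relation, by induction via $\stable$, with transitivity as the delicate point. If it is, then each strict step strictly increases the number of classes, which is at most $|\states|$, so stabilisation happens within $|\states|-1$ steps. The induction would run as follows. Suppose $(\state,\stateB),(\stateB,\stateC)\in\stable(U_n)$ and these are non-diagonal pairs. Any successor pair $(\state',\stateC')$ can be linked through an arbitrary successor $\stateB'$ of $\stateB$, and transitivity of $U_n$ then gives $(\state',\stateC')\in U_n$.

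If transitivity fails to go through, the fallback is to argue directly that along a shortest witness no pair repeats up to symmetry in a way that collapses the bound, though I expect the equivalence-class counting to be the cleanest route.
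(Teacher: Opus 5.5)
Your proof is correct in substance but organised differently from the paper's.

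\textbf{The paper's route.} It proves $(\Rightarrow)$ by induction on the level at which the pair drops out. If $(\state,\stateB)\in U_n\setminus U_{n+1}$ with $n\geq 1$, then $\state\neq\stateB$ and some successor pair lies in $U_{n-1}\setminus U_n$. Prepending $(\state,\stateB)$ to a witness for that pair gives a witness with $n+1$ states. The bound $\iLast\leq|\states|$ is inherited from $U=U_{|\states|}$. The paper uses this implicitly in the lemma and justifies it elsewhere by the fact that language equivalence is determined by words of length at most $|\states|$. Your $(\Leftarrow)$ is the same backward propagation, just run through the fixed point $U$ rather than the levels.

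\textbf{The step that needs a different justification.} You claim that a shortest witness has exactly as many states as the drop level $n$, the least $n$ with $(\state,\stateB)\notin U_n$. Run level by level, your backward argument only shows that a witness with $\iLast$ states forces $(\state,\stateB)\notin U_\iLast$, i.e.\ $n\leq\iLast$. Your bound needs the converse: $(\state,\stateB)\notin U_n$ yields a witness with at most $n$ states. That converse is precisely the paper's level induction. Once you have it, $(\Rightarrow)$ with the length bound follows directly, so the set $B$ and the Knaster--Tarski step become redundant for this lemma. Your description of the complement of $U$ as diagonal-avoiding reachability of a label mismatch in the pair graph is still a clean reformulation. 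On its own it already yields the $\nlogspace$ upper bound, since reachability in a graph with $|\states|^2$ vertices needs no sharper length bound.

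\textbf{What your route adds.} You give a self-contained combinatorial proof of $U=U_{|\states|}$. In the decreasing chain of equivalence relations $U_1\supseteq U_2\supseteq\cdots$, every strict step splits a class, and once $U_{k+1}=U_k$ the chain is constant. Hence there are at most $|\states|-1$ strict steps. This replaces the appeal to the linear-algebraic word-length bound.

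\textbf{A simplification.} Transitivity is not delicate and needs neither an induction through $\stable$ nor a fallback. Each $U_n$ is an equivalence relation directly from its definition as equality of $\probLV{\state}{\transB}(\word)$ and $\probLV{\stateB}{\transB}(\word)$ for all $\word\in\lblSet^n$ and all $\transB\in\cons{\trans}$.
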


Lemma~\ref{lemma:ule:state-witnesses} implies that checking if two states $\state$ and $\stateB$ are \textit{not UL} can be done by non-deterministically constructing two histories from $\state$ and $\stateB$ in parallel only retaining the last states and the length of the current histories.
At each step of the construction, we reject if the last states of both histories are equal or the history has length $|\states|$, we accept if the two states have different labels, and we continue to the next step otherwise.
This procedure (summarised in \cref{algorithm:states-nonule} in \cref{appendix:ue:nl-algorithm}) uses only logarithmic space. It follows that deciding UL for states is in $\conlogspace=\nlogspace$~\cite{DBLP:journals/siamcomp/Immerman88,DBLP:journals/acta/Szelepcsenyi88}.
We can also show $\nlogspace$-hardness by a straightforward reduction from the complement of directed graph reachability (cf.~\cref{lemma:hardness:ule}, Appendix~\ref{appendix:ue:hardness}).
The following theorem summarises our results.
\begin{theorem}\label{theorem:ule:nl-states}
  Deciding if two states are UL (or, equivalently, UB) is $\nlogspace$-complete.
\end{theorem}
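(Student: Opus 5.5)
The theorem has two halves, membership in $\nlogspace$ and $\nlogspace$-hardness; equivalence with UB comes for free from \cref{corollary:ule:upb}.

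\textbf{Upper bound.} We decide the complement and then use $\conlogspace=\nlogspace$ (Immerman--Szelepcs\'enyi). By Lemma~\ref{lemma:ule:state-witnesses}, $\state$ and $\stateB$ are not UL iff there are paired histories $\state_1\ldots\state_\iLast$ and $\stateB_1\ldots\stateB_\iLast$ with the following properties: they start at $\state$ and $\stateB$, have length $\iLast\le|\states|$, satisfy $\state_\iPos\neq\stateB_\iPos$ for all $\iPos$, and carry different label words. Since both words agree until their first discrepancy, we may stop at the first index where $\lbl(\state_\iPos)\neq\lbl(\stateB_\iPos)$.

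The nondeterministic machine keeps only the current pair $(\state_\iPos,\stateB_\iPos)$ and a counter $\iPos\le|\states|$, which takes $O(\log|\states|)$ space. At each step it does the following:
\begin{itemize}
\item it rejects if $\state_\iPos=\stateB_\iPos$;
\item it accepts if the labels differ;
\item otherwise, if $\iPos<|\states|$, it guesses successors $\state_{\iPos+1}\in\supp{\trans(\state_\iPos)}$ and $\stateB_{\iPos+1}\in\supp{\trans(\stateB_\iPos)}$, checking positivity of the transition probabilities by reading the input.
\end{itemize}
Correctness is exactly Lemma~\ref{lemma:ule:state-witnesses}. So non-UL is in $\nlogspace$, and hence UL is in $\conlogspace=\nlogspace$.

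\textbf{Lower bound.} We reduce from directed-graph non-reachability, which is $\conlogspace=\nlogspace$-complete. Given a graph $G$ and vertices $u,v$, first make every vertex have at least one successor by adding self-loops. Then build an LMC with two disjoint copies of $G$, label everything $a$ except $v$ in the first copy, which gets label $b$, and take the transition probabilities uniform over successors.

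We ask whether $u^{(1)}$ and $u^{(2)}$ are UL.
\begin{itemize}
\item Suppose $v$ is reachable from $u$. Then simultaneous paths along the same route in the two copies give distinct paired histories (the copies are disjoint), and these reach different labels within $|\states|$ steps. By the lemma, the states are not UL.
\item Suppose $v$ is unreachable. Then every history from either start sees only the label $a$, so both states generate $a^\omega$ with probability $1$ under every consistent $\transB$, and they are UL.
\end{itemize}
The construction is clearly log-space computable.

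The main subtlety is this lower-bound reduction: we need disjoint copies so that the paired histories never merge, and we need self-loops (or a sink) to avoid deadlock states, since $\trans$ must be a full distribution. Beyond that, everything reduces to Lemma~\ref{lemma:ule:state-witnesses}, which we are assuming.
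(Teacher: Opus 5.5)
Your proof is correct and follows the paper's argument. The upper bound is exactly the logarithmic-space procedure of \cref{algorithm:states-nonule} that guesses two histories in parallel, justified by Lemma~\ref{lemma:ule:state-witnesses} together with $\conlogspace=\nlogspace$, and the lower bound is the same reduction from directed-graph non-reachability. The only difference is cosmetic and lies in the hardness gadget (Lemma~\ref{lemma:hardness:ule}): the paper uses a single copy of $G$ plus a fresh self-looping sink $\bot$ and compares $s$ with $\bot$, whereas you compare the two copies of $u$ in disjoint copies of $G$. Both work for the same reason: the paths of the comparison state can never meet those of the source, and they never see the marked label.
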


Lemma~\ref{lemma:ule:state-witnesses} also reveals that two UL states remain language-equivalent even in a setting where the (consistent) transition function is allowed to change at every time step.
Thus, UL captures a robust form of language-equivalence that is invariant not only under every $\transB\in\cons{\trans}$, but also under arbitrary temporal variations of the transition probabilities.

\subparagraph*{Experiments.}

Our theoretical results show that UB (or, equivalently, UL for states) can be decided in polynomial time via partition refinement. We now investigate whether this efficiency also translates to practical performance and assess the usefulness of UB as a minimisation technique for probabilistic model checking. Since UB identifies states whose equivalence is independent of the precise transition probabilities, it can be viewed as a robust analogue of bisimilarity. Rather than minimising a model for a fixed assignment of transition probabilities, UB produces a minimised model that remains valid under every perturbation of the transition probabilities, provided that the underlying graph structure remains consistent.

To this end, we implemented \cref{algorithm:stable} in the widely used probabilistic model checker PRISM~\cite{DBLP:conf/cav/KwiatkowskaNP11}, an open-source tool providing quantitative verification and analysis of several types of probabilistic models.
We evaluated our algorithm by applying it to all (discrete-time) labelled Markov chains from the Quantitative Verification Benchmark Set (QVBS) \cite{DBLP:conf/tacas/HartmannsKPQR19}, 
which comprises $151$ model instances, with state spaces ranging up to $14,123,252$ states.
The details of the experimental setup and the benchmarking results can be found in Appendix~\ref{appendix:experiments}.
The main conclusions drawn from our experiments are that
\begin{enumerate}
\item UB mostly achieves reductions comparable to bisimilarity, even producing the same minimised model on $50\%$ of the evaluated instances, and
\item UB is generally faster than, or comparable to, bisimilarity. %
\end{enumerate}

Overall, our experiments demonstrate that UB is an effective and scalable state space reduction technique. Although it is coarser than bisimilarity and therefore generally produces larger quotient models, the reduction in state space size is comparable for many benchmark families, while computation is typically faster. Moreover, UB is especially well suited to the analysis of labelled Markov chains with uncertain, imprecise, or perturbed transition probabilities. Since universal equivalence captures robust behavioural equivalence, states identified by UB remain indistinguishable under every instantiation of the transition probabilities, ensuring that the computed quotient is also valid for the actual system.

\subsection{Universal equivalence for distributions}\label{section:ule:dist}

We now consider the UL relation for distributions.
We first show that two distributions are UL if and only if they attribute the same probability to all state-UL equivalence classes. %
We build on this characterisation to show that the complexity of deciding UL of distributions is identical to that of deciding UL of states.

\subparagraph*{Characterisation of language equivalence.} %

Let $U$ denote the UL relation restricted to states (as in the previous section).
We characterise universal language equivalence for distributions in terms of the equivalence classes of $U$.

\begin{restatable}{theorem}{theoremULDistributions}\label{theorem:ule-distributions}
For all $\measure, \measureB \in \dist{\states}$, $\measure$ and $\measureB$ are UL if and only if $\measure(\eClass) = \measureB(\eClass)$ holds for all $\eClass\in\states/U$.
\end{restatable}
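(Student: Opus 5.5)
The easy direction ("if") will follow from \cref{theorem:ule-states} together with linearity. Suppose $\measure(\eClass) = \measureB(\eClass)$ for every $\eClass\in\states/U$ and fix $\transB\in\cons{\trans}$ and $\word\in\lblSet^*$. Since $\probLV{\measure}{\transB}(\word) = \sum_{\state}\measure(\state)\probLV{\state}{\transB}(\word)$, we can group the sum by $U$-classes. All states of one class are UL, so they share a common value $p_\eClass = \probLV{\state}{\transB}(\word)$ for every $\state\in\eClass$. This gives
\[
\probLV{\measure}{\transB}(\word)=\sum_{\eClass}\measure(\eClass)p_\eClass=\sum_{\eClass}\measureB(\eClass)p_\eClass=\probLV{\measureB}{\transB}(\word),
\]
which is exactly the required equivalence.

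For the "only if" direction, we argue by contraposition. Assume $\measure(\eClass)\neq\measureB(\eClass)$ for some class, and pick a consistent $\transB$ and a word that separate $\measure$ from $\measureB$. The plan is to choose $\transB$ so that distinct $U$-classes have \emph{linearly independent} word-probability functions $\state\mapsto(\probLV{\state}{\transB}(\word))_{\word}$. Then $\sum_\eClass(\measure(\eClass)-\measureB(\eClass))f_\eClass\neq 0$, where $f_\eClass$ is the common function of class $\eClass$. A cleaner route is to exploit the freedom in $\transB$ directly. Consider the polynomial
\[
P_\word(\transB)=\sum_\eClass\delta_\eClass\,\probLV{\state_\eClass}{\transB}(\word), \qquad \delta_\eClass=\measure(\eClass)-\measureB(\eClass),
\]
which is a polynomial in the transition probabilities. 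It suffices to show that $P_\word$ is not identically zero on $\cons{\trans}$ for some $\word$.

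The key step is to find a word and a consistent $\transB$ that isolate classes. I would order the classes and use an induction on the number of classes with $\delta_\eClass\neq0$. By \cref{proposition:ule:deterministic}, any two inequivalent states are separated by a deterministic $\transB$ and a word $\word$ with probabilities $1$ and $0$. Under a deterministic $\transB$, each state $\state$ produces a single infinite word $u_\state$, and $\probLV{\state}{\transB}(\word)\in\{0,1\}$. Among the classes with $\delta_\eClass\neq0$, I would try to choose a single deterministic $\transB$ under which the representatives produce infinite words that are pairwise distinct. A long enough prefix $\word$ of $u_{\state_\eClass}$ then isolates class $\eClass$, giving $P_\word=\delta_\eClass\neq0$. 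The statement of \cref{lemma:ule:state-witnesses} suggests how: inequivalent states have non-synchronising paths reaching different labels.

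The main obstacle is that a single deterministic $\transB$ must separate many classes simultaneously, and the choices required for different pairs may conflict at shared states. If that fails, my fallback is the algebraic argument. For each pair of classes, the set of $\transB$ (with full support, by the Remark after \cref{proposition:ule:epsilon}) on which the two classes agree on all words is a proper algebraic subset of the product of simplices. A finite union of proper algebraic subsets cannot cover an open set, so some full-support $\transB$ makes all classes pairwise language-inequivalent.

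Pairwise inequivalence does not yet give linear independence, however. To handle this I would pick the class $\eClass$ with $\delta_\eClass\neq 0$ and perturb its exit probabilities using a fresh parameter. I would then argue, via the polynomial $P_\word$ and an infinitesimal-perturbation argument like the one in \cref{theorem:ule-states}, that the coefficient of that parameter is a nonzero multiple of $\delta_\eClass$. This is the step I expect to be hardest.
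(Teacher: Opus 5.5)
Your ``if'' direction is correct; it needs only the definition of $U$ and linearity. The ``only if'' direction, however, is not established. Both global strategies you propose fail in general, and the step you flag as hardest is where the entire proof lies.

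\emph{The global strategies fail.} Let $x$ and $y$ be absorbing states with labels $b\neq c$. Let $s,t,r$ be label-$a$ states, each with successor set $\{x,y\}$. Every pair among $s,t,r$ has $(x,y)$ in its product of supports, so $\{s\},\{t\},\{r\}$ are three distinct $U$-classes. Yet for every consistent $\sigma$ and every $q\in\{s,t,r\}$ we have $\mathbb{P}^{\sigma}_{q}(aw)=\sigma(q,x)\,\mathbb{P}^{\sigma}_{x}(w)+\sigma(q,y)\,\mathbb{P}^{\sigma}_{y}(w)$, and words not starting with $a$ have probability $0$. Hence the three class functions are linearly dependent under \emph{every} $\sigma$. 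Moreover, under every deterministic $\sigma$, two of $s,t,r$ generate the same infinite word. Now take $\mu(s)=\mu(t)=\frac12$ and $\nu(r)=1$. All three differences $\delta$ are nonzero, so neither pairwise distinctness nor linear independence is available. A separating $\sigma$ must instead be tuned to $\delta$; here, any $\sigma$ with $\sum_q\delta_q\,\sigma(q,x)\neq 0$ works.

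\emph{The fresh-parameter step.} Your last idea points in the right direction, but the claim that the coefficient of the parameter is a nonzero multiple of $\delta_C$ fails for two reasons.

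First, shifting mass between two successors of a state $s$ can change the probability of $w$ from every state that reaches $s$ within $|w|$ steps. The coefficient therefore collects contributions from other classes. The paper makes the perturbation local by working one level at a time. Suppose $\mu,\nu$ already agree on all $U_n$-classes, and $\{s\}$ is a $U_{n+1}$-class split off from a $U_n$-class containing some $t\neq s$. Then Lemma~\ref{lemma:all-shorter-agree} shows that such a shift leaves all length-$n$ probabilities unchanged and alters length-$(n+1)$ probabilities only from $s$. Only then is the change in $\mathbb{P}^{\sigma}_{\mu}(aw)-\mathbb{P}^{\sigma}_{\nu}(aw)$ exactly $\varepsilon\,(\mu(s)-\nu(s))\,(\mathbb{P}^{\sigma}_{s'}(w)-\mathbb{P}^{\sigma}_{s''}(w))$.

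Second, a class need not be a singleton whose successors meet several classes. Add label-$a$ states $p_1,p_2$ whose sole successor is $x$, and a label-$a$ state $q_1$ whose sole successor is $y$. Then $\mu(p_1)=1$ and $\nu(q_1)=1$ differ on the class $\{p_1,p_2\}$, yet there is nothing to perturb. The discrepancy only becomes visible at the successor classes.

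The paper handles such classes $C_{a,D}$ (Lemma~\ref{lemma:ule:class-types}) as follows. It builds normalised auxiliary distributions $\mu',\nu'$ that put mass proportional to $\mu(C_{a,E})$, resp.\ $\nu(C_{a,E})$, on a representative of each $U_n$-class $E$. It shows that these agree on all length-$n$ words under every consistent $\sigma$, and then applies the induction hypothesis.

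What your proposal is missing is precisely this induction on $n$, proving that agreement on all length-$n$ words under all consistent $\sigma$ is equivalent to agreement on $S/U_n$. Taking $n=|S|$ then gives the theorem.
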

\begin{proof}[Proof sketch]
We prove, by induction on $n \in\INpos$, that two distributions assign the same probability to every word of length $n$ under every $\transB\in\cons{\trans}$ if and only if they agree on every $U_n$-class. See Appendix~\ref{appendix:ul-distributions} for an extended proof sketch.

Recall that language equivalence is determined by words of length at most $|\states|$ (see, e.g.,~\cite{DBLP:journals/siamcomp/Tzeng92}).
Taking $n=|\states|$ and $U_{|\states|}=U$ yields the desired result.
\end{proof}

\subparagraph*{Algorithm and complexity.} %

We show that we can decide whether two distributions are UL in $\nlogspace$ using the characterisation of \cref{theorem:ule-distributions}.
Instead of computing the state-UL relation, we iterate over each state $\state$ and show that the distributions agree on the class of $\state$, by querying an $\nlogspace$ oracle for UL to identify which states are UL to $\state$. Non-deterministic logarithmic space suffices to implement this algorithm. The use of an $\nlogspace$ oracle does not increase the complexity, as $\nlogspace^\nlogspace = \nlogspace$~\cite{DBLP:journals/siamcomp/Immerman88}. To check the equality of the distributions on each equivalence class in logarithmic space, we use the fact that the equality of sums of rational numbers can be decided in logarithmic space~\cite{DBLP:journals/tcs/Jerabek12}.

The matching lower bound follows from Lemma~\ref{lemma:hardness:ule}, since states are a special case of distributions, yielding the following result.
\begin{theorem}\label{theorem:ule:d:complexity}
  Deciding if two distributions are UL is $\nlogspace$-complete.
\end{theorem}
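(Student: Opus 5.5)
The plan is to establish the upper and lower bounds separately, reusing the state-level results from \cref{section:ule:states} together with the characterisation of \cref{theorem:ule-distributions}.

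\textbf{Upper bound.} By \cref{theorem:ule-distributions}, $\measure$ and $\measureB$ are UL if and only if $\measure(\eClass)=\measureB(\eClass)$ for every class $\eClass\in\states/U$. Since $U$ is an equivalence relation, each class is $[\state]_U$ for some representative $\state\in\states$, so the problem reduces to checking
\[
  \sum_{\stateB\in\states,\ (\state,\stateB)\in U}\measure(\stateB) \;=\; \sum_{\stateB\in\states,\ (\state,\stateB)\in U}\measureB(\stateB)
\]
for every $\state\in\states$. I would design a deterministic logspace procedure that loops over $\state$ with a counter. For each $\state$, it evaluates these two sums over rationals. The membership test $(\state,\stateB)\in U$ is answered by calls to an oracle for the UL problem for states, which is in $\nlogspace$ by \cref{theorem:ule:nl-states}. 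Equality of the two sums is decided with the logspace algorithm for comparing iterated sums of rationals \cite{DBLP:journals/tcs/Jerabek12}, with oracle answers substituted in to select which summands are included.

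\textbf{Main obstacle.} The delicate step is composing this logspace arithmetic routine with the oracle. The routine treats its input, here the indicator of which terms are included, as available on demand, and recomputing each oracle answer whenever the routine queries it keeps the space logarithmic. This places the problem in $\logspace^{\nlogspace}$, which is contained in $\nlogspace^{\nlogspace}=\nlogspace$ \cite{DBLP:journals/siamcomp/Immerman88}. The collapse relies on $\nlogspace$ being closed under complement, since the procedure uses negative as well as positive oracle answers. I would state explicitly which logspace-oracle model is used, for instance the Ruzzo--Simon--Tompa model, to make this collapse valid.

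\textbf{Lower bound.} Each state $\state$ corresponds to its Dirac distribution. The map $(\state,\stateB)\mapsto(\delta_\state,\delta_\stateB)$ is computable in logspace, and by definition two states are UL exactly when their Dirac distributions are UL. Hence the $\nlogspace$-hardness reduction of Lemma~\ref{lemma:hardness:ule} for states carries over directly to distributions.
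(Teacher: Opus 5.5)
Your proposal is correct and follows essentially the same route as the paper: both reduce via \cref{theorem:ule-distributions} to checking $\measure(\eClass)=\measureB(\eClass)$ class by class, answer class membership with an $\nlogspace$ oracle for state-UL, compare the iterated rational sums in logspace via Jer\'{a}bek's result, collapse $\nlogspace^{\nlogspace}=\nlogspace$, and inherit hardness from the state case through Dirac distributions. Your remarks on the oracle model (recomputing oracle answers on demand, fixing the Ruzzo--Simon--Tompa convention, and using closure under complement) make precise a point that the paper only states in passing.
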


\section{Existential equivalence}
\label{section:ee}
We study the existential perturbation equivalence relations and their corresponding decision problems.
We prove in Section~\ref{section:ele:deterministic} that EL for states coincides with EB by establishing that witness transition functions for EL can be chosen deterministic, and are therefore also witnesses of EB.
We discuss the complexity of the EL and EB problems in Section~\ref{section:ele:complexity}.
Technical details are deferred to Appendix~\ref{appendix:ee}.

\subsection{Deterministic witnesses for the comparison of states}\label{section:ele:deterministic}
Fix a labelled Markov chain $\mchain = \lmc$. %
A key structural insight is that probabilistic branching is unnecessary for establishing existential perturbation language equivalence: two states are EL if and only if they are language-equivalent under some deterministic consistent transition function. This characterisation is crucial for the $\np$ upper bound established in the following section.
We prove a stronger implication: if two states generate the same words (not necessarily with the same probabilities) under some consistent transition function, then there is a deterministic transition function witnessing that they are EL.

\begin{restatable}{theorem}{theoremELnew}\label{thm:ele-det}
  For all $\state, \stateB\in\states$, if there exists $\transB\in\cons{\trans}$ such that $\state$ and $\stateB$ generate the same words under $\transB$, then there exists a deterministic $\transB^{\mathsf{det}}\in\cons{\trans}$ such that $\state\langEquiv{\transB^{\mathsf{det}}}\stateB$.
\end{restatable}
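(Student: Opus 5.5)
The plan is to reduce the statement to finding a single infinite label sequence that both $\state$ and $\stateB$ can follow along paths of $\transB$, and then to make these two paths mutually consistent. Under a deterministic $\transB^{\mathsf{det}}$, every state $\stateC$ generates exactly one infinite path and hence one infinite word $\alpha(\stateC)\in\lblSet^\omega$. Consequently $\state\langEquiv{\transB^{\mathsf{det}}}\stateB$ holds iff $\alpha(\state)=\alpha(\stateB)$. So it suffices to choose one successor $\stateC'\in\supp{\transB(\stateC)}\subseteq\supp{\trans(\stateC)}$ for every state so that the two induced paths from $\state$ and $\stateB$ carry the same labels.

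First, I would build the \emph{pair graph} $G$. Its vertices are the pairs $(u,v)$ with $\lbl(u)=\lbl(v)$, and there is an edge $(u,v)\to(u',v')$ whenever $u'\in\supp{\transB(u)}$ and $v'\in\supp{\transB(v)}$. Since $\state$ and $\stateB$ generate the same finite words under $\transB$, for every $n$ there is a path of length $n$ in $G$ from $(\state,\stateB)$: take any history from $\state$ with label $w$ and a history from $\stateB$ with the same label. Because $G$ is finite, König's lemma gives an infinite path $(u_0,v_0)(u_1,v_1)\ldots$ from $(\state,\stateB)$. Finiteness then lets me pass to an ultimately periodic (lasso-shaped) such path. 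Along it, every pair $(u_k,v_k)$ shares a common infinite continuation.

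Second, I would extract deterministic choices from this pair path. If some pair has $u_k=v_k$, I truncate there: from that point both components coincide, and any choice of successors preserves equality of words. Otherwise, I would choose the path to be minimal among all infinite pair paths from $(\state,\stateB)$, using a lexicographic order on the sequence of first-occurrence times of states. I would then fix the successor of each state $\stateC$ at its \emph{first} occurrence along the path, in either component. The claim to prove is that, for a minimal path, every later occurrence of $\stateC$ follows the same successor. Suppose instead that $\stateC$ occurs at time $i$ and again at time $j>i$ with a different successor. Then I would splice: replace the continuation after $j$ by the continuation after $i$ (or vice versa), which gives a shorter or earlier-closing path and contradicts minimality.

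The main obstacle is this splicing step. When $\stateC$ reappears in the \emph{same} component, the splice works cleanly, because the partner component is replaced by the partner's continuation from time $i$ and the labels still agree. When $\stateC$ appears once as a $u$-state and once as a $v$-state, or with different partners, splicing one component forces a change in the other. One must then argue that the resulting pair sequence is still a path in $G$ with equal labels. My first attempt would be to note that if $u_i=\stateC=v_j$ (cross occurrence), then the pair sequence $(u_j,\ldots)$ can be replaced by continuing both components along the continuation of $\stateC$ from time $i$. This yields a pair of the form where both sides eventually hit the same states, i.e.\ a synchronisation ($u=v$), which is handled by the first case. I would formalise this as an induction on the number of distinct states occurring in the chosen path, decreasing it at each splice until all occurrences are consistent. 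At that point the first-occurrence choices, extended arbitrarily (within $\supp{\transB(\cdot)}$) to unvisited states, define $\transB^{\mathsf{det}}\in\cons{\trans}$ with $\alpha(\state)=\alpha(\stateB)$.
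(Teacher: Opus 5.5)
Your reduction to a single infinite word per state and your pair-graph/K\"onig construction are fine. The gap is in the extraction step: the splicing lemma is false.

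\textbf{Why the splice fails.} A splice is not a legal operation on pair paths. Suppose $u_i=u_j=q$ with $u_{i+1}\neq u_{j+1}$. Continuing the $u$-component after time $j$ as $u_{i+1}u_{i+2}\cdots$ forces the $v$-component onto $v_{i+1}v_{i+2}\cdots$. But $v_{i+1}$ need not be a successor of $v_j\neq v_i$, and keeping $v_{j+1}v_{j+2}\cdots$ instead breaks label equality. The cross case $u_i=q=v_j$ has the same defect, and nothing forces synchronisation.

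\textbf{Why no path-based repair works.} You use the hypothesis only to obtain one infinite pair path, i.e.\ one common infinite word of $s$ and $t$. That is too weak. Take states $s,t,q,r,r',A,A',B,B'$ with labels $c$ on $s,t$, $d$ on $q,r,r'$, $a$ on $A,A'$, and $b$ on $B,B'$. Let the supports be $s\to q$, $q\to\{A,B\}$, $A\to q$, $B\to B$, and $t\to r\to A'\to r'\to B'\to B'$. With $\sigma=\tau$, the unique infinite pair path from $(s,t)$ is $(s,t)(q,r)(A,A')(q,r')(B,B')(B,B')\cdots$. It is a lasso, never synchronises, and is trivially minimal, yet $q$ moves once to $A$ and once to $B$. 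Indeed no deterministic consistent function makes $s$ and $t$ equivalent: $s$ then generates $c(da)^\omega$ or $cdb^\omega$, whereas $t$ always generates $cdadb^\omega$. Everything after your K\"onig step applies verbatim to this chain, so that part of the argument cannot be valid.

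\textbf{The missing idea.} The choices must be made state by state, using equality of the \emph{entire} word sets rather than one shared path. The paper does this as follows.
\begin{enumerate}
\item Fix $n=|S|$ and a lexicographic order, and let $w_q$ be the $\le_{\mathsf{lex}}$-least word of length $n$ that can occur from $q$ under $\sigma$.
\item Let $\sigma^{\mathsf{det}}(q)$ be a successor $q'\in\mathsf{supp}(\sigma(q))$ with least $w_{q'}$.
\item Then $w_q$ is $\ell(q)$ followed by the length-$(n-1)$ prefix of $w_{q'}$, so by induction $q$ generates exactly $w_q$ under $\sigma^{\mathsf{det}}$.
\end{enumerate}
Consistency is automatic because each choice depends only on the state making it. The hypothesis is used precisely to get $w_s=w_t$. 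The fact that words of length $|S|$ determine language equivalence then concludes. Using lexicographically least infinite words, which exist by compactness, would even avoid that length bound.
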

\begin{proof}[Proof sketch]
  Let $\state, \stateB\in\states$ and $\transB\in\cons{\trans}$ such that $\state$ and $\stateB$ generate the same words under $\transB$.
  It suffices to show that there exists a deterministic $\transB^{\mathsf{det}}\in\cons{\trans}$ such that $\state$ and $\stateB$ generate the same word of length $n$ where $n=|\states|$ (see, e.g.,~\cite{DBLP:journals/siamcomp/Tzeng92}).
Fix a total order $\leq$ on $\lblSet$ and let $\leLex$ be the induced \emph{lexicographic order} on $\lblSet^*$.
For each $\stateC\in\states$, let $\word_{\stateC}\in\lblSet^n$ be the $\leLex$-minimal word that can occur from $\stateC$ under $\transB$.
By the assumption on $\transB$, we have $\word_\state = \word_\stateB$.

We construct $\transB^{\mathsf{det}}$ from $\transB$ by retaining, from each state $\stateC$, one successor $\stateC'$ with minimal $\word_{\stateC'}$. By definition of the lexicographic order, the unique length-$n$ word induced from a state $\stateC$ under $\transB^{\mathsf{det}}$ is $\word_{\stateC}$. Thus, $\state$ and $\stateB$ generate the same unique word under $\transB^{\mathsf{det}}$, i.e., $\state\langEquiv{\transB^{\mathsf{det}}}\stateB$.
\end{proof}

The idea of using a lexicographic ordering to filter out choices used above also appears, e.g., in~\cite{DBLP:conf/birthday/ChoffrutG99} to derive a rational function from a rational relation.

As language equivalence and bisimilarity are equivalent in deterministic Markov chains, \cref{thm:ele-det} implies the following. %

\begin{restatable}{theorem}{theoremELepb}\label{thm:ele-epb}
  Two states are EL if and only if they are EB.
\end{restatable}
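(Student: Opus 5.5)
The plan is to prove the two implications separately, using \cref{thm:ele-det} for the nontrivial one.

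For the direction from EB to EL, suppose $\state$ and $\stateB$ are EB, witnessed by some $\transB\in\cons{\trans}$ with $\state\bisim{\transB}\stateB$. Bisimilarity with respect to a fixed transition function implies language equivalence with respect to that same function; this is standard and was already used in the proof of \cref{corollary:ule:upb}. Hence $\state\langEquiv{\transB}\stateB$, and $\state$ and $\stateB$ are EL with the same witness $\transB$.

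For the direction from EL to EB, suppose $\state\langEquiv{\transB}\stateB$ for some $\transB\in\cons{\trans}$. Language-equivalent states give positive probability to exactly the same finite words, so they generate the same words under $\transB$. \cref{thm:ele-det} then yields a deterministic $\transB^{\mathsf{det}}\in\cons{\trans}$ with $\state\langEquiv{\transB^{\mathsf{det}}}\stateB$. It remains to show that, for a deterministic transition function, language equivalence implies bisimilarity.

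To do this, I would take $R$ to be the relation $\langEquiv{\transB^{\mathsf{det}}}$ restricted to states, which is an equivalence relation, and check that it is a bisimulation with respect to $\transB^{\mathsf{det}}$. Under a deterministic transition function, each state $\stateC$ has a unique path. Its language is therefore the Dirac measure on the single infinite word $\lbl(\stateC)\lbl(\stateC')\ldots$, where $\stateC'$ is the unique successor of $\stateC$. Now let $(\stateC_1,\stateC_2)\in R$, with unique successors $\stateC_1'$ and $\stateC_2'$. Their generated infinite words coincide, so their first letters are equal, giving $\lbl(\stateC_1)=\lbl(\stateC_2)$. Removing the first letter shows that the words generated by $\stateC_1'$ and $\stateC_2'$ also coincide, so $(\stateC_1',\stateC_2')\in R$. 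For any class $\eClass\in\states/R$, $\transB^{\mathsf{det}}(\stateC_i,\eClass)$ is $1$ if $\stateC_i'\in\eClass$ and $0$ otherwise. Since $\stateC_1'$ and $\stateC_2'$ lie in the same class, these values agree. So $R$ is a bisimulation containing $(\state,\stateB)$, and therefore $\state\bisim{\transB^{\mathsf{det}}}\stateB$, which means $\state$ and $\stateB$ are EB.

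No step here is a real obstacle, because all the substance is in \cref{thm:ele-det}. The only points needing care are two small checks. First, the hypothesis ``generate the same words'' of \cref{thm:ele-det} must follow from $\langEquiv{\transB}$, which holds because equal word probabilities imply equal supports. Second, the bisimulation argument for deterministic chains must correctly use that the language of a state is a single infinite word.
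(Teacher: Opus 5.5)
Your proposal is correct and follows the paper's proof. EB implies EL because bisimilarity implies language equivalence, and EL implies EB via the deterministic witness from \cref{thm:ele-det} together with the fact that, for a deterministic transition function, language equivalence is a bisimulation. You verify that last fact explicitly, whereas the paper simply asserts it.
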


We briefly comment on variants of EL and EB requiring small perturbations of $\trans$.
For $\varepsilon>0$, we say that two states are $\varepsilon$-EL (resp.~$\varepsilon$-EB) if there exists $\transB\in\cons{\trans}$ in the $\varepsilon$-neighbourhood of $\trans$ such that they are language-equivalent (resp. bisimilar) under $\transB$.
We observe the following implications.
\begin{restatable}{proposition}{propEpsilonELImplications}\label{proposition:ele:epsilon-implications}
  For any $\varepsilon > 0$ and two states, $\varepsilon$-EB implies $\varepsilon$-EL, and $\varepsilon$-EL implies EL and EB.
  The converse implications do not hold in general.
\end{restatable}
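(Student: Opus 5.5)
The positive implications follow directly from the definitions and earlier results. The converses are refuted by the state pairs $(t,u)$ and $(q,r)$ of \cref{figure:intro-example}, for suitably small $\varepsilon$.

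\textbf{Implications.} I fix $\varepsilon>0$ and two states $\state,\stateB$.
If $\state,\stateB$ are $\varepsilon$-EB, there is a witness $\transB\in\cons{\trans}$ in the $\varepsilon$-neighbourhood of $\trans$ with $\state\bisim{\transB}\stateB$. Since bisimilarity implies language equivalence, the same $\transB$ gives $\state\langEquiv{\transB}\stateB$, so $\state,\stateB$ are $\varepsilon$-EL.
If $\state,\stateB$ are $\varepsilon$-EL, the witness lies in $\cons{\trans}$, so they are EL by definition. They are then also EB by \cref{thm:ele-epb}.

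\textbf{First counterexample: EL and EB do not imply $\varepsilon$-EL.} I use the pair $(t,u)$, taking $\varepsilon<\frac12$. All states except $w$'s yellow successor are white. Under any $\transB\in\cons{\trans}$, $u$ produces the word white$\cdot$yellow with probability $1$. From $t$, the word white$\cdot$white$\cdot$yellow has probability $\transB(t,u)$. Language equivalence therefore forces $\transB(t,u)=0$, a change of $\frac12$ from $\trans(t,u)=\frac12$. So $(t,u)$ is EL (hence EB) but not $\varepsilon$-EL. This also refutes the converse of ``$\varepsilon$-EL implies EL and EB''.

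\textbf{Second counterexample: $\varepsilon$-EL does not imply $\varepsilon$-EB.} I use the pair $(q,r)$, taking $\varepsilon<\frac13$. The only non-white label reachable from either state is the yellow absorbing state, and it is reached exactly at step $3$.
\begin{itemize}
\item From $q$, yellow is reached with probability $\transB(q,a)$.
\item From $r$, yellow is reached with probability $\transB(z,c)$.
\end{itemize}
All other paths are white forever. Hence $q\langEquiv{\transB}r$ if and only if $\transB(q,a)=\transB(z,c)$. In particular $\trans$ itself witnesses $\varepsilon$-EL for every $\varepsilon$, since $\frac23=\frac23$.

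For bisimilarity, I argue on the successors under a given $\transB$.
\begin{itemize}
\item The state $a$ reaches yellow surely.
\item The state $x$ never reaches yellow.
\item The state $z$ reaches yellow with probability $\transB(z,c)$.
\end{itemize}
Since $\supp{\transB(r)}=\{z\}$, $r$ sends probability $1$ to the class of $z$. Bisimilarity of $q$ and $r$ then requires $q$ to send probability $1$ to that class. Every successor of $q$ with positive probability must therefore be bisimilar to $z$. This forces one of two cases:
\begin{itemize}
\item $\transB(q,x)=0$ and $z\bisim{\transB}a$, which requires $\transB(z,y)=0$;
\item $\transB(q,a)=0$ and $z\bisim{\transB}x$, which requires $\transB(z,c)=0$.
\end{itemize}
Each case changes some transition probability by at least $\frac13$. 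So $(q,r)$ is not $\varepsilon$-EB.

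The main obstacle is this last case analysis. It needs care to show that no mixed assignment yields bisimilarity, namely that each successor of $q$ with positive probability must individually be bisimilar to $z$. I would make this precise by comparing the probability of reaching yellow from each successor.
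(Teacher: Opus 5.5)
Your proof is correct and follows the paper's own argument. The forward implications go via ``bisimilarity implies language equivalence'' and \cref{thm:ele-epb}. The converses are refuted by $(t,u)$ for $\varepsilon<\frac12$ and by $(q,r)$ for $\varepsilon<\frac13$, and your case analysis is a more explicit version of the paper's remark that $r$'s unique successor can reach two labels while each successor of $q$ reaches only one.

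The ``obstacle'' you flag is already closed by your own observation. Since $\transB(q,[z])=1$, every positive-probability successor of $q$ lies in one bisimulation class. Both successors cannot be in it, because one reaches yellow surely and the other never. So one of $\transB(q,x)$, $\transB(q,a)$ must be $0$, a change of at least $\frac13$.
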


\subsection{The complexity of EL and EB} \label{section:ele:complexity}

We discuss the complexity of the existential problems: we show that the EL and EB problem for states are $\np$-complete and that the EL problem for distributions is $\exists\IR$-complete.

\subparagraph*{Equivalence of states.}
For deciding if states are EL or EB, a direct $\np$ upper bound follows from \cref{thm:ele-det}: it suffices to guess a deterministic transition function that is consistent with the input transition function and check in polynomial time whether the two considered states are equivalent.
We show that these problems are $\np$-hard by a reduction from the $3$-SAT problem in Appendix~\ref{appendix:ele:np-hardness}.
We obtain the following complexity result.
\begin{restatable}{theorem}{theoremComplexityELstates}\label{theorem:complexity:ele-states}
Deciding if two states are EL (or, equivalently, EB) is $\np$-complete.
\end{restatable}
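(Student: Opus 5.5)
The statement has two parts: membership in $\np$, and $\np$-hardness. By \cref{thm:ele-epb} it suffices to treat EL.

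\textbf{Upper bound.} By \cref{thm:ele-det}, two states $\state,\stateB$ are EL if and only if some deterministic $\transB\in\cons{\trans}$ satisfies $\state\langEquiv{\transB}\stateB$. Such a $\transB$ is a choice of one successor in $\supp{\trans(\stateC)}$ for each $\stateC\in\states$, so it has polynomial size and can be guessed. Under a deterministic $\transB$, each state generates a single infinite word, which is ultimately periodic. Hence $\state\langEquiv{\transB}\stateB$ holds if and only if the two lasso paths from $\state$ and $\stateB$ carry the same labels for the first $|\states|$ steps (in fact $2|\states|$ steps certainly suffice). This is checked in polynomial time. Alternatively, one can compute bisimilarity of the deterministic chain in polynomial time.

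\textbf{Lower bound.} I would reduce from $3$-SAT. Given a formula $\varphi$ with variables $x_1,\dots,x_k$ and clauses $c_1,\dots,c_m$, I build an LMC with the following structure.
\begin{itemize}
\item[] \emph{Variable gadgets.} For each variable $x_i$ there is a choice state $v_i$ with two successors, $T_i$ and $F_i$. The deterministic choice at $v_i$ encodes the truth value of $x_i$. Each of $T_i$ and $F_i$ continues along a chain that emits a distinguishing label pattern (e.g., label $b$ then a sink $\bot$ versus label $c$ then $\bot$), so that the choice at $v_i$ determines which literal of $x_i$ is ``true''.
\item[] \emph{First state $\state$.} From $\state$, a chain or timed structure visits each clause.
\item[] \emph{Second state $\stateB$.} From $\stateB$, a path emits a fixed ``target'' word.
\end{itemize}
To avoid distributions, I would make the clause choice deterministic and sequential. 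The state $\state$ goes through clause states $c_1,\dots,c_m$ in order. At clause $c_j$ it chooses one of its three literals by moving to a literal-check gadget. That gadget routes into the variable gadget of the chosen literal and reads off whether the literal is satisfied under the choice at $v_i$. The problem is that after entering the gadget the path cannot return to $c_{j+1}$, since the gadgets are shared.

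So I would use the standard trick of making the word a \emph{product} of checks across many parallel starting points, which in turn requires distributions. Instead, I would use the fact that equivalence of states still forces equality of all successors' words under the deterministic choice. Pair up states: $\state$ has label $a$ with a single successor sequence, and state-equivalence is only demanded between $\state$ and $\stateB$. Then the following works. Let $\state$ go to clause state $c_1$, and let $\stateB$ go to a fixed state $d_1$. For the two to agree, the word generated from $c_1$ must equal the one from $d_1$. Chain the clauses by label blocks: from clause state $c_j$ the path enters a literal state $\ell_{j,r}$. That literal state has two successors: one goes to a copy of $v_i$'s test, and the other goes to $c_{j+1}$. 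Agreement with the fixed word from $\stateB$ forces the test to succeed.

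\textbf{Main obstacle.} The hardest step is ensuring that shared variable gadgets are consulted consistently without the path getting lost. If that sequential design becomes messy, I would fall back to having each literal occurrence $\ell_{j,r}$ consist of a copy chain that jumps, at a designated time, to $T_i$ or $F_i$. Conflicting literal choices would then produce a label mismatch at a position determined by $j$. The length-padding of these chains so that mismatches land at distinguishable positions is where I expect the real work.
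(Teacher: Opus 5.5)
Your $\np$ upper bound is correct and is the paper's argument: by \cref{thm:ele-det}, guess a deterministic $\sigma\in\mathsf{cons}(\tau)$, then compare the two single words on their first $|S|$ letters. The hardness half, however, has a genuine gap. No reduction is actually specified, and every design you sketch fails for the same structural reason. In the direction ``EL $\Rightarrow$ satisfiable'' you may assume $\sigma$ deterministic, so $s$ and $t$ each generate a single path. All checks must therefore be performed in sequence along that one path.

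\begin{itemize}
\item Your variable gadgets end in a sink $\bot$, so the path can consult at most one of them.
\item Letting a literal state $\ell_{j,r}$ branch to ``the test'' or to $c_{j+1}$ does not help. $\sigma$ keeps exactly one successor, so either nothing is tested or $c_{j+1}$ is never reached.
\item Routing into ``a copy of $v_i$'s test'' loses consistency, because the copy makes its own independent choice.
\item Padding the chains changes nothing. Once the path enters a per-variable state such as $T_i$ or $F_i$, it continues to one fixed successor, so it cannot return to different clauses. Moreover, under a deterministic $\sigma$, revisiting any state makes the path repeat itself forever from that point.
\end{itemize}

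Your remark that equivalence forces equality of ``all successors' words'' has no content here, since there is only one successor.

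The missing idea is to use exactly this lasso property as the consistency mechanism. In the paper, the shared states are per literal \emph{occurrence}, and $t$ is not a fixed word:
\begin{itemize}
\item Each $V_\lambda$ has successors $F_\lambda$ (continuing to the next occurrence in its variable gadget) and $T_\lambda$ (continuing to the next clause), so the path never gets stuck.
\item $t$ is replaced by an enforcing copy $s'$ with its own choices at $x_i'$ and $C_j'$, which the $s$-path must mirror through the labels.
\item $s'$ has separate counterparts $V^F_\lambda$ and $V^T_\lambda$ with unique successors. These force $\sigma(V_\lambda)=F_\lambda$ on every occurrence falsified in phase one, and $\sigma(V_\lambda)=T_\lambda$ on the occurrence picked for each clause in phase two.
\end{itemize}
Since $\sigma(V_\lambda)$ is a single value, a clause can only be witnessed by a true occurrence. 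In a satisfying run, the path from $s$ is a simple cycle back to $s$.

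If you prefer to keep your fixed target word, the lasso property gives a much shorter reduction, from directed Hamiltonian cycle. Take a graph $G$ on $n\geq 2$ vertices, adding self-loops at sinks. Label one vertex $s$ by $b$ and all others by $a$, and let $t$ start a fresh deterministic $n$-cycle labelled $ba^{n-1}$. Then $s$ and $t$ are EL iff $G$ has a Hamiltonian cycle:
\begin{itemize}
\item Every positive-probability history $v_0\cdots v_n$ from $s$ must satisfy $v_n=s$, since $s$ is the only $b$-labelled state reachable from $s$.
\item A repetition $v_i=v_j$ with $1\leq i<j\leq n-1$ would yield a shortcut history placing a $b$ too early.
\item Hence $v_0,\dots,v_{n-1}$ are pairwise distinct and enumerate all vertices along edges of $G$.
\end{itemize}
Either way, the lower bound needs a concrete construction of this kind, and your proposal does not yet contain one.
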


\subparagraph*{Equivalence of distributions.}
The EL problem for distributions is harder than the EL problem for states, since randomness is necessary in general (cf.~\cref{figure:eled}).
Membership of the EL problem for distributions in $\exists\IR$ follows from the $\exists\IR$-membership of a generalisation of the EL problem to MDPs~\cite{DBLP:conf/fsttcs/Kiefer020} (cf.~Appendix~\ref{section:mdps:problems} for a reduction).
For the lower bound, we present a reduction from the $\exists\IR$-complete non-negative matrix factorisation problem to the EL problem for distributions in Appendix~\ref{appendix:ele:nmf}.
We obtain the following result.

\begin{restatable}{theorem}{theoremNMFele}\label{theorem:nmf-ele}
Deciding if two distributions are EL is $\exists\IR$-complete.
\end{restatable}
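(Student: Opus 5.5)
The statement of \cref{theorem:nmf-ele} has two halves, and I would prove them separately: membership in $\exists\IR$, and $\exists\IR$-hardness.

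\emph{Membership.} I would encode the problem directly as an existential first-order sentence over the reals. The variables are $x_{\stateC,\stateC'}$, one for each pair with $\stateC'\in\supp{\trans(\stateC)}$. The constraints are $x_{\stateC,\stateC'}\geq 0$ and $\sum_{\stateC'}x_{\stateC,\stateC'}=1$ for each $\stateC$. To these I add the equations $\probLV{\measure}{\transB}(\word)=\probLV{\measureB}{\transB}(\word)$ for all $\word\in\lblSet^{\leq|\states|}$, using that equivalence is determined by words of length at most $|\states|$ (as recalled before).

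There are exponentially many words, so I would not write these equations out. Instead I would use the standard linear-algebraic characterisation. Let $M_a$ denote the substochastic matrix $\transB$ restricted to targets with label $a$. Language equivalence holds iff the vector $\measure-\measureB$, restricted by the first label, lies in a subspace orthogonal to the forward-reachable span. This is expressed existentially by guessing real vectors $v_1,\dots,v_k$ ($k\leq|\states|$) that span a space $V$ with the following properties: $V$ contains the relevant initial difference vectors, $V$ is closed under each $M_a$ (witnessed by existentially quantified coefficients), and $V$ is orthogonal to $\oneVect$. All of these conditions are polynomial equations in the variables. Alternatively, I would simply cite the Kiefer--Tang $\exists\IR$ membership for memoryless strategies in MDPs via the reduction mentioned above.

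\emph{Hardness.} I would reduce from NMF: given a nonnegative rational $n\times m$ matrix $A$ and a number $r$, decide whether $A=WH$ with $W\in\IR_{\geq0}^{n\times r}$ and $H\in\IR_{\geq0}^{r\times m}$. I would first normalise so that $A$ is row-stochastic. This is standard: scale the rows, and then one may assume $W$ and $H$ are row-stochastic as well. The LMC is built in layers.
\begin{itemize}
\item The first layer has states $p_1,\dots,p_n$, each with a distinct label $i$.
\item Each $p_i$ transitions, with free probabilities (these play the role of $W$), to hidden states $h_1,\dots,h_r$, all sharing one label.
\item Each $h_k$ transitions, with free probabilities (the role of $H$), to absorbing states $e_1,\dots,e_m$, where $e_j$ has label $j$.
\item Separately, for each $i$ there are rigid states $p'_i$ with label $i$ and a single successor chain that realises row $A_i$ deterministically. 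Concretely, $p'_i$ goes to a state $g_i$ with the hidden label, and $g_i$ has fixed probabilities $A_{ij}$ to $e_j$.
\end{itemize}
The issue with the last item is that consistent perturbations could change $g_i$'s probabilities, so the rigid side is not actually rigid. I would handle this by moving the fixed probabilities into the \emph{initial distribution} instead. Let $\measureB$ put mass $A_{ij}/n$ on a state $d_{ij}$ that deterministically emits label $i$, then the hidden label, then $j$ (each state having a single successor, so the chain is immune to perturbation). Let $\measure$ put mass $1/n$ on each $p_i$.

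With this construction, the probability of the word $i\,\cdot\,\text{hidden}\,\cdot\,j$ is $(WH)_{ij}/n$ from $\measure$ and $A_{ij}/n$ from $\measureB$. All longer words are determined by the absorbing labels. Hence $\measure$ and $\measureB$ are EL iff $A=WH$ for some stochastic $W$ and $H$ of inner dimension $r$. The construction is polynomial.

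\emph{Main obstacle.} I expect the main difficulty to be the normalisation step: showing that NMF with general nonnegative factors is equivalent to NMF with row-stochastic $A$, $W$ and $H$ while preserving $\exists\IR$-hardness. My plan is the standard diagonal rescaling. Given $A=WH$ with $A$ row-stochastic, first discard zero rows of $H$. Then rescale $H$'s rows to sum to $1$ and absorb the scaling into $W$. Row-sums of $A$ being $1$ then force $W$ to be row-stochastic as well.
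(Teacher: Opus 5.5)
Your proposal is correct, and the hardness half is essentially the paper's proof. The paper also puts the target matrix into an initial distribution over deterministic chains $s^a_{ij}\to s^c_{ij}\to s^b_{ij}$ (mass $\frac{1}{n}M_{ij}$). It lets the other distribution be uniform on states $t^a_i$, whose two layers of freely choosable transitions $t^a_i\to t^c_r\to t^b_j$ realise the two factors. It takes the row-stochastic normalisation from Cohen and Rothblum rather than proving it.

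The genuine difference is membership. The paper only invokes Kiefer and Tang's $\exists\IR$ bound for memoryless strategies in MDPs, via its translation of consistent transition functions into memoryless strategies; this is your fallback option. Your main route instead writes the existential sentence directly: guess the transition probabilities together with $k\le|S|$ vectors spanning a space that contains the initial difference vector(s), is closed under the label matrices, and is orthogonal to $\mathbf{1}$. This is sound and of polynomial size. The span generated from the difference vectors by the label matrices witnesses one direction, and induction on words gives the other. Your route makes the upper bound self-contained at the cost of a few lines, whereas the paper's citation is shorter but depends on the MDP result and the translation lemma. Your phrase ``lies in a subspace orthogonal to the forward-reachable span'' misstates the criterion, but the conditions you actually list are the right ones.

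Two minor points in your normalisation. First, delete zero rows of $A$ before rescaling; this is harmless, since the corresponding rows of $W$ can be set to zero. Second, after discarding zero rows of $H$, pad back to exactly $r$ hidden states by giving each unused one a zero column in $W$ and an arbitrary stochastic row in $H$.
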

\begin{figure}[htb]
  \centering
  \begin{tikzpicture}[every state/.style={minimum size=0.7cm}]
    \node at (7,1.25) () {$\nu$};
    \node at (2,1.25) () {$\mu$};
    \node[state,fill=black,scale=0.4,minimum size=1pt] at (7,1) (nu) {};
    \node[state,fill=black,scale=0.4,minimum size=1pt] at (2,1) (mu) {};
    \node[state] at (8,0) (s) {};
    \node[state] at (0,0) (w) {};
    \node[state] at (2,0) (t) {$s$};
    \node[state, fill=lipicsYellow] at (4,0) (v) {};
    \node[state] at (6,0) (u) {};
    \path[-stealth] (nu) edge node[above right] {$\frac{1}{2}$} (s);
    \path[-stealth] (nu) edge node[above left] {$\frac{1}{2}$} (u);
    \path[-stealth] (mu) edge node[left] {$1$} (t);
    \path[-stealth] (t) edge (w);
    \path[-stealth] (t) edge (v);
    \path[-stealth] (u) edge (v);
    \path[-stealth] (s) edge[loop right] (s);
    \path[-stealth] (v) edge[loop above] (v);
    \path[-stealth] (w) edge[loop left] (w);
  \end{tikzpicture}
  \caption{An LMC in which deterministic transition functions do not suffice to witness that the distributions $\mu$ and $\nu$ are EL, since we must assign probability $\frac{1}{2}$ to each outgoing transition from $s$.}
  \label{figure:eled}
\end{figure}

\section{Conclusion}
\label{section:conclusion}
We have introduced perturbation variants of language equivalence and probabilistic bisimilarity for labelled Markov chains, motivated by settings in which the underlying graph is known while the transition probabilities may be uncertain. We have established a complete complexity landscape for these notions.  We have shown that universal perturbation equivalence is a stable notion of equivalence that admits several characterisations, namely \cref{corollary:ule:upb} and \cref{proposition:ule:epsilon} show that UL, UB, $\varepsilon$-UL and $\varepsilon$-UB all coincide.
Our experimental results have demonstrated that universal equivalence provides an efficient and robust minimisation technique for probabilistic model checking.

Several directions for future work emerge. A natural next step is to extend perturbation behavioural equivalence beyond labelled Markov chains to richer probabilistic models, such as probabilistic automata or stochastic games, where non-determinism and probability interact. Another promising direction is to study relaxations of universal equivalence, which may offer a larger reduction of the model while retaining a useful degree of robustness. Finally, it would be interesting to investigate the effectiveness of our algorithm on interval Markov chains.

\bibliography{references}

@string{lmcs = {Logical Methods in Computer Science}}

@string{acta = {Acta Informatica}}

@string{siamjc = {SIAM Journal on Computing}}

@string{siamjo = {SIAM Journal on Optimization}}

@string{tcs = {Theoretical Computer Science}}

@string{ipl = {Information Processing Letters}}

@string{infcontrol = {Information and Control}}

@string{jcss = {Journal of Computer and System Sciences}}

@string{sttt = {International Journal on Software Tools for Technology Transfer}}

@string{ijfcs = {International Journal of Foundations of Computer Science}}

@string{concur99 = {Proceedings of the 10th International Conference on Concurrency Theory, {CONCUR} 1999, Eindhoven, The Netherlands, August 24--27, 1999}}

@string{concur26 = {Proceedings of the 37th International Conference on Concurrency Theory, {CONCUR} 2026, Liverpool, United Kingdom, September 1--5, 2026}}

@string{fossacs12 = {Proceedings of the 15th International Conference on Foundations of Software Science and Computational Structures, {FoSSaCS} 2012, Held as Part of {ETAPS} 2012, Tallinn, Estonia, March 24 -- April 1, 2012}}

@string{fsttcs20 = {Proceedings of the 40th {IARCS} Annual Conference on Foundations of Software Technology and Theoretical Computer Science, {FSTTCS} 2020, {BITS} Pilani, {K}{K} Birla Goa Campus, Goa, India (Virtual Conference), December 14--18, 2020}}

@string{fsttcs21 = {Proceedings of the 41st {IARCS} Annual Conference on Foundations of Software Technology and Theoretical Computer Science, {FSTTCS} 2021, Virtual Conference, December 15--17, 2021}}

@string{lics91 = {Proceedings of the 6th Annual {IEEE} Symposium on Logic in Computer Science, {LICS} 1991, Amsterdam, The Netherlands, July 15--18, 1991}}

@string{stacs26 = {Proceedings of the 43rd International Symposium on Theoretical Aspects of Computer Science, {STACS} 2026, Grenoble, France, March 9--13, 2026}}

@string{tacas07 = {Proceedings of the 13th International Conference on Tools and Algorithms for the Construction and Analysis of Systems, {TACAS} 2007, Held as Part of {ETAPS} 2007, Braga, Portugal, March 24 -- April 1, 2007}}

@string{tacas13 = {Proceedings of the 19th International Conference on Tools and Algorithms for the Construction and Analysis of Systems, {TACAS} 2013, Held as Part of {ETAPS} 2013, Rome, Italy, March 16--24, 2013}}

@string{tacas19I = {Proceedings (Part {I}) of the 25th International Conference on Tools and Algorithms for the Construction and Analysis of Systems, {TACAS} 2019, Held as Part of {ETAPS} 2019, Prague, Czech Republic, April 6--11, 2019}}

@string{cav96 = {Proceedings of the 8th International Conference on Computer Aided Verification, {CAV} 1996, New Brunswick, NJ, USA, July 31 -- August 3, 1996}}

@string{cav11 = {Proceedings of the 23rd International Conference on Computer Aided Verification, {CAV} 2011, Snowbird, UT, USA, July 14--20, 2011}}

@string{cav25II = {Proceedings (Part {II}) of the 37th International Conference on Computer Aided Verification, {CAV} 2025, Zagreb, Croatia, July 23--25, 2025}}

@string{qest14 = {Proceedings of the 11th International Conference on Quantitative Evaluation of Systems, {QEST} 2014, Florence, Italy, September 8--10, 2014}}

@string{stoc71 ={Proceedings of the 3rd Annual {ACM} Symposium on Theory of Computing, {STOC} 1971, Shaker Heights, Ohio, {USA}, May 3--5, 1971}}

@string{syncop14 = {Proceedings of the 1st International Workshop on Synthesis of Continuous Parameters, SynCoP 2014, Grenoble, France, April 6, 2014}}

@string{lata16 = {Proceedings of the 10th International Conference on Language and Automata Theory and Applications, {LATA} 2016, Prague, Czech Republic, March 14--18, 2016}}

@string{ictac04 = {Proceedings of the 1st International Colloquium on Theoretical Aspects of Computing, {ICTAC} 2004, Guiyang, China, September 20--24, 2004}}

@string{popl89 = {Proceedings of the 16th Annual {ACM} Symposium on Principles of Programming Languages, POPL 1989, Austin, {TX}, {USA}, January 11--13, 1989}}

@book{BK08,
  author       = {Christel Baier and
                  Joost{-}Pieter Katoen},
  title        = {Principles of model checking},
  publisher    = {{MIT} Press},
  year         = {2008},
  isbn         = {978-0-262-02649-9},
  bibsource    = {dblp computer science bibliography, https://dblp.org}
}

@inproceedings{Co71,
  author       = {Stephen A. Cook},
  editor       = {Michael A. Harrison and
                  Ranan B. Banerji and
                  Jeffrey D. Ullman},
  title        = {The Complexity of Theorem-Proving Procedures},
  booktitle    = stoc71,
  pages        = {151--158},
  publisher    = {{ACM}},
  year         = {1971},
  url          = {https://doi.org/10.1145/800157.805047},
  doi          = {10.1145/800157.805047},
  bibsource    = {dblp computer science bibliography, https://dblp.org}
}

@article{Jon75,
  author       = {Neil D. Jones},
  title        = {Space-Bounded Reducibility among Combinatorial Problems},
  journal      = jcss,
  volume       = {11},
  number       = {1},
  pages        = {68--85},
  year         = {1975},
  url          = {https://doi.org/10.1016/S0022-0000(75)80050-X},
  doi          = {10.1016/S0022-0000(75)80050-X},
  bibsource    = {dblp computer science bibliography, https://dblp.org}
}

@article{DBLP:journals/sttt/HenselJKQV22,
  author       = {Christian Hensel and
                  Sebastian Junges and
                  Joost{-}Pieter Katoen and
                  Tim Quatmann and
                  Matthias Volk},
  title        = {The probabilistic model checker {Storm}},
  journal      = sttt,
  volume       = {24},
  number       = {4},
  pages        = {589--610},
  year         = {2022},
  url          = {https://doi.org/10.1007/s10009-021-00633-z},
  doi          = {10.1007/S10009-021-00633-Z},
  bibsource    = {dblp computer science bibliography, https://dblp.org}
}

@inproceedings{DBLP:conf/cav/KwiatkowskaNP11,
  author    = {Marta Z. Kwiatkowska and
               Gethin Norman and
               David Parker},
  editor    = {Ganesh Gopalakrishnan and
               Shaz Qadeer},
  title     = {{PRISM} 4.0: Verification of Probabilistic Real-Time Systems},
  booktitle = cav11,
  series    = {Lecture Notes in Computer Science},
  volume    = {6806},
  pages     = {585--591},
  publisher = {Springer},
  year      = {2011},
  url       = {https://doi.org/10.1007/978-3-642-22110-1\_47},
  doi       = {10.1007/978-3-642-22110-1\_47},
  bibsource = {dblp computer science bibliography, https://dblp.org}
}

@inproceedings{DBLP:conf/fsttcs/Kiefer020,
  author       = {Stefan Kiefer and
                  Qiyi Tang},
  editor       = {Nitin Saxena and
                  Sunil Simon},
  title        = {Comparing Labelled {Markov} Decision Processes},
  booktitle    = fsttcs20,
  series       = {LIPIcs},
  pages        = {49:1--49:16},
  publisher    = {Schloss Dagstuhl - Leibniz-Zentrum f{\"{u}}r Informatik},
  year         = {2020},
  url          = {https://doi.org/10.4230/LIPIcs.FSTTCS.2020.49},
  doi          = {10.4230/LIPICS.FSTTCS.2020.49},
  bibsource    = {dblp computer science bibliography, https://dblp.org}
}

@inproceedings{DBLP:conf/fossacs/ChenBW12,
  author       = {Di Chen and
                  Franck {van Breugel} and
                  James Worrell},
  editor       = {Lars Birkedal},
  title        = {On the Complexity of Computing Probabilistic Bisimilarity},
  booktitle    = fossacs12,
  series       = {Lecture Notes in Computer Science},
  pages        = {437--451},
  publisher    = {Springer},
  year         = {2012},
  url          = {https://doi.org/10.1007/978-3-642-28729-9\_29},
  doi          = {10.1007/978-3-642-28729-9\_29},
  bibsource    = {dblp computer science bibliography, https://dblp.org}
}

@article{DBLP:journals/ipl/Tzeng96,
  author       = {Wen{-}Guey Tzeng},
  title        = {On Path Equivalence of Nondeterministic Finite Automata},
  journal      = ipl,
  volume       = {58},
  number       = {1},
  pages        = {43--46},
  year         = {1996},
  url          = {https://doi.org/10.1016/0020-0190(96)00039-7},
  doi          = {10.1016/0020-0190(96)00039-7},
  bibsource    = {dblp computer science bibliography, https://dblp.org}
}

@article{DBLP:journals/lmcs/FijalkowKS20,
  author       = {Nathana{\"{e}}l Fijalkow and
                  Stefan Kiefer and
                  Mahsa Shirmohammadi},
  title        = {Trace Refinement in Labelled {Markov} Decision Processes},
  journal      = lmcs,
  volume       = {16},
  number       = {2},
  year         = {2020},
  url          = {https://doi.org/10.23638/LMCS-16(2:10)2020},
  doi          = {10.23638/LMCS-16(2:10)2020},
  bibsource    = {dblp computer science bibliography, https://dblp.org}
}

@article{DBLP:journals/siamcomp/Immerman88,
  author       = {Neil Immerman},
  title        = {Nondeterministic Space is Closed Under Complementation},
  journal      = siamjc,
  volume       = {17},
  number       = {5},
  pages        = {935--938},
  year         = {1988},
  url          = {https://doi.org/10.1137/0217058},
  doi          = {10.1137/0217058},
  bibsource    = {dblp computer science bibliography, https://dblp.org}
}

@article{DBLP:journals/acta/Szelepcsenyi88,
  author       = {R{\'{o}}bert Szelepcs{\'{e}}nyi},
  title        = {The Method of Forced Enumeration for Nondeterministic Automata},
  journal      = acta,
  volume       = {26},
  number       = {3},
  pages        = {279--284},
  year         = {1988},
  url          = {https://doi.org/10.1007/BF00299636},
  doi          = {10.1007/BF00299636},
  bibsource    = {dblp computer science bibliography, https://dblp.org}
}

@article{DBLP:journals/tcs/Jerabek12,
  author       = {Emil Jer{\'{a}}bek},
  title        = {Root finding with threshold circuits},
  journal      = tcs,
  volume       = {462},
  pages        = {59--69},
  year         = {2012},
  url          = {https://doi.org/10.1016/j.tcs.2012.09.001},
  doi          = {10.1016/J.TCS.2012.09.001},
  bibsource    = {dblp computer science bibliography, https://dblp.org}
}

@article{DBLP:journals/siamcomp/ChandraSV84,
  author       = {Ashok K. Chandra and
                  Larry J. Stockmeyer and
                  Uzi Vishkin},
  title        = {Constant Depth Reducibility},
  journal      = siamjc,
  volume       = {13},
  number       = {2},
  pages        = {423--439},
  year         = {1984},
  url          = {https://doi.org/10.1137/0213028},
  doi          = {10.1137/0213028},
  bibsource    = {dblp computer science bibliography, https://dblp.org}
}

@article{DBLP:journals/jcss/HesseAB02,
  author       = {William Hesse and
                  Eric Allender and
                  David A. Mix Barrington},
  title        = {Uniform constant-depth threshold circuits for division and iterated
                  multiplication},
  journal      = jcss,
  volume       = {65},
  number       = {4},
  pages        = {695--716},
  year         = {2002},
  url          = {https://doi.org/10.1016/S0022-0000(02)00025-9},
  doi          = {10.1016/S0022-0000(02)00025-9},
  bibsource    = {dblp computer science bibliography, https://dblp.org}
}

@article{shitov18NMF,
  author    = {Yaroslav Shitov},
  title     = {A universality theorem for nonnegative matrix factorizations}, 
  journal   = {CoRR},
  volume    = {abs/1606.09068},
  year      = {2018},
  url       = {https://arxiv.org/abs/1606.09068}, 
}

@misc{shitov21NMF,
  author    = {Yaroslav Shitov},
  title     = {A universality theorem for nonnegative matrix factorizations}, 
  url       = {https://vixra.org/abs/2101.0167},
  year      = {2021},
  note      = {viXra e-Print archive}
}

@inproceedings{DBLP:conf/birthday/ChoffrutG99,
  author       = {Christian Choffrut and
                  Serge Grigorieff},
  editor       = {Juhani Karhum{\"{a}}ki and
                  Hermann A. Maurer and
                  Gheorghe Paun and
                  Grzegorz Rozenberg},
  title        = {Uniformization of Rational Relations},
  booktitle    = {Jewels are Forever, Contributions on Theoretical Computer Science
                  in Honor of Arto Salomaa},
  pages        = {59--71},
  publisher    = {Springer},
  year         = {1999},
  bibsource    = {dblp computer science bibliography, https://dblp.org}
}

@inproceedings{DBLP:conf/stacs/CernyS26,
  author       = {Marek Cern{\'{y}} and
                  Tim Seppelt},
  editor       = {Meena Mahajan and
                  Florin Manea and
                  Annabelle McIver and
                  Kim Thang Nguyen},
  title        = {Homomorphism Indistinguishability, Multiplicity Automata Equivalence,
                  and Polynomial Identity Testing},
  booktitle    = stacs26,
  series       = {LIPIcs},
  pages        = {25:1--25:20},
  publisher    = {Schloss Dagstuhl - Leibniz-Zentrum f{\"{u}}r Informatik},
  year         = {2026},
  url          = {https://doi.org/10.4230/LIPIcs.STACS.2026.25},
  doi          = {10.4230/LIPICS.STACS.2026.25},
  bibsource    = {dblp computer science bibliography, https://dblp.org}
}

@inproceedings{DBLP:conf/tacas/HartmannsKPQR19,
  author       = {Arnd Hartmanns and
                  Michaela Klauck and
                  David Parker and
                  Tim Quatmann and
                  Enno Ruijters},
  editor       = {Tom{\'{a}}s Vojnar and
                  Lijun Zhang},
  title        = {The Quantitative Verification Benchmark Set},
  booktitle    = tacas19I,
  series       = {Lecture Notes in Computer Science},
  volume       = {11427},
  pages        = {344--350},
  publisher    = {Springer},
  year         = {2019},
  url          = {https://doi.org/10.1007/978-3-030-17462-0\_20},
  doi          = {10.1007/978-3-030-17462-0\_20},
  bibsource    = {dblp computer science bibliography, https://dblp.org}
}

@article{CR93,
  title={Nonnegative ranks, decompositions, and factorizations of nonnegative matrices},
  author={Cohen, Joel E and Rothblum, Uriel G},
  journal={Linear Algebra and its Applications},
  volume={190},
  pages={149--168},
  year={1993},
  publisher={Elsevier}
}

@article{DBLP:journals/siamjo/Vavasis09,
  author       = {Stephen A. Vavasis},
  title        = {On the Complexity of Nonnegative Matrix Factorization},
  journal      = siamjo,
  volume       = {20},
  number       = {3},
  pages        = {1364--1377},
  year         = {2009},
  url          = {https://doi.org/10.1137/070709967},
  doi          = {10.1137/070709967},
  bibsource    = {dblp computer science bibliography, https://dblp.org}
}

@inproceedings{DBLP:conf/cav/FatmiKPB25,
  author       = {Syyeda Zainab Fatmi and
                  Stefan Kiefer and
                  David Parker and
                  Franck van Breugel},
  editor       = {Ruzica Piskac and
                  Zvonimir Rakamaric},
  title        = {Robust Probabilistic Bisimilarity for Labelled {Markov} Chains},
  booktitle    = cav25II,
  series       = {Lecture Notes in Computer Science},
  volume       = {15932},
  pages        = {254--275},
  publisher    = {Springer},
  year         = {2025},
  url          = {https://doi.org/10.1007/978-3-031-98679-6\_12},
  doi          = {10.1007/978-3-031-98679-6\_12},
  bibsource    = {dblp computer science bibliography, https://dblp.org}
}

@inproceedings{DBLP:conf/lics/JonssonL91,
  author       = {Bengt Jonsson and
                  Kim Guldstrand Larsen},
  title        = {Specification and Refinement of Probabilistic Processes},
  booktitle    = lics91,
  pages        = {266--277},
  publisher    = {{IEEE} Computer Society},
  year         = {1991},
  url          = {https://doi.org/10.1109/LICS.1991.151651},
  doi          = {10.1109/LICS.1991.151651},
  bibsource    = {dblp computer science bibliography, https://dblp.org}
}

@inproceedings{DBLP:conf/ictac/Daws04,
  author       = {Conrado Daws},
  editor       = {Zhiming Liu and
                  Keijiro Araki},
  title        = {Symbolic and Parametric Model Checking of Discrete-Time {Markov} Chains},
  booktitle    = ictac04,
  series       = {Lecture Notes in Computer Science},
  volume       = {3407},
  pages        = {280--294},
  publisher    = {Springer},
  year         = {2004},
  url          = {https://doi.org/10.1007/978-3-540-31862-0\_21},
  doi          = {10.1007/978-3-540-31862-0\_21},
  bibsource    = {dblp computer science bibliography, https://dblp.org}
}

@inproceedings{DBLP:conf/tacas/BenediktLW13,
  author       = {Michael Benedikt and
                  Rastislav Lenhardt and
                  James Worrell},
  editor       = {Nir Piterman and
                  Scott A. Smolka},
  title        = {{LTL} Model Checking of Interval {Markov} Chains},
  booktitle    = tacas13,
  series       = {Lecture Notes in Computer Science},
  volume       = {7795},
  pages        = {32--46},
  publisher    = {Springer},
  year         = {2013},
  url          = {https://doi.org/10.1007/978-3-642-36742-7\_3},
  doi          = {10.1007/978-3-642-36742-7\_3},
  bibsource    = {dblp computer science bibliography, https://dblp.org}
}

@article{DBLP:journals/corr/abs-2009-01217,
  author       = {Stefan Kiefer},
  title        = {Notes on Equivalence and Minimization of Weighted Automata},
  journal      = {CoRR},
  volume       = {abs/2009.01217},
  year         = {2020},
  url          = {https://arxiv.org/abs/2009.01217},
  eprinttype   = {arXiv},
  eprint       = {2009.01217},
  bibsource    = {dblp computer science bibliography, https://dblp.org}
}

@inproceedings{DBLP:conf/tacas/KatoenKZJ07,
  author       = {Joost{-}Pieter Katoen and
                  Tim Kemna and
                  Ivan S. Zapreev and
                  David N. Jansen},
  editor       = {Orna Grumberg and
                  Michael Huth},
  title        = {Bisimulation Minimisation Mostly Speeds Up Probabilistic Model Checking},
  booktitle    = tacas07,
  series       = {Lecture Notes in Computer Science},
  volume       = {4424},
  pages        = {87--101},
  publisher    = {Springer},
  year         = {2007},
  url          = {https://doi.org/10.1007/978-3-540-71209-1\_9},
  doi          = {10.1007/978-3-540-71209-1\_9},
  bibsource    = {dblp computer science bibliography, https://dblp.org}
}

@article{DBLP:journals/iandc/Schutzenberger61b,
  author       = {Marcel Paul Sch{\"{u}}tzenberger},
  title        = {On the Definition of a Family of Automata},
  journal      = infcontrol,
  volume       = {4},
  number       = {2-3},
  pages        = {245--270},
  year         = {1961},
  url          = {https://doi.org/10.1016/S0019-9958(61)80020-X},
  doi          = {10.1016/S0019-9958(61)80020-X},
  bibsource    = {dblp computer science bibliography, https://dblp.org}
}

@book{Paz71,
  author       = {Azaria Paz},
  title        = {Introduction to Probabilistic Automata},
  publisher    = {Academic Press},
  year         = {1971},
  isbn         = {978-0-12-547650-8},
  doi          = {10.1016/C2013-0-11297-4}
}

@article{DBLP:journals/ijfcs/DoyenHR08,
  author       = {Laurent Doyen and
                  Thomas A. Henzinger and
                  Jean{-}Fran{\c{c}}ois Raskin},
  title        = {Equivalence of Labeled {Markov} Chains},
  journal      = ijfcs,
  volume       = {19},
  number       = {3},
  pages        = {549--563},
  year         = {2008},
  url          = {https://doi.org/10.1142/S0129054108005814},
  doi          = {10.1142/S0129054108005814},
  bibsource    = {dblp computer science bibliography, https://dblp.org}
}

@article{DBLP:journals/siamcomp/Tzeng92,
  author       = {Wen{-}Guey Tzeng},
  title        = {A Polynomial-Time Algorithm for the Equivalence of Probabilistic Automata},
  journal      = siamjc,
  volume       = {21},
  number       = {2},
  pages        = {216--227},
  year         = {1992},
  url          = {https://doi.org/10.1137/0221017},
  doi          = {10.1137/0221017},
  bibsource    = {dblp computer science bibliography, https://dblp.org}
}

@inproceedings{DBLP:conf/popl/LarsenS89,
  author       = {Kim Guldstrand Larsen and
                  Arne Skou},
  title        = {Bisimulation Through Probabilistic Testing},
  booktitle    = popl89,
  pages        = {344--352},
  publisher    = {{ACM} Press},
  year         = {1989},
  url          = {https://doi.org/10.1145/75277.75307},
  doi          = {10.1145/75277.75307},
  bibsource    = {dblp computer science bibliography, https://dblp.org}
}

@book{KS60,
    author    = "John G. Kemeny and J. Laurie Snell",
    title     = "Finite {M}arkov chains",
    publisher = "Springer-Verlag",
    year      = "1960",
    address   = "Heidelberg, Germany"
}

@inproceedings{DBLP:conf/cav/Baier96,
  author       = {Christel Baier},
  editor       = {Rajeev Alur and
                  Thomas A. Henzinger},
  title        = {Polynomial Time Algorithms for Testing Probabilistic Bisimulation
                  and Simulation},
  booktitle    = cav96,
  series       = {Lecture Notes in Computer Science},
  volume       = {1102},
  pages        = {50--61},
  publisher    = {Springer},
  year         = {1996},
  url          = {https://doi.org/10.1007/3-540-61474-5\_57},
  doi          = {10.1007/3-540-61474-5\_57},
  bibsource    = {dblp computer science bibliography, https://dblp.org}
}

@inproceedings{DBLP:conf/qest/JaegerMLM14,
  author       = {Manfred Jaeger and
                  Hua Mao and
                  Kim Guldstrand Larsen and
                  Radu Mardare},
  editor       = {Gethin Norman and
                  William H. Sanders},
  title        = {Continuity Properties of Distances for {Markov} Processes},
  booktitle    = qest14,
  series       = {Lecture Notes in Computer Science},
  volume       = {8657},
  pages        = {297--312},
  publisher    = {Springer},
  year         = {2014},
  url          = {https://doi.org/10.1007/978-3-319-10696-0\_24},
  doi          = {10.1007/978-3-319-10696-0\_24},
  bibsource    = {dblp computer science bibliography, https://dblp.org}
}

@inproceedings{DBLP:conf/fsttcs/Kiefer021,
  author       = {Stefan Kiefer and
                  Qiyi Tang},
  editor       = {Mikolaj Bojanczyk and
                  Chandra Chekuri},
  title        = {Approximate Bisimulation Minimisation},
  booktitle    = fsttcs21,
  series       = {LIPIcs},
  volume       = {213},
  pages        = {48:1--48:16},
  publisher    = {Schloss Dagstuhl - Leibniz-Zentrum f{\"{u}}r Informatik},
  year         = {2021},
  url          = {https://doi.org/10.4230/LIPIcs.FSTTCS.2021.48},
  doi          = {10.4230/LIPICS.FSTTCS.2021.48},
  bibsource    = {dblp computer science bibliography, https://dblp.org}
}

@inproceedings{DBLP:conf/concur/DesharnaisGJP99,
  author       = {Jos{\'{e}}e Desharnais and
                  Vineet Gupta and
                  Radha Jagadeesan and
                  Prakash Panangaden},
  editor       = {Jos C. M. Baeten and
                  Sjouke Mauw},
  title        = {Metrics for Labeled {Markov} Systems},
  booktitle    = concur99,
  series       = {Lecture Notes in Computer Science},
  volume       = {1664},
  pages        = {258--273},
  publisher    = {Springer},
  year         = {1999},
  url          = {https://doi.org/10.1007/3-540-48320-9\_19},
  doi          = {10.1007/3-540-48320-9\_19},
  bibsource    = {dblp computer science bibliography, https://dblp.org}
}

@book{DBLP:books/daglib/0023601,
  author       = {Brian A. Davey and
                  Hilary A. Priestley},
  title        = {Introduction to Lattices and Order, Second Edition},
  publisher    = {Cambridge University Press},
  year         = {2002},
  url          = {https://doi.org/10.1017/CBO9780511809088},
  doi          = {10.1017/CBO9780511809088},
  isbn         = {978-0-521-78451-1},
  bibsource    = {dblp computer science bibliography, https://dblp.org}
}

@inproceedings{AKJB25,
    author    = {Adnan Ahmed and Hiva Karami and Anto Nanah Ji and Franck van Breugel},
    title     = {Bisimulation Minimisation Still Mostly Speeds Up Probabilistic Model Checking},
    booktitle = {Proceedings of the 4th Workshop on Reproducibility and Replication of Research Results},
    year      = {2025},
    address   = {Hamilton, ON, Canada},
    month     = may
}

@inproceedings{DBLP:conf/lata/HashemiH0STW16,
  author       = {Vahid Hashemi and
                  Holger Hermanns and
                  Lei Song and
                  K. Subramani and
                  Andrea Turrini and
                  Piotr Wojciechowski},
  editor       = {Adrian{-}Horia Dediu and
                  Jan Janousek and
                  Carlos Mart{\'{\i}}n{-}Vide and
                  Bianca Truthe},
  title        = {Compositional Bisimulation Minimization for Interval {Markov} Decision
                  Processes},
  booktitle    = lata16,
  series       = {Lecture Notes in Computer Science},
  volume       = {9618},
  pages        = {114--126},
  publisher    = {Springer},
  year         = {2016},
  url          = {https://doi.org/10.1007/978-3-319-30000-9\_9},
  doi          = {10.1007/978-3-319-30000-9\_9},
  bibsource    = {dblp computer science bibliography, https://dblp.org}
}

@inproceedings{DBLP:journals/corr/HashemiHK14,
  author       = {Vahid Hashemi and
                  Hassan Hatefi and
                  Jan Krc{\'{a}}l},
  editor       = {{\'{E}}tienne Andr{\'{e}} and
                  Goran Frehse},
  title        = {Probabilistic Bisimulations for {PCTL} Model Checking of Interval
                  {MDPs} (extended version)},
  booktitle    = syncop14,
  series       = {{EPTCS}},
  volume       = {145},
  pages        = {19--33},
  year         = {2014},
  url          = {https://doi.org/10.4204/EPTCS.145.4},
  doi          = {10.4204/EPTCS.145.4},
  bibsource    = {dblp computer science bibliography, https://dblp.org}
}

@inproceedings{concur,
    author       = {Syyeda Zainab Fatmi and Stefan Kiefer and David Parker and Franck van Breugel},
    editor       = {Ana Sokolova and Patrick Totzke},
    title        = {On the continuity of the probabilistic bisimilarity distance},
    booktitle    = concur26,
    series       = {LIPIcs},
    volume       = {391},
    pages        = {34:1--34:18},
    publisher    = {Schloss Dagstuhl - Leibniz-Zentrum f{\"{u}}r Informatik},
    year         = {2026},
    doi          = {10.4230/LIPIcs.CONCUR.2026.34}
}

\appendix
\newpage

\section{Details of Section~\ref{section:ue} -- \nameref{section:ue}}
\label{appendix:ue}
We fix a labelled Markov chain $\mchain = \lmc$ for the whole section.

\subsection{Comparison to robust probabilistic bisimilarity}
\label{appendix:robust}

The labelled Markov chain in \cref{figure:upb-types} illustrates the distinction between robust probabilistic bisimilarity and UB that was outlined in the introduction.
The states $s$ and $x$ are robustly probabilistically bisimilar, however, they are not UB, because any perturbation of the outgoing transition probabilities of $s$ destroys probabilistic bisimilarity.
Conversely, $u$ and $w$ are UB, but they are not robustly probabilistically bisimilar, since non-consistent perturbations can significantly change the behaviour of these states.
The states $v$ and $y$ satisfy both notions, while $s$ and $t$ are probabilistically bisimilar but satisfy neither.

\begin{figure}[ht]
  \centering
  \begin{tikzpicture}[every state/.style={minimum size=0.7cm}]
    \node[state] at (1,3) (s) {$s$};
    \node[state] at (3,3) (t) {$t$};
    \node[state, fill=lipicsYellow] at (0,1.5) (u) {$u$};
    \node[state] at (2,1.5) (v) {$v$};
    \node[state, fill=lipicsYellow] at (4,1.5) (w) {$w$};
    \node[state] at (1,0) (x) {$x$};
    \node[state] at (3,0) (y) {$y$};
    \path[-stealth] (s) edge (u);
    \path[-stealth] (s) edge (v);
    \path[-stealth] (t) edge (v);
    \path[-stealth] (t) edge (w);
    \path[-stealth] (u) edge[loop left] (u);
    \path[-stealth] (v) edge (w);
    \path[-stealth] (w) edge[loop right] (w);
    \path[-stealth] (x) edge (u);
    \path[-stealth] (x) edge (v);
    \path[-stealth] (y) edge (w);
  \end{tikzpicture}
  \caption{A labelled Markov chain witnessing the types of probabilistically bisimilar states. The transition probabilities are omitted, assume a uniform distribution over successors for each state.}
  \label{figure:upb-types}
\end{figure}

\subsection{Proof of \cref{theorem:ule-states}}

The goal of this section is to prove \cref{theorem:ule-states}.
We recall some notation from the main text.
We let $U$ denote the restriction of the UL relation to states, and for all $n\in\INpos$, we let $U_n$ denote the set of states that are universally perturbation language equivalent for words of length $n$.
We consider the operator $\stable\colon \ccInt{\diag}{\mathord{\sim_0}}\to \ccInt{\diag}{\mathord{\sim_0}}$ defined, for all $R\in\ccInt{\diag}{\mathord{\sim_0}}$, by
\[
  \stable(R) = \diag \cup \{\, (s, t) \in R \mid \supp{\trans(s)} \times \supp{\trans(t)} \subseteq R\}.
\]

We first prove the following technical lemma, that is useful in the last point of the proof sketched in the main text.
The lemma states that, for all $n\geq 2$, if all successors of two states $\state$ and $\stateB$ are UL for words of length $n-1$, then changing the outgoing probabilities of $\state$ will not affect the probability of words of length $n$ from any state, and will not affect the probability of words of length $n+1$ from states other than $\state$.
\begin{lemma}\label{lemma:all-shorter-agree}
  Let $n\geq 2$ and $\state, \stateB\in\states$. %
  Assume that for all $\state'\in\supp{\trans(\state)}$ and all $\stateB'\in\supp{\trans(\stateB)}$, $(\state', \stateB')\in U_{n-1}$.
  Let $\transB, \transB'\in\cons{\trans}$ such that $\transB$ and $\transB'$ agree over $\states\setminus\{\state\}$.
  Then:
  \begin{itemize}
  \item for all $\word\in\lblSet^n$ and all $\stateC\in\states$, $\probLV{\stateC}{\transB}(\word)=\probLV{\stateC}{\transB'}(\word)$;
  \item for all $\word\in\lblSet^{n+1}$ and all $\stateC\in\states$ such that $\stateC\neq\state$, $\probLV{\stateC}{\transB}(\word)=\probLV{\stateC}{\transB'}(\word)$.
  \end{itemize}
\end{lemma}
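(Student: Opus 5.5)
The key observation is that the probability of a word from a state depends on $\transB$ only through the transitions taken before the word ends. For $\word = a_1\word'$ with $|\word'| = m$, we have
\[
\probLV{\stateC}{\transB}(\word) = [\lbl(\stateC)=a_1]\sum_{\stateC'}\transB(\stateC,\stateC')\,\probLV{\stateC'}{\transB}(\word').
\]
The plan is to show that the values $\probLV{\stateC'}{\transB}(\word')$, for $\stateC'$ ranging over the successors of $\state$, do not depend on $\transB(\state)$ as long as $|\word'|\le n-1$. This is done by a simultaneous induction on word length.

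\textbf{Step 1.} Prove by induction on $m\le n$ the following claim: for all $\stateC$ and all $\word\in\lblSet^m$, $\probLV{\stateC}{\transB}(\word)=\probLV{\stateC}{\transB'}(\word)$. The base case $m=1$ is immediate, since the probability is just a label check.

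For the step from $m-1$ to $m\le n$, expand as above.
\begin{itemize}
\item If $\stateC\neq\state$, then $\transB(\stateC)=\transB'(\stateC)$, and the inner probabilities agree by the induction hypothesis.
\item If $\stateC=\state$, we use the hypothesis of the lemma. All $\state'\in\supp{\trans(\state)}$ are pairwise in $U_{n-1}$. Indeed, pick any $\stateB'\in\supp{\trans(\stateB)}$; then $(\state',\stateB')\in U_{n-1}$, and $U_{n-1}$ is transitive, being an intersection of equivalence relations. Since $U_{n-1}$ constrains words of length exactly $n-1$, it also constrains all shorter words by marginalising: $\probLV{}{}(\word'')=\sum_{b}\probLV{}{}(\word'' b\cdots)$. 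Hence for $|\word'|=m-1\le n-1$ the value $\probLV{\state'}{\transB}(\word')$ is a common value $c_{\word'}$ for all successors $\state'$ of $\state$, and the same holds under $\transB'$. The sum therefore collapses to $[\lbl(\state)=a_1]\,c_{\word'}$ under $\transB$, and to $[\lbl(\state)=a_1]\,c'_{\word'}$ under $\transB'$.
\end{itemize}
It remains to see that $c_{\word'}=c'_{\word'}$. Fix one successor $\state'$. Then $c_{\word'}=\probLV{\state'}{\transB}(\word')$ and $c'_{\word'}=\probLV{\state'}{\transB'}(\word')$, and these are equal by the induction hypothesis at length $m-1$, which holds for every state, including $\state'$ even if $\state'=\state$. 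This gives the first bullet.

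\textbf{Step 2.} Take $\word=a\word'$ with $|\word'|=n$ and $\stateC\neq\state$. The one-step expansion uses $\transB(\stateC)=\transB'(\stateC)$, and the first bullet makes the inner probabilities on words of length $n$ agree. This gives the second bullet.

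The only delicate point is in Step 1: we must argue that $U_{n-1}$ is an equivalence relation, so that all successors of $\state$ agree with one another, and that it controls shorter words. We need the latter because the induction passes through lengths below $n-1$. The proof also uses that $\supp{\transB(\state)}\subseteq\supp{\trans(\state)}$, which holds by consistency. The hypothesis $n\ge 2$ ensures that $U_{n-1}$ is defined.
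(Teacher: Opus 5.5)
Your proof is correct and follows essentially the same route as the paper. It uses induction on word length with the case split $q\neq s$ versus $q=s$, collapses the sum at $s$ via the $U_{n-1}$ hypothesis before applying the induction hypothesis, and gets the second item from the first by one more unfolding. The only cosmetic differences are two: the paper skips your transitivity step by comparing each successor of $s$ directly to a fixed $t'\in\mathsf{supp}(\tau(t))$, and it leaves implicit the marginalisation from length $n-1$ to shorter words that you spell out.
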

\begin{proof}
  For the first item, we prove the following property by induction on $1\leq m\leq n$: for all $\stateC\in\states$ and all $\word\in\lblSet^m$, $\probLV{\stateC}{\transB}(\word)=\probLV{\stateC}{\transB'}(\word)$.
  For $m=1$, i.e., for words of length $1$, we have $\probLV{\stateC}{\transB'}(a) = \probLV{\stateC}{\transB}(a)$ (which is $1$ if $a = \lbl(\stateC)$ and $0$ otherwise) for all $\stateC \in \states$ and all $a \in \lblSet$.
  
  Now let $1 \leq m < n$ and assume that the claim holds for all words of length $m$ by induction.
  Let $\stateC \in \states$, $a \in\lblSet$, and $\word \in\lblSet^{m}$.
  We must show that $\probLV{\stateC}{\transB}(a\word) = \probLV{\stateC}{\transB'}(a\word)$.
  If $a \neq\lbl(\stateC)$, we have $\probLV{\stateC}{\transB}(a\word) = \probLV{\stateC}{\transB'}(a\word) = 0$.
  We now assume that $a = \lbl(\stateC)$.
  If $\stateC \neq \state$, it follows from $\transB(\stateC) = \transB'(\stateC)$ and the induction hypothesis that
  \begin{equation}\label{equation:all-shorter:eq}
    \probLV{\stateC}{\transB'}(a\word)
    = \sum_{\stateC' \in S}\transB'(\stateC, \stateC')\, \probLV{\stateC'}{\transB'}(\word)
    = \sum_{\stateC' \in \states}\transB(\stateC, \stateC')\, \probLV{\stateC'}{\transB}(\word)
    = \probLV{\stateC}{\transB}(a\word).
  \end{equation}
  We now assume that $\stateC = \state$.
  Fix $\stateB'\in\supp{\trans(\stateB)}$.
  It follows from our assumption on $\state$ and $\stateB$ that for all $\state'\in\supp{\trans(\state)}$, we have $\probLV{\state'}{\transB}(\word) = \probLV{\stateB'}{\transB}(\word)$ (because $\word\in\lblSet^m$ and $m\leq n-1$).
  We obtain that
  \begin{equation*}
    \probLV{\state}{\transB}(a\word)
    =
    \sum_{\state' \in\supp{\trans(\state)}}\transB(s, \state')\, \probLV{\state'}{\transB}(\word) = \probLV{\stateB'}{\transB}(\word) \text{ and }
    \probLV{\state}{\transB'}(a\word)
    = \probLV{\stateB'}{\transB'}(\word).
  \end{equation*}
  It follows from the induction hypothesis that $\probLV{\stateB'}{\transB}(\word) = \probLV{\stateB'}{\transB'}(\word)$, i.e., $\probLV{\state}{\transB}(a\word) = \probLV{\state}{\transB'}(a\word)$.
  This completes the proof of the first item.

  The second item can be shown in the same way as the first inductive case above, by using the first item of the lemma instead of the induction hypothesis and Equation~\eqref{equation:all-shorter:eq}.
\end{proof}

We now formally prove \cref{theorem:ule-states}.

\theoremULStates*
\begin{proof}
  Let $R_1 = U_1$ and, for all $n\geq 1$, $R_{n+1} = \stable^{n}(U_1)$.
  We prove that, for all $n\geq 1$, $R_n = U_n$ by induction.
  For base case $n = 1$, we have we have $R_1 = U_1 = \mathord{\sim_0}$ by definition.  

  In the inductive case, we assume that $R_n = U_n$ and that for all $(\state, \stateB)\in\states^2\setminus U_n$, there exists $\word\in\lblSet^n$ and a $\transB\in\cons{\trans}$ such that $\probLV{\state}{\transB}(\word) \neq \probLV{\stateB}{\transB}(\word)$.
  We prove that  $R_{n+1} = U_{n+1}$ and that for all $(\state, \stateB)\in\states^2\setminus U_{n+1}$, there exists $\word\in\lblSet^{n+1}$ and a $\transB\in\cons{\trans}$ such that $\probLV{\state}{\transB}(\word) \neq \probLV{\stateB}{\transB}(\word)$.
  Let $\state, \stateB \in\states$.
  We prove the inclusions $R_{n+1}\subseteq U_{n+1}$ and $\states^2\setminus R_{n+1} \subseteq \states^2\setminus U_{n+1}$.
  We fix $(\state, \stateB)\in\states^2$ for the remainder of the proof.

  Suppose $(\state, \stateB) \in R_{n+1}$.
  If $(\state, \stateB) \in \diag$, i.e., $\state=\stateB$, then $(\state, \stateB) \in U_{n+1}$ by definition.
  We now assume that $\state \neq \stateB$.
  Since $R_{n+1} \subseteq R_1$, we have $\lbl(\state) = \lbl(\stateB)$.
  Let $\transB \in \cons{\trans}$. Then
  \begin{equation}
    \label{equation:support-equiv}
    \supp{\trans(s)} \times \supp{\trans(t)} \subseteq R_n = U_n
  \end{equation}
  Let $a\in\lblSet$ and $\word \in \lblSet^{n}$.
  If $a\neq\lbl(\state)=\lbl(\stateB)$, then $\probLV{\state}{\transB}(a\word) = \probLV{\stateB}{\transB}(a\word) = 0$.
  Otherwise, if $a = \lbl(\state) = \lbl(\stateB)$, then
  \[
    \probLV{\state}{\transB}(a\word) = \sum_{\stateC \in \supp{\transB(s)}} \transB(s, \stateC)\, \probLV{\stateC}{\transB}(\word)
  \]
  and
  \[
    \probLV{\stateB}{\transB}(a\word) = \sum_{\stateC \in \supp{\transB(t)}} \transB(t, \stateC)\, \probLV{\stateC}{\transB}(\word)
  \]
  By \eqref{equation:support-equiv}, $\probLV{\stateC}{\transB}(\word)$ is identical for all $\stateC\in\supp{\transB(\state)}\cup\supp{\transB(\stateB)}$, thus $(\state, \stateB) \in U_{n+1}$.

  We now let $(\state, \stateB)\notin R_{n+1}$ and show that there exists $\word'\in\lblSet^{n+1}$ and a transition function $\transB\in\cons{\trans}$ such that $\probLV{\state}{\transB}(\word')\neq\probLV{\stateB}{\transB}(\word')$, thereby proving that $(\state, \stateB)\notin U_{n+1}$.
  If $(\state, \stateB)\notin R_n = U_n$, then the result follows directly from the induction hypothesis.
  We therefore assume that $(\state, \stateB) \in U_n$.
  In particular, we have $\lbl(\state) = \lbl(\stateB)$; we let $a = \lbl(\state)$ denote this common label.

  By definition of $\stable$, $\state \neq \stateB$ and there exists $(\state', \stateB') \in \supp{\trans(\state)} \times \supp{\trans(\stateB)}$ such that $(\state', \stateB')\notin R_n = U_n$.
  Since $(\state',\stateB') \not\in U_n$, there exist $\word\in\lblSet^{n}$ and $\transB \in \cons{\trans}$ such that $\probLV{\state'}{\transB}(\word) \neq \probLV{\stateB'}{\transB}(\word)$.
  We distinguish two cases.
  If $\probLV{\state}{\transB}(a\word)\neq \probLV{\stateB}{\transB}(a\word)$, then the proof is finished.

  We now assume that $\probLV{\state}{\transB}(a\word) = \probLV{\stateB}{\transB}(a\word)$.
  Then there exists $\state''\in \supp{\transB(\state)}$ such that $\probLV{\state''}{\transB}(\word) \neq \probLV{\state'}{\transB}(\word)$ or there exists $\stateB'' \in\supp{\transB(\stateB)}$ such that $\probLV{\stateB''}{\transB}(\word) \neq \probLV{\stateB'}{\transB}(\word)$ (otherwise, we would have $\probLV{\state}{\transB}(a\word) = \probLV{\state'}{\transB}(\word) \neq \probLV{\stateB'}{\transB}(\word) = \probLV{\stateB}{\transB}(a\word)$, which contradicts $\probLV{\state}{\transB}(a\word) = \probLV{\stateB}{\transB}(a\word)$).
  We assume without loss of generality that there exists $\state'' \in \supp{\transB(s)}$ such that 
\begin{equation}
\label{equation:distinguish}
\probLV{\state'}{\transB}(\word) \neq \probLV{\state''}{\transB}(\word).
\end{equation}

For all $\delta\in\ccInt{0}{\transB(\state, \state'')}$, we define $\transB_\delta \in\cons{\trans}$ by, for all $\stateC, \stateC'\in\states$,
\[
\transB_\delta(\stateC, \stateC')=
\begin{cases}
\transB(\stateC, \stateC') + \delta \qquad & \text{if } (\stateC, \stateC') = (\state, \state')\\
\transB(\stateC, \stateC') - \delta & \text{if } (\stateC, \stateC') = (\state, \state'')\\
\transB(\stateC, \stateC') & \text{otherwise}.
\end{cases}
\]
Intuitively, $\transB_\delta$ redirects some $\delta$ of the outgoing probability of $\state$ from $\state''$ to $\state'$.
We show that $\probLV{\state}{\transB_\delta}(a\word) \neq \probLV{\stateB}{\transB_\delta}(a\word)$ if $\delta>0$, which establishes $(\state, \stateB) \notin U_{n+1}$.

We first prove that $\probLV{\stateB}{\transB}(a\word) = \probLV{\stateB}{\transB_\delta}(a\word)$ (i.e., perturbing $\transB$ as above does not change the probability of $a\word$ from $\stateB$) and that $\probLV{\stateC}{\transB}(\word) = \probLV{\stateC}{\transB_\delta}(\word)$ for all $\stateC\in\states$ and all $\delta\in\ccInt{0}{\transB(\state, \state'')}$.
We distinguish two cases, depending on whether $n=1$ or $n\geq 2$.
If $n=1$, then $\word\in\lblSet$, and the claim follows immediately from the facts that the probability of words of length $1$ does not depend on the transition function from any state, and the fact that $\transB$ and the functions $\transB_\delta$ agree on $\stateB\neq\state$.
Assume now that $n\geq 2$.
We apply Lemma~\ref{lemma:all-shorter-agree} to compare the probability of words under $\transB$ and $\transB_\delta$, as $(\state, \stateB)\in R_n$ implies that $\supp{\trans(\state)}\times\supp{\trans(\stateB)}\subseteq R_{n-1} = U_{n-1}$ (by the inductive hypothesis):
Lemma~\ref{lemma:all-shorter-agree} implies that for all $\delta\in\ccInt{0}{\transB(\state, \state'')}$ and all $\stateC\in\states$, $\probLV{\stateB}{\transB_\delta}(a\word) = \probLV{\stateB}{\transB}(a\word)$ and $\probLV{\stateC}{\transB_\delta}(\word) = \probLV{\stateC}{\transB}(\word)$.

  We obtain, by definition of $\transB_\delta$ and the claim shown above (applied in sequence), that
  \begin{align*}
    \probLV{\state}{\transB_\delta}(a\word)
    & =
      \sum_{\stateC\in\states}\transB(\state, \stateC)\cdot \probLV{\stateC}{\transB_\delta}(\word)
      + \delta\cdot \probLV{\state'}{\transB_\delta}(\word)
      - \delta\cdot \probLV{\state''}{\transB_\delta}(\word) \\
    & = \sum_{\stateC\in\states}\transB(\state, \stateC)\cdot \probLV{\stateC}{\transB}(\word)
      + \delta\cdot \left(\probLV{\state'}{\transB}(\word)
      - \probLV{\state''}{\transB}(\word)\right) \\
    & = \probLV{\state}{\transB}(a\word)
      + \delta\cdot \left(\probLV{\state'}{\transB}(\word)
    - \probLV{\state''}{\transB}(\word)\right).
  \end{align*}

  It follows from the above, $\probLV{\state}{\transB}(a\word) = \probLV{\stateB}{\transB}(a\word)$ and $\probLV{\state'}{\transB}(\word)\neq \probLV{\state''}{\transB}(\word)$ that, for all $0 <\delta \leq \transB(\state, \state'')$, we have
  \[
    \probLV{\state}{\transB_\delta}(a\word) - \probLV{\stateB}{\transB_\delta}(a\word) = \delta\cdot \left(\probLV{\state'}{\transB}(\word)
      - \probLV{\state''}{\transB}(\word)\right) \neq 0.
  \]
  In particular, we have $\probLV{\state}{\transB_\delta}(a\word) \neq \probLV{\stateB}{\transB_\delta}(a\word)$ for $\delta=\transB(\state, \state'')> 0$, which shows that $(\state, \stateB)\notin U_{n+1}$.
  This completes the proof.
\end{proof}

\subsection{Proof of \cref{proposition:ule:epsilon} and \cref{proposition:ule:deterministic}}
This section presents proofs of \cref{proposition:ule:epsilon} and \cref{proposition:ule:deterministic}.
We prove both theorems using an adaptation of the proof of \cref{theorem:ule-states}.
As explained in the main text, we exploit the fact that whenever we perturb transition functions in the proof \cref{theorem:ule-states} to obtain transition functions that witness non-UL of two states, any of the (non-zero) perturbations we introduce is a suitable witness.
In the upcoming proof of \cref{proposition:ule:epsilon}, we rely on small perturbations.
In contrast, in the subsequent proof of \cref{proposition:ule:deterministic}, we exploit maximal perturbations of deterministic transition functions to obtain deterministic transition functions.

We first prove \cref{proposition:ule:epsilon}.
\propositionEpsilonU*

\begin{proof}
  If two states are UL, then they are $\varepsilon$-UL for all $\varepsilon > 0$: $\varepsilon$-UL requires language equivalence under fewer transition functions than UL in general.

  We now establish that if two states $\state$ and $\stateB$ are not UL, i.e., if $(\state, \stateB)\notin U_n$ for some $n\in\INpos$, then, for all $\varepsilon > 0$, they are not $\varepsilon$-UL.
  We present an argument by induction.
  We show that for all $n\in\INpos$, all $(\state, \stateB)\in\states^2\setminus U_n$ and all $\varepsilon > 0$, there exist a transition function $\transB^\varepsilon\in\cons{\trans}$ in the $\varepsilon$-neighbourhood of $\trans$ (i.e., such that for all $\stateC,\stateC'\in\states$, $|\trans(\stateC, \stateC')-\transB^\varepsilon(\stateC, \stateC')|\leq\varepsilon$) and a word $\word\in\lblSet^n$ such that $\probLV{\state}{\transB^\varepsilon}(\word)\neq\probLV{\stateB}{\transB^\varepsilon}(\word)$.

  For the base case, let $n=1$ and let $(\state, \stateB)\notin U_1$.
  We recall that $U_1$ is label equality.
  Hence it suffices to choose $\word = \lbl(\state)$ and, for all $\varepsilon > 0$, let $\transB^\varepsilon = \trans$ for this case.

  We now assume that the claim holds for elements of $\states^2\setminus U_n$ and show that it holds for elements of $\states^2\setminus U_{n+1}$.
  Let $(\state, \stateB)\in\states^2\setminus U_{n+1}$.
  If $(\state, \stateB)\in\states^2\setminus U_{n}$, the result follows from the induction hypothesis.

  We assume for the remainder of the argument that $(\state, \stateB)\in U_{n}$.
  In particular, $\state$ and $\stateB$ have the same label; we let $a = \lbl(\state) = \lbl(\stateB)$.
  By \cref{theorem:ule-states}, there exists $(\state', \stateB')\in \supp{\trans(\state)}\times \supp{\trans(\stateB)}$ such that $(\state', \stateB')\in U_{n-1}\setminus U_n$ (letting $U_0 = \states^2$).
  
  Fix $\varepsilon > 0$ and let $\eta = \frac{\varepsilon}{2}$.
  The induction hypothesis yields a transition function $\transB^\eta\in\cons{\trans}$ in the $\eta$-neighbourhood of $\trans$ and a word $\word\in\lblSet^{n}$ such that $\probLV{\state'}{\transB^\eta}(\word)\neq\probLV{\stateB'}{\transB^\eta}(\word)$.
  We show that the word $a\word\in\lblSet^{n+1}$ can be used to witness that $\state$ and $\stateB$ are not $\varepsilon$-UL.

  If $\probLV{\state}{\transB^{\eta}}(a\word) \neq \probLV{\stateB}{\transB^{\eta}}(a\word)$, letting $\transB^\varepsilon = \transB^{\eta}$ is enough to complete the inductive proof.
  We assume therefore that $\probLV{\state}{\transB^{\eta}}(a\word) = \probLV{\stateB}{\transB^{\eta}}(a\word)$.
  It follows from $\probLV{\state'}{\transB^{\eta}}(\word) \neq \probLV{\stateB'}{\transB^{\eta}}(\word)$ that there exists $\state''\in\supp{\transB^{\eta}(\state)}$ such that $\probLV{\state'}{\transB^{\eta}}(\word) \neq \probLV{\state''}{\transB^{\eta}}(\word)$ or there exists $\stateB''\in\supp{\transB^{\eta}(\stateB)}$ such that $\probLV{\stateB'}{\transB^{\eta}}(\word) \neq \probLV{\stateB''}{\transB^{\eta}}(\word)$.
  We assume without loss of generality that the former property holds.
  For all $\delta\in\ccInt{0}{\transB^{\eta}(\state, \state'')}$, we define the transition function $\transB^{\eta}_{\delta}$ by letting, for all $\stateC, \stateC'\in\states$,
  \[
    \transB^{\eta}_\delta(\stateC, \stateC')=
    \begin{cases}
      \transB^{\eta}(\stateC, \stateC') + \delta \qquad & \text{if } (\stateC, \stateC') = (\state, \state')\\
      \transB^{\eta}(\stateC, \stateC') - \delta & \text{if } (\stateC, \stateC') = (\state, \state'')\\
      \transB^{\eta}(\stateC, \stateC') & \text{otherwise}.
    \end{cases}
  \]
  As in the proof of \cref{theorem:ule-states}, we can apply Lemma~\ref{lemma:all-shorter-agree} (if $n\geq 2$) or use a direct argument (if $n=1$) to prove that, for all $\delta\in\ccInt{0}{\transB^{\eta}(\state, \state'')}$, we have
  \[
    \probLV{\state}{\transB^{\eta}_\delta}(a\word) -
    \probLV{\stateB}{\transB^{\eta}_\delta}(a\word) =
    \delta\cdot (\probLV{\state'}{\transB^{\eta}}(\word) - \probLV{\state''}{\transB^{\eta}}(\word)),
  \]
  hence $\probLV{\state}{\transB^{\eta}_\delta}(a\word) \neq \probLV{\stateB}{\transB^{\eta}_\delta}(a\word)$ for any $\delta\in\ocInt{0}{\transB^{\eta}(\state, \state'')}$.
  It suffices to let $\transB^\varepsilon = \transB^{\eta}_{\min\{\eta, \transB^{\eta}(\state, \state'')\}}$ to finish the proof.
\end{proof}

We now prove \cref{proposition:ule:deterministic}.

\propositionDeterministicU*
\begin{proof}
  We first show that there exists a deterministic transition function witnessing the non-equivalence of non-UL states.
  This can be shown by adapting the inductive argument from the proof of \cref{proposition:ule:epsilon}.
  To limit redundancy, we only comment how to adapt the inductive proof.
  For the base case $n=1$, any deterministic transition function witnesses non-equivalence of pairs $(\state, \stateB)\notin U_1$.
  For the inductive case, we assume that the transition function given by the induction hypothesis is deterministic.
  In the case in which we perturb the transition function obtained by induction to construct a witness of non-equivalence, there is a suitable deterministic perturbation.
  Indeed, the interval in which $\delta$ is chosen is $\ccInt{0}{1}$ because the function given by the induction hypothesis is deterministic.
  This ends the first part of the proof.

  Let $(\state, \stateB)\notin U_n$.
  The adapted argument above establishes the existence of a deterministic $\transB\in\cons{\trans}$ and $\word\in\lblSet^n$ such that $\probLV{\state}{\transB}(\word) \neq \probLV{\stateB}{\transB}(\word)$.
  Because $\transB$ is deterministic, we have $\probLV{\state}{\transB}(\word), \probLV{\stateB}{\transB}(\word)\in\{0, 1\}$.
  If $\probLV{\state}{\transB}(\word) = 1$, the proof is complete.
  Assume otherwise.
  Let $\word'\in\lblSet^n$ be the unique word of length $n$ such that $\probLV{\state}{\transB}(\word') = 1$ (the existence of such a word follows from $\transB$ being deterministic).
  Necessarily, we have $\word'\neq\word$.
  We conclude the proof by observing that $\probLV{\state}{\transB}(\word') = 1$ and $\probLV{\stateB}{\transB}(\word') = 0$.
\end{proof}

\subsection{Proof of Lemma~\ref{lemma:ule:state-witnesses}}
\label{appendix:ue:nl-algorithm}
We prove Lemma~\ref{lemma:ule:state-witnesses} using an inductive argument based on the characterisation of the UL relation over states from \cref{theorem:ule-states}.

\lemmaULStateWitnesses*
\begin{proof}
  We call histories $\hist = \state_1\state_2\ldots\state_\iLast$ and $\hist'=\stateB_1\stateB_2\ldots\stateB_\iLast$ with $\state_1 = \state$, $\stateB_1=\stateB$ and $\iLast\leq|\states|$ such that $\lbl(\hist) \neq \lbl(\hist')$ and for all $1\leq\iPos\leq\iLast$, $\state_\iPos\neq\stateB_\iPos$ witnesses (of non-equivalence) for $(\state, \stateB)$ in the following.
  We let $U_0 = S^2$ and let $U_1\subseteq\states\times\states$ denote the label equivalence relation.
  For all $\iSeq\geq 2$, we let $U_\iSeq = \stable(U_{\iSeq-1})$ be the UL relation over words of length $\iSeq$ (cf.~\cref{theorem:ule-states}).

  Assume first that $\state$ and $\stateB$ are not UL.
  We construct witnesses by induction.
  Let $0\leq \iSeq <|\states|$ such that $(\state, \stateB)\in U_\iSeq\setminus U_{\iSeq+1}$.
  If $\iSeq = 0$, then $\lbl(\state)\neq\lbl(\stateB)$ and we let $\hist = \state$ and $\hist' = \stateB$.
  We now assume that $\iSeq\geq 1$ and that witnesses of length $\iSeq$ exist for any $(\state', \stateB')\in U_{\iSeq-1}\setminus U_{\iSeq}$ (by induction).
  Because $\state$ and $\stateB$ are not UL, it must be the case that $\state\neq\stateB$.
  Thus, it follows from the definition of $\stable$ that there exist $\state'\in\supp{\trans(\state)}$ and $\stateB'\in\supp{\trans(\stateB)}$ such that $(\state', \stateB')\notin U_{\iSeq}$.
  We observe that $(\state', \stateB')\in U_{\iSeq-1}$ as $(\state, \stateB)\in U_{\iSeq}$.
  We can thus apply the induction hypothesis to obtain witness histories $\hist$ and $\hist'$ for the non-equivalence of $\state'$ and $\stateB'$ of length $\iSeq$.
  It is easy to check that the histories $\state\hist$ and $\stateB\hist'$ are witnesses of non-equivalence for $\state$ and $\stateB$ of length $\iSeq+1$.

  We now assume that there exist witness histories $\state_1\ldots\state_\iLast$ and $\stateB_1\ldots\stateB_\iLast$ with $\state_1=\state$ and $\stateB_1=\stateB$.
  We assume without loss of generality that for all $1\leq\iPos\leq\iLast-1$, $\lbl(\state_\iPos) = \lbl(\stateB_\iPos)$: if one removes the suffixes after the first position $\iPos$ such that $\lbl(\state_\iPos) \neq \lbl(\stateB_\iPos)$, the resulting histories are still witnesses.
  We use the fixpoint characterisation through $\stable$ to show that $(\state, \stateB)$ are not UL.
  On the one hand, we have $(\state_\iLast, \stateB_\iLast)\notin U_1$ ($U_1$ is the label equality relation).
  On the other hand, by definition of $\stable$, for all $1\leq\iPos\leq\iLast-1$, $(\state_{\iLast-\iPos+1}, \stateB_{\iLast-\iPos+1})\notin U_{\iPos}$ implies $(\state_{\iLast-\iPos}, \stateB_{\iLast-\iPos})\notin U_{\iPos+1}$.
  It follows thus that $(\state, \stateB)=(\state_1, \stateB_1)\notin U_{\iLast}$, hence $\state$ and $\stateB$ are not UL.
\end{proof}

\begin{algorithm}[t]
  \caption{Non-deterministic algorithm to decide if two states are not UL.}
  \label{algorithm:states-nonule}
  \KwData{A labelled Markov chain $\mchain = \lmc$ and states $\state,\stateB\in\states$.}
  \For{$i \gets 1$ \KwTo $|\states|$}{
    \If{$\state = \stateB$}{
      \Return reject\;
    }
    \If{$\ell(\state) \neq \ell(\stateB)$}{
      \Return accept\Comment*[r]{Witness histories for non-UL constructed.}
    }
    \KwGuess $\state \in \supp{\trans(\state)}$\;
    \KwGuess $\stateB \in \supp{\trans(\stateB)}$\;
  }
  \Return reject.
\end{algorithm}

\subsection{\nlogspace-completeness of deciding UL for states}\label{appendix:ue:hardness}

We establish the $\nlogspace$-completeness of the problem of deciding whether two states are UL in this section.
The procedure outlined in the main text, summarised in \cref{algorithm:states-nonule}, yields $\nlogspace$-membership of the problem.
We therefore focus on $\nlogspace$-hardness in the remainder of the section.

We provide a reduction from the complement of directed graph reachability (also known as st-connectivity) to the UL problem.
Directed graph reachability is defined as follows: given a directed graph $G = (V, E)$, where $V$ is a finite set of vertices and $E\subseteq V\times V$ is a set of edges, and source and target vertices $s$ and $t$, decide whether there exists a path from $s$ to $t$.
Directed graph reachability is $\nlogspace$-complete~\cite{Jon75} and therefore, its complement also is as $\nlogspace=\conlogspace$~\cite{DBLP:journals/siamcomp/Immerman88,DBLP:journals/acta/Szelepcsenyi88}.
We formalise a logspace reduction from the complement of directed graph reachability to the UL problem below.

\begin{lemma}\label{lemma:hardness:ule}
  Deciding if two states are UL is $\nlogspace$-hard.
\end{lemma}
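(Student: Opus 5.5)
We give a logspace reduction from the complement of directed graph reachability. Let $G=(V,E)$ with source $s$ and target $t$ be given. We build an LMC whose state space consists of two disjoint copies of $V$, written $V^1=\{v^1\mid v\in V\}$ and $V^2=\{v^2\mid v\in V\}$. Since non-UL is witnessed exactly by the pairs of non-synchronising histories of \cref{lemma:ule:state-witnesses}, we design the chain so that the two copies never meet and the only possible label discrepancy occurs at $t^1$. We then show that $s^1$ and $s^2$ are UL if and only if $t$ is not reachable from $s$ in $G$.

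\emph{Construction.} For every $v\in V$ with at least one successor, let $\trans(v^1)$ be uniform over $\{u^1\mid (v,u)\in E\}$. Vertices without successors are given a self-loop, and so is $t^1$, which makes $t^1$ absorbing. In the second copy, let every $v^2$ have a single transition to a fixed state $z^2$ that carries a self-loop (say $z^2=s^2$ with self-loop, or any fixed copy-2 sink). Every state receives label $a$, except $t^1$, which receives label $b$. Each transition depends only on local information about $G$ (one edge, or the out-degree, which is computable with counters), so the chain is produced in logarithmic space.

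\emph{Correctness via \cref{lemma:ule:state-witnesses}.} Histories from $s^1$ stay in copy 1, while histories from $s^2$ stay in copy 2 and have constant label $a$. The two copies are disjoint, so the condition $\state_\iPos\neq\stateB_\iPos$ holds automatically. A label difference therefore occurs precisely when the copy-1 history visits $t^1$.

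If $t$ is reachable from $s$, take a simple path of length at most $|V|\le|\states|$. Its copy-1 image, paired with any copy-2 history of the same length, is a witness, so $s^1$ and $s^2$ are not UL.

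Conversely, if $t$ is unreachable, no history from $s^1$ ever sees label $b$, so no witness exists and the states are UL. Hence the map sends non-reachable instances exactly to UL instances. By $\nlogspace=\conlogspace$, this shows that deciding UL is $\nlogspace$-hard.

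\emph{Main obstacle.} The only delicate point is keeping the chain well-formed while ensuring that a history through $t^1$ really produces a label mismatch and that the two histories never coincide. Disjointness of the two copies settles the second concern, and making $t^1$ absorbing settles the first. A simpler alternative would use a single copy with $s^2$ replaced by a fresh $a$-labelled self-loop state, since a constant-$a$ partner suffices.
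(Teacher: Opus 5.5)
Your reduction is correct and is essentially the paper's proof: the paper uses exactly your ``simpler alternative'' (one copy of $G$ with self-loops added where needed, $t$ labelled differently from all other states, and $s$ compared against a fresh self-looping state $\bot$), and it argues directly that some consistent $\sigma$ makes $t$ reachable from $s$ (respectively, that only the constant word is ever generated) rather than going through \cref{lemma:ule:state-witnesses}. Your second copy of $V$ and making $t^1$ absorbing are harmless but unnecessary, since the label mismatch already occurs at the first visit to $t^1$ and any constant-label partner state suffices.
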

\begin{proof}
  Let $G = (V, E)$ be the input graph, $s\in V$ the source vertex and $t\in V$ the target vertex.
  We can assume that in $G$ every vertex has at least one outgoing edge (add a self loop if necessary).
  We derive a labelled Markov chain $\mchain = \lmc$ from $G$ as follows.
  We let $\states = V\cup\{\bot\}$ where $\bot\notin V$ is a fresh state.
  The transition function $\trans$ is such that, for all $v\in V$, $\trans(v)$ is a uniform distribution over the successors of $v$ in $G$ and $\trans(\bot, \bot) = 1$.
  For labels, we let $\lblSet=\{0, 1\}$, let $\lbl(t) = 1$ and $\lbl(v) = 0$ for all $v\in \states\setminus\{t\}$.
  We reduce the instance of graph non-reachability given by $G$, $s$ and $t$ to the problem of determining if $s$ and $\bot$ are UL in $\mchain$.

  We prove the correctness of the reduction by showing that states $s$ and $\bot$ are UL if and only if there exists no path from $s$ to $t$.
  First, assume that there exists a path from $s$ to $t$ in $G$.
  Then there exists a transition function $\transB\in\cons{\trans}$ under which $t$ is reachable from $s$.
  It follows that $s$ and $\bot$ are not language-equivalent under $\transB$ as label $1$ can be reached from $s$ but not from $\bot$.
  Conversely, assume that there exists no path from $s$ to $t$ in $G$.
  In that case, for all $\transB\in\cons{\trans}$, the only word that can occur from $s$ and $\bot$ under $\transB$ in $\mchain$ is $0^\omega$, hence $s$ and $\bot$ are $\transB$-language-equivalent.
  This shows that $s$ and $\bot$ are UL.
\end{proof}

\subsection{Proof of Theorem~\ref{theorem:ule-distributions}}
\label{appendix:ul-distributions}

Let $U$ denote the UL relation restricted to states.
The main result of this section (\cref{lemma:ule:distributions}) implies the following theorem from the main text.
\theoremULDistributions*

Intuitively, \cref{lemma:ule:distributions} can be seen as a variant of \cref{theorem:ule-distributions} for words of bounded length.
We prove this result by induction using the relations $U_n\subseteq\states\times\states$ defined in \cref{section:ule:states}.
For all $n\in\INpos$, we define the counterpart of $U_n$ for distributions as the set
\[
    V_n = \left\{(\measure, \measureB)\in\dist{\states}\times\dist{\states}\mid
  \forall\word\in\lblSet^n,\,\forall\transB\in\cons{\trans},\,
  \probLV{\measure}{\transB}(\word) = \probLV{\measureB}{\transB}(\word)
  \right\}.
\]

Our goal is to prove that for all distributions $\measure, \measureB\in\dist{\states}$, $(\measure, \measureB)\in V_n$ if and only if $\measure$ and $\measureB$ agree over all equivalence classes of $U_n$.
To prove this claim, we first formalise two properties related to equivalence classes of $U_n$.
For any $n\geq 2$, $a\in\lblSet$ and $\eClassB\in\states/U_{n-1}$, we let $\eClassTwo{a}{\eClassB}{n} = \{ s \in \states \mid \lbl(\state) = a$ and $ \supp{\trans(\state)}\subseteq \eClassB\}$.
We first show the following property of these sets.

\begin{lemma}\label{lemma:ule:type-two-classes}
  Let $n\geq 2$.
  For all $\eClassB\in\states/U_{n-1}$ and all $a\in\lblSet$, if $\eClassTwo{a}{\eClassB}{n}$ is nonempty, then $\eClassTwo{a}{\eClassB}{n}\in\states/U_n$.
\end{lemma}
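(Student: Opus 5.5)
We have to show that $C := \eClassTwo{a}{\eClassB}{n}$, when nonempty, is exactly one equivalence class of $U_n$. By \cref{theorem:ule-states} we have $U_n = \stable(U_{n-1})$, so we reason directly with the definition of $\stable$. We first check that $U_n$ is an equivalence relation. It is the intersection, over all $\word\in\lblSet^n$ and $\transB\in\cons{\trans}$, of the kernels of the maps $\stateC\mapsto\probLV{\stateC}{\transB}(\word)$, so this is immediate. The claim then splits into two parts: any two states of $C$ are $U_n$-related, and any state $U_n$-related to a state of $C$ lies in $C$.

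\textbf{First part.} Let $\state,\stateB\in C$. If $\state=\stateB$ there is nothing to show, so assume $\state\neq\stateB$. Both states have label $a$, so $(\state,\stateB)\in U_1$. Every pair in $\supp{\trans(\state)}\times\supp{\trans(\stateB)}$ lies in $\eClassB\times\eClassB\subseteq U_{n-1}$, since $\eClassB$ is a $U_{n-1}$-class. To conclude $(\state,\stateB)\in\stable(U_{n-1})$, we also need $(\state,\stateB)\in U_{n-1}$.

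For $n=2$ this is label equality, which we already have. For $n\geq3$ we use $U_{n-1}=\stable(U_{n-2})$ together with $U_{n-1}\subseteq U_{n-2}$ (monotonicity of the decreasing chain $U_k$). The successors of $\state$ and $\stateB$ are pairwise related in $U_{n-1}\subseteq U_{n-2}$, but we still need $(\state,\stateB)\in U_{n-2}$. This forces an inner induction on $k\leq n$: pairwise $U_{n-1}$-related successors plus equal labels give $(\state,\stateB)\in U_k$ for every $k\leq n$. The base case $k=1$ is the label equality. For the step, suppose $(\state,\stateB)\in U_{k}$ with $k<n$; the successors are pairwise in $U_{n-1}\subseteq U_k$, so $(\state,\stateB)\in\stable(U_k)=U_{k+1}$. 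This gives $(\state,\stateB)\in U_n$.

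\textbf{Second part.} Let $\state\in C$ and $(\state,\stateB)\in U_n$. Since $U_n\subseteq U_1$, we get $\lbl(\stateB)=a$. We must show $\supp{\trans(\stateB)}\subseteq\eClassB$.
If $\stateB=\state$ this is trivial. Otherwise, $(\state,\stateB)\in\stable(U_{n-1})$ and $\state\neq\stateB$ give $\supp{\trans(\state)}\times\supp{\trans(\stateB)}\subseteq U_{n-1}$. The support $\supp{\trans(\state)}$ is nonempty because $\trans(\state)$ is a distribution, so pick any $\state'$ in it; then $\state'\in\eClassB$. Every successor $\stateB'$ of $\stateB$ satisfies $(\state',\stateB')\in U_{n-1}$, hence $\stateB'\in\eClassB$. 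Therefore $\stateB\in C$.

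The only delicate step is the inner induction in the first part, i.e.\ showing membership of $(\state,\stateB)$ in all the intermediate relations $U_k$ rather than only the last one. Everything else follows directly from the definition of $\stable$ and the nonemptiness of supports.
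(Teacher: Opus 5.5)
Your proof is correct and follows essentially the paper's argument. Both directions come straight from $U_n=\stable(U_{n-1})$, and your second part is the direct form of the paper's contrapositive: a successor of $\stateB$ outside $\eClassB$ would break $\supp{\trans(\state)}\times\supp{\trans(\stateB)}\subseteq U_{n-1}$.

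Your inner induction makes explicit the membership $(\state,\stateB)\in U_{n-1}$ that the paper's ``directly obtain'' takes for granted. You could also bypass it with the semantic definition of $U_n$: for every $\transB\in\cons{\trans}$ and $\word\in\lblSet^{n-1}$, both $\probLV{\state}{\transB}(a\word)$ and $\probLV{\stateB}{\transB}(a\word)$ equal the common value of $\probLV{\stateC}{\transB}(\word)$ for $\stateC\in\eClassB$.
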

\begin{proof}
  Let $\eClass = \eClassTwo{a}{\eClassB}{n}$ and assume that $\eClass$ is nonempty.
  We must show that for all $\state\in\eClass$ and all $\stateB\in\states$, we have $(\state, \stateB)\in U_{n}$ if and only if $\stateB\in\eClass$.
  If $\stateB\in\eClass$, as $\lbl(\state) = \lbl(\stateB)$ and all successors of both $\state$ and $\stateB$ are in $D\in\states/U_{n-1}$, we directly obtain $(\state, \stateB)\in U_n$.

  We now assume that $\stateB\notin\eClass$.
  In particular, $\state\neq\stateB$.
  If $\lbl(\state)\neq\lbl(\stateB)$, we directly obtain $(\state, \stateB)\notin U_n$.
  Otherwise, we have $\lbl(\state)=\lbl(\stateB)$ and therefore $\supp{\trans(\stateB)}\nsubseteq D$, i.e., there exists $\stateB'\in\supp{\trans(\stateB)}$ such that $(\state', \stateB')\notin U_{n-1}$ for all $\state'\in\supp{\trans(\state)}\subseteq D$.
  It follows from \cref{theorem:ule-states} that $(\state, \stateB)\notin\stable(U_{n-1}) = U_n$.
\end{proof}

We now prove that, for all $n\geq 2$, the equivalence classes of $U_n$ can be classified into two categories, called type-1 or type-2 classes.
Each type is defined as follows.
Let $\eClass\in\states/U_n$.
The class $\eClass$ is a type-1 class if there exists some $\state\in\states$ such that $\eClass = \{\state\}$ and $\supp{\trans(\state)}$ intersects at least two distinct $U_{n-1}$ classes.
The class $\eClass$ is a \textit{type-2 class} if there exist $a \in \lblSet$ and $\eClassB \in \states/U_{n-1}$ such that $\eClass=\eClassTwo{a}{\eClassB}{n}$.
We formally prove that all classes are either type-1 or type-2 classes.
\begin{lemma}\label{lemma:ule:class-types}
  Let $n\geq 2$.
  For all $\eClass\in\states/U_n$, $C$ is either a type-1 or type-2 equivalence class.
\end{lemma}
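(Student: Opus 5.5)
Fix $n\geq 2$ and a class $\eClass\in\states/U_n$, and pick some $\state\in\eClass$. I will split on whether the successors of $\state$ all lie in a single $U_{n-1}$-class.

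\emph{Case 1: all successors in one class.} Suppose $\supp{\trans(\state)}\subseteq\eClassB$ for some $\eClassB\in\states/U_{n-1}$. Such a class exists because $\supp{\trans(\state)}$ is nonempty. Let $a=\lbl(\state)$. Then $\state\in\eClassTwo{a}{\eClassB}{n}$, so this set is nonempty. By Lemma~\ref{lemma:ule:type-two-classes}, $\eClassTwo{a}{\eClassB}{n}$ is a $U_n$-class. It contains $\state$, so it must equal $\eClass$. Hence $\eClass$ is a type-2 class.

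\emph{Case 2: successors in at least two classes.} Now suppose $\supp{\trans(\state)}$ meets two distinct $U_{n-1}$-classes. I claim $\eClass=\{\state\}$, which makes $\eClass$ a type-1 class.
\begin{itemize}
\item Take any $\stateB\neq\state$ with $(\state,\stateB)\in U_n$.
\item By \cref{theorem:ule-states}, $U_n=\stable(U_{n-1})$. Since $\state\neq\stateB$, membership in $\stable(U_{n-1})$ gives $\supp{\trans(\state)}\times\supp{\trans(\stateB)}\subseteq U_{n-1}$.
\item Fix any $\stateB'\in\supp{\trans(\stateB)}$. Every successor of $\state$ is then $U_{n-1}$-related to $\stateB'$.
\item $U_{n-1}$ is an equivalence relation, being defined by equality of word probabilities. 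So all successors of $\state$ lie in the $U_{n-1}$-class of $\stateB'$.
\item This contradicts the Case 2 assumption. Hence no such $\stateB$ exists, and $\eClass=\{\state\}$.
\end{itemize}

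\emph{Exclusivity.} The statement says ``either/or'', so I also check that the two types cannot overlap. A type-1 class $\{\state\}$ has successors spread over at least two $U_{n-1}$-classes. A type-2 class $\eClassTwo{a}{\eClassB}{n}$ requires every member's successors to lie inside the single class $\eClassB$. So no class is both types.

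The only delicate point is in Case 2: it relies on $U_{n-1}$ being transitive and on \cref{theorem:ule-states} for the identity $U_n=\stable(U_{n-1})$. Both are already available, so I expect no real obstacle.
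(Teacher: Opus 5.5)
Your proof is correct and uses the same ingredients as the paper's. You apply Lemma~\ref{lemma:ule:type-two-classes} to identify the class as some $\eClassTwo{a}{\eClassB}{n}$ when all successors lie in one $U_{n-1}$-class, and you use $U_n=\stable(U_{n-1})$ with transitivity of $U_{n-1}$ to show that a state whose successors meet two $U_{n-1}$-classes is alone in its $U_n$-class. The only differences are that the paper splits cases on whether $\eClass$ is a singleton rather than on the successors of a chosen state, and that your exclusivity check is an extra the paper does not include.
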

\begin{proof} 
  Assume first that $\eClass$ is a singleton $\{\state\}$ for some $\state\in\states$.
  If $\eClass$ is not a type-1 class, then all successors of $\state$ are in a single equivalence class of $U_{n-1}$, hence $\eClass$ is a type-2 class.

  It remains to show that all non-singleton classes are type-2 classes.
  Assume that $\eClass$ is not a singleton set.
  All states in $\eClass$ share the same label $a$.
  Furthermore, by \cref{theorem:ule-states} and the definition of $\stable$, the set $\bigcup_{\state\in\eClass}\supp{\trans(\state)}$ is a subset of some $\eClassB\in\states/U_{n-1}$.
  We obtain that $\eClass = \eClassTwo{a}{\eClassB}{n}$.
\end{proof}

We now show that for all $n\in\INpos$ and $\measure$, $\measureB\in\dist{\states}$, $\measure$ and $\measureB$ assign the same probability to each equivalence class of $U_n$ if and only if $\measure$ and $\measureB$ agree over words of length $n$ under all $\transB\in\cons{\trans}$, i.e., $(\measure,\measureB)\in V_n$.

First, we note that for all $n\in\INpos$ and $\measure$, $\measureB\in\dist{\states}$, if $\measure$ and $\measureB$ assign the same probability to each equivalence class of $U_n$, then $\measure$ and $\measureB$ agree over words of length $n$ under all $\transB\in\cons{\trans}$ by the definition of $U_n$.
This shows that $(\measure,\measureB)\in V_n$.

We then proceed by induction to prove that for all $n\in\IN$, all $(\measure, \measureB) \in V_n$ are such that $\measure$ and $\measureB$ agree over $S/U_n$.
The base case is direct: $U_1$ is label equality and $V_1$ is the set of pairs of distributions that assign the same probability to each label.
For the inductive case, we assume that all pairs in $V_n$ agree over $S/U_n$ and show that all pairs in $V_{n+1}$ agree over $S/U_{n+1}$.
We fix $(\measure, \measureB)\in V_{n+1}$ and a class $\eClass\in S/U_{n+1}$.
We consider three cases.
First, if $\eClass\in S/U_{n}$, it follows from $(\measure, \measureB)\in V_{n+1}\subseteq V_n$ and the induction hypothesis that $\measure(\eClass) = \measureB(\eClass)$.
For the two remaining cases, we assume that $\eClass$ is a proper subset of a class in $S/U_n$.

We first show that $\measure$ and $\measureB$ must agree on $\eClass$ if it is a type-1 class of the form $\eClass = \{\state\}$.
We build on the fact that there exists $\stateB\neq\state$ such that $(\state, \stateB)\in U_n\setminus U_{n+1}$ to show that for some $\transB\in\cons{\trans}$ and $\word\in\lblSet^{n+1}$, perturbing the outgoing probabilities from $\state$ changes the probability of $\word$ only from $\state$ and nowhere else (cf.~\cref{lemma:all-shorter-agree}).
We conclude from these perturbed transition functions that $(\measure, \measureB)\in V_{n+1}$ implies that $\measure(\state) = \measureB(\state)$.

It remains to discuss the case where $\eClass$ is a type-2 class, i.e., where there exist $a \in \lblSet$ and $\eClassB\in \states/U_n$ such that $\eClass = \eClassTwo{a}{\eClassB}{n}$.
We may assume that $\measure$ and $\measureB$ only assign positive probability to type-2 classes in light of the previous case, and that $\measure$ and $\measureB$ only assign positive probability to states with label $a$ (e.g., we deal with one label at a time).
The main idea is to derive from $\measure$ and $\measureB$ some pair $(\measure', \measureB')\in V_n$ from which we can infer that $\measure$ and $\measureB$ agree on $S/U_{n+1}$ using the induction hypothesis.
We select $\measure'$ and $\measureB'$ such that, for all $\eClassC\in \states/U_n$, $\measure'(\eClassC) = \measure(\eClassTwo{a}{\eClassC}{n+1})$ and $\measureB'(\eClassC) = \measureB(\eClassTwo{a}{\eClassC}{n+1})$.
We then use the fact that $(\measure,\measureB)\in V_{n+1}$ to obtain that $(\measure', \measureB')\in V_n$.
We conclude from the induction hypothesis that, for all $\eClassC\in \states/U_n$, $\measure(\eClassTwo{a}{\eClassC}{n+1}) = \measure'(\eClassC) = \measureB'(\eClassC) = \measureB(\eClassTwo{a}{\eClassC}{n+1})$.

We formally prove this below.

\begin{lemma}
\label{lemma:ule:distributions}
  For all $n\in\INpos$ and all $\measure, \measureB\in\dist{\states}$,
  $(\measure, \measureB)\in V_n$ if and only if $\measure(\eClass) = \measureB(\eClass)$ for all $\eClass \in \states/U_n$.
\end{lemma}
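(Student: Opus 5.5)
We prove the two directions separately; only the forward direction needs induction on $n$.

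\emph{Backward direction.} Suppose $\measure(\eClass)=\measureB(\eClass)$ for every $\eClass\in\states/U_n$. Fix $\transB\in\cons{\trans}$ and $\word\in\lblSet^n$. By definition of $U_n$, the value $\probLV{\stateC}{\transB}(\word)$ is the same for all $\stateC$ in a class $\eClass$; call it $p_\eClass$. Then
\[
\probLV{\measure}{\transB}(\word)=\sum_{\eClass}\measure(\eClass)\,p_\eClass=\sum_{\eClass}\measureB(\eClass)\,p_\eClass=\probLV{\measureB}{\transB}(\word),
\]
so $(\measure,\measureB)\in V_n$.

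\emph{Forward direction, base case.} For $n=1$, $U_1$ is label equality, and $\probLV{\measure}{\transB}(a)=\measure(\{\stateC\mid\lbl(\stateC)=a\})$. So membership in $V_1$ means exactly that $\measure$ and $\measureB$ agree on every $U_1$-class.

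\emph{Forward direction, inductive step.} Let $(\measure,\measureB)\in V_{n+1}$ and $\eClass\in\states/U_{n+1}$. Since $V_{n+1}\subseteq V_n$, the induction hypothesis gives agreement on every $U_n$-class. If $\eClass$ is itself a $U_n$-class, we are done. Otherwise \cref{lemma:ule:class-types} says $\eClass$ is type-1 or type-2.

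\emph{Type-1 classes.} Let $\eClass=\{\state\}$ with successors in at least two $U_n$-classes, and let $a=\lbl(\state)$. Since $\eClass$ is a proper subset of a $U_n$-class, there is $\stateB\neq\state$ with $(\state,\stateB)\in U_n\setminus U_{n+1}$. Repeating the argument of \cref{theorem:ule-states}, we obtain $\transB$, $\word\in\lblSet^n$, and successors $\state',\state''$ of $\state$ with $\probLV{\state'}{\transB}(\word)\neq\probLV{\state''}{\transB}(\word)$.

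If $\state''\notin\supp{\transB(\state)}$, we first modify $\transB$ at $\state$ only so that both $\state'$ and $\state''$ carry positive mass. \cref{lemma:all-shorter-agree} applies for $n\geq2$ because $(\state,\stateB)\in U_n$ gives $\supp{\trans(\state)}\times\supp{\trans(\stateB)}\subseteq U_{n-1}$; for $n=1$ the argument is direct. The lemma guarantees that length-$n$ word probabilities are unchanged from every state.

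Next form $\transB_\delta$ by shifting $\delta$ of $\state$'s outgoing mass from $\state''$ to $\state'$. By \cref{lemma:all-shorter-agree}, $\probLV{\stateC}{\transB_\delta}(a\word)$ is constant in $\delta$ for every $\stateC\neq\state$, while
\[
\probLV{\state}{\transB_\delta}(a\word)=\probLV{\state}{\transB}(a\word)+\delta\cdot d,\qquad d\neq0.
\]
Hence $\probLV{\measure}{\transB_\delta}(a\word)-\probLV{\measureB}{\transB_\delta}(a\word)$ is affine in $\delta$ with slope $(\measure(\state)-\measureB(\state))\,d$. Since $(\measure,\measureB)\in V_{n+1}$, this difference is zero for all admissible $\delta$, so $\measure(\state)=\measureB(\state)$.

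\emph{Type-2 classes.} Write $\eClass=\eClassTwo{a}{\eClassB}{n+1}$. First subtract the masses of type-1 singletons, on which $\measure$ and $\measureB$ already agree. Then restrict to label $a$; this is allowed because words beginning with $a$ only see $a$-labelled states. Finally normalise, which is harmless since both restricted masses are equal, being the same $V_1$ quantity. The resulting sub-distributions $\measure,\measureB$ are supported on type-2 classes with label $a$.

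For each $\eClassC\in\states/U_n$ define $\measure'(\eClassC)=\measure(\eClassTwo{a}{\eClassC}{n+1})$, and $\measureB'$ likewise, each spread as a distribution over states of $\eClassC$ (choose any representative). For a state $\stateC$ in a type-2 class $\eClassTwo{a}{\eClassC}{n+1}$, all successors lie in $\eClassC$, so $\probLV{\stateC}{\transB}(a\word)$ equals the common value of $\probLV{\cdot}{\transB}(\word)$ on $\eClassC$. Therefore $\probLV{\measure}{\transB}(a\word)=\probLV{\measure'}{\transB}(\word)$ for every $\transB$ and $\word\in\lblSet^n$. From $(\measure,\measureB)\in V_{n+1}$ we get $(\measure',\measureB')\in V_n$, and the induction hypothesis yields $\measure'(\eClassC)=\measureB'(\eClassC)$ for every $\eClassC$, which is the required agreement on $\eClass$.

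\emph{Main obstacle.} The type-1 case is the delicate step. We must ensure the perturbation at $\state$ is genuinely available and leaves all other states unaffected. This needs the support-adjustment step and a careful application of \cref{lemma:all-shorter-agree} with the correct indices, namely the hypothesis $U_{n-1}$ on successors, including the separate treatment of $n=1$.
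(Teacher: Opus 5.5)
Your proposal is correct and follows essentially the same route as the paper: representatives for the backward direction, then induction with the type-1/type-2 split, a single-state perturbation at $\state$ justified by Lemma~\ref{lemma:all-shorter-agree} for type-1 singletons, and the auxiliary pair $(\measure',\measureB')\in V_n$ plus the induction hypothesis for type-2 classes. There are two small deviations: you secure $\state''\in\supp{\transB(\state)}$ by first re-weighting $\transB$ at $\state$, which is legitimate by the first item of Lemma~\ref{lemma:all-shorter-agree}, whereas the paper instead re-selects the pair of states; and your distinguishing successors $\state',\state''$ follow directly from the type-1 definition, not from re-running the proof of \cref{theorem:ule-states}, which would not let you assume the distinguishing successor belongs to $\state$ rather than $\stateB$.
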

\begin{proof}
  For all $n\in\INpos$, let $W_n = \{(\measure, \measureB) \in\dist{\states}^2\mid \measure(\eClass) = \measureB(\eClass) \text{ for all $\eClass \in \states/U_n$}\}$.
  We must prove that $V_n = W_n$ for all $n\in\INpos$.

  We first show that for all $n\in\INpos$, $W_n \subseteq V_n$.
  Let $n\in\INpos$ and $(\measure, \measureB)\in W_n$.
  By definition, for all $\eClass \in \states/U_{n}$, $\measure(\eClass) = \measureB(\eClass)$.
  Let $\transB \in \cons{\trans}$ and $\word \in L^{n}$.
  For each equivalence class $C \in \states/U_{n}$, pick an arbitrary representative $s_C \in C$.
  By definition of $U_n$, we have
  \begin{equation*}
    \probLV{\measure}{\transB}(\word) = \sum_{s\in \states} \measure(s)\, \probLV{s}{\transB}(\word) = \sum_{C\in \states/U_{n}} \measure(C)\, \probLV{s_C}{\transB}(\word),
  \end{equation*}
  and similarly, we have
  \begin{equation*}
    \probLV{\measureB}{\transB}(\word) = \sum_{s\in \states} \measureB(s)\, \probLV{s}{\transB}(\word) = \sum_{C\in \states/U_{n}} \measureB(C)\, \probLV{s_C}{\transB}(\word).
  \end{equation*}
  It follows from the above and $(\measure,\measureB) \in W_{n}$ that $\probLV{\measure}{\transB}(\word) = \probLV{\measureB}{\transB}(\word)$.
  This ends the proof that $(\measure,\measureB) \in V_{n}$.

We now prove by induction that for all $n\in\INpos$, $V_n\subseteq W_n$.
We first consider the base case $n = 1$.
Let $(\measure, \measureB) \in V_1$.
Clearly, $(\measure, \measureB) \in V_1$ if and only if $\measure$ and $\measureB$ assign the same mass to every label.  Since $U_1$ is equality of labels, we have that $(\measure, \measureB) \in V_1$ if and only if, for all $\eClass \in \states/U_1$, $\measure(\eClass) = \measureB(\eClass)$, i.e., $(\measure,\measureB)\in W_1$.

In the inductive case, assume that $V_n\subseteq W_n$ and let us show that $V_{n+1}\subseteq W_{n+1}$.
Let $(\measure, \measureB) \in V_{n+1}$.
Since $V_{n+1} \subseteq V_n$, by the induction hypothesis, we have $\measure(\eClassB) = \measureB(\eClassB)$ for all $\eClassB \in \states/U_n$.
We now fix $\eClass \in \states/U_{n+1}$.
If $\eClass \in \states/U_n$, then it follows from the induction hypothesis that $\measure(\eClass) = \measureB(\eClass)$.
We now assume that $\eClass\not\in \states/U_n$ and distinguish two cases depending on the type of $\eClass$.

\subparagraph*{The class $\eClass$ is a type-1 class.}
Assume that there exists $\state\in\states$ such that $\eClass = \{\state\}$ and $\supp{\trans(\state)}$ intersects at least two distinct $U_n$-classes $\eClassB_1\neq\eClassB_2$.
Fix $\state'\in \eClassB_1$ and $\state''\in \eClassB_2$.
Since $(\state',\state'') \notin U_n$, there exist $\transB \in \cons{\trans}$ and $\word \in L^n$ with $\probLV{\state'}{\transB}(\word) \neq \probLV{\state''}{\transB}(\word)$.
We assume that $\state''\in\supp{\transB(\state)}$.
  This is without loss of generality: it follows from $\probLV{\state'}{\transB}(\word) \neq \probLV{\state''}{\transB}(\word)$ that there exists $\stateC\in\supp{\transB(\state)}$ such that $\probLV{\state'}{\transB}(\word) \neq \probLV{\stateC}{\transB}(\word)$ (in which case we replace $\state''$ with $\stateC$ to enforce the assumption) or there exists $\stateC\in\supp{\transB(\state)}$ such that $\probLV{\state''}{\transB}(\word) \neq \probLV{\stateC}{\transB}(\word)$ (in which case we respectively replace $\state'$ and $\state''$ with $\state''$ and $\stateC$ to enforce the assumption).

It follows from $\eClass \not\in \states/U_n$ that there exists $\stateB \in \states\setminus\{\state\}$ such that $(\state, \stateB) \in U_n$.
Fix $t' \in \supp{\trans(t)}$.

Let $a = \lbl(\state)$.
For any $\varepsilon\in\ccInt{0}{\transB(\state, \state'')}$, we define $\transB_\varepsilon \in \cons{\trans}$ by perturbing $\transB$ at $\state$ as in the proof of \cref{theorem:ule-states} by letting, for all $\stateC, \stateC'\in\states$,
\[
\transB_\varepsilon(\stateC, \stateC')=
\begin{cases}
\transB(\stateC, \stateC') + \varepsilon \qquad & \text{if } (\stateC, \stateC') = (\state, \state')\\
\transB(\stateC, \stateC') - \varepsilon & \text{if } (\stateC, \stateC') = (\state, \state'')\\
\transB(\stateC, \stateC') & \text{otherwise}.
\end{cases}
\]

We claim that for all $\stateC\in\states$ and all $\varepsilon\in\ccInt{0}{\transB(\state, \state'')}$, we have $\probLV{\stateC}{\transB_\varepsilon}(\word) = \probLV{\stateC}{\transB}(\word)$.
If $n=1$, then $\word\in\lblSet$ and the claim is direct.
Assume that $n\geq 2$.
By \cref{theorem:ule-states}, all successors of $\state$ and $\stateB$ are equivalent in the sense of $U_{n-1}$.
Lemma~\ref{lemma:all-shorter-agree} implies the required claim.

We obtain from the claim and the definition of the perturbed transition functions $\transB_\varepsilon$ that, for all $\varepsilon\in\ccInt{0}{\transB(\state, \state'')}$, $\probLV{\state}{\transB_\varepsilon}(a\word) = \probLV{\state}{\transB}(a\word) + \varepsilon\cdot \left(\probLV{\state'}{\transB}(\word) - \probLV{\state''}{\transB}(\word)\right)$ and $\probLV{\stateC}{\transB_\varepsilon}(a\word) = \probLV{\stateC}{\transB_\varepsilon}(a\word)$ (see the proof of \cref{theorem:ule-states} for a similar detailed argument).

It follows from these equations that for all $\varepsilon\in\ccInt{0}{\transB(\state, \state'')}$, we have
\[\probLV{\measure}{\transB_\varepsilon}(a\word) - \probLV{\measureB}{\transB_\varepsilon}(a\word) =
  \probLV{\measure}{\transB}(a\word) -
  \probLV{\measureB}{\transB}(a\word) +
  \varepsilon\cdot
  \left(\measure(\state) - \measureB(\state)\right) \cdot
  \left(\probLV{\state'}{\transB}(\word) -
    \probLV{\state''}{\transB}(\word)\right).
\]
It follows from $(\measure,\measureB)\in V_{n+1}$ that $\probLV{\measure}{\transB_\varepsilon}(a\word) - \probLV{\measureB}{\transB_\varepsilon}(a\word) =     \probLV{\measure}{\transB}(a\word) -
\probLV{\measureB}{\transB}(a\word) = 0$.
We obtain that for all $\varepsilon\in\ccInt{0}{\transB(\state, \state'')}$, we have
\[\varepsilon\cdot
  \left(\measure(\state) - \measureB(\state)\right) \cdot
  \left(\probLV{\state'}{\transB}(\word) -
    \probLV{\state''}{\transB}(\word)\right) = 0.\]
This implies that $\measure(\state) = \measureB(\state)$ (as we have $\probLV{\state'}{\transB}(\word) \neq \probLV{\state''}{\transB}(\word)$ and $\transB(\state, \state'') > 0$).
This ends the proof of the first case.

\subparagraph*{The class $\eClass$ is a type-2 class.}
Assume that there exist $a \in \lblSet$ and $\eClassB\in \states/U_n$ such that $\eClass = \eClassTwo{a}{\eClassB}{n} = \{\state \in \states \mid \ell(\state) = a,\; \supp{\trans(\state)} \subseteq \eClassB\}$.
We construct initial distributions $\measure'$ and $\measureB'$ such that $(\measure', \measureB')\in V_n$, and $\measure'$ and $\measureB'$ respectively assign (multiples of) $\measure(\eClass)$ and $\measureB(\eClass)$ to some $U_n$ class, so that we can obtain $\measure(\eClass) = \measureB(\eClass)$ from the induction hypothesis.
In the following, given an equivalence class $\eClassC$ of $U_{n+1}$ or of $U_n$, we let $\state_\eClassC\in\eClassC$ denote a fixed representative of the class.

For all $\transB\in\cons{\trans}$ and $\word\in\lblSet^n$, it follows from $(\measure, \measureB) \in V_{n+1}$ that $\probLV{\measure}{\transB}(a\word) = \probLV{\measureB}{\transB}(a\word)$ holds, and therefore
\begin{equation}
  \label{equation:type-2:full-sum}
\sum_{\eClass' \in \states/U_{n+1}} (\measure(\eClass') - \measureB(\eClass'))\, \probLV{\state_{\eClass'}}{\transB}(a\word) = 0.
\end{equation}
By the first case, the type-1 class terms vanish in this sum.
We can simplify the type-2 terms by observing that, for all $\transB \in \cons{\trans}$, $\word \in L^n$, $\eClassC\in\states/U_{n}$ and $\state \in \eClassTwo{a}{\eClassC}{n}$, we have
\begin{equation}
\label{equation:type-2-term}
\probLV{\state}{\transB}(a\word) =
\sum_{\stateC \in \supp{\trans(\state)}} \transB(\state,\stateC)\cdot \probLV{\stateC}{\transB}(\word) =
\probLV{\state_E}{\transB}(\word);
\end{equation}
the second equality holds because $\supp{\trans(\state)} \subseteq \eClassC$ and $\eClassC$ is an equivalence class of $U_n$ (the probability of $\word\in\lblSet^n$ is equal from all states in $\eClassC$ under $\transB$).

It follows from Equations~\eqref{equation:type-2:full-sum} and~\eqref{equation:type-2-term} that for all $\transB \in \cons{\trans}$ and $\word \in L^n$,
\begin{equation}
\label{equation:key}
\sum_{E \in \states/U_n} (\measure(\eClassTwo{a}{\eClassC}{n}) - \measureB(\eClassTwo{a}{\eClassC}{n}))\, \probLV{s_\eClassC}{\transB}(\word) = 0.
\end{equation}

We now define the required distributions $\measure'$ and $\measureB'$.
Let $x = \sum_{\eClassC \in \states/U_n} \measure(\eClass_{a,E}) = \sum_{\eClassC \in \states/U_n} \measureB(\eClass_{a,E})$ (the equality follows from both $\measure$ and $\measureB$ assigning the same probability to the set of states with label $a$ and agreeing over type-1 classes). Define $\measure', \measureB' \in \dist{\states}$ by letting, for all $\state\in\states$,
\[
\measure'(\state) =
\begin{cases}
\frac{\measure(\eClassTwo{a}{\eClassC}{n})}{x} & \text{if there exists } \eClassC\in \states/U_{n} \text{ such that } \state\in\eClassC \text{ and } \state = \state_\eClassC\\
0 & \text{otherwise}
\end{cases}
\]
and similarly
\[
\measureB'(\state) =
\begin{cases}
\frac{\measureB(\eClassTwo{a}{\eClassC}{n})}{x} & \text{if there exists } \eClassC\in \states/U_{n} \text{ such that } \state\in\eClassC \text{ and } \state = \state_\eClassC\\
0 & \text{otherwise}
\end{cases}
\]
By construction, for all $\eClassC \in \states/U_n$, $\measure'(\eClassC) = \measure'(\state_\eClassC) = \frac{\measure(\eClassTwo{a}{E}{n})}{x}$ and $\measureB'(\eClassC) = \measureB'(\state_\eClassC) = \frac{\measureB(\eClassTwo{a}{\eClassC}{n})}{x}$.

We obtain that $(\measure', \measureB') \in V_n$ by \eqref{equation:key}, as we have, for all $\transB \in \cons{\trans}$ and all $\word \in L^n$ that
\[
  \probLV{\measure'}{\transB}(\word) - \probLV{\measureB'}{\transB}(\word)
  = \sum_{E \in \states/U_n} (\measure'(E) - \measureB'(E))\, \probLV{s_E}{\transB}(\word) = 0.
\]
Hence, by the induction hypothesis, for all $E \in \states/U_n$, we have $\measure'(E) = \measureB'(E)$. It follows that for all $E \in \states/U_n$,
\[
\frac{\measure(C_{a,E})}{x} = \frac{\measureB(C_{a,E})}{x}.
\]
In particular, $\measure(C_{a,D}) = \measureB(C_{a,D})$.
\end{proof}

We now apply \cref{lemma:ule:distributions} to prove \cref{theorem:ule-distributions}.
\begin{proof}[Proof of \cref{theorem:ule-distributions}]
  We recall that for all $\transB\in\cons{\trans}$ and all $\measure, \measureB\in\dist{\states}$, $\measure\langEquiv{\transB}\measureB$ if and only if, for all $\word\in\bigcup_{n\leq|\states|}\lblSet^{n}$, $\probLV{\measure}{\transB}(\word) = \probLV{\measureB}{\transB}(\word)$ (see, e.g.,~\cite{DBLP:journals/siamcomp/Tzeng92}).
  This implies that $V_{|\states|}$ is the UL relation and that $U_{|\states|}$ is its restriction to states.
  The result follows from \cref{lemma:ule:distributions}.
\end{proof}

\subsection{Discussion of Theorem~\ref{theorem:ule:d:complexity}}

We establish the $\nlogspace$-membership of the problem of deciding whether two distributions are UL in this section.

First, we observe that it is not necessary to compute and store the whole UL relation over states which would require polynomial space to be stored.
Instead, we can check whether two states are UL using an $\nlogspace$ oracle for UL.
This does not increase the space complexity of the resulting algorithm, because $\nlogspace^\nlogspace = \nlogspace$~\cite{DBLP:journals/siamcomp/Immerman88}, i.e., a problem that can be solved by a non-deterministic Turing machine using logarithmic space with access to an $\nlogspace$ oracle can be solved directly by a non-deterministic Turing machine that uses logarithmic space.

We outline a variant of the procedure outlined in Algorithm~\ref{algorithm:ule:naive} to check if $\measure$ and $\measureB$ are UL.
For the sake of our analysis, we assume that the UL check in the inner loop is performed by an $\nlogspace$ oracle for state-UL.
Algorithm~\ref{algorithm:ule:naive} checks, for each state, if $\measure$ and $\measureB$ assign the same probability to its UL equivalence class.
Its correctness follows from~\cref{theorem:ule-distributions}.

\begin{algorithm}[t]
  \caption{Algorithm to decide if two distributions are UL.} \label{algorithm:ule:naive}
  \KwData{A labelled Markov chain $\mchain = \lmc$ and $\measure, \measureB\in\dist{\states}$ rational-valued distributions.}
  \For{$\state\in\states$}{
    $x_\measure\leftarrow 0$\;
    $x_\measureB\leftarrow 0$\;
    \For{$\stateB\in\states$}{
      \If{$\state$ and $\stateB$ are UL}{
        $x_\measure\leftarrow x_\measure + \measure(\stateB)$\;
        $x_\measureB\leftarrow x_\measureB + \measureB(\stateB)$\;
      }
    }
    \If{$x_\measure\neq x_\measureB$}{
      \Return reject\;
    }
  }
  \Return accept.
\end{algorithm}

Algorithm~\ref{algorithm:ule:naive} may use more than logarithmic space to store the variables $x_\measure$ and $x_\measureB$.
Therefore, the main hurdle to end our proof of $\nlogspace$-membership of the UL problem is to show that we can check the equality of sums of several rational numbers in logarithmic space.
This is known to hold: determining the equality of two sums over finite sets of rational numbers (i.e., so-called iterated sums of rationals) can be done in the circuit class uniform-$\tczero$, which is contained in $\logspace$~\cite{DBLP:journals/tcs/Jerabek12} (this follows from the fact that iterated sums and products of integers can be computed in uniform-$\tczero$; see~\cite{DBLP:journals/siamcomp/ChandraSV84} for sums and~\cite{DBLP:journals/jcss/HesseAB02} for products).

This shows $\nlogspace$-membership of the UL problem for distributions.

\section{Details of Section~\ref{section:ee} -- \nameref{section:ee}}
\label{appendix:ee}
We fix a labelled Markov chain $\mchain = \mchainTuple$ for the whole section.

\subsection{Proof of \cref{thm:ele-det}}

The goal of this section is to show that deterministic transition functions suffice to witness that states are EL.
We first establish a variation of \cref{thm:ele-det} with respect to words of a fixed length.
The following proof follows the same argument as the proof sketch of \cref{thm:ele-det} in the main text.

\begin{restatable}{proposition}{propositionELbounded}\label{prop:ele-bounded-det}
  Let $n\in\INpos$ and $\state, \stateB \in\states$.
  If there exists $\transB\in\cons{\trans}$ such that the same words of length $n$ occur from both $\state$ and $\stateB$ under $\transB$, then there exists a deterministic such $\transB$.
\end{restatable}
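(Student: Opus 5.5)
Fix a total order $\leq$ on $\lblSet$ and let $\leLex$ denote the induced lexicographic order on words of a fixed length. For every $m\leq n$ and every $\stateC\in\states$, let $\word^m_\stateC\in\lblSet^m$ be the $\leLex$-minimal word of length $m$ that occurs with positive probability from $\stateC$ under $\transB$. Such a word exists because some path from $\stateC$ always exists. The hypothesis says that $\state$ and $\stateB$ have the same set of length-$n$ words under $\transB$, so their minima coincide: $\word^n_\state=\word^n_\stateB$.

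The key structural fact concerns prefixes. For $m\geq 2$, a word of length $m$ occurs from $\stateC$ if and only if it has the form $\lbl(\stateC)v$, where $v$ is a length-$(m-1)$ word occurring from some $\stateC'\in\supp{\transB(\stateC)}$. Hence
\[
\word^m_\stateC=\lbl(\stateC)\,\min\nolimits_{\leLex}\{\word^{m-1}_{\stateC'}\mid \stateC'\in\supp{\transB(\stateC)}\},
\]
using the fact that comparing words with a common first letter reduces lexicographically to comparing their tails. Moreover, $\word^{m-1}_\stateC$ is the length-$(m-1)$ prefix of $\word^m_\stateC$. This holds because the prefix of any occurring word occurs, and the prefix of the lex-minimal length-$m$ word is lex-minimal among length-$(m-1)$ words: if $u<u'$ at equal length with $u$ occurring, then any extension of $u$ is smaller than any extension of $u'$.

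Define $\transB^{\mathsf{det}}$ by choosing, for each $\stateC$, a single successor $\stateC^\ast\in\supp{\transB(\stateC)}\subseteq\supp{\trans(\stateC)}$ minimising $\word^{n-1}_{\stateC'}$, and setting $\transB^{\mathsf{det}}(\stateC,\stateC^\ast)=1$. This is deterministic and lies in $\cons{\trans}$. When $n=1$ any choice works, since the length-$1$ word is just the label.

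The main step, and the one that needs the most care, is to show that under $\transB^{\mathsf{det}}$ the unique length-$m$ word from each $\stateC$ is $\word^m_\stateC$, for all $m\leq n$. The difficulty is that $\stateC^\ast$ was chosen using length $n-1$, whereas the induction runs over smaller $m$. The prefix property resolves this: minimising $\word^{n-1}_{\stateC'}$ also minimises every prefix length, because if $\word^{n-1}_{\stateC^\ast}\leLex\word^{n-1}_{\stateC'}$, then their length-$(m-1)$ prefixes compare the same way (non-strictly). So the induction on $m$ goes through. For $m=1$ the word is the label. For the inductive step, the unique length-$m$ word from $\stateC$ is $\lbl(\stateC)$ followed by the unique length-$(m-1)$ word from $\stateC^\ast$, which by the inductive hypothesis is $\word^{m-1}_{\stateC^\ast}$; by the recursion above and the prefix-minimality of $\stateC^\ast$, this equals $\word^m_\stateC$.

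Taking $m=n$, both $\state$ and $\stateB$ generate exactly the single word $\word^n_\state=\word^n_\stateB$ of length $n$ under $\transB^{\mathsf{det}}$. Therefore the same words of length $n$ occur from both, as required.
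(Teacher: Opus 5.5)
Your proof is correct and is essentially the paper's argument: fix the $\leLex$-minimal words occurring under $\transB$, keep from each state a single $\transB$-successor minimising them, and show by induction on the length that each state then generates exactly its minimal word, so $\state$ and $\stateB$ generate the same unique word of length $n$. The differences are cosmetic: you minimise length-$(n-1)$ rather than length-$n$ words, and you make the prefix property explicit. You also rightly take the minimum over $\supp{\transB(\stateC)}$; the appendix literally writes $\supp{\trans(\stateC)}$ there, which would break the identity $\word_\stateC=\lbl(\stateC)\,\word^{n-1}_{\transB'(\stateC)}$, so your version is the one the argument actually needs.
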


\begin{proof}
  Assume that there exists $\transB\in\cons{\trans}$ such that for all $\word\in\lblSet^{n}$, $\probPV{\state}{\transB}(\word)>0$ if and only if $\probPV{\stateB}{\transB}(\word)>0$.
  Let $\leq$ be a total order on $\lblSet$ and write $\leLex$ for the associated lexicographic ordering of $\lblSet^n$.
  For each $\stateC\in\states$, let $\word_\stateC$ be the minimum for $\leLex$ of the set $\{\word\in\lblSet^n\mid\probPV{\stateC}{\transB}(\word)>0\}$.
  We define the deterministic transition function $\transB'\colon\states\to\states$ by letting, for all $\stateC\in\states$, $\transB'(\stateC)\in\argmin_{\stateC'\in\supp{\trans(\stateC)}}\word_{\stateC'}$, i.e., we fix a successor $\stateC'$ of $\stateC$ with a minimum $\word_{\stateC'}$ with respect to the lexicographic ordering $\leLex$.
  By construction and definition of the lexicographic order, for all $\stateC\in\states$, we have $\word_\stateC = \lbl(\stateC)\word_{\transB'(\stateC)}^{n-1}$ where $\word_{\transB'(\stateC)}^{n-1}$ is the prefix of length $n-1$ of $\word_{\transB'(\stateC)}$.

  We now show that under $\transB'$, for all $\stateC\in\states$, $\word_{\stateC}$ has probability $1$ from $\stateC$.
  The proof is by induction: we show that for all $1\leq k\leq n$ and $\stateC\in\states$, the only word of length $k$ that occurs from $\stateC$ under $\transB'$ is the prefix of $\word_{\stateC}$ of length $k$.
  For the base case, for all $\stateC\in\states$, the prefix of length one of $\word_\stateC$ is $\lbl(\stateC)$ and the claim follows directly.

  We now assume that the claim holds by induction for $1\leq k < n$ and show it for $k+1$.
  Fix $\stateC\in\states$ and let $\stateC' = \transB(\stateC)$.
  Let $\word^k_{\stateC'}$ denote the prefix of length $k$ of $\word_{\stateC'}$.
  By the induction hypothesis, it suffices to show that $\lbl(\stateC)\word^k_{\stateC'}$ is the prefix of $\word_\stateC$ of length $k+1$ to end the proof.
  This holds by construction of $\transB'$ (this property is explained after the definition).
\end{proof}

We now apply \cref{prop:ele-bounded-det} to show \cref{thm:ele-det}.

\theoremELnew*

\begin{proof}
  Fix $\state, \stateB\in\states$.
  Let $\transB\in\cons{\trans}$ such that $\state$ and $\stateB$ generate the same words under $\transB$.
  In particular, $\state$ and $\stateB$ generate the same words of length $n = |\states|$.
  Thus, by \cref{prop:ele-bounded-det}, there exists a deterministic function $\transB^{\mathsf{det}}\in\cons{\trans}$ such that the same word of length $n$ is induced from $\state$ and $\stateB$ under $\transB^{\mathsf{det}}$.
  As language equivalence is determined by words of length at most $n$ (see, e.g.,~\cite{DBLP:journals/siamcomp/Tzeng92}), it follows that $\state$ and $\stateB$ are language equivalent under $\transB^{\mathsf{det}}$.
\end{proof}

\subsection{Proof of \cref{thm:ele-epb}}

\theoremELepb*
\begin{proof}
  If $\state$ and $\stateB$ are EB, then they are EL given that probabilistic bisimilarity implies language-equivalence.
  Conversely, assume that $\state$ and $\stateB$ are EL.
  By \cref{thm:ele-det}, this is witnessed by a deterministic transition function $\transB\in\cons{\trans}$.
  In a Markov chain with a deterministic transition function, the language-equivalence relation is a bisimulation.
  This shows that $\state$ and $\stateB$ are EB.
\end{proof}

\subsection{Proof of \cref{proposition:ele:epsilon-implications}}\label{appendix:ele:epsilon}

The goal of this section is to prove \cref{proposition:ele:epsilon-implications}, which establishes the implications between the $\varepsilon$-EB, $\varepsilon$-EL, EB and EL relations.
Before providing a proof, we highlight the following relationship between the standard language equivalence and bisimilarity relations, and $\varepsilon$-EB and $\varepsilon$-EL.

\begin{restatable}{proposition}{thmELepsilon}\label{theorem:ele:epsilon}
  Two states are language equivalent (resp.~bisimilar) if and only if they are $\varepsilon$-EL (resp.~$\varepsilon$-EB) for all $\varepsilon > 0$.
\end{restatable}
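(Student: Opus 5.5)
The forward implication is immediate. If two states are $\trans$-language equivalent (resp.\ $\trans$-bisimilar), then for every $\varepsilon>0$ the function $\trans$ itself lies in $\cons{\trans}$ and in the $\varepsilon$-neighbourhood of $\trans$. It therefore witnesses $\varepsilon$-EL (resp.\ $\varepsilon$-EB).

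For the converse I will use a compactness/continuity argument. Suppose $\state$ and $\stateB$ are $\varepsilon$-EL for every $\varepsilon>0$. Taking $\varepsilon = 1/k$ gives a sequence $\transB_k\in\cons{\trans}$ with $|\transB_k(\stateC,\stateC')-\trans(\stateC,\stateC')|\leq 1/k$ for all $\stateC,\stateC'$, and with $\state\langEquiv{\transB_k}\stateB$. Hence $\transB_k\to\trans$ entrywise.

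For each fixed word $\word\in\lblSet^*$, the map $\transB\mapsto\probLV{\stateC}{\transB}(\word)$ is a polynomial in the entries of $\transB$: it is a sum over histories of length $|\word|$ of products of transition probabilities, restricted to histories with labelling $\word$. In particular this map is continuous. For every $k$ we have $\probLV{\state}{\transB_k}(\word)=\probLV{\stateB}{\transB_k}(\word)$, and passing to the limit gives $\probLV{\state}{\trans}(\word)=\probLV{\stateB}{\trans}(\word)$ for all $\word$. Thus $\state\langEquiv{\trans}\stateB$.

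The bisimilarity case is the main obstacle, because bisimilarity is not directly defined by continuous quantities. The cleanest route is to avoid passing to a limit of bisimulations. Instead I will use the standard characterisation that two states are $\trans$-bisimilar if and only if they agree on the probabilities of all formulas of a suitable logic, or equivalently on the finitely many steps of partition refinement. Those quantities are not polynomial either, because refinement depends on which values coincide.

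A more robust approach uses finiteness. Among the $\transB_k$ there are only finitely many possible bisimilarity relations $\bisim{\transB_k}$. Pass to a subsequence on which $\bisim{\transB_k}$ is a fixed equivalence relation $R$ with $(\state,\stateB)\in R$. For each $k$ and each $(\stateC,\stateC')\in R$, we have $\lbl(\stateC)=\lbl(\stateC')$ and $\transB_k(\stateC,\eClass)=\transB_k(\stateC',\eClass)$ for every class $\eClass\in\states/R$. These are finitely many equalities of linear functions of $\transB_k$, so they are preserved in the limit $k\to\infty$. Hence $R$ is a bisimulation for $\trans$, and therefore $\state\bisim{\trans}\stateB$.

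The same subsequence trick also handles the language case uniformly if preferred. The only facts needed are that $\cons{\trans}$ is closed under the relevant limits, which is automatic since the limit is $\trans$ itself, and that the finite family of defining equalities is continuous in the transition function.
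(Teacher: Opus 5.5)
Your proposal is correct and matches the paper's proof: for language equivalence you use continuity of the polynomial word probabilities along a sequence $\transB_k\to\trans$, and for bisimilarity you pass to a subsequence on which the bisimilarity relation is a fixed $R$, whose finitely many linear defining equalities survive in the limit. The paragraph about logical characterisations and partition refinement is a false start that the final argument does not use, so you can delete it.
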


We divide \cref{theorem:ele:epsilon} in two lemmas, one per equivalence relation.
\cref{lemma:ele:epsilon-l} is concerned with language equivalence and \cref{lemma:ele:epsilon-b} with probabilistic bisimilarity.

We first discuss language equivalence.
States that are language equivalent are $\varepsilon$-EL for all $\varepsilon$ by definition.
For the converse, consider states $\state$ and $\stateB$ such that $\state$ and $\stateB$ are $\varepsilon$-EL for all $\varepsilon > 0$.
This implies that there exists a sequence of transition functions $(\transB_n)_{n\in\IN}$ converging to $\trans$ such that $\transB_n\in\cons{\trans}$ and $\state\langEquiv{\transB_n}\stateB$ for all $n\in\IN$.
Fix a word $\word\in\lblSet^*$.
We can show that $(\probLV{\state}{\transB_n}(\word))_{n\in\IN}$ converges to $\probLV{\state}{\trans}(\word)$, and that $(\probLV{\stateB}{\transB_n}(\word))_{n\in\IN}$ converges to $\probLV{\stateB}{\trans}(\word)$.
Because both of these sequences of probabilities are equal, their limits also are equal.
As this holds for all words, we conclude that $\state$ and $\stateB$ are $\trans$-language equivalent.

\begin{lemma}\label{lemma:ele:epsilon-l}
  Two states are language equivalent if and only if they are $\varepsilon$-EL for all $\varepsilon > 0$.
\end{lemma}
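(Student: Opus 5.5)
The forward direction is immediate: if $\state\langEquiv{\trans}\stateB$, then for every $\varepsilon>0$ the function $\trans$ itself belongs to $\cons{\trans}$ and lies in the $\varepsilon$-neighbourhood of $\trans$. It therefore witnesses that $\state$ and $\stateB$ are $\varepsilon$-EL.

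For the converse, I would argue by a limit. Assume $\state$ and $\stateB$ are $\varepsilon$-EL for all $\varepsilon>0$. For each $n\in\IN$, choosing $\varepsilon = 2^{-n}$ yields $\transB_n\in\cons{\trans}$ with $|\transB_n(\stateC,\stateC')-\trans(\stateC,\stateC')|\leq 2^{-n}$ for all $\stateC,\stateC'$, and with $\state\langEquiv{\transB_n}\stateB$. Hence $\transB_n\to\trans$ entrywise.

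Fix a word $\word = a_1\cdots a_k\in\lblSet^*$. The key step is to observe that $\probLV{\stateC}{\transB}(\word)$ is a polynomial, hence continuous, function of the entries of $\transB$. Concretely, it equals the sum over histories $\stateC_1\cdots\stateC_k$ with $\stateC_1=\stateC$ and $\lbl(\stateC_i)=a_i$ of the products $\prod_{i<k}\transB(\stateC_i,\stateC_{i+1})$. This is a finite sum of products of at most $k-1$ entries. Equivalently, writing $\ltrans{}{a}$ for the matrix $\transB$ with columns restricted to $a$-labelled states, it is the vector expression $e_\stateC^\top D_{a_1}\ltrans{}{a_2}\cdots\ltrans{}{a_k}\oneVect$, where $D_{a_1}$ is the diagonal indicator of label $a_1$. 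Either way, continuity gives $\probLV{\state}{\transB_n}(\word)\to\probLV{\state}{\trans}(\word)$ and $\probLV{\stateB}{\transB_n}(\word)\to\probLV{\stateB}{\trans}(\word)$. Since the two sequences coincide termwise, by $\state\langEquiv{\transB_n}\stateB$, their limits coincide.

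Since $\word$ was arbitrary, and language equivalence is characterised by agreement on all finite words (as recalled in the preliminaries), we conclude $\state\langEquiv{\trans}\stateB$. No step is a real obstacle. The only point needing care is stating the polynomial, path-sum formula for $\probLV{\stateC}{\transB}(\word)$ precisely enough to justify continuity. Note also that consistency of $\transB_n$ is not actually needed for this direction; it only matters for the definition of $\varepsilon$-EL.
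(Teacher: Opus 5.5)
Your proof is correct and follows the paper's argument essentially verbatim. The forward direction uses $\tau$ itself as the witness. The converse takes consistent $\sigma_n$ in the $2^{-n}$-neighbourhood of $\tau$ with $s$ and $t$ language-equivalent under $\sigma_n$, and passes to the limit using continuity (polynomiality) of $\sigma \mapsto \mathbb{P}^{\sigma}_{s}(w)$ for each fixed word $w$. Your explicit path-sum formula simply makes precise what the paper asserts in a parenthetical remark.
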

\begin{proof}
  Let $\state, \stateB\in\states$.
  If $\state$ and $\stateB$ are $\trans$-language equivalent, then they are $\varepsilon$-EL for all $\varepsilon>0$.

  We now assume that $\state$ and $\stateB$ are $\varepsilon$-EL for all $\varepsilon> 0$.
  For all $n\in\IN$, let $\transB_n\in\cons{\trans}$ be a transition function in the $2^{-n}$-neighbourhood of $\trans$ such that $\state\langEquiv{\transB_n}\stateB$.
  By construction, $\lim_{n\to\infty}\transB_n = \trans$.

  We now show that for all $\word\in\lblSet^*$, $\probLV{\state}{\trans}(\word)=\probLV{\stateB}{\trans}(\word)$, thereby showing that $\state\langEquiv{\trans}\stateB$.
  Let $\word\in\lblSet^*$.
  The functions $\cons{\trans}\to\IR\colon\transB\mapsto\probLV{\state}{\transB}(\word)$ and $\cons{\trans}\to\IR\colon\transB\mapsto\probLV{\stateB}{\transB}(\word)$ are continuous (the probability of a word from a state is a polynomial function of the transition probabilities).
  It follows that $\lim_{n\to\infty}\probLV{\state}{\transB_n}(\word) = \probLV{\state}{\trans}(\word)$ and $\lim_{n\to\infty}\probLV{\stateB}{\transB_n}(\word) = \probLV{\stateB}{\trans}(\word)$.
  This property, together with the fact that $\probLV{\state}{\transB_n}(\word) = \probLV{\stateB}{\transB_n}(\word)$ for all $n\in\IN$, implies that $\probLV{\state}{\trans}(\word) = \probLV{\stateB}{\trans}(\word)$.
\end{proof}

We now consider bisimilarity.
The reasoning is analogous to the above.
First, we note that bisimilar states are $\varepsilon$-EB for all $\varepsilon>0$.
For the converse implication, we fix states $\state$ and $\stateB$ that are $\varepsilon$-EB for all $\varepsilon > 0$.
Once again, we consider a sequence $(\transB_n)_{n\in\IN}\subseteq\cons{\trans}$ converging to $\trans$ such that $\state$ and $\stateB$ are $\transB_n$-bisimilar for all $n\in\IN$.
We select a subsequence of this sequence such that all elements have the same probabilistic bisimilarity relation.
We then show that this relation is a bisimulation with respect to $\trans$, which yields $\state\bisim{\trans}\stateB$.

\begin{lemma}\label{lemma:ele:epsilon-b}
    Two states are $\trans$-bisimilar if and only if they are $\varepsilon$-EB for all $\varepsilon > 0$.
\end{lemma}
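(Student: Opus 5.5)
The forward implication is immediate: if $\state\bisim{\trans}\stateB$, then for every $\varepsilon>0$ the function $\trans$ itself lies in $\cons{\trans}$ and in the $\varepsilon$-neighbourhood of $\trans$, so $\state$ and $\stateB$ are $\varepsilon$-EB. The work is in the converse, which I will handle by a compactness and pigeonhole argument.

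Assume $\state$ and $\stateB$ are $\varepsilon$-EB for all $\varepsilon>0$. For each $n\in\IN$, I pick $\transB_n\in\cons{\trans}$ in the $2^{-n}$-neighbourhood of $\trans$ with $\state\bisim{\transB_n}\stateB$. Then $\lim_{n\to\infty}\transB_n=\trans$. Write $R_n=\bisim{\transB_n}$ for the $\transB_n$-bisimilarity relation. Each $R_n$ is an equivalence relation on the finite set $\states$, so only finitely many values are possible. By pigeonhole, some relation $R$ equals $R_n$ for infinitely many $n$. I pass to this subsequence, so that $R$ is a $\transB_n$-bisimulation for every $n$ in it and $(\state,\stateB)\in R$.

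The key step is to show that $R$ is a bisimulation with respect to $\trans$. Labels are fixed, so $(\stateC,\stateC')\in R$ still gives $\lbl(\stateC)=\lbl(\stateC')$. Now fix $(\stateC,\stateC')\in R$ and a class $\eClass\in\states/R$. Along the subsequence we have
\[
\transB_n(\stateC,\eClass)=\sum_{\stateC''\in\eClass}\transB_n(\stateC,\stateC'')=\transB_n(\stateC',\eClass).
\]
Each side is a finite sum of coordinates of $\transB_n$, so it is continuous in the transition function. Letting $n\to\infty$ therefore gives $\trans(\stateC,\eClass)=\trans(\stateC',\eClass)$. Hence $R$ is a $\trans$-bisimulation containing $(\state,\stateB)$, and so $\state\bisim{\trans}\stateB$.

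The only subtle point is the need for a single relation: a limit of bisimilarity relations is not well-defined in general. The pigeonhole extraction of a constant subsequence resolves this, because the class sums are then over a fixed partition. I expect no real obstacle beyond this.
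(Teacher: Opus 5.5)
Your proposal is correct and follows essentially the same route as the paper: take $\transB_n$ in the $2^{-n}$-neighbourhood of $\trans$, pass to a subsequence on which the $\transB_n$-bisimilarity relation is a fixed $R$ (possible since $\states$ is finite), and take limits of the class sums $\transB_n(\stateC,\eClass)=\transB_n(\stateC',\eClass)$ to conclude that $R$ is a $\trans$-bisimulation containing $(\state,\stateB)$. Your argument for the forward direction, that $\trans$ itself is a witness, is likewise the paper's.
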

\begin{proof}
  Let $\state, \stateB\in\states$.
  If $\state\bisim{\trans}\stateB$, then $\state$ and $\stateB$ are $\varepsilon$-EB for all $\varepsilon > 0$ by definition.

  We now assume that $\state$ and $\stateB$ are $\varepsilon$-EB for all $\varepsilon > 0$.
  We fix a sequence $(\transB_n)\subseteq\cons{\trans}$ such that, for all $n\in\IN$, $\transB_n$ is in the $2^{-n}$-neighbourhood of $\trans$ and $\state\bisim{\transB_n}\stateB$.
  We assume that the bisimilarity relation for all $\transB_n$ is the same relation $R\subseteq\states\times\states$.
  This assumption can be enforced by selecting a subsequence if necessary: as $\states$ is finite, there are only finitely many (bisimilarity) relations on $\states$.
  By construction, $\lim_{n\to\infty}\transB_n = \trans$.

  We show that $R$ is a bisimulation with respect to $\trans$.
  Let $(\stateC, \stateC')\in R$ and $\eClass$ be an equivalence class of $R$.
  For all $n\in\IN$, $R$ is a bisimulation with respect to $\transB_n$, therefore $\transB_n(\stateC)(\eClass) =\transB_n(\stateC')(\eClass)$.
  It follows that
  \[
    \trans(\stateC)(\eClass) =
    \lim_{n\to\infty}\transB_n(\stateC)(\eClass) =
    \lim_{n\to\infty}\transB_n(\stateC')(\eClass) =
    \trans(\stateC')(\eClass).
  \]
  This shows that $R$ is a bisimulation with respect to $\trans$, which in turn implies that $\state\bisim{\trans}\stateB$ as we have $(\state, \stateB)\in R$ by construction.
\end{proof}

We now provide a proof of \cref{proposition:ele:epsilon-implications}.
First, we note that all implications stated in \cref{proposition:ele:epsilon-implications} hold due to the definitions and the fact that bisimilarity implies language equivalence.
Second, \cref{theorem:ele:epsilon} implies that for any two states that are not language equivalent (resp.~bisimilar), there exists some $\varepsilon > 0$ such that the two states are not $\varepsilon$-EL (resp.~$\varepsilon$-EB).
Thus, to show that $\varepsilon$-EL does not imply $\varepsilon$-EB in general, it suffices to construct an example with two states that are language equivalent but not bisimilar.
Similarly, to show that EL does not imply $\varepsilon$-EL in general, it suffices to construct an example with two EL states that are not language equivalent.
For both cases, such examples can be found in the Markov chain depicted in \cref{figure:intro-example}.
We provide a detailed argument with explicit witnesses below.

\propEpsilonELImplications*
\begin{proof}
  Let $\varepsilon > 0$ and fix $\state, \stateB\in\states$.
  First, assume that $\state$ and $\stateB$ are $\varepsilon$-EB.
  Let $\transB\in\cons{\trans}$ be a transition function in the $\varepsilon$-neighbourhood of $\trans$ such that $\state\bisim{\transB}\stateB$.
  Since bisimilarity implies language-equivalence, it follows that $\state\langEquiv{\transB}\stateB$, hence $\state$ and $\stateB$ are $\varepsilon$-EL.
  Now, assume that $\state$ and $\stateB$ are $\varepsilon$-EL. Then $\state$ and $\stateB$ are EL (by definition), and are thus EB by \cref{thm:ele-epb}.

  We now show that the converse implications do not hold in general.
  We consider the Markov chain depicted in \cref{figure:intro-example}.

  We first show that there exist two states and $\varepsilon>0$ such that $\varepsilon$-EL does not imply $\varepsilon$-EB.
  Consider the language equivalent (hence $\varepsilon$-EL for all $\varepsilon>0$) states $q$ and $r$.
  It can be shown that $q$ and $r$ are not $\varepsilon$-EB for any $\varepsilon<\frac{1}{3}$.
  Indeed, for any perturbation of the transitions by less than $\frac{1}{3}$, $r$ has a single successor from which two labels are reachable, whereas the two successors of $q$ can only reach one label each.
  It follows that the successors of $q$ and $r$ cannot be made bisimilar with such perturbations, and therefore $q$ and $r$ are not $\varepsilon$-EB for $\varepsilon < \frac{1}{3}$.

  We now show that there exist two states and $\varepsilon>0$ such that the two states are EL, but not $\varepsilon$-EL.
  This is the case, e.g., of the states $t$ and $u$ for $\varepsilon < \frac{1}{2}$.
  Indeed, if the outgoing probabilities of $t$ are modified by less than $\frac{1}{2}$, then no outgoing transitions from $t$ are disabled, and thus both labels can be reached from $t$ whereas only one can be reached from $u$.
  This shows that $t$ and $u$ are not $\varepsilon$-EL for $\varepsilon < \frac{1}{2}$, despite $t$ and $u$ being EL and EB.
\end{proof}

\subsection{Proof of Theorem~\ref{theorem:complexity:ele-states}}\label{appendix:ele:np-hardness}
The goal of this section is to prove that the EL and EB problems for states are $\np$-hard, to complete the proof of the following theorem.
\theoremComplexityELstates*

Recall that by \cref{thm:ele-epb}, two states are EL if and only if they are EB.
It suffices therefore to show that the EL problem for states is $\np$-hard.
We provide a polynomial reduction from 3-SAT, the restriction of the Boolean satisfiability problem to 3-CNF formulae.
The 3-SAT problem is $\np$-complete~\cite{Co71}.
Since the existence of a consistent transition function ensuring language-equivalence depends only on the graph induced by the original transition function, the transition probabilities of the labelled Markov chain constructed are omitted in the sequel; using uniform distribution over successors for each state would be sufficient.

Let $\varphi = C_1 \wedge \ldots \wedge C_m$ be a 3-CNF formula over the variables $x_1, \ldots, x_n$, where $C_j = \lit{1j} \vee \lit{2j} \vee \lit{3j}$ for all $1 \leq j \leq m$.
We assume without loss of generality that all variables appear in $\varphi$.
Let $\litset = \{\, \lit{kj} \mid 1 \leq k \leq 3,\ 1 \leq j \leq m \,\}$ be a set of (pairwise distinct) symbols for each literal occurring in $\varphi$.
Intuitively, any two occurrences of the same literal in $\varphi$ are seen as distinct elements of $\litset$.
Fix a total order $<$ on $\litset$.

For each $1 \leq i \leq n$, let $\posset_i = \{\, \literal \in \litset \mid \literal \text{ is an occurrence of } x_i \,\}$ be the set of positive occurrences of $x_i$ and $\negset_i = \{\, \literal \in \litset \mid \literal \text{ is an occurrence of } \neglit{x}_i \,\}$ be the set of negative occurrences of $x_i$.

\subparagraph*{Intuition.}
We construct an LMC containing two isolated subgraphs, with respective initial states $\state$ and $\state'$.
The key idea is that paths that share the same sequences of labels from $\state$ and $\state'$ will move synchronously through two phases consisting of variable gadgets and clause gadgets, respectively.
The subgraphs are designed such that %
the path from $\state'$ forces the path from $\state$ to visit the corresponding gadgets in the intended order.  In this sense, the $\state'$-subgraph acts as an enforcing copy: it has fewer choices inside the gadgets, and these choices determine which successors can be chosen in the $\state$-subgraph.

In the first phase, the paths pass through the variable gadgets, where a truth assignment is chosen.
As shown in Figure~\ref{figure:variable-gadget}, each variable gadget has two main branches corresponding to $x_i = \textit{false}$ and $x_i = \textit{true}$, respectively.
We encode the chosen assignment indirectly, by marking literal occurrences as false.
Intuitively, if the first branch is visited, every literal occurrence of $x_i$ is set to \emph{false} or, if the second branch is visited, every literal occurrence of $\neglit{x}_i$ is set to \emph{false}.
Thus, the first phase records all literal occurrences that are false under the chosen assignment.

\begin{figure}[ht]
  \centering
  \begin{tikzpicture}[font=\small,scale=1.3]

    \node[state] at (2,19) (x) {$x_i$};
    \node[state] at (1,18) (vx1) {$V_{\lit{1}}$};
    \node[state] at (0.5,17) (tx1) {$T_{\lit{1}}$};
    \node[state] at (1.5,17) (fx1) {$F_{\lit{1}}$};
    \node[state] at (1,16) (vx2) {$V_{\lit{2}}$};
    \node[state] at (0.5,15) (tx2) {$T_{\lit{2}}$};
    \node[state] at (1.5,15) (fx2) {$F_{\lit{2}}$};
    \node[state] at (3,18) (vnx1) {$V_{\lit{1}'}$};
    \node[state] at (2.5,17) (tnx1) {$T_{\lit{1}'}$};
    \node[state] at (3.5,17) (fnx1) {$F_{\lit{1}'}$};
    \node[state] at (3,16) (vnx2) {$V_{\lit{2}'}$};
    \node[state] at (2.5,15) (tnx2) {$T_{\lit{2}'}$};
    \node[state] at (3.5,15) (fnx2) {$F_{\lit{2}'}$};
    \node[state, draw=none] at (1,14) (d1) {$\vdots$};
    \node[state, draw=none] at (3,14) (dn1) {$\vdots$};
    \node[state,gray] at (2,13) (y) {};

    \path[-stealth] (x) edge (vx1);
    \path[-stealth] (vx1) edge (fx1);
    \path[-stealth] (vx1) edge (tx1);
    \path[-stealth] (fx1) edge (vx2);
    \path[-stealth] (vx2) edge (fx2);
    \path[-stealth] (vx2) edge (tx2);
    \path[-stealth] (fx2) edge (d1);
    \path[-stealth] (x) edge (vnx1);
    \path[-stealth] (vnx1) edge (fnx1);
    \path[-stealth] (vnx1) edge (tnx1);
    \path[-stealth] (fnx1) edge (vnx2);
    \path[-stealth] (vnx2) edge (fnx2);
    \path[-stealth] (vnx2) edge (tnx2);
    \path[-stealth] (fnx2) edge (dn1);
    \path[-stealth] (d1) edge (y);
    \path[-stealth] (dn1) edge (y);
    
    \node[state] at (7.5,19) (xt) {$x_i'$};
    \node[state] at (6.5,18) (vx1f) {$V_{\lit{1}}^F$};
    \node[state] at (6.5,17) (fx1t) {$F_{\lit{1}}'$};
    \node[state] at (6.5,16) (vx2f) {$V_{\lit{2}}^F$};
    \node[state] at (6.5,15) (fx2t) {$F_{\lit{2}}'$};
    \node[state] at (8.5,18) (vnx1f) {$V_{\lit{1}'}^F$};
    \node[state] at (8.5,17) (fnx1t) {$F_{\lit{1}'}'$};
    \node[state] at (8.5,16) (vnx2f) {$V_{\lit{2}'}^F$};
    \node[state] at (8.5,15) (fnx2t) {$F_{\lit{2}'}'$};
    \node[state, draw=none] at (6.5,14) (d2) {$\vdots$};
    \node[state, draw=none] at (8.5,14) (dn2) {$\vdots$};
    \node[state,gray] at (7.5,13) (yt) {};

    \path[-stealth] (xt) edge (vx1f);
    \path[-stealth] (vx1f) edge (fx1t);
    \path[-stealth] (fx1t) edge (vx2f);
    \path[-stealth] (vx2f) edge (fx2t);
    \path[-stealth] (fx2t) edge (d2);
    \path[-stealth] (xt) edge (vnx1f);
    \path[-stealth] (vnx1f) edge (fnx1t);
    \path[-stealth] (fnx1t) edge (vnx2f);
    \path[-stealth] (vnx2f) edge (fnx2t);
    \path[-stealth] (fnx2t) edge (dn2);
    \path[-stealth] (d2) edge (yt);
    \path[-stealth] (dn2) edge (yt);

  \end{tikzpicture}
  \caption{The variable gadgets for $x_i$ for the subgraphs with initial state $s$ (left) and $s'$ (right), assuming that $|\posset_i| \geq 2$ and $|\negset_i| \geq 2$.}
  \label{figure:variable-gadget}
\end{figure}

During the second phase, the paths pass through the clause gadgets, where one literal per clause is chosen as a witness that the clause is satisfied.
If the chosen assignment satisfies the formula, then in every clause $C_j$ there exists $1 \leq k \leq 3$ such that the literal $\lit{kj}$ is \emph{true}. The clause gadget attempts to choose such a literal occurrence; see Figure~\ref{figure:clause-gadget}.

\begin{figure}
  \centering
  \begin{tikzpicture}[font=\small,xscale=1.2,yscale=1.4]

    \node[state] at (2,5) (c2) {$C_j$};
    \node[state,gray] at (2,2) (c3) {};
    \node[state,draw=gray,dotted] at (2,4) (vny2s) {$V_{\lit{2j}}$};
    \node[state,draw=gray,dotted] at (1.5,3) (tny2s) {$T_{\lit{2j}}$};
    \node[state,draw=gray,dotted] at (2.5,3) (fny2s) {$F_{\lit{2j}}$};
    \node[state,draw=gray,dotted] at (0,4) (vx2s) {$V_{\lit{1j}}$};
    \node[state,draw=gray,dotted] at (-0.5,3) (tx2s) {$T_{\lit{1j}}$};
    \node[state,draw=gray,dotted] at (0.5,3) (fx2s) {$F_{\lit{1j}}$};
    \node[state,draw=gray,dotted] at (4,4) (vnz2s) {$V_{\lit{3j}}$};
    \node[state,draw=gray,dotted] at (3.5,3) (tnz2s) {$T_{\lit{3j}}$};
    \node[state,draw=gray,dotted] at (4.5,3) (fnz2s) {$F_{\lit{3j}}$};

    \path[-stealth] (c2) edge (vx2s);
    \path[-stealth] (c2) edge (vny2s);
    \path[-stealth] (c2) edge (vnz2s);
    \path[-stealth] (vx2s) edge (tx2s);
    \path[-stealth] (vx2s) edge (fx2s);
    \path[-stealth] (vny2s) edge (tny2s);
    \path[-stealth] (vny2s) edge (fny2s);
    \path[-stealth] (vnz2s) edge (tnz2s);
    \path[-stealth] (vnz2s) edge (fnz2s);
    \path[-stealth] (tx2s) edge[bend right,looseness=0.7,in=197] (c3);
    \path[-stealth] (tny2s) edge (c3);
    \path[-stealth] (tnz2s) edge[bend left] (c3);

    \node[state] at (8,5) (c2t) {$C_j'$};
    \node[state] at (8,4) (vny2t) {$V_{\lit{2j}}^T$};
    \node[state] at (8,3) (tny2t) {$T_{\lit{2j}}'$};
    \node[state,gray] at (8,2) (t2) {};
    \node[state] at (6.5,4) (vx2t) {$V_{\lit{1j}}^T$};
    \node[state] at (6.5,3) (tx2t) {$T_{\lit{1j}}'$};
    \node[state] at (9.5,4) (vnz2t) {$V_{\lit{3j}}^T$};
    \node[state] at (9.5,3) (tnz2t) {$T_{\lit{3j}}'$};

    \path[-stealth] (c2t) edge (vx2t);
    \path[-stealth] (c2t) edge (vny2t);
    \path[-stealth] (c2t) edge (vnz2t);
    \path[-stealth] (vx2t) edge (tx2t);
    \path[-stealth] (vny2t) edge (tny2t);
    \path[-stealth] (vnz2t) edge (tnz2t);
    \path[-stealth] (tx2t) edge (t2);
    \path[-stealth] (tny2t) edge (t2);
    \path[-stealth] (tnz2t) edge (t2);
    
  \end{tikzpicture}
  \caption{The clause gadget for $C_j$ for the subgraphs with initial state $s$ (left) and $s'$ (right).  States with a dotted outline are reused from the variable gadgets in phase one.}
  \label{figure:clause-gadget}
\end{figure}

Observe that the state $V_\literal^F$ is used only in the variable gadgets, while the state $V_\literal^T$ is used only in the clause gadgets.  Moreover, these states have single successors.
Since a deterministic transition function chooses a unique successor for $V_\literal$ in the $\state$-subgraph, the choices forced by different visits to $V_\literal$ must be consistent.  Thus, if some part of the construction forces $V_\literal$ to go to $F_\literal$, no later part of the construction can force $V_\literal$ to go to $T_\literal$, as this would conflict with the earlier choice.  This is the mechanism that records whether the literal occurrence $\literal$ has been made false.
Due to this construction, a state visited in phase one cannot be revisited in phase two without falsifying the language equivalence of $\state$ and $\state'$.
Hence, for each clause, we can only choose a literal occurrence that was left unvisited in the variable gadgets, i.e., literal occurrences that are true under the chosen assignment.
In this sense, the $\state$-subgraph guarantees consistency.

The example in Figure~\ref{figure:sat-to-ele-example} illustrates the construction for the formula $(x \vee y) \wedge (x \vee \neglit{y} \vee \neglit{z})$.
The solid edges depict a deterministic transition function corresponding to the satisfying assignment $x = z = \textit{false}$ and $y = \textit{true}$.
In the first phase, all occurrences of $x$, $\neglit{y}$, and $z$ are recorded as false.  During the second phase, $y$ is chosen in the first clause and $\neglit{z}$ in the second clause.  These choices are compatible, because the chosen literals are true under the assignment.

\subparagraph*{Reduction.}
We construct a labelled Markov chain $\mchain_\varphi = \lmc$ as follows.
The state space $\states$ is partitioned into two subsets $\states_{s}$ and $\states_{s'}$ (for each isolated subgraph), where each element of $\states_{s'}$ has a counterpart in $\states_{s}$.
Each state is labelled by an element of $\states_{s}$: the label of each state in $\states_{s}$ is itself and the label of any state in $\states_{s'}$ is its counterpart in $\states_{s}$.
Formally, we define
\[\states_{s} = \lblSet = \{\, \state \,\} \cup \bigcup_{i=1}^n \{\, x_i \,\} \cup \bigcup_{j=1}^m \{\, C_j \,\} \cup \bigcup_{\literal \in \litset} \{\, V_\literal, T_\literal, F_\literal \,\},\]
and define $\states_{s'}$ as the set of so-called \textit{primed counterparts} of elements of $\states_{s}$, given by
\[
  \states_{s'} = 
\{\, \state' \,\} \cup \bigcup_{i=1}^n \{\,x_i' \,\} \cup \bigcup_{j=1}^m \{\, C_j' \,\} \cup \bigcup_{\literal \in \litset} \{\,V_\literal^T, V_\literal^F, T_\literal', F_\literal' \,\}.
\]
We note that states of the form $V_\literal\in\states_{s}$ have two primed counterparts $V_\literal^T$ and $ V_\literal^F$ in $\states_{s'}$, whereas all other elements of $\states_{s}$ have a single primed counterpart (indicated by the notation).
The labelling function $\lbl$ is defined by letting, for all $\stateC\in\states_{s}$, the label of $\stateC$ and its primed counterparts be $\stateC$.

We now describe the transition function $\trans$.
In the following, we abuse notation and write $\supp{\stateC}$ for $\supp{\trans(\stateC)}$ for any $\stateC\in\states$.
The initial states enter their respective gadget for the first variable, i.e., we let $\supp{\state} = \{ x_1 \}$ and $\supp{\state'} = \{ x_1' \}$.
For each literal occurrence $\literal \in \litset$, the transitions from the $V_\literal$-labelled states are to $T_\literal$ and $F_\literal$-labelled states: we let $\supp{V_\literal} = \{\, T_\literal, F_\literal \,\}$, $\supp{V_\literal^T} = \{\, T_\literal' \,\}$ and $\supp{V_\literal^F} = \{\, F_\literal' \,\}$.

We now define the outgoing transitions in variable gadgets.
We fix $1\leq i\leq n$.
Let $\lit{1}, \ldots, \lit{|\posset_i|}$ be the ordered list of literals in $\posset_i$ and $\lit{1}', \ldots, \lit{|\negset_i|}'$ be the ordered list of literals in $\negset_i$.
We formalise the transition structure illustrated in Figure~\ref{figure:variable-gadget}.
Let $1\leq l\leq|P_i|$.
Define the states $y_i$ and $y_i'$ as follows, which represent the initial states of the next gadgets.  If $i < n$, then $y_i = x_{i+1}$ and $y_i' = x_{i+1}'$, i.e., we move on to the next variable.  Otherwise, we have $i = n$, and then $y_i = C_1$ and $y_i' = C_1'$, i.e., we enter the first clause gadget next.
The states $F_{\lit{l}}$ and $F_{\lit{l}}'$ have a single successor, defined by one of two cases.
If $l < |P_i|$, we let $\supp{F_{\lit{l}}} = \{V_{\lit{l+1}}\}$ and $\supp{F_{\lit{l}}'} = \{V_{\lit{l+1}}^F\}$, i.e., if $\lit{l}$ is not the last positive occurrence of $x_i$, we move on to the next literal in the list.
Otherwise, $l = |P_i|$ and, thus, all literals in $P_i$ have been fixed.
Hence, we let $\supp{F_{\lit{l}}} = \{y_i\}$ and $\supp{F_{\lit{l}}'} = \{y_i'\}$.
We define analogous transitions for each $1 \leq l \leq |\negset_i|$.

We now define the outgoing transitions of $x_i$ and $x_i'$.
We distinguish three cases (recall that we assume that $x_i$ appears in $\varphi$, i.e., $P_i$ or $N_i$ is nonempty).
If $P_i$ and $N_i$ are both nonempty, we let $\supp{x_i} = \{V_{\lit{1}}, V_{\lit{1}'}\}$ and $\supp{x_i'} = \{V_{\lit{1}}^F, V_{\lit{1}'}^F\}$, i.e., we visit the first literal of each list.
If only $P_i$ is nonempty, then we let $\supp{x_i} = \{V_{\lit{1}}, y_i\}$ and $\supp{x_i'} = \{V_{\lit{1}}^F, y_i'\}$.
Similarly, if only $N_i$ is nonempty, we let $\supp{x_i} = \{y_i, V_{\lit{1}'}\}$ and $\supp{x_i'} = \{y_i', V_{\lit{1}}^F\}$.

It remains to define outgoing transitions of clause gadgets, generalising the illustration in Figure~\ref{figure:clause-gadget}.
Fix $1\leq j \leq m$.
We define $\supp{C_j} = \{\, V_{\lit{1j}}, V_{\lit{2j}}, V_{\lit{3j}} \,\}$ and $\supp{C_j'} = \{\, V_{\lit{1j}}^T, V_{\lit{2j}}^T, V_{\lit{3j}}^T \,\}$.
Finally, for all $1 \leq k \leq 3$, we let $\supp{T_{\lit{kj}}} = \{\, C_{j+1} \,\}$ if $j < m$ and $\supp{T_{\lit{kj}}} = \{\, s \,\}$ otherwise, and define outgoing transitions of $T_{\lit{kj}}'$ in the same way (with primed states).

This construction is polynomial in the size of $\varphi$.

\begin{figure}
  \centering
  \begin{tikzpicture}[scale=1.07, every node/.style={scale=0.8}]

    \node[state] at (2,20) (s1) {$\state$};
    \node[state] at (2,19) (x) {$x$};
    \node[state] at (1,18) (vx1) {$V_{x_1}$};
    \node[state] at (0.5,17) (tx1) {$T_{x_1}$};
    \node[state] at (1.5,17) (fx1) {$F_{x_1}$};
    \node[state] at (1,16) (vx2) {$V_{x_2}$};
    \node[state] at (0.5,15) (tx2) {$T_{x_2}$};
    \node[state] at (1.5,15) (fx2) {$F_{x_2}$};
    \node[state] at (2,14) (y) {$y$};
    \node[state] at (1,13) (vy1) {$V_{y_1}$};
    \node[state] at (1.5,12) (fy1) {$F_{y_1}$};
    \node[state] at (0.5,12) (ty1) {$T_{y_1}$};
    \node[state] at (3,13) (vny2) {$V_{\neglit{y}_2}$};
    \node[state] at (3.5,12) (fny2) {$F_{\neglit{y}_2}$};
    \node[state] at (2.5,12) (tny2) {$T_{\neglit{y}_2}$};
    \node[state] at (2,11) (z) {$z$};
    \node[state] at (3,10) (vnz2) {$V_{\neglit{z}_2}$};
    \node[state] at (3.5,9) (fnz2) {$F_{\neglit{z}_2}$};
    \node[state] at (2.5,9) (tnz2) {$T_{\neglit{z}_2}$};
    \node[state] at (2,8) (c1) {$C_1$};
    \node[state] at (2,5) (c2) {$C_2$};
    \node[state,draw=gray,dotted] at (2,2) (s2) {$\state$};
    \node[state,draw=gray,dotted] at (1,7) (vx1s) {$V_{x_1}$};
    \node[state,draw=gray,dotted] at (0.5,6) (tx1s) {$T_{x_1}$};
    \node[state,draw=gray,dotted] at (1.5,6) (fx1s) {$F_{x_1}$};
    \node[state,draw=gray,dotted] at (3,7) (vy1s) {$V_{y_1}$};
    \node[state,draw=gray,dotted] at (2.5,6) (ty1s) {$T_{y_1}$};
    \node[state,draw=gray,dotted] at (3.5,6) (fy1s) {$F_{y_1}$};
    \node[state,draw=gray,dotted] at (2,4) (vny2s) {$V_{\neglit{y}_2}$};
    \node[state,draw=gray,dotted] at (1.5,3) (tny2s) {$T_{\neglit{y}_2}$};
    \node[state,draw=gray,dotted] at (2.5,3) (fny2s) {$F_{\neglit{y}_2}$};
    \node[state,draw=gray,dotted] at (0,4) (vx2s) {$V_{x_2}$};
    \node[state,draw=gray,dotted] at (-0.5,3) (tx2s) {$T_{x_2}$};
    \node[state,draw=gray,dotted] at (0.5,3) (fx2s) {$F_{x_2}$};
    \node[state,draw=gray,dotted] at (4,4) (vnz2s) {$V_{\neglit{z}_2}$};
    \node[state,draw=gray,dotted] at (3.5,3) (tnz2s) {$T_{\neglit{z}_2}$};
    \node[state,draw=gray,dotted] at (4.5,3) (fnz2s) {$F_{\neglit{z}_2}$};

    \path[-stealth] (s1) edge (x);
    \path[-stealth] (x) edge (vx1);
    \path[-stealth] (vx1) edge (fx1);
    \path[-stealth,dashed] (vx1) edge (tx1);
    \path[-stealth] (fx1) edge (vx2);
    \path[-stealth] (vx2) edge (fx2);
    \path[-stealth,dashed] (vx2) edge (tx2);
    \path[-stealth] (fx2) edge (y);
    \path[-stealth,dashed] (x) edge[bend left] (y);
    \path[-stealth,dashed] (y) edge (vy1);
    \path[-stealth,dashed] (vy1) edge (fy1);
    \path[-stealth] (vy1) edge (ty1);
    \path[-stealth,dashed] (fy1) edge (z);
    \path[-stealth] (y) edge (vny2);
    \path[-stealth] (vny2) edge (fny2);
    \path[-stealth,dashed] (vny2) edge (tny2);
    \path[-stealth] (fny2) edge (z);
    \path[-stealth,dashed] (z) edge (vnz2);
    \path[-stealth,dashed] (vnz2) edge (fnz2);
    \path[-stealth] (vnz2) edge (tnz2);
    \path[-stealth,dashed] (fnz2) edge (c1);
    \path[-stealth] (z) edge[bend right] (c1);
    \path[-stealth,dashed] (c1) edge (vx1s);
    \path[-stealth] (c1) edge (vy1s);
    \path[-stealth,dashed] (vx1s) edge (tx1s);
    \path[-stealth] (vx1s) edge (fx1s);
    \path[-stealth] (vy1s) edge (ty1s);
    \path[-stealth,dashed] (vy1s) edge (fy1s);
    \path[-stealth,dashed] (tx1s) edge (c2);
    \path[-stealth] (ty1s) edge (c2);
    \path[-stealth,dashed] (c2) edge (vx2s);
    \path[-stealth,dashed] (c2) edge (vny2s);
    \path[-stealth] (c2) edge (vnz2s);
    \path[-stealth,dashed] (vx2s) edge (tx2s);
    \path[-stealth] (vx2s) edge (fx2s);
    \path[-stealth,dashed] (vny2s) edge (tny2s);
    \path[-stealth] (vny2s) edge (fny2s);
    \path[-stealth] (vnz2s) edge (tnz2s);
    \path[-stealth,dashed] (vnz2s) edge (fnz2s);
    \path[-stealth,dashed] (tx2s) edge[bend right=18] (s2);
    \path[-stealth,dashed] (tny2s) edge (s2);
    \path[-stealth] (tnz2s) edge[bend left] (s2);

    \node[state] at (9,20) (t1) {$\state'$};
    \node[state] at (9,19) (xt) {$x'$};
    \node[state] at (8,18) (vx1f) {$V_{x_1}^F$};
    \node[state] at (8,17) (fx1t) {$F_{x_1}'$};
    \node[state] at (8,16) (vx2f) {$V_{x_2}^F$};
    \node[state] at (8,15) (fx2t) {$F_{x_2}'$};
    \node[state] at (9,14) (yt) {$y'$};
    \node[state] at (8,13) (vy1f) {$V_{y_1}^F$};
    \node[state] at (8,12) (fy1t) {$F_{y_1}'$};
    \node[state] at (10,13) (vny2f) {$V_{\neglit{y}_2}^F$};
    \node[state] at (10,12) (fny2t) {$F_{\neglit{y}_2}'$};
    \node[state] at (9,11) (zt) {$z'$};
    \node[state] at (10,10) (vnz2f) {$V_{\neglit{z}_2}^F$};
    \node[state] at (10,9) (fnz2t) {$F_{\neglit{z}_2}'$};
    \node[state] at (9,8) (c1t) {$C_1'$};
    \node[state] at (8,7) (vx1t) {$V_{x_1}^T$};
    \node[state] at (8,6) (tx1t) {$T_{x_1}'$};
    \node[state] at (10,7) (vy1t) {$V_{y_1}^T$};
    \node[state] at (10,6) (ty1t) {$T_{y_1}'$};
    \node[state] at (9,5) (c2t) {$C_2'$};
    \node[state] at (9,4) (vny2t) {$V_{\neglit{y}_2}^T$};
    \node[state] at (9,3) (tny2t) {$T_{\neglit{y}_2}'$};
    \node[state,draw=gray,dotted] at (9,2) (t2) {$\state'$};
    \node[state] at (7.5,4) (vx2t) {$V_{x_2}^T$};
    \node[state] at (7.5,3) (tx2t) {$T_{x_2}'$};
    \node[state] at (10.5,4) (vnz2t) {$V_{\neglit{z}_2}^T$};
    \node[state] at (10.5,3) (tnz2t) {$T_{\neglit{z}_2}'$};

    \path[-stealth] (t1) edge (xt);
    \path[-stealth] (xt) edge (vx1f);
    \path[-stealth] (vx1f) edge (fx1t);
    \path[-stealth] (fx1t) edge (vx2f);
    \path[-stealth] (vx2f) edge (fx2t);
    \path[-stealth] (fx2t) edge (yt);
    \path[-stealth,dashed] (xt) edge[bend left] (yt);
    \path[-stealth,dashed] (yt) edge (vy1f);
    \path[-stealth,dashed] (vy1f) edge (fy1t);
    \path[-stealth,dashed] (fy1t) edge (zt);
    \path[-stealth] (yt) edge (vny2f);
    \path[-stealth] (vny2f) edge (fny2t);
    \path[-stealth] (fny2t) edge (zt);
    \path[-stealth,dashed] (zt) edge (vnz2f);
    \path[-stealth,dashed] (vnz2f) edge (fnz2t);
    \path[-stealth,dashed] (fnz2t) edge (c1t);
    \path[-stealth] (zt) edge[bend right] (c1t);
    \path[-stealth,dashed] (c1t) edge (vx1t);
    \path[-stealth] (c1t) edge (vy1t);
    \path[-stealth,dashed] (vx1t) edge (tx1t);
    \path[-stealth] (vy1t) edge (ty1t);
    \path[-stealth,dashed] (tx1t) edge (c2t);
    \path[-stealth] (ty1t) edge (c2t);
    \path[-stealth,dashed] (c2t) edge (vx2t);
    \path[-stealth,dashed] (c2t) edge (vny2t);
    \path[-stealth] (c2t) edge (vnz2t);
    \path[-stealth,dashed] (vx2t) edge (tx2t);
    \path[-stealth,dashed] (vny2t) edge (tny2t);
    \path[-stealth] (vnz2t) edge (tnz2t);
    \path[-stealth,dashed] (tx2t) edge (t2);
    \path[-stealth,dashed] (tny2t) edge (t2);
    \path[-stealth] (tnz2t) edge (t2);
    
  \end{tikzpicture}
  \caption{The labelled Markov chain constructed for the formula $(x \vee y) \wedge (x \vee \neglit{y} \vee \neglit{z})$.  States with a dotted outline are duplicated for the ease of the reader. A deterministic $\transB \in \cons{\trans}$ such that $\state\langEquiv{\transB}\state'$ is shown with solid transitions and corresponds to the satisfying assignment $x = z = \textit{false}$ and $y = \textit{true}$. }
  \label{figure:sat-to-ele-example}
\end{figure}

\subparagraph*{Correctness.}

Before proving correctness, we spell out the role of the shared states $V_\literal$ under a deterministic transition function (it is sufficient to only consider deterministic transition functions by \cref{thm:ele-det}).  The paths in each isolated subgraph are forced to remain synchronised by label equality.  Thus, whenever the path from $\state'$ visits $V_\literal^F$, the path from $\state$ must visit the state with the same label, namely $V_\literal$.  Since $V_\literal^F$ has the unique successor $F_\literal'$, the transition function must choose $F_\literal$ as the successor of $V_\literal$.

Similarly, in the clause gadgets, picking $V_\literal^T$ as the successor in the path from $\state'$ (on the right-hand side of illustrations) leads us to visit $V_\literal$ in the run from $\state$ (on the left-hand side).  Since $V_\literal^T$ has the unique successor $T_\literal'$, equality of the generated words forces the transition function to choose $T_\literal$ as the successor of $V_\literal$.

Hence a literal occurrence that was marked false in phase one cannot be chosen in phase two to satisfy a clause without violating label-equality of the paths in each subgraph.  Therefore, there exists a deterministic transition function such that $\state$ and $\state'$ generate the same infinite word if and only if every clause contains a literal occurrence that was not marked false in the first phase, which is precisely the condition that the formula is satisfiable.

\begin{proposition}
The formula $\varphi$ is satisfiable if and only if there exists a deterministic $\transB\in\cons{\trans}$ for $\mchain_\varphi$ such that $\state\langEquiv{\transB}\state'$.
\end{proposition}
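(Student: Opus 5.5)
The plan is to prove both directions directly, using that under a deterministic $\transB$ each of $\state$ and $\state'$ generates a single infinite word. Hence $\state\langEquiv{\transB}\state'$ holds iff the two unique paths from $\state$ and $\state'$ carry the same label sequence. Every state of $\states_{s'}$ is labelled by its unprimed counterpart, and every state of $\states_{s}$ is labelled by itself. So label equality of the paths is equivalent to the following: at every step, the state on the $\state$-path is exactly the counterpart of the state on the $\state'$-path. The whole proof therefore reduces to a synchronisation argument between the two deterministic runs.

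($\Rightarrow$) Fix a satisfying assignment $\alpha$, and for each clause $C_j$ fix an index $k_j$ such that $\lit{k_j j}$ is true under $\alpha$. I define $\transB$ as follows.
\begin{itemize}
\item At $x_i$ and $x_i'$, choose the branch corresponding to the literal occurrences that are false under $\alpha$. That is, choose $V_{\lit{1}}$ (respectively $V_{\lit{1}}^F$) if $\alpha(x_i)$ is false, and the $\negset_i$ branch otherwise. If the relevant list is empty, choose $y_i$ and $y_i'$ instead.
\item At $C_j$ and $C_j'$, choose $V_{\lit{k_j j}}$ and $V_{\lit{k_j j}}^T$.
\item At each $V_\literal$, choose $F_\literal$ if $\literal$ is false under $\alpha$, and $T_\literal$ otherwise.
\end{itemize}
All other states have a single successor. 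I then check by following the runs step by step that the counterpart invariant is preserved. The only nontrivial points are the visits to some $V_\literal$. In phase one, only false occurrences $\literal$ are visited, and $V_\literal^F$ leads to $F_\literal'$, matching the choice $F_\literal$. In phase two, only the chosen true occurrences are visited, and $V_\literal^T$ leads to $T_\literal'$, matching $T_\literal$. After $C_m$ both runs return to $\state$ and $\state'$, so the invariant holds forever by periodicity.

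($\Leftarrow$) Assume $\transB$ is deterministic and $\state\langEquiv{\transB}\state'$, so the runs stay counterpart-synchronised. I define $\alpha(x_i)$ to be false iff $\transB(x_i')$ enters the positive branch ($V_{\lit{1}}^F$ with $\lit{1}\in\posset_i$, or $y_i'$ when $\posset_i$ is empty and the only alternative is the negative branch). More carefully, the $\state'$-run in the gadget of $x_i$ visits $V_\literal^F$ for every $\literal$ in exactly one of $\posset_i$ and $\negset_i$, and I call those occurrences \emph{marked}. By synchronisation, the $\state$-run visits $V_\literal$ and must then go to $F_\literal$, since $V_\literal^F$ only leads to $F_\literal'$. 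Hence $\transB(V_\literal)=F_\literal$ for every marked $\literal$, and the marked set is exactly the set of occurrences false under $\alpha$. In phase two, the $\state'$-run at $C_j'$ picks some $V_{\lit{kj}}^T$. Synchronisation then forces $\transB(V_{\lit{kj}})=T_{\lit{kj}}$, so $\lit{kj}$ is not marked and is therefore true under $\alpha$. Consequently every clause is satisfied.

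The step I expect to need the most care is the bookkeeping that the marked set is exactly the set of false occurrences, in particular in the degenerate cases where $\posset_i$ or $\negset_i$ is empty. This relies on the $\state'$-run passing through the whole chosen list, which follows because each $F'$ and $V^F$ state has a unique successor. I would also check, and correct in the definition if needed, the typo $\supp{x_i'}=\{y_i',V_{\lit{1}}^F\}$, which should read $V_{\lit{1}'}^F$. Finally, I would confirm that determinism is exactly what makes a single choice at $V_\literal$ shared across both phases; the EL equivalence then follows from \cref{thm:ele-det}.
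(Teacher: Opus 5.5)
Your proposal is correct and follows the paper's proof. The forward direction builds the deterministic witness from a satisfying assignment the same way, and the converse reads off the assignment from the choice at $x_i'$ and uses that the single successor of $V_\lambda$ cannot be both $F_\lambda$ (forced in phase one) and $T_\lambda$ (forced in phase two). Your correction of $\{y_i', V_{\lambda_1}^F\}$ to $\{y_i', V_{\lambda_1'}^F\}$ in the case $P_i=\varnothing$ is also right.
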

\begin{proof}
We first assume that $\varphi$ is satisfiable.
Let $f$ be a satisfying assignment of $\varphi$.
We define a deterministic transition function $\transB \in \cons{\trans}$.
To lighten notation, we view $\transB$ as a function $\states\to\states$ below.

For each variable $x_i$ with $1 \leq i \leq n$, and for all $\literal \in \posset_i \cup \negset_i$, if $f(x_i) = \textit{false}$, we let
\begin{equation*}
\sigma(x_i) =
\begin{cases}
 V_{\lit{p}}  \qquad & \text{if } \posset_i \neq \varnothing \text{ and } \lit{p} = \min \posset_i \\
 x_{i+1}  \qquad & \text{if } \posset_i = \varnothing \text{ and } i < n\\
 C_1  \qquad & \text{if } \posset_i = \varnothing \text{ and } i = n
\end{cases}
\end{equation*}
and let $\sigma(V_\literal)=T_\literal$ if $\literal \in \negset_i$ and $\sigma(V_\literal)=F_\literal$ otherwise, and, if $f(x_i) = \textit{true}$, we let
\begin{align*}
\sigma(x_i) &=
\begin{cases}
 V_{\lit{n}}  \qquad & \text{if } \negset_i \neq \varnothing \text{ and } \lit{n} = \min \negset_i \\
 x_{i+1}  \qquad & \text{if } \negset_i = \varnothing \text{ and } i < n\\
\ C_1  \qquad & \text{if } \negset_i = \varnothing \text{ and } i = n
\end{cases}
\end{align*}
and let $\sigma(V_\literal)=T_\literal$ if $\literal \in \posset_i$ and $\sigma(V_\literal)=F_\literal$ otherwise.
We define $\transB(x_i')$ analogously to $\transB(x_i)$.

For each clause $C_j$ with $1 \leq j \leq m$, at least one literal occurrence in $C_k$ is \emph{true} under $f$.  Let $1 \leq k \leq 3$ such that $\lit{kj}$ is \emph{true}.
We define $\transB(C_j) = V_{\lit{kj}}$ and $\transB(C_j') = V_{\lit{kj}}^T$.
Note that the earlier choice for the successor of $V_{\lit{kj}}$ must have been $T_{\lit{kj}}$, since no true literal occurrence was marked false in the first variable phase above.

The remaining states have unique successors.  By construction, the paths given by $\transB$ from $\state$ and $\state'$ visit states with equal labels at every step.  After the final clause gadget, both paths return to $\state$ and $\state'$, respectively, so the same finite pattern repeats forever.  Hence $\state$ and $\state'$ generate the same infinite word.
This shows that $\state$ and $\state'$ are EL.

  We now suppose that there exists a transition function $\transB \in \cons{\trans}$ such that $\state\langEquiv{\transB}\state'$, i.e., the same infinite word is generated from $\state$ and $\state'$ under $\transB$.
  Without loss of generality, we can assume that $\transB$ is deterministic, by \cref{thm:ele-det}.
  We view $\transB$ as a function $\states\to\states$.
  We derive from $\transB$ an assignment $f$ of $x_1, \ldots, x_n$ that satisfies $\varphi$.

For all $1 \leq i \leq n$, the state $x_i'$ has two branches with distinct labels (cf.~Figure~\ref{figure:variable-gadget}). 
If $\transB(x_i') = V_{\lit{n}}^F$ for some $\lit{n} \in \negset_i$, or if $\negset_i = \varnothing$ and $\sigma(x_i')\in\{x_{i+1}', C_1'\}$, then we set $f(x_i) = \textit{true}$.  Otherwise, we set $f(x_i) = \textit{false}$.

For the words generated from $\state$ and $\state'$ to be equal, the choices of $\transB$ from $x_i$ must match those from $x_i'$ until the subsequent gadget is reached.
If a literal $\literal \in \litset$ is \emph{false} under $f$, then by construction of the variable gadgets, we must have $\sigma(V_\literal) = F_\literal$ given that the state $V_\literal^F$ has the unique successor $F_\literal'$.

Now consider a clause $C_j$ with $1 \leq j \leq m$.  In its clause gadget, one successor of state $C_j'$ must be chosen, say $V_{\lit{kj}}^T$ with $1 \leq k \leq m$.  Since $V_{\lit{kj}}^T$ has the unique successor $T_{\lit{kj}}'$, equality of the generated words forces $\sigma(V_{\lit{kj}}) = T_{\lit{kj}}$ (seeing that $C'_j$ is visited in all paths from $\state'$).
This is impossible if $f(\lit{kj}) = \textit{false}$, due to the argument above.  Therefore, $f(\lit{kj}) = \textit{true}$.  It follows that every clause contains a true literal under $f$, thus, $f$ satisfies $\varphi$.
\end{proof}

We have shown that the satisfiability of $\varphi$ is equivalent to states $\state$ and $\state'$ of $\mchain_\varphi$ being EL.
Given that $\mchain_\varphi$ can be constructed from $\varphi$ in time polynomial in the size of $\varphi$, we have established the $\np$-hardness of the EL problem for states.
This completes the proof of \cref{theorem:complexity:ele-states}.

\subsection{Proof of Theorem~\ref{theorem:nmf-ele}}\label{appendix:ele:nmf}
We prove that the EL problem (for distributions) is $\exists\IR$-hard in this section.
We describe a reduction from the non-negative matrix factorisation problem (NMF) to the EL problem.
The NMF problem asks, given a non-negative matrix $M\in\IR^{n\times m}$ and $k\geq 1$, whether there exist non-negative matrices $H\in\IR^{n\times k}$ and $G\in\IR^{k\times m}$ such that $M = H\cdot G$.
The NMF problem is $\exists\IR$-complete~\cite{shitov18NMF,shitov21NMF}.
We adapt a reduction presented in~\cite{DBLP:conf/fsttcs/Kiefer020}. %

\theoremNMFele*
\begin{proof}
  We show that NMF problem reduces to the EL problem for distributions in polynomial time.
  Let $M\in\IR^{n\times m}$ be a non-negative matrix and $k\geq 1$.
  We assume without loss of generality that $M$ is a stochastic matrix~\cite[Section 3]{CR93} (similarly to~\cite{DBLP:journals/lmcs/FijalkowKS20,DBLP:conf/fsttcs/Kiefer020}).
  We consider the set of $n + m + 1$ labels $\lblSet = \{a_{i}\mid 1\leq i\leq n\}\cup \{b_{j}\mid 1\leq j\leq m\}\cup \{c\}$.
  Intuitively, we construct an $\lblSet$-labelled Markov chain consisting of two components, one for the matrix $M$ and another to witness the existence of the matrices $H$ and $G$.
  Each component can be seen as a directed acyclic graph of depth $3$ (with the last layer being absorbing), where states have, in order, an $a_i$ label, the $c$ label and a $b_j$ label.
  For the $M$-component, we design the transition function such that there is only one consistent choice and define one of the initial distributions such that the word $a_{i}cb_{j}$ has probability $\frac{1}{n}M_{ij}$.
  We construct the $GH$-component such that the probability of a word $a_{i}cb_{j}$ from the sole $a_{i}$ labelled state is determined by the product of two matrices, given by the transition probabilities from layer $1$ to $2$ and from layer $2$ to $3$.
  See \cref{figure:nmf-reduction} for an illustration.
  
  For the $M$-component of the Markov chain, define
  \[S_M = \{s^d_{ij} \mid 1\leq i\leq n, 1\leq j\leq m, d\in\{a, b, c\}\},\]
  and for the $GH$-component, define
  \[S_{GH} = \{t^a_{i}\mid 1\leq i\leq n\}\cup\{t^c_{r}\mid 1\leq r\leq k\}\cup\{t^b_{j}\mid 1\leq j\leq m\}.\]
  We let $S = S_M\cup S_{GH}$.
  We define the labelling function $\lbl$ as follows.
  For all $1\leq i\leq n$, $1\leq j\leq m$ and $1\leq r\leq k$, we let $\lbl(s^a_{ij}) = \lbl(t^a_i) = a_i$, $\lbl(s^c_{ij}) = \lbl(t^c_r) = c$ and $\lbl(s^b_{ij}) = \lbl(t^b_j) = b_j$.

  It remains to describe the transition structure.
  We define a relation $\rightarrow$ that indicates enabled (i.e., consistent) transitions in the Markov chain.
  For $S_M$, we let, for all $1\leq i\leq n$ and $1\leq j\leq m$, $s^a_{ij}\rightarrow s^c_{ij}$, $s^c_{ij}\rightarrow s^b_{ij}$ and $s^b_{ij}\rightarrow s^b_{ij}$ be the only enabled transitions.
  For $S_{GH}$, we let, for all $1\leq i\leq n$, $1\leq r\leq k$ and $1\leq j\leq m$, we let $t^a_i\rightarrow t^c_r$, $t^c_r\rightarrow t^b_j$ and $t^b_j\rightarrow t^b_j$ be the only enabled transitions (in particular, all transitions between two layers are enabled).

  Let $\measure\in\dist{S}$ and $\measureB\in\dist{S}$ be such that $\measure(s^a_{ij}) = \frac{1}{n}\cdot M_{ij}$ for all $1\leq i\leq n$ and $1\leq j\leq m$ and $\measure(t^a_i) = \frac{1}{n}$ for all $1\leq i\leq n$.
  We claim that $\measure$ and $\measureB$ are EL if and only if the NMF instance considered above is positive.
  This follows from the fact that for all transition functions $\transB\colon\states\to\dist{\states}$ that are consistent with $\rightarrow$ and all $1\leq i\leq n$, $1\leq j\leq m$, we have $\probLV{\measure}{\transB}(a_icb_j^\omega) = \frac{1}{n} M_{ij}$ and $\probLV{\measureB}{\transB}(a_icb_j^\omega) = \frac{1}{n}\cdot\sum_{r=1}^k\transB(t^a_i, t^c_r)\cdot\transB(t^c_r, t^b_j)$.
  On the one hand, if the EL of $\measure$ and $\measureB$ is witnessed by the transition function $\transB$, a non-negative factorisation of $M$ is given by $G = (\transB(t^a_i, t^c_r))_{1\leq i\leq n, 1 \leq k\leq r}$ and $H = (\transB(t^c_r, t^b_j))_{1\leq r\leq k, 1 \leq j\leq m}$.
  Conversely, assume that there exists a factorisation of $M$ given by non-negative matrices $G\in\IR^{n\times k}$ and $H\in\IR^{k\times m}$.
  We may assume that $G$ and $H$ are (row-)stochastic (see~\cite[Thm.~3.2]{CR93}).
  The existential language equivalence of $\measure$ and $\measureB$ is witnessed by $\transB\in\cons{\trans}$ such that, for all $1\leq i\leq n$, $1\leq r\leq k$ and $1\leq j\leq m$, we have $\transB(t^a_i, t^c_r) = G_{ir}$ and $\transB(t^c_r, t^b_j) = H_{rj}$.
\end{proof}

\begin{figure}[ht]
  \centering
  \begin{tikzpicture}[font=\small]
    \node at (1.75,5.75) () {$\mu$};
    \node at (8.75,5.75) () {$\nu$};
    \node[state,fill=black,scale=0.4,minimum size=1pt] at (1.75,5.5) (mu) {};
    \node[state,fill=black,scale=0.4,minimum size=1pt] at (8.75,5.5) (nu) {};
    \node[state] at (0,4) (s1) {$s_{11}^a$};
    \node[state] at (0,2) (s1p) {$s_{11}^c$};
    \node[state] at (0,0) (p1) {$s_{11}^b$};
    \node[state] at (3.5,4) (sn) {$s_{nm}^a$};
    \node[state] at (3.5,2) (snp) {$s_{nm}^c$};
    \node[state] at (3.5,0) (pm) {$s_{nm}^b$};
    \node at (1.75,4.85) () {$\cdots$};
    \node at (1.75,4) () {$\cdots$};
    \node at (1.75,2) () {$\cdots$};
    \node at (1.75,0) () {$\cdots$};
    \node[state] at (7,4) (t1) {$t_1^a$};
    \node[state] at (7,2) (t1p) {$t_1^c$};
    \node[state] at (7,0) (q1) {$t_1^b$};
    \node[state] at (10.5,4) (tn) {$t_n^a$};
    \node[state] at (10.5,2) (tnp) {$t_k^c$};
    \node[state] at (10.5,0) (qm) {$t_m^b$};
    \node at (8.75,4.85) () {$\cdots$};
    \node at (8.75,4) () {$\cdots$};
    \node at (8.75,2) () {$\cdots$};
    \node at (8.75,0) () {$\cdots$};
    \node at (7.75,3) () {$\cdots$};
    \node at (9.75,3) () {$\cdots$};
    \node at (7.75,1) () {$\cdots$};
    \node at (9.75,1) () {$\cdots$};
    \path[-stealth] (mu) edge node[above left] {$\frac{1}{n} M_{11}$} (s1);
    \path[-stealth] (mu) edge node[above right] {$\frac{1}{n} M_{nm}$} (sn);
    \path[-stealth] (s1) edge node[left] {$1$} (s1p);
    \path[-stealth] (s1p) edge node[left] {$1$} (p1);
    \path[-stealth] (sn) edge node[right] {$1$} (snp);
    \path[-stealth] (snp) edge node[right] {$1$} (pm);
    \path[-stealth] (p1) edge[loop below] node[left,xshift=-5] {$1$} (p1);
    \path[-stealth] (pm) edge[loop below] node[right,xshift=5] {$1$} (pm);
    \path[-stealth] (nu) edge node[above left] {$\frac{1}{n}$} (t1);
    \path[-stealth] (nu) edge node[above right] {$\frac{1}{n}$} (tn);
    \path[-stealth] (t1) edge node[left] {$G_{11}$} (t1p);
    \path[-stealth] (t1p) edge node[left] {$H_{11}$} (q1);
    \path[-stealth] (tn) edge node[right] {$G_{nk}$} (tnp);
    \path[-stealth] (tnp) edge node[right] {$H_{km}$} (qm);
    \path[-stealth] (t1) edge node[near start,above,xshift=5] {$G_{1k}$} (tnp);
    \path[-stealth] (tn) edge node[near start,above,xshift=-5] {$G_{n1}$} (t1p);
    \path[-stealth] (t1p) edge node[near start,above,xshift=5] {$H_{1m}$} (qm);
    \path[-stealth] (tnp) edge node[near start,above,xshift=-5] {$H_{k1}$} (q1);
    \path[-stealth] (q1) edge[loop below] node[left,xshift=-5] {$1$} (q1);
    \path[-stealth] (qm) edge[loop below] node[right,xshift=5] {$1$} (qm);
  \end{tikzpicture}
  \caption{The labelled Markov chain consisting of an $M$-component (left) and a $GH$-component (right) with distributions $\mu$ and $\nu$. %
  The labels of the states are as follows: for all $1\leq i\leq n$, $1\leq j\leq m$, we have $\lbl(s^a_{ij}) = \lbl(t^a_i) = a_i$ and $\lbl(s^b_{ij}) = \lbl(t^b_j) = b_j$, and all remaining states have label $c$.}
  \label{figure:nmf-reduction}
\end{figure}

\section{Perturbation equivalence in Markov decision processes}
\label{appendix:mdps}
This section discusses the problems considered in~\cite{DBLP:conf/fsttcs/Kiefer020} that generalise the UL, UB, EL and EB problems considered in the main text to Markov decision processes (MDPs).
For consistency with the main text, we use perturbation equivalence terminology along with the UL, UB, EL and EB nomenclature throughout this section for these MDP-related problems.
\cref{table:complexity-comparison} recapitulates our complexity results for LMCs and provides a comparison with the corresponding problems for MDPs.
The complexity results for MDPs reported in the table combine results from Kiefer and Tang \cite{DBLP:conf/fsttcs/Kiefer020} with several enhancements established in this section.

\begin{table}[b]
  \caption{Summary of our complexity results and comparison with counterpart problems in MDPs.
    We use the subscripts $\mathcal{D}$ and $S$ to denote the variants of the universal and existential language equivalence problems that study the equivalence of distributions and states respectively.
  }\label{table:complexity-comparison}
  \centering
    \begin{tabular}{|c|c|c|c|c|c|c|}
      \hline
      & UL$_\mathcal{D}$ & UL$_S$ &  UB & EL$_\mathcal{D}$ & EL$_S$ & EB \\
      \hline
      Labelled Markov chains
            & \multicolumn{3}{c|}{\nlogspace-c.}
            & $\exists\IR$-c.
            & \multicolumn{2}{c|}{\np-c.}
      \\
      \hline
      Labelled MDPs & \multicolumn{2}{c|}{In $\mathsf{C_{=}L}$} & $\ptime$-c. & \multicolumn{2}{c|}{$\exists\IR$-c.} & $\np$-c. \\
      \hline
  \end{tabular}
\end{table}

This section is structured as follows.
In Section~\ref{section:mdps:background}, we formally define MDPs and other relevant notions.
Section~\ref{section:mdps:problems} defines the universal and existential equivalence problems in MDPs and explain how they generalise the perturbation equivalence problems in Markov chains defined in Section~\ref{section:problems}.
In Section~\ref{section:mdps:upb}, we provide a simple proof of $\ptime$-hardness of the UB problem in MDPs.
In Section~\ref{section:mdps:ule}, we slightly refine the complexity upper bound in~\cite{DBLP:conf/fsttcs/Kiefer020} for the UL problem in MDPs from $\ptime$ to $\mathsf{C_{=}L}$ by giving a reduction from this UL problem to the problem of language-equivalence in Markov chains.
To simplify the presentation of this reduction, we use an alternative formalism for labelled Markov chains and MDPs in Section~\ref{section:mdps:ule}, where labels are placed on transitions.
Section~\ref{section:mdps:perturbation-equivalence-interreducibility} presents an efficient transformation of MDPs from one formalism to the other that preserves perturbation equivalence.

\subsection{Labelled MDPs}\label{section:mdps:background}

\subparagraph*{Markov decision processes.} MDPs can be seen as a generalisation of Markov chains with non-determinism: at each step of the process, an action has to be selected in the current state, and the next state is determined by a distribution that depends only on the current state and chosen action.
Formally, a labelled \textit{Markov decision process} (MDP) is a tuple $\mdp = (\states, \actions, \trans, \lblSet, \lbl)$, where $\states$ is a finite set of states, $\actions$ is a finite set of actions (actions are denoted by $\action$ to distinguish them from labels), $\trans\colon\states\times\actions\to\dist{\states}$ is a (partial) transition function, $\lblSet$ is a finite set of labels and $\lbl\colon\states\to\lblSet$ is a labelling function.

An action $\action\in\actions$ is \textit{enabled} in a state $\state\in\states$ if $\trans(\state, \action)$ is defined.
We write $\actions(\state)$ for the set of actions enabled in a state $\state\in\states$.
We assume that MDPs are deadlock-free, i.e., there is at least one enabled action in all states.
For any states $\state, \state'\in\states$ and $\action\in\actions(\state)$, we write $\trans(\state, \action, \state')$ for the probability $\trans(\state, \action)(\state')$.

Paths and histories of an MDP are sequences of states and actions.
Formally, a \textit{path} of $\mdp$ is a sequence $\state_0\action_0\state_1\action_1\ldots\in (\states\actions)^\omega$ such that, for all $\iPos\in\IN$, $\trans(\state_\iPos, \action_{\iPos}, \state_{\iPos+1}) > 0$.
A \textit{history} is a finite prefix of a path ending in a state (i.e., in $\states(\actions\states)^*$).

\subparagraph*{Strategies.}
A strategy describes how one selects actions throughout a path of the MDP.
In general, strategies are defined as functions that assign to each history a distribution over actions enabled in the last state of the history.
For this section, we are only concerned with memoryless strategies, i.e., strategies whose decision depend only on the current state.
Formally, a \textit{memoryless strategy} is a function $\strat\colon\states\to\dist{\actions}$ such that for all $\state\in\states$, $\supp{\strat(\state)}\subseteq \actions(\state)$.
As for transition functions of Markov chains, we write $\strat(\state, \action)$ instead of $\strat(\state)(\action)$ for all $\state\in\states$ and $\action\in\actions$.

\subparagraph*{Induced Markov chains.}
A memoryless strategy $\strat$ induces a Markov chain $\mdp^\strat = (\states, \trans^\strat, \lblSet, \lbl)$ where the transition function $\trans^\strat$ is defined by, for all $\state, \state'\in\states$,
\[\trans^{\strat}(\state, \state') = \sum_{\action\in\actions(\state)}
  \strat(\state,\action)\cdot\trans(\state, \action, \state').\]
To lighten notation, for any $\measure\in\dist{\states}$, we write $\probPV{\measure}{\strat}$ instead of $\probPV{\measure}{\trans^\strat}$ for the distribution over paths induced by $\mdp^\strat$ from the initial distribution $\measure$ (and similarly for initial states).
We remark that actions are omitted from induced Markov chains.

\subparagraph*{Equivalence notions.}
We define language-equivalence and probabilistic bisimilarity in labelled MDPs with respect to a given memoryless strategy as in Markov chains.
Let $\strat\colon\states\to\dist{\actions}$.
Formally, two distributions $\measure$, $\measureB\in\dist{\states}$ are language-equivalent under $\strat$, denoted $\measure\langEquiv{\strat}\measureB$ if  $\measure$ and $\measureB$ are language-equivalent in the Markov chain $\mdp^\strat$.
Similarly, two states $\state$, $\stateB\in\states$ are bisimilar under $\strat$, denoted by $\state\bisim{\strat}\stateB$, if they are bisimilar in $\mdp^\strat$.

\subsection{Perturbation equivalence problems in MDPs}\label{section:mdps:problems}

Perturbation equivalence problems in MDPs ask if there exist \textit{memoryless strategies} that make two states or distributions equivalent or inequivalent.
These problems, described below, have been studied in~\cite{DBLP:conf/fsttcs/Kiefer020}, and generalise the UL, UB, EL and EB problems in labelled Markov chains.
Intuitively, choosing a memoryless strategy in a deterministic MDP (i.e., an MDP where all distributions given by the transition function are Dirac) can be seen as selecting a consistent transition function in a Markov chain.
We formalise this idea at the end of this section.

\subparagraph*{Language-equivalence.}
We describe first MDP variants of UL and EL.

On the one hand, the \textit{UL problem in labelled MDPs} asks, given a labelled MDP and two initial distributions, whether these distributions are language-equivalent under all memoryless strategies.
The UL problem is the negation of the $\mathsf{TV}>0$ problem studied in~\cite{DBLP:conf/fsttcs/Kiefer020}, which asks whether there exists a memoryless strategy under which two given distributions are not language-equivalent.
The name $\mathsf{TV}>0$ references the \textit{total-variation distance}, which is non-zero between two distributions if and only if these distributions are not language-equivalent.
This problem is known to be solvable in polynomial time~\cite{DBLP:conf/fsttcs/Kiefer020}.
We show in Section~\ref{section:mdps:ule} that this problem is logspace-reducible to the problem of deciding the language-equivalence of two initial distributions in labelled Markov chains, which slightly refines this upper bound.

On the other hand, the \textit{EL problem in labelled MDPs} asks, given two initial distributions, whether there exists a memoryless strategy under which the two distributions are equivalent.
It is referred to as the $\mathsf{TV}=0$ problem in~\cite{DBLP:conf/fsttcs/Kiefer020}, where it is shown to be $\exists\IR$-complete.

\subparagraph*{Probabilistic bisimilarity.}
We now move on to bisimilarity.
The \textit{UB problem in labelled MDPs} asks, given a labelled MDP and two initial states, whether these two states are bisimilar under all memoryless strategies.
Finally, the \textit{EB problem in labelled MDPs} asks, given a labelled MDP and two initial states, whether these two states are bisimilar under some memoryless strategy.

The negation of the UB problem is studied in~\cite{DBLP:conf/fsttcs/Kiefer020} under the name $\mathsf{PB}>0$ (can we make the probabilistic bisimilarity distance positive?), where it is shown to be in $\ptime$.
We show $\ptime$-completeness below: we outline a direct reduction from the problem of checking whether two states of a labelled Markov chain are bisimilar, which is $\ptime$-hard~\cite{DBLP:conf/fossacs/ChenBW12}, to the UB problem in MDPs.
The EB problem, known as $\mathsf{PB}=0$, is shown to be $\np$-complete~\cite{DBLP:conf/fsttcs/Kiefer020}.

\begin{remark}
  The $\np$-hardness proof for the EB problem in MDPs presented in~\cite{DBLP:conf/fsttcs/Kiefer020} does not directly yield $\np$-hardness of the EB problem for LMCs.
  This reduction is from the subset sum problem.
  Given an instance of the subset sum problem, this reduction constructs an MDP in which transition probabilities are proportional to the subset sum inputs, and thus may differ from $0$ and $1$.
  This reduction hinges on these transition probabilities to enforce a quantitative criterion.
  Because an instance of the EB problem on LMCs cannot directly encode these numbers, we gave a direct reduction from SAT in Appendix~\ref{appendix:ele:np-hardness}.
  \hfill$\lhd$
\end{remark}

Unlike their counterparts in Markov chains, the UL (resp.~EL) and UB (resp.~EB) problems are not equivalent, as indicated by the different complexities for each problem (cf.\ Table~\ref{table:complexity}).
Furthermore, for the EL problem for states in MDPs and the EB problem in MDPs, strategies that use randomisation may be necessary to witness the perturbation equivalence of given states (e.g.,~\cite[Figure~5]{DBLP:conf/fsttcs/Kiefer020}), which contrasts with the situation in Markov chains where deterministic transition functions suffice (\cref{thm:ele-epb}).

\subparagraph*{Reducing from Markov chains to MDPs.}
We briefly explain how to derive, given a labelled Markov chain $\mchain = \lmc$, a labelled MDP $\mdp = (\states, \actions, \transB, \lblSet, \lbl)$ with the same state space and labelling such that any instance of the UL or EL (resp.~UB or EB) on $\mchain$ is equivalent to an instance of the corresponding problem on the MDP $\mdp$ with the same input distributions (resp.~states).
Intuitively, we construct $\mdp$ so that any transition in $\mchain$ corresponds to an action with a deterministic outcome.
This way, any memoryless strategy in $\mdp$ can be seen as a consistent transition function with respect to $\trans$ and vice-versa.

Formally, we define the set of actions of $\mdp$ to be $\actions = \states$.
For all $\state\in\states$, the set of actions enabled in $\state$ is $\actions(\state) = \supp{\trans(\state)}$, i.e., the successors of $\state$ in $\mchain$.
For all $\state\in\states$ and $\stateB\in\actions(\state)$, we let $\transB(\state, \stateB, \stateB) = 1$.

We remark that a memoryless strategy of $\mdp$ is a function $\strat\colon\states\to\dist{\states}$ such that, for all $\state\in\states$, $\supp{\strat(\state)}\subseteq A(\state) = \supp{\trans(\state)}$.
In other words, the set of memoryless strategies of $\mdp$ is the set $\cons{\trans}$.
Hence, to prove that the UL, UB, EL and EB problems on $\mchain$ are equivalent to the corresponding problems on $\mdp$ for the same initial states or distributions, it is sufficient to show that, for all memoryless strategies $\strat\in\cons{\trans}$ of $\mdp$, the Markov chain $(\states, \strat, \lblSet, \lbl)$ (derived from $\mdp$ by replacing $\trans$ by $\strat$) is equal to the induced Markov chain $\mdp^\strat$.
We prove this below.

\begin{lemma}\label{lemma:mc-to-mdp:translation:1}
  For all $\strat\in\cons{\trans}$, the Markov chains $(\states, \strat, \lblSet, \lbl)$ and $\mdp^\strat$ are equal.
\end{lemma}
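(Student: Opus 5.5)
The plan is to check directly that the two Markov chains have the same components. Both have state space $\states$, label set $\lblSet$ and labelling $\lbl$ by construction, so it suffices to show that their transition functions coincide. That is, for all $\state,\state'\in\states$, we need $\trans^\strat(\state,\state') = \strat(\state,\state')$.

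First I note that $\strat$ is indeed a memoryless strategy of $\mdp$. The requirement is $\supp{\strat(\state)}\subseteq\actions(\state)=\supp{\trans(\state)}$, which is exactly the consistency condition defining $\cons{\trans}$. Hence $\mdp^\strat$ is well defined.

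Next, I unfold the definition of the induced transition function:
\[
  \trans^\strat(\state,\state') = \sum_{\stateB\in\actions(\state)} \strat(\state,\stateB)\cdot\transB(\state,\stateB,\state').
\]
By construction of $\mdp$, each action $\stateB\in\actions(\state)$ leads deterministically to $\stateB$. So $\transB(\state,\stateB,\state')=1$ if $\stateB=\state'$ and $0$ otherwise, and the sum collapses to the single term with $\stateB = \state'$, provided $\state'\in\actions(\state)$. I then split into two cases.

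\begin{itemize}
\item If $\state'\in\supp{\trans(\state)}$, the sum equals $\strat(\state,\state')$.
\item If $\state'\notin\supp{\trans(\state)}$, the sum is $0$. Consistency of $\strat$ gives $\strat(\state,\state')=0$ as well.
\end{itemize}

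In both cases the two transition functions agree, so the Markov chains are equal. There is no real obstacle here. The only point requiring care is the second case, where consistency of $\strat$ with $\trans$ is needed to match the value $0$.
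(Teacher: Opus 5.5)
Your proposal is correct and follows essentially the same route as the paper: both compare the transition functions pointwise, collapse the sum using the Dirac outcomes $\transB(\state,\stateB,\stateB)=1$, and split on whether $\state'\in\supp{\trans(\state)}$, using consistency of $\strat$ when $\state'\notin\supp{\trans(\state)}$. One cosmetic point: write the induced transition function as $\transB^\strat$ rather than $\trans^\strat$, since the MDP's transition function is $\transB$ and $\trans$ already denotes that of $\mchain$.
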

\begin{proof}
  Let $\transB^\strat$ denote the transition function of $\mdp^\strat$.
  We must prove that $\transB^\strat=\strat$.
  Let $\state, \stateB\in\states$.
  If $\stateB\notin\supp{\trans(\state)}$, then by construction of $\mdp$, we have $\transB^\strat(\state, \stateB) = 0 = \strat(\state, \stateB)$ (there is no direct transition from $\state$ to $\stateB$).
  Assume now that $\stateB\in\supp{\trans(\state)} = \actions(\state)$.
  By definition of $\transB$, we have, for all $\stateC\in\actions(\state)$, $\transB(\state, \stateC, \stateC) = 1$.
  It follows that
  \begin{equation*}
    \transB^\strat(\state, \stateB)
     = \sum_{\stateC\in\actions(\state)} \strat(\state, \stateC)\cdot \transB(\state, \stateC, \stateB)
     = \strat(\state, \stateB).
   \end{equation*}
   We have shown that $\transB^\strat = \strat$, which ends the proof.
\end{proof}

Lemma~\ref{lemma:mc-to-mdp:translation:1} implies the following proposition.
\begin{proposition}
  Two distributions over $\states$ are UL (resp.~EL) in $\mchain$ if and only if they are language-equivalent under all memoryless strategies of $\mdp$ (resp.~some memoryless strategy of $\mdp$).
  Two states are UB (resp.~EB) in $\mchain$ if and only if they are bisimilar under all memoryless strategies of $\mdp$ (resp.~some memoryless strategy of $\mdp$).
\end{proposition}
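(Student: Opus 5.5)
The statement follows almost directly from Lemma~\ref{lemma:mc-to-mdp:translation:1}. The key observation is that the set of memoryless strategies of $\mdp$ coincides with $\cons{\trans}$. A memoryless strategy is a map $\strat\colon\states\to\dist{\actions}$ with $\actions=\states$ and $\supp{\strat(\state)}\subseteq\actions(\state)=\supp{\trans(\state)}$. This is exactly the condition defining $\cons{\trans}$, so the two sets are literally equal, and quantifying over either ranges over the same objects.

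First I fix an arbitrary $\strat\in\cons{\trans}$. By Lemma~\ref{lemma:mc-to-mdp:translation:1}, the induced chain $\mdp^\strat$ equals $(\states,\strat,\lblSet,\lbl)$, which is $\mchain$ with $\trans$ replaced by $\strat$. Both chains have the same states, labelling and transition function, so they induce the same path measures: $\probPV{\measure}{\strat}$ in $\mdp$ is the same object as $\probPV{\measure}{\transB}$ in $\mchain$ with $\transB=\strat$. Hence $\measure\langEquiv{\strat}\measureB$ in $\mdp$ if and only if $\measure\langEquiv{\transB}\measureB$ in $\mchain$. Likewise, $\state\bisim{\strat}\stateB$ in $\mdp$ if and only if $\state\bisim{\transB}\stateB$ in $\mchain$, because bisimulations are defined purely from the labelling and the transition function.

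Second, I quantify. UL in $\mchain$ means $\langEquiv{\transB}$ for all $\transB\in\cons{\trans}$. By the bijection and the pointwise equivalence above, this is the same as language equivalence under all memoryless strategies of $\mdp$. Existential quantification gives EL in the same way. The identical argument with $\bisim{}$ in place of $\langEquiv{}$ handles UB and EB.

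There is no real obstacle. The only point needing care is checking that the equivalence notions in $\mdp$ were defined through the induced chain $\mdp^\strat$, so that equality of chains transfers them verbatim. That holds by the definitions in Section~\ref{section:mdps:background}.
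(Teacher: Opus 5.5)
Your proposal is correct and follows essentially the same route as the paper: you identify the memoryless strategies of $\mdp$ with $\cons{\trans}$, use Lemma~\ref{lemma:mc-to-mdp:translation:1} to equate $\mdp^\strat$ with $(\states,\strat,\lblSet,\lbl)$, and transfer the universal and existential quantifiers. Your remark that language equivalence and bisimilarity in $\mdp$ are defined through the induced chain is exactly what makes the equality of chains sufficient, and the paper relies on this implicitly.
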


\subsection{The UB problem in MDPs}\label{section:mdps:upb}

We establish the $\ptime$-hardness of the UB problem in MDPs in this section.
This proves that the UB problem is harder in MDPs than in Markov chains, where it is $\nlogspace$-complete (Theorem~\ref{theorem:ule:d:complexity}).

We provide a direct reduction from the problem of deciding if two states are bisimilar in a labelled Markov chain, which is $\ptime$-complete~\cite{DBLP:conf/fossacs/ChenBW12}.
Intuitively, we construct an MDP with a single action such that the probabilistic transitions in the MDP from any state with this action match those of the original Markov chain.
By design, the Markov chain induced by the single (memoryless) strategy of this MDP is identical to the initial Markov chain.
In particular, two states are bisimilar in the initial Markov chain if and only if they are bisimilar under all memoryless strategies of the MDP.
By combining the $\ptime$-hardness of the UB problem that follows from this reduction and the $\ptime$-membership shown in~\cite{DBLP:conf/fsttcs/Kiefer020}, we obtain the following theorem.
\begin{theorem}\label{theorem:upb-mdp:complexity}
  The UB problem in MDPs is $\ptime$-complete.
\end{theorem}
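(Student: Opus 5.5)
We establish the two directions separately. Membership in $\ptime$ is taken from \cite{DBLP:conf/fsttcs/Kiefer020}, where the negation of the UB problem in MDPs ($\mathsf{PB}>0$) is shown to be decidable in polynomial time; since $\ptime$ is closed under complement, the UB problem itself is in $\ptime$. The work is therefore in $\ptime$-hardness, which we obtain by a logspace reduction from deciding whether two states of a labelled Markov chain are bisimilar. That problem is $\ptime$-complete by \cite{DBLP:conf/fossacs/ChenBW12}.

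Given an LMC $\mchain = \lmc$ and states $\state, \stateB$, we construct the MDP $\mdp = (\states, \{\action\}, \trans', \lblSet, \lbl)$ with a single action $\action$ enabled in every state, setting $\trans'(\stateC, \action) = \trans(\stateC)$ for all $\stateC \in \states$. The MDP is deadlock-free because every state has $\action$ enabled. The construction copies the input tuple and renames transitions, so it is computable in logarithmic space.

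Every memoryless strategy $\strat$ of $\mdp$ must satisfy $\strat(\stateC, \action) = 1$ for each $\stateC$, since $\action$ is the only available action. By the definition of induced chains, $\trans^\strat(\stateC, \stateC') = \strat(\stateC, \action)\cdot\trans(\stateC, \action, \stateC') = \trans(\stateC, \stateC')$, so $\mdp^\strat = \mchain$. Hence $\state$ and $\stateB$ are bisimilar under all memoryless strategies of $\mdp$ if and only if $\state \bisim{\trans} \stateB$ in $\mchain$, which gives hardness.

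No step is a real obstacle. The two points to check are that the $\ptime$-hardness in \cite{DBLP:conf/fossacs/ChenBW12} holds under logspace reductions, and that the rational transition probabilities of $\mchain$ can be copied verbatim into $\mdp$ within logarithmic space, which they can.
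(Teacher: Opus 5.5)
Your proposal is correct and follows the paper's own proof: $\ptime$-membership is taken from Kiefer and Tang's result for $\mathsf{PB}>0$ (its complement), and hardness comes from the same single-action MDP that copies the transition function of the input labelled Markov chain, reducing from the $\ptime$-complete bisimilarity problem of Chen, van Breugel and Worrell. You also spell out details the paper only asserts as ``easily checked'': the unique memoryless strategy induces exactly the original chain, and the reduction is computable in logarithmic space.
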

\begin{proof}
  We need only prove $\ptime$-hardness.
  We formalise the reduction described above.
  Let $\mchain = \lmc$ be a labelled Markov chain.
  We consider the MDP $\mdp = (\states, \{\action\}, \transB, \lblSet, \lbl)$ where $\transB$ is defined by, for all $\state, \stateB\in\states$, $\transB(\state, \action, \stateB) = \trans(\state, \stateB)$.
  It is easily checked that for all $\state, \stateB\in\states$, $\state$ and $\stateB$ are bisimilar in $\mchain$ if and only if they are under all (memoryless) strategies of $\mdp$.
\end{proof}

\subsection{The UL problem in MDPs}\label{section:mdps:ule}

We now show that the UL problem in MDPs is logspace-equivalent to the problem of checking language equivalence in labelled Markov chains.
The language equivalence problem in labelled Markov chains can be reduced to the UL problem on MDPs with no non-determinism with the construction used to prove Theorem~\ref{theorem:upb-mdp:complexity}.
In the following, we present a reduction from the UL problem in MDPs to the language-equivalence problem in Markov chains that is analogous to a reduction presented in~\cite{DBLP:journals/lmcs/FijalkowKS20}.
This reduction implies that the UL problem in MDPs is in the class $\ceql$~\cite{DBLP:conf/stacs/CernyS26} (this recent upper bound refines the $\nc$ upper bound of Tzeng~\cite{DBLP:journals/ipl/Tzeng96}); this new upper bound refines the $\ptime$-membership result asserted in~\cite{DBLP:conf/fsttcs/Kiefer020}.

The remainder of the section is structured as follows.
We introduce transition-labelled MDPs in Section~\ref{section:ule-mdp:transition-based} to simplify the presentation of the reduction.
Section~\ref{section:ule-mdp:equiv} presents background on language-equivalence of weighted automata and introduces analogous notions for the UL problem.
We formalise our reduction in Section~\ref{section:ule-mdp:reduction}.

\subsubsection{Transition-labelled systems}\label{section:ule-mdp:transition-based}

We introduce transition-labelled Markov chains and MDPs in this section.
In the definitions of labelled Markov chains and MDPs in earlier sections, labels were assigned to states; henceforth, we refer to these as \textit{state-labelled} Markov chains and MDPs.
In transition-labelled systems, transitions carry labels, and there can be several transitions with different labels from one state to another.

This section is only concerned with relevant definitions for transition-labelled processes.
We discuss the inter-reducibility of perturbation equivalence problems in state-labelled and transition-labelled systems in Section~\ref{section:mdps:perturbation-equivalence-interreducibility}.

\subparagraph*{Markov chains.}
A \textit{transition-labelled Markov chain} is a tuple $\mchain = (\states, \lblSet, (\ltrans{a}{})_{a\in\lblSet})$, where $\states$ is a finite set of states, $\lblSet$ is a finite set of labels, and, for all $a\in\lblSet$, $\ltrans{a}{}\in\IQ_{\geq 0}^{\states\times\states}$, and $\sum_{a\in\lblSet}\ltrans{a}{}$ is a (row) stochastic matrix.
In the following, given $\state\in\states$, $\eClass\subseteq\states$ and $a\in\lblSet$, we write $\ltrans{a}{}(\state, \eClass)$ as shorthand for $\sum_{\stateB\in\eClass}\ltrans{a}{}(\state, \stateB)$.

Paths of $\mchain$ are sequences $\state_0a_0\state_1a_1\ldots\in(\states\lblSet)^\omega$ such that for all $\iPos\in\IN$, $\ltrans{a_\iPos}{}(\state_\iPos,\state_{\iPos+1}) > 0$, and histories of $\mchain$ are prefixes of paths ending in a state.
Given a history $\hist$ of $\mchain$, we let the cylinder of $\hist$, denoted by $\cyl{\hist}$, be the set of paths that extend $\hist$.

Given an initial distribution $\measure\in\dist{\states}$, we define the distribution $\probLV{\measure}{}$ over the set of paths of $\mchain$ from $\measure$ is defined for cylinder sets by letting, for any history $\state_0a_0\state_1\ldots a_{\iLast-1}\state_\iLast$ of $\mchain$
\[
  \probLV{\measure}{}(\cyl{\state_0a_0\state_1\ldots a_{\iLast-1}\state_\iLast}) = \measure(\state_0)\cdot\prod_{\iPos = 0}^{\iLast-1} \ltrans{a_\iPos}{}(\state_\iPos,\state_{\iPos+1}).
\]
As before, for any measurable $\event\subseteq\lblSet^\omega$, we write $\probLV{\measure}{}(\event)$ for the probability of plays whose sequence of labels is in $\event$, and, for all $\word\in\lblSet^*$, we write $\probLV{\measure}{}(\word)$ as shorthand for $\probLV{\measure}{}(\word\lblSet^\omega)$.

For all $\word=a_1\ldots a_\iLast\in\lblSet^*$, we let $\ltrans{\word}{}$ be the identity matrix if $\word$ is the empty word, and otherwise let 
$\ltrans{\word}{} = \ltrans{a_1}{}\cdot\ldots\cdot\ltrans{a_\iLast}{}.$
For all $\word\in\lblSet^*$ and initial distributions $\measure\in\dist{\states}$, viewed below as a row vector, we have $\probLV{\measure}{}(\word) = \measure\cdot \ltrans{\word}{}\cdot\oneVect$, where $\oneVect\in\IQ^\states$ is the column vector whose components are all $1$.

Distributions $\measure, \measureB\in\dist{\states}$ are \textit{language-equivalent} in $\mchain$ if for all measurable $\event\subseteq\lblSet^\omega$, $\probLV{\measure}{}(\event) = \probLV{\measureB}{}(\event)$.
An equivalence relation $R\subseteq \states\times\states$ is a probabilistic bisimulation if, for all $(\state, \stateB)\in R$, all labels $a\in\lblSet$ and all equivalence classes $\eClass\in\states/R$, $\ltrans{a}{}(\state, \eClass) = \ltrans{a}{}(\stateB, \eClass)$.
States $\state, \stateB\in\states$ are bisimilar if there exists a probabilistic bisimulation $R$ such that $(\state, \stateB)\in R$.
The coarsest bisimulation relation is called bisimilarity.

\subparagraph*{MDPs.}
A \textit{transition-labelled MDP} is a tuple $\mdp = (\states, \actions, \lblSet, (\ltrans{a}{\action})_{a\in\lblSet, \action\in\actions})$, where $\states$ is a finite set of states, $\actions$ is a finite set of actions, $\lblSet$ is a finite set of labels and, for all $a\in\lblSet$ and $\action\in\actions$, $\ltrans{a}{\action}\in\IQ_{\geq 0}^{\states\times\states}$ is such that for all $\state\in\states$ and all $\action\in\actions$, $\sum_{a\in\lblSet}\sum_{\stateB\in\states}\ltrans{a}{\action}(\state,\stateB) \in\{0, 1\}$.
An action $\action\in\actions$ is enabled in a state $\state\in\states$ if $\sum_{a\in\lblSet}\sum_{\stateB\in\states}\ltrans{a}{\action}(\state,\stateB) = 1$.
As before, we write $\actions(\state)$ for the set of actions enabled in $\state$ and assume that there is an enabled action in each state.

Memoryless strategies are defined as in state-labelled MDPs.
Let $\strat\colon\states\to\dist{\actions}$ be a memoryless strategy.
For all $a\in\lblSet$ , we let $\ltrans{a}{\strat}$ be the matrix defined, for all $\state,\stateB\in\states$, by 
\[\ltrans{a}{\strat}(\state,\stateB) = \sum_{\action\in\actions}\strat(\state)(\action)\cdot\ltrans{a}{\action}(\state,\stateB).\]
We note that $\sum_{a\in\lblSet}\ltrans{a}{\strat}$ is a (row) stochastic matrix.

The Markov chain induced by $\strat$ on $\mdp$ is the transition-labelled Markov chain $\mdp^\strat= (\states, \lblSet, (\ltrans{a}{\strat})_{a\in\lblSet})$.
For all $\word=a_1\ldots a_\iLast\in\lblSet^*$, we define the matrix $\ltrans{\word}{\strat}$ as for Markov chains.

The UL, UB, EL and EB relations are defined in transition-labelled MDPs in the same way as in state-labelled MDPs.

\subsubsection{Backward spaces}\label{section:ule-mdp:equiv}

In this section, we recall the definition of the backward space of a labelled Markov chain and a characterisation of language-equivalent distributions through this backward space.
We also introduce a variant of the backward space for universal language equivalence.
We refer the reader to~\cite{DBLP:journals/corr/abs-2009-01217} for a more extensive overview on the equivalence of weighted automata (which subsume labelled Markov chains).

\subparagraph*{Backward space of a labelled Markov chain.}
Let $\mchain = (\states, \lblSet, (\ltrans{a}{})_{a\in\lblSet})$ be a transition-labelled Markov chain.
The \textit{backward space} of $\mchain$ is the vector subspace of $\IQ^{\states}$ defined by
\[
  B_\mchain = \spanVect{
    \{\ltrans{\word}{}\cdot \oneVect\mid\word\in\lblSet^*\}
  },\]
(where $\spanVect{\cdot}$ denotes the linear span).
We note that the generators of $B_\mchain$ are vectors of the form $\left(\probLV{\state}{}(\word)\right)_{\state\in\states}$ where $\word\in\lblSet^*$.
We recall the following characterisation of $B_\mchain$ (which is easy to check).
\begin{lemma}\label{lemma:back:chain}
  The backward space $B_\mchain$ of $\mchain$ is the smallest (column) vector subspace $\vectSpace$ of $\IQ^\states$ such that $\oneVect\in\vectSpace$ and $\ltrans{a}{}\vectSpace \subseteq\vectSpace$ for all $a\in\lblSet$.
\end{lemma}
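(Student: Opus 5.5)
We need to prove that $B_\mchain$ is the smallest subspace $\vectSpace$ with $\oneVect\in\vectSpace$ and $\ltrans{a}{}\vectSpace\subseteq\vectSpace$ for all $a\in\lblSet$. The plan is to show two inclusions: first, that $B_\mchain$ itself has both closure properties; second, that any subspace with these properties contains $B_\mchain$. Together these say $B_\mchain$ is the least element of the family of such subspaces. The family is nonempty, since $\IQ^\states$ belongs to it, and it is closed under intersection, so the smallest such subspace exists and is unique.

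For the first inclusion, note that $\oneVect = \ltrans{\varepsilon}{}\cdot\oneVect$, where $\varepsilon$ is the empty word and $\ltrans{\varepsilon}{}$ is the identity by definition. Hence $\oneVect\in B_\mchain$.

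For closure under each $\ltrans{a}{}$, linearity lets us check generators only. For every word $\word\in\lblSet^*$ we have
\[\ltrans{a}{}\cdot\ltrans{\word}{}\cdot\oneVect = \ltrans{a\word}{}\cdot\oneVect,\]
using $\ltrans{a\word}{} = \ltrans{a}{}\ltrans{\word}{}$, which holds by the definition of $\ltrans{\word}{}$ as an ordered product. This vector is again a generator of $B_\mchain$. Since $\ltrans{a}{}$ maps spans to spans, $\ltrans{a}{}B_\mchain\subseteq B_\mchain$.

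For the second inclusion, fix a subspace $\vectSpace$ with $\oneVect\in\vectSpace$ and $\ltrans{a}{}\vectSpace\subseteq\vectSpace$ for all $a$. We show $\ltrans{\word}{}\cdot\oneVect\in\vectSpace$ by induction on $|\word|$. The base case $\word=\varepsilon$ is the hypothesis $\oneVect\in\vectSpace$. For the inductive step, write $\word = a\word'$; then $\ltrans{\word}{}\oneVect = \ltrans{a}{}(\ltrans{\word'}{}\oneVect)$, which lies in $\vectSpace$ by the induction hypothesis and the invariance of $\vectSpace$ under $\ltrans{a}{}$. So every generator of $B_\mchain$ lies in $\vectSpace$. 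Because $\vectSpace$ is a subspace, it contains their span, giving $B_\mchain\subseteq\vectSpace$.

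There is no real obstacle here. The only care needed is to decompose words by their first letter, so that the matrix product order in $\ltrans{a\word}{}=\ltrans{a}{}\ltrans{\word}{}$ matches the column-vector (backward) convention.
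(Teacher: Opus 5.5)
Your proof is correct and takes essentially the same approach as the paper. The paper leaves this lemma as ``easy to check'', and its proof of the analogous Lemma~\ref{lemma:back:ule-smallest} uses your two steps: the generating set is stable under each transition matrix, and every generator $\ltrans{\word}{}\oneVect$ lies in any invariant subspace containing $\oneVect$.
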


Two distributions $\measure, \measureB\in\dist{\states}$ (seen as row vectors) are language-equivalent in $\mchain$ if and only if $\measure\cdot \ltrans{\word}{}\cdot \oneVect = \measureB\ltrans{\word}{}\cdot \oneVect$ for all $\word\in\lblSet^*$, i.e., if the difference $\measure - \measureB$ is orthogonal to (all generators of) $B_\mchain$.

\subparagraph*{Backward spaces for UL.}

We fix a transition-labelled MDP $\mdp = (\states, \actions, \lblSet, (\ltrans{a}{\action})_{a\in\lblSet, \action\in\actions})$.
For any memoryless strategy $\strat$ of $\mdp$, let $B_\strat$ denote the backward space of the induced Markov chain $\mdp^\strat$, given by
\[B_\strat = \spanVect{
    \{\ltrans{\word}{\strat}\cdot \oneVect\mid\word\in\lblSet^*\}
  }.\]
By the above, $\measure$ and $\measureB$ are language-equivalent in $\mdp^\strat$ if and only if $\measure-\measureB\in B_\strat^\perp$ where $B_\strat^\perp$ denotes the orthogonal complement of $B_\strat$.
It follows that two distributions are UL in $\mdp$ if and only if they are in $B_\strat^\perp$ for all memoryless strategies $\strat$.

\begin{lemma}\label{lemma:back:ule-space}
  Let $B_\mdp = \sum_{\strat} B_\strat =\spanVect{\bigcup_{\strat} B_\strat}$, where $\strat$ ranges over all memoryless strategies of $\mdp$.
  For all $\measure, \measureB\in\dist{\states}$, $\measure$ and $\measureB$ are UL in $\mdp$ if and only if $\measure-\measureB\in B_\mdp^\perp$.
\end{lemma}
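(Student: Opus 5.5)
This is essentially linear algebra once the definitions are unfolded. We show both directions using the preceding characterisation: for each memoryless strategy $\strat$, distributions $\measure$ and $\measureB$ are language-equivalent in $\mdp^\strat$ if and only if $\measure-\measureB\in B_\strat^\perp$. The content of the lemma is the identity $\bigcap_{\strat} B_\strat^\perp = \bigl(\sum_{\strat} B_\strat\bigr)^\perp$, which holds for any family of subspaces of the finite-dimensional space $\IQ^\states$, even an infinite (uncountable) one. Note that $B_\mdp$ is itself a subspace of $\IQ^\states$ (or of $\IR^\states$ if real strategies are allowed; we work over $\IR$ throughout for definiteness). It is therefore finite-dimensional and spanned by finitely many vectors $\ltrans{\word}{\strat}\cdot\oneVect$.

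For the direction from UL to orthogonality, assume $\measure$ and $\measureB$ are UL in $\mdp$. Then for every memoryless $\strat$ and every $\word\in\lblSet^*$ we have $(\measure-\measureB)\cdot\ltrans{\word}{\strat}\cdot\oneVect=0$. Hence $\measure-\measureB$ annihilates every generator of every $B_\strat$. By bilinearity of the inner product it also annihilates any finite linear combination of such generators, i.e.\ every element of $\spanVect{\bigcup_\strat B_\strat}=B_\mdp$. So $\measure-\measureB\in B_\mdp^\perp$.

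For the converse, assume $\measure-\measureB\in B_\mdp^\perp$. Fix an arbitrary memoryless $\strat$. Since $B_\strat\subseteq B_\mdp$, orthogonal complements reverse inclusion and give $B_\mdp^\perp\subseteq B_\strat^\perp$, so $\measure-\measureB\in B_\strat^\perp$. By the characterisation recalled before the lemma, $\measure$ and $\measureB$ are then language-equivalent in $\mdp^\strat$. As $\strat$ was arbitrary, $\measure$ and $\measureB$ are UL.

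There is no real obstacle. The only point needing care is that the sum ranges over infinitely many strategies, but linear span is defined via finite combinations, so the argument above goes through unchanged. If one wants to use $B_\mdp$ algorithmically later, the useful extra remark is that some finite set of strategies already spans $B_\mdp$, by finite-dimensionality; that remark is not needed for the lemma itself.
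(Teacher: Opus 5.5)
Your proof is correct and follows the same route as the paper, which states this lemma immediately after observing that UL means $\measure-\measureB\in B_\strat^\perp$ for every memoryless $\strat$; the lemma then follows from $\bigcap_\strat B_\strat^\perp = \bigl(\sum_\strat B_\strat\bigr)^\perp$, which is exactly what your two directions establish. Your remark about working over $\IR$ rather than $\IQ$, since strategy probabilities may be irrational, is a sensible tidying of the paper's notation and does not change the argument.
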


To reduce the UL problem to checking language-equivalence in Markov chains, the main idea is to construct, from the MDP $\mdp$, a labelled Markov chain with state space $\states$ whose backward space is $B_\mdp = \sum_{\strat} B_\strat$.
We provide a variation of the result of Lemma~\ref{lemma:back:chain} for $B_\mdp$.
This property is used to prove the correctness of our upcoming reduction.

\begin{lemma}\label{lemma:back:ule-smallest}
  Let $B_\mdp= \sum_{\strat} B_\strat$.
  Then $B_\mdp$ is the smallest (column) vector subspace $\vectSpace$ of $\IQ^\states$ such that $\oneVect\in\vectSpace$ and $\ltrans{a}{\strat}\vectSpace \subseteq\vectSpace$ for all $a\in\lblSet$ and memoryless strategies $\strat$ of $\mdp$.
\end{lemma}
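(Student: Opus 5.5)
Let $\mathcal{V}$ denote the family of subspaces $\vectSpace\subseteq\IQ^\states$ with $\oneVect\in\vectSpace$ and $\ltrans{a}{\strat}\vectSpace\subseteq\vectSpace$ for all $a\in\lblSet$ and all memoryless $\strat$. The family $\mathcal{V}$ is closed under intersection, so it has a smallest element $V_0$. I will prove $B_\mdp\in\mathcal{V}$ and $B_\mdp\subseteq V_0$, which gives $B_\mdp=V_0$.

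The inclusion $B_\mdp\subseteq V_0$ is immediate. Fix a memoryless $\strat$. Then $V_0$ contains $\oneVect$ and is invariant under every $\ltrans{a}{\strat}$. By Lemma~\ref{lemma:back:chain} applied to the induced chain $\mdp^\strat$, $B_\strat$ is the smallest such subspace, so $B_\strat\subseteq V_0$. Summing over $\strat$ gives $B_\mdp\subseteq V_0$.

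For $B_\mdp\in\mathcal{V}$, clearly $\oneVect\in B_\strat\subseteq B_\mdp$. The substantive step, and the main obstacle, is invariance: $\ltrans{a}{\strat}B_{\strat'}\subseteq B_\mdp$ for arbitrary memoryless $\strat,\strat'$, where the two strategies need not coincide. It suffices to treat generators, that is, to show
\[
\ltrans{a}{\strat}\ltrans{\word}{\strat'}\oneVect\in B_\mdp
\quad\text{for all }\word\in\lblSet^*.
\]
The key observation is that row $\state$ of $\ltrans{a}{\strat}$ depends only on $\strat(\state)$. For each $\state\in\states$, define the hybrid strategy $\strat_\state$ that agrees with $\strat'$ everywhere except at $\state$, where it plays $\strat(\state)$. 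Let $e_\state$ be the indicator row vector of $\state$ and write $v_\state=\ltrans{a\word}{\strat_\state}\oneVect\in B_{\strat_\state}$. Then row $\state$ of $\ltrans{a}{\strat}$ equals row $\state$ of $\ltrans{a}{\strat_\state}$. However, $\ltrans{\word}{\strat_\state}$ still differs from $\ltrans{\word}{\strat'}$ at state $\state$, so $e_\state v_\state$ is not exactly the desired entry; this naive approach breaks down.

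I would therefore switch to an induction on $|\word|$, using an induction hypothesis strong enough to be closed under this interaction. Let $W=\spanVect{\{\ltrans{a_1}{\strat_1}\cdots\ltrans{a_r}{\strat_r}\oneVect\}}$, where the $\strat_i$ range over memoryless strategies. This $W$ is manifestly in $\mathcal{V}$, so $V_0\subseteq W$ and it remains to show $W\subseteq B_\mdp$. For this, use multilinearity. Each $\ltrans{a}{\strat}$ is the sum over $\state$ of $D_\state\ltrans{a}{\strat}$, where $D_\state$ is the diagonal projection onto $\state$, and $D_\state\ltrans{a}{\strat}$ depends only on $\strat(\state)$. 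Expanding the product $\ltrans{a_1}{\strat_1}\cdots\ltrans{a_r}{\strat_r}$ in this way, with one factor per time step, reduces the problem to a single combined strategy per step; the difficulty is that a memoryless strategy must make the same choice at every visit to a given state.

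To overcome this, I would exploit that $B_\mdp$ is a linear span of polynomial expressions. Here $\strat\mapsto\ltrans{\word}{\strat}\oneVect$ is a polynomial in the variables $\strat(\state)(\action)$ that is multi-affine per time step. A standard polarisation argument then expresses a product with differing per-step choices as a linear combination of evaluations at strategies that are constant across time. Concretely, replace $\strat_i(\state)$ by a convex combination $\sum_i\lambda_i\strat_i(\state)$ and extract the coefficient of the monomial $\lambda_1\cdots\lambda_r$ by finite differences. Each evaluation lies in $B_\mdp$, so the extracted coefficient lies in $B_\mdp$. That coefficient equals the sum over permutations of the mixed products, which is not yet a single mixed product. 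I would then refine the argument by using distinct variables $\lambda_{i,\state}$ per (time, state) pair, applied within each state's simplex; I expect this bookkeeping to be the main technical work.
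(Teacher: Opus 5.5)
Your framing matches the paper's: $B_\mdp\subseteq V_0$ follows from Lemma~\ref{lemma:back:chain}, $W$ lies in the family, and everything reduces to $W\subseteq B_\mdp$. But that inclusion is the whole content of the statement, and you leave it unproved. It is exactly Lemma~\ref{lemma:back:ule-technical} (Lemma~4 of Kiefer and Tang), which the paper invokes; once it is available, stability and minimality are immediate.

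Your polarisation route does not supply this inclusion. Extracting coefficients from $\lambda\mapsto\ltrans{a_1\cdots a_r}{\strat_\lambda}\oneVect$ gives symmetrised sums, as you note, and the proposed refinement cannot repair this. Weights indexed by time describe a time-dependent strategy, whose values are not known to lie in $B_\mdp$, since that is what you are trying to prove. Weights $\lambda_{i,\state}$ indexed by state record which strategies are used at visits to $\state$, but not at which time. As soon as a path revisits a state (already for $r=2$ with a self-loop), ``$\strat_1$ at the first visit, $\strat_2$ at the second'' and the reverse land on the same monomial $\lambda_{1,\state}\lambda_{2,\state}$. So a single word's polynomial only yields sums over time-orderings, not individual mixed products. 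You must exploit relations between different words.

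Your hybrid strategies do work with one extra idea.
\begin{itemize}
\item Decomposing $\ltrans{a}{\strat}-\ltrans{a}{\strat'}$ into single-state deviations, it suffices to show $\ltrans{a}{\strat}B_{\strat'}\subseteq B_\mdp$ when $\strat,\strat'$ differ only at one state $\state$.
\item For $v\in B_{\strat'}$ you then have $\ltrans{a}{\strat}v=\ltrans{a}{\strat'}v+\delta^a(v)\,e_\state$. Here $e_\state$ is the unit column vector and $\delta^b(x)$ is the $\state$-entry of $(\ltrans{b}{\strat}-\ltrans{b}{\strat'})x$.
\item If $e_\state\in B_\mdp$ you are done. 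Otherwise choose a row vector $d$ with $dB_\mdp=0$ and $de_\state\neq 0$.
\item Let $\strat_t$ interpolate between $\strat'$ and $\strat$ at $\state$ only. Differentiate the polynomial identity $d\,\ltrans{w}{\strat_t}\oneVect=0$ (valid for $t\in[0,1]$) at $t=0$. This gives $\sum (d\,\ltrans{u}{\strat'}e_\state)\,\delta^b(\ltrans{u'}{\strat'}\oneVect)=0$ for every word $w$, the sum ranging over factorisations $w=ubu'$ with $b\in\lblSet$.
\item The term with empty $u$ has coefficient $de_\state\neq0$. Hence induction on $|u'|$ shows $\delta^b$ vanishes on $B_{\strat'}$ for every $b$, so $\ltrans{a}{\strat}v=\ltrans{a}{\strat'}v\in B_{\strat'}$.
\end{itemize}
This proves $\ltrans{a}{\strat}B_\mdp\subseteq B_\mdp$ for all memoryless $\strat$. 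Alternatively, simply cite Lemma~\ref{lemma:back:ule-technical}, as the paper does.
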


Lemma~\ref{lemma:back:ule-smallest} is a direct consequence of the following property.
\begin{lemma}[Lem.~4 of~\cite{DBLP:conf/fsttcs/Kiefer020}]\label{lemma:back:ule-technical}
  Let $B_\mdp= \sum_{\strat} B_\strat$.
  Then 
  \[B_\mdp = \spanVect{
    \{\ltrans{a_1}{\strat_1}\cdot\ldots\cdot\ltrans{a_\iLast}{\strat_\iLast} \oneVect\mid\iLast\geq 1, a_1, \ldots, a_\iLast\in\lblSet, \strat_1, \ldots, \strat_\iLast\text{ memoryless strategies}\}
  }.\]
\end{lemma}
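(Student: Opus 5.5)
We prove the two inclusions separately; all the difficulty is in one of them. Write $W$ for the right-hand side and $B_\mdp=\sum_\strat B_\strat$.

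\emph{Inclusion $B_\mdp\subseteq W$.} It suffices to show that each generator $\ltrans{\word}{\strat}\oneVect$ of each $B_\strat$ lies in $W$. For $\word=a_1\ldots a_\iLast$ nonempty, $\ltrans{\word}{\strat}\oneVect$ is a generator of $W$ with $\strat_1=\dots=\strat_\iLast=\strat$. For the empty word, $\oneVect=\sum_{a\in\lblSet}\ltrans{a}{\strat}\oneVect\in W$, because $\sum_a\ltrans{a}{\strat}$ is row stochastic.

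\emph{Inclusion $W\subseteq B_\mdp$.} The plan is induction on the number of positions at which a generator $\ltrans{a_1}{\strat_1}\cdots\ltrans{a_\iLast}{\strat_\iLast}\oneVect$ departs from a single strategy. The tool is a polynomial-curve argument. Fix memoryless strategies $\strat,\strat'$ and set $\strat_\lambda=(1-\lambda)\strat+\lambda\strat'$ for $\lambda\in\ccInt{0}{1}$. Then $\ltrans{a}{\strat_\lambda}=\ltrans{a}{\strat}+\lambda\,(\ltrans{a}{\strat'}-\ltrans{a}{\strat})$ is affine in $\lambda$, so for every word $\word$ the vector $\ltrans{\word}{\strat_\lambda}\oneVect\in B_{\strat_\lambda}\subseteq B_\mdp$ is a vector polynomial in $\lambda$. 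Since $B_\mdp$ is a finite-dimensional subspace, Lagrange interpolation at finitely many values of $\lambda$ shows that every coefficient of this polynomial lies in $B_\mdp$. The coefficient of $\lambda^k$ is the sum, over all $k$-subsets $P$ of positions, of the product obtained by replacing $\ltrans{a_\iPos}{\strat}$ by $\ltrans{a_\iPos}{\strat'}-\ltrans{a_\iPos}{\strat}$ for $\iPos\in P$.

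The difficulty is that these coefficients are symmetrised over positions, whereas we need the single product that switches strategy at prescribed positions. My first attempt would be a nested induction: outer induction on the number $k$ of switched positions, inner induction on the word length. The coefficient of $\lambda^1$ is $\sum_{\iPos}\ltrans{a_1\ldots a_{\iPos-1}}{\strat}(\ltrans{a_\iPos}{\strat'}-\ltrans{a_\iPos}{\strat})\ltrans{a_{\iPos+1}\ldots a_\iLast}{\strat}\oneVect$. I would try to isolate the term with $\iPos=1$ by applying the inner induction hypothesis to the suffix words: this should place $\ltrans{a_2\ldots}{\strat}(\ltrans{a_\iPos}{\strat'}-\ltrans{a_\iPos}{\strat})\cdots\oneVect$ in $B_\mdp$ for $\iPos\ge 2$. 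To then prepend $\ltrans{a_1}{\strat}$ and stay in $B_\mdp$, I would need the invariance of $B_\mdp$ under each $\ltrans{a}{\strat}$, which is essentially the statement of Lemma~\ref{lemma:back:ule-smallest} and risks circularity.

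To break this circularity, I would strengthen the induction hypothesis to: for all $\iLast\le N$ and all strategy sequences, the generator lies in $B_\mdp$. I would then treat the map $\lambda\mapsto\ltrans{a_1}{\strat_\lambda}u$ for $u\in B_\mdp$. Here one uses that row $\state$ of $\ltrans{a}{\strat}$ depends only on $\strat(\state)$, so strategies can be combined statewise. If that fails, the fallback is to choose one strategy $\strat^*$ that mixes all finitely many relevant strategies with generic (algebraically independent) weights and to show $B_{\strat^*}=B_\mdp$ by a dimension/genericity argument, from which the invariance follows at once. I expect this decoupling of positions to be the main obstacle.
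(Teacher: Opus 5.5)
Your inclusion $B_\mdp\subseteq W$ is correct, including the empty-word case, and so is the interpolation remark. The converse inclusion $W\subseteq B_\mdp$ is the real content of the lemma, and your proposal does not establish it; the paper itself does not prove it either, but imports it from Kiefer and Tang.

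Your main route is circular, as you note yourself. Isolating the first-position term of the $\lambda^1$-coefficient needs $\ltrans{a}{\strat}B_\mdp\subseteq B_\mdp$. Your suggestion to ``treat $\lambda\mapsto\ltrans{a_1}{\strat_\lambda}u$ for $u\in B_\mdp$'' gives no reason why those vectors lie in $B_\mdp$.

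Your fallback is false: in general no memoryless $\strat^*$, generic or not, satisfies $B_{\strat^*}=B_\mdp$. Write $e_x$ for unit vectors and take the following MDP on states $s,t,u,s',t'$:
\begin{itemize}
\item $s$ (resp.\ $t$) moves with label $c$ to $s'$ (resp.\ $t'$);
\item $u$ has one action moving with label $c$ to $s'$ and another moving with label $c$ to $t'$;
\item $s'$ and $t'$ loop with labels $a$ and $b$ respectively.
\end{itemize}
If $\strat$ plays the first action at $u$ with probability $z$, then $B_\strat=\spanVect{\{e_{s'},e_{t'},e_s+z e_u,e_t+(1-z)e_u\}}$. This is $4$-dimensional for every $z$. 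Yet $B_\mdp=\IQ^5$, since it contains $e_s$ (from $z=0$) and $e_s+e_u$ (from $z=1$). So a proof must genuinely combine different strategies.

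The missing idea is to change the strategy at one state at a time, which makes the perturbation rank one. Fix memoryless $\strat,\strat'$ and a state $q$. Let $\hat\strat=\strat[q\leftarrow\strat'(q)]$ agree with $\strat$ except that it plays $\strat'(q)$ at $q$. Then $\ltrans{b}{\hat\strat}=\ltrans{b}{\strat}+e_q d_b^{\top}$ for every label $b$, where $d_b^{\top}$ is the difference of the $q$-rows. Put $\beta_b(y)=d_b^{\top}\ltrans{y}{\strat}\oneVect$.

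Expand $\ltrans{v}{\hat\strat}\oneVect-\ltrans{v}{\strat}\oneVect$ as a sum over nonempty sets of switched positions. Each term ends with the scalar factor $\beta_{v_j}(v_{j+1}\cdots v_r)$, where $j$ is the last switched position. So if $\beta_b(y)\neq0$ with $|y|$ minimal, only the term switching exactly at position $1$ survives:
\[\ltrans{by}{\hat\strat}\oneVect-\ltrans{by}{\strat}\oneVect=\beta_b(y)\,e_q.\]
Hence $e_q\in B_{\hat\strat}+B_\strat\subseteq B_\mdp$. Thus either all $\beta$'s vanish or $e_q\in B_\mdp$. In both cases
\[\ltrans{a}{\hat\strat}\ltrans{w}{\strat}\oneVect=\ltrans{aw}{\strat}\oneVect+\beta_a(w)\,e_q\in B_\mdp.\]

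Now sum over $q$ using the row-wise identity $\ltrans{a}{\strat'}=\ltrans{a}{\strat}+\sum_q\bigl(\ltrans{a}{\strat[q\leftarrow\strat'(q)]}-\ltrans{a}{\strat}\bigr)$. This gives $\ltrans{a}{\strat'}B_\mdp\subseteq B_\mdp$ for all $a$ and $\strat'$. The inclusion $W\subseteq B_\mdp$ then follows by peeling factors from the right, starting from $\oneVect\in B_\mdp$. No interpolation or genericity is needed.
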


We provide a brief proof of Lemma~\ref{lemma:back:ule-smallest} in the interest of completeness.
\begin{proof}[Proof of Lemma~\ref{lemma:back:ule-smallest}]
  Say that a column-vector set $D\subseteq\IQ^\states$ is stable if $\ltrans{a}{\strat} D \subseteq D$ for all $a\in\lblSet$ and memoryless strategies $\strat$ of $\mdp$.
  We first show that $B_\mdp$ is stable.
  This follows from the set of generators of $B_\mdp$ given in Lemma~\ref{lemma:back:ule-technical} being stable.
  Second, we must show that if a vector subspace $\vectSpace$ is stable and is such that $\oneVect\in\vectSpace$, then $B_\mdp\subseteq\vectSpace$.
  This follows from all generators of $B_\mdp$ being in any such $\vectSpace$.
\end{proof}

\subsubsection{Reducing UL in MDPs to trace-equivalence in Markov chains}\label{section:ule-mdp:reduction}

Fix a transition-labelled MDP $\mdp = (\states, \actions, \lblSet, (\ltrans{a}{\action})_{a\in\lblSet, \action\in\actions})$.
We explain how to construct a Markov chain $\mchain^\star$ over $\states$ from $\mdp$ whose backward space is $B_\mdp$, i.e., such that for all $\measure, \measureB\in\dist{\states}$, $\measure$ and $\measureB$ are UL in $\mdp$ if and only if they are language equivalent in $\mchain^\star$.

Fix an arbitrary pure memoryless strategy $\strat_0$ of $\mdp$, and define
\[\Sigma = \{\strat_0[\state\leftarrow\action]\mid \state\in\states,\action\in\actions\}\]
where for all $\state\in\states$ and $\action\in\actions$, $\strat_0[\state\leftarrow\action]$ is a pure memoryless strategy that agrees with $\strat_0$ in all states besides $\state$ and such that $\strat_0[\state\leftarrow\action](\state) = \action$.
In particular, $\strat_0\in\Sigma$.
The key observation is that checking that a vector space $\vectSpace$ satisfies $\ltrans{a}{\strat}\vectSpace\subseteq\vectSpace$ for all $a\in\lblSet$ and memoryless strategies $\strat$ can be reduced to checking these inclusions only for the strategies in $\Sigma$.
\begin{lemma}\label{lemma:back:finite-check}
  Let $\vectSpace\subseteq\IQ^\states$ be a vector space.
  Then $\ltrans{a}{\strat'}\vectSpace\subseteq\vectSpace$ for all $a\in\lblSet$ and all $\strat'\in\Sigma$ if and only if $\ltrans{a}{\strat}\vectSpace\subseteq\vectSpace$ for all $a\in\lblSet$ and all memoryless strategies $\strat$.
\end{lemma}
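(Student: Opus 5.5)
The "if" direction is immediate, because every $\strat'\in\Sigma$ is itself a memoryless strategy. The work is in the "only if" direction. Assume $\ltrans{a}{\strat'}\vectSpace\subseteq\vectSpace$ for all $a\in\lblSet$ and $\strat'\in\Sigma$. The plan is to express $\ltrans{a}{\strat}$, for an arbitrary memoryless strategy $\strat$, as a linear combination (possibly with negative coefficients) of matrices $\ltrans{a}{\strat'}$ with $\strat'\in\Sigma$. Since $\vectSpace$ is a vector space, it is closed under arbitrary linear combinations, so $\ltrans{a}{\strat}\vectSpace\subseteq\vectSpace$ then follows.

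The key observation is that $\ltrans{a}{\strat}$ is assembled row by row. The row of $\ltrans{a}{\strat}$ indexed by $\state$ is
\[
\ltrans{a}{\strat}(\state,\cdot)=\sum_{\action\in\actions}\strat(\state)(\action)\,\ltrans{a}{\action}(\state,\cdot),
\]
and it depends only on $\strat(\state)$. For $\state\in\states$, let $E_\state$ be the diagonal $0/1$ matrix selecting row $\state$. Then $\ltrans{a}{\strat}=\sum_{\state}E_\state\ltrans{a}{\strat}$ and $E_\state\ltrans{a}{\strat}=\sum_{\action}\strat(\state)(\action)\,E_\state\ltrans{a}{\action}$. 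Also, $E_\state\ltrans{a}{\action}=E_\state\ltrans{a}{\strat_0[\state\leftarrow\action]}$ whenever $\action\in\actions(\state)$, and only enabled actions carry weight in $\strat(\state)$.

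The main obstacle is that $E_\state$ alone does not preserve $\vectSpace$, so I cannot use the row-selected pieces directly. I will instead use differences of matrices from $\Sigma$, which cancel on all rows except one. Set $D_{\state,\action}=\ltrans{a}{\strat_0[\state\leftarrow\action]}-\ltrans{a}{\strat_0}$. This matrix is zero outside row $\state$, so $D_{\state,\action}=E_\state\bigl(\ltrans{a}{\action}-\ltrans{a}{\strat_0(\state)}\bigr)$. Both terms of $D_{\state,\action}$ map $\vectSpace$ into $\vectSpace$, hence so does $D_{\state,\action}$. Now write
\[
\ltrans{a}{\strat}=\ltrans{a}{\strat_0}+\sum_{\state\in\states}\sum_{\action\in\actions(\state)}\strat(\state)(\action)\,D_{\state,\action}.
\]
To check this identity, fix a row $\state$ and use $\sum_{\action}\strat(\state)(\action)=1$. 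Row $\state$ of the right-hand side equals $\ltrans{a}{\strat_0(\state)}(\state,\cdot)+\sum_{\action}\strat(\state)(\action)\bigl(\ltrans{a}{\action}-\ltrans{a}{\strat_0(\state)}\bigr)(\state,\cdot)$. The $\ltrans{a}{\strat_0(\state)}$ terms cancel, leaving $\sum_{\action}\strat(\state)(\action)\ltrans{a}{\action}(\state,\cdot)$, which is the correct row. Each summand on the right maps $\vectSpace$ into $\vectSpace$, and $\vectSpace$ is closed under sums and scalar multiples, so $\ltrans{a}{\strat}\vectSpace\subseteq\vectSpace$.

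It remains to make sure the row-wise verification is airtight. This needs two facts: the row of $\ltrans{a}{\strat}$ at $\state$ depends only on $\strat(\state)$, and pure strategies act exactly through $\ltrans{a}{\action}$. Both follow directly from the definition of $\ltrans{a}{\strat}$, so this last step should be routine.
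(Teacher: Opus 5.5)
Your proof is correct and is essentially the paper's argument: the paper writes $\ltrans{a}{\strat}=\ltrans{a}{\strat_0}+\sum_{\state\in\states}\bigl(-\ltrans{a}{\strat_0}+\sum_{\action}\strat(\state)(\action)\,\ltrans{a}{\strat_0[\state\leftarrow\action]}\bigr)$, which becomes your identity once you use $\sum_{\action}\strat(\state)(\action)=1$, and then concludes by linearity; your row-by-row check just spells out the verification the paper leaves implicit. One caveat applies equally to your argument and the paper's: $\strat(\state)(\action)$ may be irrational, so the step ``$\vectSpace$ is closed under linear combinations'' needs $\vectSpace$ to be a real vector space (or the strategies to be rational), not merely a subspace of $\IQ^\states$ as the statement says.
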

\begin{proof}
  Let $a\in\lblSet$ and $\strat$ be a memoryless strategy.
  We have $\ltrans{a}{\strat}\in\spanVect{\{\ltrans{a}{\strat'}\mid\strat'\in\Sigma\}}$ by observing that
  \begin{equation}\label{eq:ule-mdp-finite}
    \ltrans{a}{\strat} = \ltrans{a}{\strat_0} +
    \sum_{\state\in\states}\left(
      -\ltrans{a}{\strat_0} +
      \sum_{\action\in\actions}\strat(\state)(\action)\cdot\ltrans{a}{\strat_0[\state\leftarrow\action]}
    \right).\end{equation}
  The claim of the lemma follows from the above and linearity.
\end{proof}

We now describe the labelled Markov chain $\mchain^\star= (\states, \lblSet^\star, (\ltrans{a}{\star})_{b\in\lblSet^\star})$.
We define the set of labels of $\mchain^\star$ as $\lblSet^\star = \lblSet\times\Sigma$ and for all $b = (a, \strat)\in\lblSet^\star$, we define the transition matrix $\ltrans{b}{\star}$ as $\frac{1}{|\Sigma|}\cdot \ltrans{a}{\strat}$.
The following is a direct consequence of all previous technical lemmas.
\begin{proposition}
  Let $\measure, \measureB\in\dist{\states}$.
  Then $\measure$ and $\measureB$ are language-equivalent in $\mchain^\star$ if and only if $\measure$ and $\measureB$ are UL in $\mdp$.
\end{proposition}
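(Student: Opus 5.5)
Both directions will go through the backward-space characterisation. First I will check that the backward space of $\mchain^\star$ equals $B_\mdp$; then the characterisation of language equivalence, together with Lemma~\ref{lemma:back:ule-space}, gives the claim.

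By Lemma~\ref{lemma:back:chain} applied to $\mchain^\star$, the space $B_{\mchain^\star}$ is the smallest subspace $\vectSpace\subseteq\IQ^\states$ with $\oneVect\in\vectSpace$ and $\ltrans{b}{\star}\vectSpace\subseteq\vectSpace$ for all $b\in\lblSet^\star$. Since $\ltrans{(a,\strat)}{\star}=\frac{1}{|\Sigma|}\ltrans{a}{\strat}$ and the scalar $\frac{1}{|\Sigma|}$ is nonzero, this invariance condition says exactly that $\ltrans{a}{\strat'}\vectSpace\subseteq\vectSpace$ for all $a\in\lblSet$ and all $\strat'\in\Sigma$. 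By Lemma~\ref{lemma:back:finite-check}, this is in turn equivalent to $\ltrans{a}{\strat}\vectSpace\subseteq\vectSpace$ for all $a$ and all memoryless strategies $\strat$. The two families of constraints therefore cut out the same collection of subspaces, so they have the same smallest member. Lemma~\ref{lemma:back:ule-smallest} identifies that member as $B_\mdp$, hence $B_{\mchain^\star}=B_\mdp$.

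I must also confirm that $\mchain^\star$ is a well-formed Markov chain, so that Lemma~\ref{lemma:back:chain} applies to it. We have $\sum_{(a,\strat)}\frac{1}{|\Sigma|}\ltrans{a}{\strat}=\frac{1}{|\Sigma|}\sum_{\strat\in\Sigma}\sum_a\ltrans{a}{\strat}$. Each inner sum $\sum_a\ltrans{a}{\strat}$ is stochastic, so this is an average of stochastic matrices and hence stochastic. All entries are nonnegative rationals.

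Now $\measure$ and $\measureB$ are language-equivalent in $\mchain^\star$ if and only if $\measure-\measureB\in B_{\mchain^\star}^\perp=B_\mdp^\perp$. By Lemma~\ref{lemma:back:ule-space}, this holds if and only if $\measure$ and $\measureB$ are UL in $\mdp$. The only real content is the finite-check step, which is already provided by Lemma~\ref{lemma:back:finite-check}. The main point to be careful about is that the scaling by $\frac{1}{|\Sigma|}$ does not change invariant subspaces, and it does not, since the factor is nonzero.
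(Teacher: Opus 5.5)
Your proof is correct and follows the paper's argument: both show that the backward space of $\mchain^\star$ equals $B_\mdp$ using Lemmas~\ref{lemma:back:chain}, \ref{lemma:back:ule-smallest} and~\ref{lemma:back:finite-check}, and then conclude with Lemma~\ref{lemma:back:ule-space}. The only differences are presentational. You merge the paper's two inclusions into the single observation that both invariance conditions cut out the same family of subspaces, and you add a check that $\mchain^\star$ is stochastic, which the paper leaves implicit.
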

\begin{proof}
  It suffices to show that the backward space $B_\star$ of $\mchain^\star$ is equal to the span $B_\mdp$ of the union of backward spaces of the induced Markov chains $\mdp^\strat$ where $\strat$ ranges over all memoryless strategies.
  Both inclusions can be shown using the respective characterisations of $B_\star$ and $B_\mdp$ in Lemmas~\ref{lemma:back:chain} and~\ref{lemma:back:ule-smallest}.
  We note that $\oneVect\in B_\star$ and $\oneVect\in B_\mdp$.

  We first consider the inclusion $B_\star\subseteq B_\mdp$.
  By Lemmas~\ref{lemma:back:chain}, it suffices to check that for all $a\in\lblSet$ and all strategies $\strat\in\Sigma$, $\frac{1}{|\Sigma|}\cdot \ltrans{a}{\strat} B_\mdp\subseteq B_\mdp$.
  This follows from Lemma~\ref{lemma:back:ule-smallest}.

  We now consider the inclusion $B_\mdp\subseteq B_\star$.
  In this case, it suffices to check that for all $a\in\lblSet$ and all memoryless strategies $\strat$, $\ltrans{a}{\strat} B_\star\subseteq B_\star$.
  This property holds for strategies in $\Sigma$ by Lemma~\ref{lemma:back:chain} and extends to all memoryless strategies by Lemma~\ref{lemma:back:finite-check}.
\end{proof}

The Markov chain $\mchain^\star$ can be constructed from $\mdp$ using only logarithmic space.
We have therefore shown that the UL problem in MDPs can be reduced in logspace to language-equivalence in labelled Markov chains.
In particular, both problems are in the complexity class $\ceql$~\cite{DBLP:conf/stacs/CernyS26}.

\begin{theorem}\label{theorem:mdp-ule:complexity}
  The UL problem in labelled MDPs is logspace-equivalent to the language-equivalence problem in labelled Markov chains. Both problems are in $\ceql$.
\end{theorem}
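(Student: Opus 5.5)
The plan is to prove both directions, using the reduction just constructed together with the fact that the construction of $\mchain^\star$ is computable in logarithmic space.

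For the direction from the UL problem in MDPs to language equivalence in Markov chains, I would first convert a state-labelled input MDP into a transition-labelled one. This uses the efficient perturbation-equivalence-preserving translation of Section~\ref{section:mdps:perturbation-equivalence-interreducibility}, which I assume is logspace. I would then build $\mchain^\star$ and output the same pair of distributions. Correctness is exactly the preceding proposition: $\measure$ and $\measureB$ are language-equivalent in $\mchain^\star$ if and only if they are UL in $\mdp$.

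For logspace computability, I would check the following steps.
\begin{itemize}
\item A pure strategy $\strat_0$ can be fixed by choosing, in each state, the least enabled action.
\item The set $\Sigma$ is indexed by pairs $(\state,\action)$, so it has polynomial size and can be enumerated with logarithmic counters.
\item Each entry $\ltrans{a}{\strat}(\stateB,\stateB')$ for $\strat=\strat_0[\state\leftarrow\action]$ is a single entry of $\ltrans{a}{\action'}$, where $\action'$ is $\action$ if $\stateB=\state$ and $\strat_0(\stateB)$ otherwise. It can therefore be copied from the input.
\item The scaling by $1/|\Sigma|$ only multiplies a denominator by a logspace-computable integer.
\end{itemize}
One subtlety: $\Sigma$ should only contain pairs where $\action$ is enabled in $\state$. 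Otherwise the resulting strategy is not well defined, and the matrix sum over $\lblSet^\star$ might not be stochastic. I would restrict to enabled pairs and recheck Lemma~\ref{lemma:back:finite-check}. Equation~\eqref{eq:ule-mdp-finite} only sums over $\action\in\supp{\strat(\state)}\subseteq\actions(\state)$, so it still holds. If $|\Sigma|$ is not constant across labels, the normalisation is still uniform because it is a single global constant $1/|\Sigma|$.

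For the converse direction, from language equivalence in Markov chains to the UL problem in MDPs, I would take an LMC and form the single-action MDP used for Theorem~\ref{theorem:upb-mdp:complexity}. Its unique memoryless strategy induces the original chain, so UL there coincides with language equivalence. This construction is clearly logspace.

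Finally, membership in $\ceql$ follows from the known $\ceql$ bound for language equivalence~\cite{DBLP:conf/stacs/CernyS26}, since $\ceql$ is closed under logspace reductions. I expect the main obstacle to be the logspace and well-definedness bookkeeping for $\mchain^\star$, in particular the enabled-action issue above, rather than any new conceptual step.
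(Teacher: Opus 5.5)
Your proposal is correct and follows the paper's route: the state-to-transition-label translation followed by the logspace construction of $\mchain^\star$ for one direction, the single-action MDP from Theorem~\ref{theorem:upb-mdp:complexity} for the converse, and the bound of~\cite{DBLP:conf/stacs/CernyS26} combined with closure of $\ceql$ under logspace reductions for membership. Your restriction of $\Sigma$ to pairs $(\state,\action)$ with $\action\in\actions(\state)$ is a legitimate tightening of the paper's definition, which only implicitly requires it by calling $\strat_0[\state\leftarrow\action]$ a strategy; without it $\mchain^\star$ would fail to be stochastic, and its backward space could strictly exceed $B_\mdp$.
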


\subsection{Perturbation equivalence in state-labelled and transition-labelled MDPs}\label{section:mdps:perturbation-equivalence-interreducibility}

The goal of this section is to show that the perturbation equivalence problems in state-labelled and transition-labelled are inter-reducible in logarithmic space.
This implies that the complexity results discussed in earlier sections apply to both models of MDPs.
It is known that state-labelled and transition-labelled presentations of stochastic systems are equivalent in terms of modelling expressiveness.

Informally, a state-based MDP can be transformed into an equivalent transition-labelled MDP by pushing state labels onto outgoing transitions.
Conversely, a transition-labelled MDP can be converted into an equivalent state-labelled MDP by augmenting states with labels and pushing labels on transitions to the incoming states.

We present constructions that preserve all variants of perturbation equivalence.
The subtler case is going from transition-labelled MDPs to state-labelled MDPs.
In this case, care must be taken when augmenting the state space to ensure that the memoryless strategies of the state-labelled MDP do not allow more behaviours than the strategies of the original MDP (as this would be harmful to universal equivalence or benefit existential equivalence).
The construction outlined below avoids this issue.

In the sequel, we first present a transformation from state-labelled MDPs to a transition-labelled ones, and then how to derive a state-labelled MDP from a transition-labelled MDP.

\subparagraph*{From state to transition labels.}
Fix a state-labelled MDP $\mdp_{\mathsf{st}} = (\states, \actions, \trans, \lblSet, \lbl)$.
We construct a transition-labelled MDP $\mdp_{\mathsf{tr}} = (\states, \actions, \lblSet, (\ltrans{a}{\action})_{a\in\lblSet, \action\in\actions})$ with the same state and action spaces by pushing the labels of states onto their outgoing transitions.
Transition matrices are defined by letting, for all $\state, \stateB\in\states$, $\action\in\actions$ and $a\in\lblSet$,
\[\ltrans{a}{\action}(\state, \stateB) =
  \begin{cases}
    \trans(\state, \action, \stateB) & \text{if } \action\in\actions(\state) \text{ and } a=\lbl(\state) \\
    0 & \text{otherwise.}
  \end{cases}
\]

The sets of memoryless strategies of $\mdp_{\mathsf{st}}$ and $\mdp_{\mathsf{tr}}$ coincide.
We now show that any instance of perturbation equivalence on $\mdp_{\mathsf{st}}$ is equivalent to the corresponding instance on $\mdp_{\mathsf{tr}}$.
To this end, we observe that for all memoryless strategies of $\mdp_{\mathsf{st}}$, two states are bisimilar (resp.~two distributions are language equivalent) in the induced Markov chain $\mdp_{\mathsf{st}}^\strat$ if and only if they are in $\mdp_{\mathsf{tr}}^\strat$.

We first consider bisimilarity.
\begin{lemma}
  Let $\strat$ be a memoryless strategy of $\mdp_{\mathsf{st}}$ and $\mdp_{\mathsf{tr}}$.
  For all $\state, \stateB\in\states$, $\state$ and $\stateB$ are bisimilar in $\mdp_{\mathsf{st}}^\strat$ if and only if they are in $\mdp_{\mathsf{tr}}^\strat$.
\end{lemma}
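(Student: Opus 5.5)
We will show that, for a fixed memoryless strategy $\strat$, an equivalence relation $R\subseteq\states\times\states$ is a bisimulation in $\mdp_{\mathsf{st}}^\strat$ if and only if it is a bisimulation in $\mdp_{\mathsf{tr}}^\strat$. Since bisimilarity is the union of all bisimulations in each model, the claim follows immediately. The first step is to compute the induced matrices explicitly. By definition, $\ltrans{a}{\strat}(\state,\stateB) = \sum_{\action}\strat(\state,\action)\ltrans{a}{\action}(\state,\stateB)$, which equals $\trans^\strat(\state,\stateB)$ if $a=\lbl(\state)$ and $0$ otherwise. This uses $\supp{\strat(\state)}\subseteq\actions(\state)$ together with the definition of $\trans^\strat$. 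Summing over a class $\eClass$ gives $\ltrans{a}{\strat}(\state,\eClass) = [a=\lbl(\state)]\cdot\trans^\strat(\state,\eClass)$.

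(State-labelled $\Rightarrow$ transition-labelled.) Let $R$ be a bisimulation of $\mdp_{\mathsf{st}}^\strat$ and let $(\state,\stateB)\in R$. Then $\lbl(\state)=\lbl(\stateB)$, and $\trans^\strat(\state,\eClass)=\trans^\strat(\stateB,\eClass)$ for every class $\eClass$. By the formula above, $\ltrans{a}{\strat}(\state,\eClass)=\ltrans{a}{\strat}(\stateB,\eClass)$ holds for every $a$ and $\eClass$, so $R$ is a bisimulation of $\mdp_{\mathsf{tr}}^\strat$.

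(Transition-labelled $\Rightarrow$ state-labelled.) Let $R$ be a bisimulation of $\mdp_{\mathsf{tr}}^\strat$ and let $(\state,\stateB)\in R$. This is the only delicate step: we must recover label equality. Take $a=\lbl(\state)$ and sum over all classes $\eClass$. This gives $1=\sum_{\eClass}\ltrans{a}{\strat}(\state,\eClass)=\sum_{\eClass}\ltrans{a}{\strat}(\stateB,\eClass)=[a=\lbl(\stateB)]$, because $\trans^\strat(\stateB)$ is a distribution. Hence $\lbl(\stateB)=\lbl(\state)$. With labels equal, choosing $a=\lbl(\state)$ in the bisimulation condition gives $\trans^\strat(\state,\eClass)=\trans^\strat(\stateB,\eClass)$, so $R$ is a bisimulation of $\mdp_{\mathsf{st}}^\strat$.

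There is no real obstacle beyond this bookkeeping. The one point to be careful about is that the transition-labelled notion of bisimulation carries no explicit label condition, and the stochasticity argument above supplies it.
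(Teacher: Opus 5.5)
Your proposal is correct and follows essentially the paper's proof. The paper shows each model's bisimilarity relation is a bisimulation of the other, using the same identity $\ltrans{a}{\strat}(\state,\eClass)=[a=\lbl(\state)]\cdot\trans^\strat(\state,\eClass)$ and the same stochasticity argument $1=\ltrans{\lbl(\state)}{\strat}(\state,\states)=\ltrans{\lbl(\state)}{\strat}(\stateB,\states)$ to recover label equality; your version, stating that the two models have exactly the same bisimulations, is just a slightly cleaner packaging of that argument.
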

\begin{proof}
  We first show that the bisimilarity relation $\bisim{\mathsf{st}}$ of $\mdp_{\mathsf{st}}^\strat$ is a bisimulation in $\mdp_{\mathsf{tr}}^\strat$.
  Let $\state, \stateB\in\states$ such that $\state\bisim{\mathsf{st}}\stateB$.
  We note that, in particular, $\lbl(\state) =\lbl(\stateB)$.
  Let $a\in\lblSet$ and $C\in\states/\bisim{\mathsf{st}}$.
  If $a\neq\lbl(\state) = \lbl(\stateB)$, we have $\ltrans{a}{\strat}(\state, C) = 0 = \ltrans{a}{\strat}(\stateB, C)$.
  If $a=\lbl(\state) = \lbl(\stateB)$, we obtain
  \[
    \ltrans{a}{\strat}(\state, C) =
    \sum_{\action\in\actions(\state)}\strat(\state, \action)\cdot\trans(\state, \action, C) =
    \sum_{\action\in\actions(\stateB)}\strat(\stateB, \action)\cdot\trans(\stateB, \action, C) =
    \ltrans{a}{\strat}(\stateB, C),
  \]
  where the first and last equality follow by construction of $\mdp_{\mathsf{tr}}$ and the middle equality follows from $\state\bisim{\mathsf{st}}\stateB$.
  We have shown that $\bisim{\mathsf{st}}$ is a bisimulation in $\mdp_{\mathsf{tr}}^\strat$.

  We now show that the bisimilarity relation $\bisim{\mathsf{tr}}$ of $\mdp_{\mathsf{tr}}^\strat$ is a bisimulation in $\mdp_{\mathsf{st}}^\strat$.
  Let $\state, \stateB\in\states$ such that $\state\bisim{\mathsf{tr}}\stateB$.
  First, we obtain that $\lbl(\state) = \lbl(\stateB)$ from the definition of $\mdp_{\mathsf{tr}}$ and the relation
  \[ 1 = \ltrans{\lbl(\state)}{\strat}(\state, \states) = \ltrans{\lbl(\state)}{\strat}(\stateB, \states),\]
  where the first equality is by definition of $\mdp_{\mathsf{tr}}$ and the second follows from $\state\bisim{\mathsf{tr}}\stateB$ and $\states$ being a union of equivalence classes of $\bisim{\mathsf{tr}}$.
  Second, by a similar reasoning as above, we have that for all $\eClass\in\states/\bisim{\mathsf{tr}}$,
  \[
    \sum_{\action\in\actions(\state)}\strat(\state, \action)\cdot\trans(\state, \action, C) =
    \ltrans{a}{\strat}(\state, C) =
    \ltrans{a}{\strat}(\stateB, C) =
    \sum_{\action\in\actions(\stateB)}\strat(\stateB, \action)\cdot\trans(\stateB, \action, C).
  \]
  This shows that $\bisim{\mathsf{tr}}$ is a bisimulation in $\mdp_{\mathsf{st}}^\strat$.
\end{proof}

We now deal with language equivalence.
\begin{lemma}
  Let $\strat$ be a memoryless strategy of $\mdp_{\mathsf{st}}$ and $\mdp_{\mathsf{tr}}$.
  For all $\measure, \measureB\in\dist{\states}$, $\measure$ and $\measureB$ are language-equivalent in $\mdp_{\mathsf{st}}^\strat$ if and only if they are in $\mdp_{\mathsf{tr}}^\strat$.  
\end{lemma}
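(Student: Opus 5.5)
The plan is to show that for every word $\word\in\lblSet^*$ and every initial distribution $\measure$, the probability of $\word$ is the same in $\mdp_{\mathsf{st}}^\strat$ and in $\mdp_{\mathsf{tr}}^\strat$. Since language equivalence in both models is characterised by equality of the probabilities of all finite words, the claimed equivalence then follows at once. Concretely, for $\word = a_1\ldots a_\iLast$ I would prove by induction on $\iLast$ that, for every state $\state$,
\[
  \probLV{\state}{\strat,\mathsf{st}}(a_1\ldots a_\iLast) = \probLV{\state}{\strat,\mathsf{tr}}(a_1\ldots a_\iLast).
\]
The distribution case then follows by linearity in $\measure$.

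The base case is the empty word, for which both probabilities are $1$. For the inductive step, the state-labelled side satisfies the usual recursion
\[
\probLV{\state}{\strat,\mathsf{st}}(a\word) = [a=\lbl(\state)]\sum_{\stateB}\trans^\strat(\state,\stateB)\,\probLV{\stateB}{\strat,\mathsf{st}}(\word),
\]
where $\trans^\strat(\state,\stateB)=\sum_{\action}\strat(\state,\action)\trans(\state,\action,\stateB)$.

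On the transition-labelled side, $\probLV{\state}{\strat,\mathsf{tr}}(a\word) = (\ltrans{a}{\strat}\ltrans{\word}{\strat}\oneVect)(\state) = \sum_{\stateB}\ltrans{a}{\strat}(\state,\stateB)\,\probLV{\stateB}{\strat,\mathsf{tr}}(\word)$. By construction, $\ltrans{a}{\strat}(\state,\stateB)$ equals $\sum_{\action\in\actions(\state)}\strat(\state,\action)\trans(\state,\action,\stateB)$ when $a=\lbl(\state)$ and is $0$ otherwise. Hence $\ltrans{a}{\strat}(\state,\stateB) = [a=\lbl(\state)]\,\trans^\strat(\state,\stateB)$, and the two recursions coincide. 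The induction hypothesis then closes the step.

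One point needs care and is the only real (minor) obstacle: what "the probability of a word of length $\iLast$" means in each model. In the state-labelled chain, a word of length $\iLast$ labels the first $\iLast$ states $\state_0\ldots\state_{\iLast-1}$. In the transition-labelled chain, it labels the first $\iLast$ transitions, which are exactly those leaving $\state_0,\ldots,\state_{\iLast-1}$. So the word is read off the same positions of the underlying path in both models, and with this convention the two recursions above are literally identical. I would state this correspondence explicitly, noting also that the labelled path measures agree on cylinders: the probability of $\state_0a_0\state_1\ldots$ is $\measure(\state_0)\prod_\iPos \ltrans{a_\iPos}{\strat}(\state_\iPos,\state_{\iPos+1})$, which is nonzero only when $a_\iPos=\lbl(\state_\iPos)$, and in that case it equals the state-labelled path probability. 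This observation alone actually gives the lemma directly, because the two path measures pushed forward to $\lblSet^\omega$ coincide.
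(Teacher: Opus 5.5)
Your proposal is correct and takes essentially the paper's route. The paper observes that each history $s_0s_1\ldots s_r$ of $\mdp_{\mathsf{st}}^\strat$ has the same probability from $\measure$ as the history $s_0\lbl(s_0)s_1\ldots\lbl(s_{r-1})s_r$ of $\mdp_{\mathsf{tr}}^\strat$, which is exactly your closing cylinder remark; your induction on word length via $\ltrans{a}{\strat}(\state,\stateB) = [a=\lbl(\state)]\,\trans^{\strat}(\state,\stateB)$ is a word-level restatement of that same correspondence.
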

\begin{proof}
  Let $\measure\in\dist{\states}$.
  By construction of $\mdp_{\mathsf{tr}}$, for all histories $\hist = \state_0\state_1\state_2\ldots\state_\iLast$ of $\mdp_{\mathsf{st}}^\strat$, the probability of the cylinder $\cyl{\hist}$ in $\mdp_{\mathsf{st}}^\strat$ from the initial distribution $\measure$ is equal to the probability of $\cyl{\state_0\lbl(\state_0)\state_1\lbl(\state_1)\ldots\lbl(\state_{\iLast-1})\state_\iLast}$ in $\mdp_{\mathsf{tr}}^\strat$ from $\mu$.
  This implies that for all measurable $\event\subseteq\lblSet^\omega$, the probability of $\event$ coincides in $\mdp_{\mathsf{st}}^\strat$ from $\mu$ and in $\mdp_{\mathsf{tr}}^\strat$ from $\mu$.
  The claim of the lemma follows.
\end{proof}

\subparagraph*{From transition to state labels.}

Let $\mdp_{\mathsf{tr}} = (\states, \actions, \lblSet, (\ltrans{a}{\action})_{a\in\lblSet, \action\in\actions})$ be a transition-labelled MDP.
We first provide some intuition on how we derive a state-labelled MDP from $\mdp_{\mathsf{tr}}$.

A first idea would be to consider the augmented state space $\states\times\lblSet$ and design transitions such that moving from a pair $(\state, a)$ to a state $(\stateB, b)$ with some action $\action\in\actions(\state)$ is given by $\ltrans{b}{\action}$.
This can be seen as pushing labels from transitions onto outgoing states.
The main issue with this approach is that there is no direct correspondence between memoryless strategies of the new state-labelled MDP and the original MDP: a memoryless strategy of the state-labelled MDP is allowed to make different decision in different copies of a same state.

Instead, we construct a state-labelled MDP $\mdp_{\mathsf{st}}$ with the state space $\states_{\mathsf{st}} = \states\cup(\lblSet\times\states)$, a new additional action $\bot\notin\actions$ and a new label $\varepsilon\notin\lblSet$.
The notation for the new label $\varepsilon$ references the empty word; this label $\varepsilon$ can be seen as a neutral label in the following.
Formally, we define $\mdp_{\mathsf{st}} = (\states_{\mathsf{st}}, \actions\cup\{\bot\}, \trans, \lblSet\cup\{\varepsilon\}, \lbl)$ where the transition function $\trans$ is given by, for all $\state, \stateB\in\states$, $a\in\lblSet$ and $\action\in\actions(\state)$, $\trans(\state, \action, (a, \stateB)) = \ltrans{a}{\action}(\state, \stateB)$ and $\trans((a, \stateB), \bot, \stateB) = 1$, and the labelling function $\lbl$ is given, for all $\state\in\states$ and $a\in\lblSet$, $\lbl(\state) = \varepsilon$ and $\lbl((a, \state)) = a$.

Intuitively, $\mdp_{\mathsf{st}}$ is obtained by splitting transitions of $\mdp_{\mathsf{tr}}$ into two steps.
In the first step, the label and successor state are chosen, but no decision can be made yet.
In the second step, we move to a state with a neutral label, in which non-determinism is available as in $\mdp_{\mathsf{tr}}$.
With this construction, each state of $\mdp_{\mathsf{tr}}$ has a single counterpart in which there is non-deterministic choice in $\mdp_{\mathsf{st}}$ (i.e., itself).
In particular, there is a natural bijection between memoryless strategies of $\mdp_{\mathsf{tr}}$ and those of $\mdp_{\mathsf{st}}$.
For this reason, we identify memoryless strategies of $\mdp_{\mathsf{tr}}$ and of $\mdp_{\mathsf{st}}$ in the sequel.

We now show that for all memoryless strategies $\strat$ and all pairs of states of $\mdp_{\mathsf{tr}}$, these states are bisimilar in $\mdp_{\mathsf{tr}}^\strat$ if and only if they are bisimilar in $\mdp_{\mathsf{st}}^\strat$.
\begin{lemma}
  Let $\strat$ be a memoryless strategy of $\mdp_{\mathsf{tr}}$.
  For all $\state, \stateB\in\states$, $\state$ and $\stateB$ are bisimilar in $\mdp_{\mathsf{tr}}^\strat$ if and only if they are in $\mdp_{\mathsf{st}}^\strat$.
\end{lemma}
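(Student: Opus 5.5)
The plan is to relate the two induced chains directly through their one-step structure, proving each direction by exhibiting a bisimulation. In $\mdp_{\mathsf{st}}^\strat$, a state $\state\in\states$ has label $\varepsilon$. It moves to $(a,\stateB)$ with probability $\ltrans{a}{\strat}(\state,\stateB)$, since the action distribution $\strat(\state)$ is the same in both MDPs. A state $(a,\stateB)$ has label $a$ and moves to $\stateB$ with probability $1$. Every computation below reduces to these two facts.

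For the forward direction, let $\bisim{\mathsf{tr}}$ be bisimilarity of $\mdp_{\mathsf{tr}}^\strat$. I define $R$ on $\states_{\mathsf{st}}$ by three clauses:
\begin{itemize}
\item for $\state,\stateB\in\states$, $\state\mathrel{R}\stateB$ iff $\state\bisim{\mathsf{tr}}\stateB$;
\item $(a,\state)\mathrel{R}(b,\stateB)$ iff $a=b$ and $\state\bisim{\mathsf{tr}}\stateB$;
\item no state of $\states$ is related to a pair state.
\end{itemize}
This is an equivalence relation whose classes are the $\bisim{\mathsf{tr}}$-classes $\eClass\subseteq\states$ and the sets $\{a\}\times\eClass$. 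Related states share a label. From $(a,\state)$ the probability of entering a class $\eClass$ is $1$ if $\state\in\eClass$ and $0$ otherwise, so it agrees for $R$-related pair states. From $\state\in\states$ the probability of entering $\{a\}\times\eClass$ is $\ltrans{a}{\strat}(\state,\eClass)$, and the probability of entering any class contained in $\states$ is $0$. Both agree for $\bisim{\mathsf{tr}}$-related states. Hence $R$ is a bisimulation of $\mdp_{\mathsf{st}}^\strat$ containing $\bisim{\mathsf{tr}}$.

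For the converse, let $\bisim{\mathsf{st}}$ be bisimilarity of $\mdp_{\mathsf{st}}^\strat$ and let $R=\bisim{\mathsf{st}}\cap(\states\times\states)$. I proceed in three steps.
\begin{itemize}
\item Label $\varepsilon$ is carried only by states of $\states$, so no state of $\states$ is $\bisim{\mathsf{st}}$-equivalent to a pair state. Hence the classes of $R$ are exactly the $\bisim{\mathsf{st}}$-classes contained in $\states$.
\item If $(a,\state)\bisim{\mathsf{st}}(b,\stateB)$, then $a=b$ by labels. Moreover $(a,\state)$ enters the $\bisim{\mathsf{st}}$-class of $\state$ with probability $1$, so $(b,\stateB)$ must too, which gives $\stateB\bisim{\mathsf{st}}\state$.
\item Consequently, for every $R$-class $\eClass$ and every label $a$, the set $\{a\}\times\eClass$ is a union of $\bisim{\mathsf{st}}$-classes.
\end{itemize}
Now take $\state\mathrel{R}\stateB$. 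Summing the bisimulation condition over the classes that make up $\{a\}\times\eClass$ gives $\ltrans{a}{\strat}(\state,\eClass)=\ltrans{a}{\strat}(\stateB,\eClass)$. So $R$ is a bisimulation of $\mdp_{\mathsf{tr}}^\strat$ containing all $\bisim{\mathsf{st}}$-related pairs of $\states$.

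The main obstacle is the second bullet of the converse: the bisimulation $\bisim{\mathsf{st}}$ might a priori split or merge pair states in a way not aligned with the sets $\{a\}\times\eClass$. The deterministic $\bot$-step to the underlying state is what rules this out, and it is where the two-step splitting construction is used essentially. Everything else is routine bookkeeping of probabilities.
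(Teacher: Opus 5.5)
Your proof is correct, and its overall structure matches the paper's.

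\textbf{Forward direction.} This is the same as the paper's. The paper's relation is written with $\bisim{\mathsf{st}}$, but $\bisim{\mathsf{tr}}$ is meant, which gives exactly your three-clause relation.

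\textbf{Converse.} Both arguments use the same candidate $R=\bisim{\mathsf{st}}\cap(\states\times\states)$, but the key claim differs.
\begin{itemize}
\item The paper proves a stronger statement: each $\{a\}\times\eClass$ is exactly one $\bisim{\mathsf{st}}$-class. It fixes $\state,\stateB\in\eClass$, merges the classes of $(a,\state)$ and $(a,\stateB)$, checks that the result is still a bisimulation, and concludes by maximality of bisimilarity.
\item You prove only that $\{a\}\times\eClass$ is a union of $\bisim{\mathsf{st}}$-classes. You do this directly from the labels and the deterministic $\bot$-step, with no appeal to maximality. This weaker fact is exactly what the final summation over classes needs.
\end{itemize}

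The paper's merging argument, as written, only shows that $\{a\}\times\eClass$ lies inside a single class. The reverse containment is what the summation actually relies on, and the paper leaves it implicit. That containment says a class meeting $\{a\}\times\eClass$ contains nothing outside it, and it is precisely your second bullet. So your version is leaner and also fills that step.

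What the paper's route gives in addition is a sharper description of $\bisim{\mathsf{st}}$ on pair states: one class per label and per $\states$-class. The lemma does not need this.
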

\begin{proof}
  We first show that any two bisimilar states in $\mdp_{\mathsf{tr}}^\strat$ are bisimilar in $\mdp_{\mathsf{st}}^\strat$.
  Let $\bisim{\mathsf{tr}}$ denote the bisimilarity relation of $\mdp_{\mathsf{tr}}^\strat$.
  We consider the relation $R\subseteq \states_{\mathsf{st}}\times\states_{\mathsf{st}}$ defined by
  \[R = \mathord{\bisim{\mathsf{st}}} \cup \{(a, \state), (a, \stateB)\mid a\in\lblSet,\, \state\bisim{\mathsf{st}}\stateB\}.\]
  We show that $R$ is a bisimulation in $\mdp_{\mathsf{st}}^\strat$.
  By construction of $\mdp_{\mathsf{st}}$, any pair in $R$ shares the same label.
  In particular, any pair in $R$ is either in $\states\times\states$ or in $(\lblSet\times\states)\times(\lblSet\times\states)$.

  We now check that for all pairs in $R$, the outgoing probability to each equivalence class is equal from each state.
  First, let $(\state, \stateB)\in R\cap (\states\times\states)$.
  It follows from $\state\bisim{\mathsf{tr}} \stateB$ that for all $a\in\lblSet$ and all equivalence classes $\eClass\in\states/\bisim{\mathsf{tr}}$, we have
  \[
    \sum_{\action\in\actions(\state)}\strat(\state, \action)\cdot \trans(\state, \action, \{a\}\times\eClass) =
    \ltrans{a}{\strat}(\state, \eClass) =
    \ltrans{a}{\strat}(\stateB, \eClass) =
    \sum_{\action\in\actions(\stateB)}\strat(\stateB, \action)\cdot \trans(\stateB, \action, \{a\}\times\eClass),
  \]
  where the second equality follows from the bisimilarity assumption in $\mdp_{\mathsf{tr}}^\strat$.
  For the second case, let $((a, \state), (a, \stateB))\in R\cap ((\lblSet\times\states)\times(\lblSet\times\states))$.
  By definition of $R$, we have $\state\bisim{\mathsf{tr}} \stateB$.
  Letting $\eClass\in\states/R$ denote the equivalence class of $\state$ and $\stateB$ (which is also their equivalence class with respect to $\mathord{\bisim{\mathsf{tr}}}$), it follows from the definition of $\mdp_{\mathsf{st}}$ that
  \begin{align*}
    \sum_{\action\in\actions(\state)}\strat((a, \state), \action)\cdot \trans((a, \state), \action, \eClass)
    & =
      \trans((a, \state),\bot,  \state) \\
    & =
      1\\
    & =
      \trans((a, \stateB),\bot,  \stateB)\\
    &=
    \sum_{\action\in\actions(\stateB)}\strat((a, \stateB), \action)\cdot \trans((a, \stateB), \action, \eClass).
  \end{align*}

  We have shown that $R$ is a bisimulation with respect to $\mdp_{\mathsf{st}}^\strat$.
  It follows that any two bisimilar states in $\mdp_{\mathsf{tr}}^\strat$ are bisimilar in $\mdp_{\mathsf{st}}^\strat$.
  
  We now show that for all $\state, \stateB\in\states$, if $\state$ and $\stateB$ are bisimilar in $\mdp_{\mathsf{st}}^\strat$, then they are bisimilar in $\mdp_{\mathsf{tr}}^\strat$.
  Let $\mathord{\bisim{\mathsf{st}}}$ denote the bisimilarity relation of $\mdp_{\mathsf{st}}^\strat$.
  We first show a technical claim: for all $\eClass\in\states_{\mathsf{st}}/\bisim{\mathsf{st}}$ such that $\eClass\subseteq\states$ (i.e., $\eClass$ is an equivalence class of $\bisim{\mathsf{st}}$ with only $\varepsilon$-labelled states of $\mdp_{\mathsf{st}}$) and all $a\in\lblSet$, it holds that $\{a\}\times\eClass\in\states_{\mathsf{st}}/\bisim{\mathsf{st}}$.
  To this end, we fix $\state, \stateB\in\eClass$ and let $R_{s, t}$ be the equivalence relation obtained from $\mathord{\bisim{\mathsf{st}}}$ by merging the equivalence classes of $(a, \state)$ and $(a, \stateB)$ into a single equivalence class (by taking their union).
  We show that $R_{s, t}$ is a bisimulation (and therefore $R_{s, t} = \mathord{\bisim{\mathsf{st}}}$).

  Any two states in relation by $R_{s, t}$ share the same label.
  For all $(\stateC_1, \stateC_2)\in R_{s, t}$ such that $\stateC_1\bisim{\mathsf{st}}\stateC_2$, the outgoing probability from $\stateC_1$ and $\stateC_2$ to each equivalence class of $R_{s, t}$ is identical in $\mdp_{\mathsf{st}}^\strat$ (given that $R_{s, t}$ is coarser than $\mathord{\bisim{\mathsf{st}}}$).
  To establish this same property for the pairs $(\stateC_1, \stateC_2)\in R_{s, t}\setminus\mathord{\bisim{\mathsf{st}}}$ (if any exist), it is sufficient (by a transitivity argument) to show that the outgoing probability from $(a, \state)$ and $(a, \stateB)$ to each equivalence class of $R_{s, t}$ is identical in $\mdp_{\mathsf{st}}^\strat$.
  This follows from there being only one outgoing transition from $(a, \state)$ and $(a, \stateB)$ to $\state$ and $\stateB$ respectively, and the fact that $\state, \stateB$ are both in $\eClass$ which is an equivalence class of $R_{s, t}$.
  This concludes the proof of the claim.
  
  We now use the claim to establish that $R = \mathord{\bisim{\mathsf{st}}}\cap (\states\times\states)$ is a bisimulation of $\mdp_{\mathsf{tr}}^\strat$.
  Let $a\in\lblSet$ and $C\in\states/R$.
  We must check that $\ltrans{a}{\strat}(\state, C) = \ltrans{a}{\strat}(\stateB, C)$.
  On the one hand, by construction of $\mdp_{\mathsf{st}}$, we have for $\stateC\in\{\state, \stateB\}$,
  \[
    \ltrans{a}{\strat}(\stateC, C) =
    \sum_{\action\in\actions(\stateC)}\strat(\stateC, \action)\cdot \trans(\stateC, \action, \{a\}\times \eClass).
  \]
  On the other hand, it follows from the previous claim and $\state\bisim{\mathsf{st}}\stateB$ that
  \[
    \sum_{\action\in\actions(\state)}\strat(\state, \action)\cdot
    \trans(\state, \action, \{a\}\times \eClass) =
    \sum_{\action\in\actions(\stateB)}\strat(\stateB, \action)\cdot
    \trans(\stateB, \action, \{a\}\times \eClass)
  \]
  This shows that $R$ is a bisimulation in $\mdp_{\mathsf{tr}}^\strat$.
  It follows that any elements of $\states$ that are bisimilar in $\mdp_{\mathsf{st}}^\strat$ are also bisimilar in $\mdp_{\mathsf{tr}}^\strat$.
\end{proof}

We now consider language equivalence.
\begin{lemma}
  Let $\strat$ be a memoryless strategy of $\mdp_{\mathsf{tr}}$.
  For all $\measure, \measureB\in\dist{\states}$, $\measure$ and $\measureB$ are language-equivalent in $\mdp_{\mathsf{st}}^\strat$ if and only if they are in $\mdp_{\mathsf{tr}}^\strat$.  
\end{lemma}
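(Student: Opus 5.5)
We will compare the path measures of the two systems directly, as in the state-to-transition case. Fix a memoryless strategy $\strat$ of $\mdp_{\mathsf{tr}}$, identified with the corresponding strategy of $\mdp_{\mathsf{st}}$. The strategy plays the deterministic action $\bot$ in every state of $\lblSet\times\states$. In $\mdp_{\mathsf{st}}^\strat$, every path from a state of $\states$ therefore alternates between states of $\states$ and states of $\lblSet\times\states$. Its label sequence thus has the form $\varepsilon a_0\varepsilon a_1\varepsilon a_2\ldots$ with $a_\iPos\in\lblSet$.

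The first step is a cylinder computation. Consider a history $\state_0 a_0\state_1\ldots a_{\iLast-1}\state_\iLast$ of $\mdp_{\mathsf{tr}}^\strat$. Its counterpart in $\mdp_{\mathsf{st}}^\strat$ is $\state_0(a_0,\state_1)\state_1(a_1,\state_2)\ldots(a_{\iLast-1},\state_\iLast)\state_\iLast$. From any $\measure\in\dist{\states}$, both histories have the same probability. The probability of each step from $\stateC$ to $(a,\stateC')$ is $\sum_{\action}\strat(\stateC,\action)\trans(\stateC,\action,(a,\stateC')) = \ltrans{a}{\strat}(\stateC,\stateC')$. Each step from $(a,\stateC')$ to $\stateC'$ has probability $1$. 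This map between histories is a bijection onto the histories of $\mdp_{\mathsf{st}}^\strat$ that start in $\states$ and have odd length.

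Summing over histories then gives, for every word $a_0\ldots a_{\iLast-1}\in\lblSet^*$,
\[
\probLV{\measure}{\strat,\mathsf{st}}(\varepsilon a_0\varepsilon a_1\ldots\varepsilon a_{\iLast-1}) = \probLV{\measure}{\strat,\mathsf{tr}}(a_0\ldots a_{\iLast-1}),
\]
where the superscripts tell the two systems apart. We write $h$ for the map from $\lblSet^*$ to words over $\lblSet\cup\{\varepsilon\}$ that inserts an $\varepsilon$ before each letter. Any word over $\lblSet\cup\{\varepsilon\}$ that is not a prefix of the form $h(\word)$ or $h(\word)\varepsilon$ has probability $0$ from $\measure$ in $\mdp_{\mathsf{st}}^\strat$. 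Moreover, $\probLV{\measure}{}(h(\word)\varepsilon) = \probLV{\measure}{}(h(\word))$, because the $\bot$-transition is forced.

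Both directions follow. Suppose $\measure$ and $\measureB$ are language-equivalent in $\mdp_{\mathsf{tr}}^\strat$. Then they agree on all words of the forms $h(\word)$ and $h(\word)\varepsilon$, and both give probability $0$ to all other words, so they are language-equivalent in $\mdp_{\mathsf{st}}^\strat$. Conversely, suppose they are language-equivalent in $\mdp_{\mathsf{st}}^\strat$. Evaluating on the words $h(\word)$ recovers equality of $\probLV{\measure}{}(\word)$ and $\probLV{\measureB}{}(\word)$ in $\mdp_{\mathsf{tr}}^\strat$ for every $\word\in\lblSet^*$, which suffices by the characterisation of language equivalence through finite words.

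The only delicate point is the bookkeeping of this two-step interleaving. Two things must be checked: that the identification of memoryless strategies is legitimate, since only $\bot$ is enabled in $\lblSet\times\states$, and that words not of the interleaved form carry zero mass from distributions supported on $\states$. Both follow directly from the construction.
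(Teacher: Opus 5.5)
Your proof is correct and is essentially the paper's argument: the paper likewise matches each cylinder $\cyl{\hist}$ of $\mdp_{\mathsf{tr}}^\strat$ with the cylinder of its interleaved counterpart in $\mdp_{\mathsf{st}}^\strat$, derives that the probability of $a_1\ldots a_\iLast$ in the transition-labelled chain equals that of $\varepsilon a_1\varepsilon a_2\ldots\varepsilon a_\iLast\varepsilon$ in the state-labelled chain, and concludes from this. Your explicit remark that words not of the form $h(\word)$ or $h(\word)\varepsilon$ carry zero mass from distributions on $\states$ spells out a step the paper leaves implicit in its final sentence.
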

\begin{proof}
  Let $\measure\in\dist{\states}$.
  Let $\probLV{\measure, \mathsf{tr}}{\strat}$ denote the distribution over plays of $\mdp_{\mathsf{tr}}^\strat$ from $\measure$ and let $\probLV{\measure, \mathsf{st}}{\strat}$ denote the distribution over plays of $\mdp_{\mathsf{st}}^\strat$.
  For all histories $\hist = \state_0a_1\state_1a_1\ldots\state_\iLast$ of $\mdp_{\mathsf{tr}}^\strat$, let $\mathsf{st}(\hist) = \state_0(a_1, \state_1)\state_1(a_1, \state_2)\ldots(a_{\iLast-1}, \state_\iLast)\state_\iLast$ denote the corresponding history of $\mdp_{\mathsf{st}}^\strat$.

  We observe that for all histories $\hist$ of $\mdp_{\mathsf{tr}}^\strat$, we have $\probLV{\measure, \mathsf{tr}}{\strat}(\cyl{\hist}) = \probLV{\measure, \mathsf{st}}{\strat}(\cyl{\mathsf{st}(\hist)})$.
  This follows from the fact that for all $a\in\lblSet$, and all $\state, \stateB\in\states$, we have (by definition)
  \begin{align*}
    \ltrans{a}{\strat}(\state, \stateB)
    & =
    \sum_{\action\in\actions(\state)}\strat(\state, \action)\cdot
      \ltrans{a}{\action}(\state, \stateB) \\
    & =
    \sum_{\action\in\actions(\state)}\strat(\state, \action)\cdot
      \trans(\state, \action, (a, \stateB)) \\
    & = \left(\sum_{\action\in\actions(\state)}\strat(\state, \action)\cdot
      \trans(\state, \action, (a, \stateB))\right)\cdot
      \trans((a, \stateB), \bot, \stateB).
  \end{align*}

  It follows from above that for all $\word = a_1a_2\ldots a_\iLast\in\lblSet^*$, $\probLV{\measure, \mathsf{tr}}{\strat}(\word) = \probLV{\measure, \mathsf{st}}{\strat}(\varepsilon a_1\varepsilon a_2\ldots\varepsilon a_\iLast\varepsilon)$.
  The lemma follows from this property.
\end{proof}

\section{Complete experimental results}
\label{appendix:experiments}
This section contains the complete experimental evaluation. 
We compare the performance of our algorithm to compute UB to that of PRISM's probabilistic bisimilarity implementation, which we denote B.
In order to allow a clear comparison, we implemented \cref{algorithm:stable} in Java as part of PRISM's explicit engine.
Our experiments were run on a MacBook with an M1 chip and 16GB memory, and with the Java virtual machine limited to 8GB.

The complete benchmarking results can be found in \cref{table:complete-results}, while the aggregated results are reported in \cref{table:aggregate}. The synthetic benchmark \emph{haddad-monmege} (which includes $3$ instances) is excluded from the aggregated statistics, as neither algorithm reduces the model and UB completes in less than one millisecond for all instances. For each benchmark family, the table presents the average percentage reduction in the size of the state space achieved by B and UB, together with the average speed-up factor of UB relative to B.

UB yields the same minimised model as B for $74$ benchmark instances ($50\%$ of all evaluated instances).  The largest differences in reduction are observed for the benchmark families \emph{brp} (properties \emph{p1} and \emph{p2}), \emph{crowds}, and \emph{herman}, where B achieves substantially greater reductions.  These models contain many non-trivial probabilistic branches.  Despite the fact that UB always produces an equal or coarser minimised model, the overall average percentage reduction in state space size obtained by UB remains relatively close to that of B.

In terms of runtime, UB is generally faster than, or comparable to, B. The main exception is the \emph{herman} benchmark family, where UB is slower on average, with a speed-up of $0.74$, as the models contain many non-trivial probabilistic branches. On the \emph{nand} benchmark family, which contains the largest and slowest models in the benchmark suite, UB achieves a speed-up of $1.70$.

\begin{table}[htb]
\caption{Summary of the experimental results. \emph{\#} denotes the number of instances per benchmark, \emph{\% Reduction} denotes the percentage reduction of the state space (average $\pm$ standard deviation), and \emph{Speed-up} denotes the speed-up in computation time for UB compared to B.}
\rowcolors{1}{gray!10}{white}
\centering
\begin{tabular}{ m{0.22\textwidth} R{0.03\textwidth} R{0.12\textwidth} R{0.16\textwidth} R{0.16\textwidth} R{0.13\textwidth} }
  \toprule
  \rowcolor{white}
  \multicolumn{3}{c}{Benchmark} & \multicolumn{2}{c}{\% Reduction} & \multicolumn{1}{c}{\; Speed-up} \\
  \midrule
  Name (property) & \# & Max states & \multicolumn{1}{c}{\; B} & \multicolumn{1}{c}{\; UB} & \multicolumn{1}{c}{\; $\frac{\text{B}}{\text{UB}}$} \\
  \midrule
  brp (p1) & 12 & 5192 & $50.41 \pm \phantom{0} 1.15$ & $23.28 \pm \phantom{0} 3.86$ & $3.55 \pm 1.70$ \\
  brp (p2) & 12 & 5192 & $50.41 \pm \phantom{0} 1.15$ & $23.28 \pm \phantom{0} 3.86$ & $0.87 \pm 0.10$ \\
  brp (p4) & 12 & 5192 & $99.22 \pm \phantom{0} 0.44$ & $99.22 \pm \phantom{0} 0.44$ & $0.96 \pm 0.08$ \\
  crowds (positive) & 16 & 10633591 & $99.50 \pm \phantom{0} 0.91$ & $84.36 \pm \phantom{0} 6.25$ & $1.60 \pm 0.71$ \\
  egl (messagesA) & 4 & 156670 & $99.35 \pm \phantom{0} 0.04$ & $99.26 \pm \phantom{0} 0.11$ & $0.87 \pm 0.18$ \\
  egl (messagesB) & 4 & 156670 & $99.34 \pm \phantom{0} 0.05$ & $99.30 \pm \phantom{0} 0.08$ & $0.87 \pm 0.17$ \\
  egl (unfairA) & 4 & 156670 & $99.37 \pm \phantom{0} 0.03$ & $99.29 \pm \phantom{0} 0.09$ & $1.07 \pm 0.19$ \\
  egl (unfairB) & 4 & 156670 & $99.26 \pm \phantom{0} 0.04$ & $99.20 \pm \phantom{0} 0.08$ & $1.11 \pm 0.32$ \\
  herman (steps) & 7 & 32768 & $91.96 \pm \phantom{0} 8.33$ & $15.07 \pm 23.17$ & $0.74 \pm 0.32$ \\
  leader-sync (elected) & 24 & 1312334 & $96.25 \pm \phantom{0} 7.10$ & $96.25 \pm \phantom{0} 7.10$ & $1.39 \pm 0.50$ \\
  leader-sync (time) & 24 & 1312334 & $96.25 \pm \phantom{0} 7.10$ & $96.25 \pm \phantom{0} 7.10$ & $1.28 \pm 0.36$ \\
  nand (reliable) & 11 & 14123252 & $30.93 \pm 10.16$ & $20.44 \pm 11.38$ & $1.70 \pm 0.65$ \\
  oscillators (power) & 7 & 24311 & $0.00 \pm \phantom{0} 0.00$ & $0.00 \pm \phantom{0} 0.00$ & $2.94 \pm 1.17$ \\
  oscillators (time) & 7 & 24311 & $1.20 \pm \phantom{0} 2.57$ & $1.20 \pm \phantom{0} 2.57$ & $1.56 \pm 0.27$ \\
  \midrule
  \rowcolor{white}
  \textbf{Total} & \multicolumn{2}{l}{148} & $75.63 \pm 33.70$ & $65.17 \pm 39.78$ & $1.52 \pm 1.00$ \\
  \bottomrule
\end{tabular}
\label{table:aggregate}
\end{table}

\rowcolors{6}{gray!10}{white}
\begin{landscape}
\begin{center}
\begin{longtable}{ m{0.18\textwidth} m{0.08\textwidth} m{0.09\textwidth} R{0.11\textwidth} R{0.15\textwidth} R{0.09\textwidth} R{0.1\textwidth} R{0.15\textwidth} R{0.09\textwidth} R{0.1\textwidth} R{0.11\textwidth} }
\caption{The complete set of experimental results. \emph{Min} denotes the number of states in the minimised model, \emph{\% Red.} denotes the percentage reduction of the state space, \emph{Time} denotes the amount of time taken (in seconds) to compute the quotient, and \emph{Speed-up} denotes the speed-up in computation time for UB compared to B.}\label{table:complete-results}\\
  \toprule
  \multicolumn{4}{c}{Benchmark} & \multicolumn{3}{c}{\qquad B} & \multicolumn{4}{c}{\qquad UB} \\
  \midrule
  Name (property) & \multicolumn{2}{c}{Parameters} & States & Min & \% Red. & Time (s) & Min & \% Red. & Time (s) & Speed-up \\
  \midrule
  brp (p1) & N=16 & MAX=2 & 677 & 326 & 51.85 & 0.020 & 480 & 29.10 & 0.011 & 1.82 \\
   &  & MAX=3 & 886 & 439 & 50.45 & 0.020 & 670 & 24.38 & 0.011 & 1.82 \\
   &  & MAX=4 & 1095 & 552 & 49.59 & 0.023 & 860 & 21.46 & 0.013 & 1.77 \\
   &  & MAX=5 & 1304 & 665 & 49.00 & 0.021 & 1050 & 19.48 & 0.011 & 1.91 \\
   & N=32 & MAX=2 & 1349 & 646 & 52.11 & 0.028 & 960 & 28.84 & 0.009 & 3.11 \\
   &  & MAX=3 & 1766 & 871 & 50.68 & 0.034 & 1342 & 24.01 & 0.011 & 3.09 \\
   &  & MAX=4 & 2183 & 1096 & 49.79 & 0.041 & 1724 & 21.03 & 0.014 & 2.93 \\
   &  & MAX=5 & 2600 & 1321 & 49.19 & 0.051 & 2106 & 19.00 & 0.015 & 3.40 \\
   & N=64 & MAX=2 & 2693 & 1286 & 52.25 & 0.066 & 1920 & 28.70 & 0.013 & 5.08 \\
   &  & MAX=3 & 3526 & 1735 & 50.79 & 0.090 & 2686 & 23.82 & 0.016 & 5.63 \\
   &  & MAX=4 & 4359 & 2184 & 49.90 & 0.125 & 3452 & 20.81 & 0.020 & 6.25 \\
   &  & MAX=5 & 5192 & 2633 & 49.29 & 0.128 & 4218 & 18.76 & 0.022 & 5.82 \\
  brp (p2) & N=16 & MAX=2 & 677 & 326 & 51.85 & 0.020 & 480 & 29.10 & 0.020 & 1.00 \\
   &  & MAX=3 & 886 & 439 & 50.45 & 0.021 & 670 & 24.38 & 0.022 & 0.95 \\
   &  & MAX=4 & 1095 & 552 & 49.59 & 0.025 & 860 & 21.46 & 0.026 & 0.96 \\
   &  & MAX=5 & 1304 & 665 & 49.00 & 0.021 & 1050 & 19.48 & 0.026 & 0.81 \\
   & N=32 & MAX=2 & 1349 & 646 & 52.11 & 0.027 & 960 & 28.84 & 0.034 & 0.79 \\
   &  & MAX=3 & 1766 & 871 & 50.68 & 0.033 & 1342 & 24.01 & 0.039 & 0.85 \\
   &  & MAX=4 & 2183 & 1096 & 49.79 & 0.039 & 1724 & 21.03 & 0.047 & 0.83 \\
   &  & MAX=5 & 2600 & 1321 & 49.19 & 0.045 & 2106 & 19.00 & 0.054 & 0.83 \\
   & N=64 & MAX=2 & 2693 & 1286 & 52.25 & 0.060 & 1920 & 28.70 & 0.060 & 1.00 \\
   &  & MAX=3 & 3526 & 1735 & 50.79 & 0.068 & 2686 & 23.82 & 0.086 & 0.79 \\
   &  & MAX=4 & 4359 & 2184 & 49.90 & 0.089 & 3452 & 20.81 & 0.094 & 0.95 \\
   &  & MAX=5 & 5192 & 2633 & 49.29 & 0.086 & 4218 & 18.76 & 0.124 & 0.69 \\
  brp (p4) & N=16 & MAX=2 & 677 & 10 & 98.52 & 0.005 & 10 & 98.52 & 0.005 & 1.00 \\
   &  & MAX=3 & 886 & 12 & 98.65 & 0.010 & 12 & 98.65 & 0.009 & 1.11 \\
   &  & MAX=4 & 1095 & 14 & 98.72 & 0.009 & 14 & 98.72 & 0.010 & 0.90 \\
   &  & MAX=5 & 1304 & 16 & 98.77 & 0.007 & 16 & 98.77 & 0.007 & 1.00 \\
   & N=32 & MAX=2 & 1349 & 10 & 99.26 & 0.005 & 10 & 99.26 & 0.006 & 0.83 \\
   &  & MAX=3 & 1766 & 12 & 99.32 & 0.007 & 12 & 99.32 & 0.007 & 1.00 \\
   &  & MAX=4 & 2183 & 14 & 99.36 & 0.010 & 14 & 99.36 & 0.011 & 0.91 \\
   &  & MAX=5 & 2600 & 16 & 99.38 & 0.010 & 16 & 99.38 & 0.012 & 0.83 \\
   & N=64 & MAX=2 & 2693 & 10 & 99.63 & 0.008 & 10 & 99.63 & 0.008 & 1.00 \\
   &  & MAX=3 & 3526 & 12 & 99.66 & 0.011 & 12 & 99.66 & 0.011 & 1.00 \\
   &  & MAX=4 & 4359 & 14 & 99.68 & 0.011 & 14 & 99.68 & 0.012 & 0.92 \\
   &  & MAX=5 & 5192 & 16 & 99.69 & 0.014 & 16 & 99.69 & 0.014 & 1.00 \\
  \midrule
  crowds & CS=5 & TR=3 & 1198 & 41 & 96.58 & 0.010 & 240 & 79.97 & 0.009 & 1.11 \\
  (positive) &  & TR=4 & 3515 & 61 & 98.26 & 0.017 & 839 & 76.13 & 0.014 & 1.21 \\
   &  & TR=5 & 8653 & 81 & 99.06 & 0.031 & 2238 & 74.14 & 0.029 & 1.07 \\
   &  & TR=6 & 18817 & 101 & 99.46 & 0.057 & 5037 & 73.23 & 0.058 & 0.98 \\
   & CS=10 & TR=3 & 6563 & 41 & 99.38 & 0.020 & 770 & 88.27 & 0.020 & 1.00 \\
   &  & TR=4 & 30070 & 61 & 99.80 & 0.061 & 4619 & 84.64 & 0.056 & 1.09 \\
   &  & TR=5 & 111294 & 81 & 99.93 & 0.250 & 20018 & 82.01 & 0.137 & 1.82 \\
   &  & TR=6 & 352535 & 101 & 99.97 & 0.685 & 70067 & 80.12 & 0.313 & 2.19 \\
   & CS=15 & TR=3 & 19228 & 41 & 99.79 & 0.041 & 1600 & 91.68 & 0.043 & 0.95 \\
   &  & TR=4 & 119800 & 61 & 99.95 & 0.201 & 13599 & 88.65 & 0.151 & 1.33 \\
   &  & TR=5 & 592060 & 81 & 99.99 & 0.861 & 81598 & 86.22 & 0.455 & 1.89 \\
   &  & TR=6 & 2464168 & 101 & 100.00 & 4.014 & 387597 & 84.27 & 1.250 & 3.21 \\
   & CS=20 & TR=3 & 42318 & 41 & 99.90 & 0.081 & 2730 & 93.55 & 0.085 & 0.95 \\
   &  & TR=4 & 333455 & 61 & 99.98 & 0.438 & 30029 & 90.99 & 0.278 & 1.58 \\
   &  & TR=5 & 2061951 & 81 & 100.00 & 2.705 & 230228 & 88.83 & 1.071 & 2.53 \\
   &  & TR=6 & 10633591 & 101 & 100.00 & 18.360 & 1381377 & 87.01 & 7.062 & 2.60 \\
  \midrule
  egl (messagesA) & N=5 & L=2 & 33790 & 239 & 99.29 & 0.099 & 306 & 99.09 & 0.134 & 0.74 \\
   &  & L=4 & 74750 & 479 & 99.36 & 0.298 & 546 & 99.27 & 0.267 & 1.12 \\
   &  & L=6 & 115710 & 719 & 99.38 & 0.463 & 786 & 99.32 & 0.520 & 0.89 \\
   &  & L=8 & 156670 & 959 & 99.39 & 0.597 & 1026 & 99.35 & 0.821 & 0.73 \\
  egl (messagesB) & N=5 & L=2 & 33790 & 246 & 99.27 & 0.096 & 277 & 99.18 & 0.132 & 0.73 \\
   &  & L=4 & 74750 & 486 & 99.35 & 0.301 & 517 & 99.31 & 0.278 & 1.08 \\
   &  & L=6 & 115710 & 726 & 99.37 & 0.465 & 757 & 99.35 & 0.505 & 0.92 \\
   &  & L=8 & 156670 & 966 & 99.38 & 0.667 & 997 & 99.36 & 0.908 & 0.73 \\
  egl (unfairA) & N=5 & L=2 & 33790 & 229 & 99.32 & 0.091 & 287 & 99.15 & 0.099 & 0.92 \\
   &  & L=4 & 74750 & 469 & 99.37 & 0.215 & 527 & 99.29 & 0.194 & 1.11 \\
   &  & L=6 & 115710 & 709 & 99.39 & 0.341 & 767 & 99.34 & 0.370 & 0.92 \\
   &  & L=8 & 156670 & 949 & 99.39 & 0.715 & 1007 & 99.36 & 0.541 & 1.32 \\
  egl (unfairB) & N=5 & L=2 & 33790 & 270 & 99.20 & 0.069 & 312 & 99.08 & 0.104 & 0.66 \\
   &  & L=4 & 74750 & 550 & 99.26 & 0.281 & 592 & 99.21 & 0.202 & 1.39 \\
   &  & L=6 & 115710 & 830 & 99.28 & 0.424 & 872 & 99.25 & 0.379 & 1.12 \\
   &  & L=8 & 156670 & 1110 & 99.29 & 0.705 & 1152 & 99.26 & 0.548 & 1.29 \\
  \midrule
  haddad- & p=0.7 & N=20 & 41 & 41 & 0.00 & 0.001 & 41 & 0.00 & 0.000 & - \\
  monmege &  & N=100 & 201 & 201 & 0.00 & 0.016 & 201 & 0.00 & 0.001 & 16.00 \\
  (target) &  & N=300 & 601 & 601 & 0.00 & 0.034 & 601 & 0.00 & 0.001 & 34.00 \\
  \midrule
  herman (steps) &  & N=3 & 8 & 2 & 75.00 & 0.002 & 3 & 62.50 & 0.002 & 1.00 \\
   &  & N=5 & 32 & 4 & 87.50 & 0.002 & 23 & 28.13 & 0.002 & 1.00 \\
   &  & N=7 & 128 & 9 & 92.97 & 0.005 & 115 & 10.16 & 0.005 & 1.00 \\
   &  & N=9 & 512 & 23 & 95.51 & 0.017 & 495 & 3.32 & 0.019 & 0.89 \\
   &  & N=11 & 2048 & 63 & 96.92 & 0.047 & 2027 & 1.03 & 0.077 & 0.61 \\
   &  & N=13 & 8192 & 190 & 97.68 & 0.242 & 8167 & 0.31 & 0.512 & 0.47 \\
   &  & N=15 & 32768 & 612 & 98.13 & 1.819 & 32739 & 0.09 & 8.663 & 0.21 \\
  \midrule
  leader-sync & N=3 & K=2 & 26 & 8 & 69.23 & 0.002 & 8 & 69.23 & 0.001 & 2.00 \\
  (elected) &  & K=3 & 69 & 8 & 88.41 & 0.002 & 8 & 88.41 & 0.001 & 2.00 \\
   &  & K=4 & 147 & 8 & 94.56 & 0.002 & 8 & 94.56 & 0.002 & 1.00 \\
   &  & K=5 & 273 & 8 & 97.07 & 0.003 & 8 & 97.07 & 0.001 & 3.00 \\
   &  & K=6 & 459 & 8 & 98.26 & 0.004 & 8 & 98.26 & 0.003 & 1.33 \\
   &  & K=8 & 1059 & 8 & 99.24 & 0.006 & 8 & 99.24 & 0.005 & 1.20 \\
   & N=4 & K=2 & 61 & 10 & 83.61 & 0.001 & 10 & 83.61 & 0.001 & 1.00 \\
   &  & K=3 & 274 & 10 & 96.35 & 0.003 & 10 & 96.35 & 0.002 & 1.50 \\
   &  & K=4 & 812 & 10 & 98.77 & 0.004 & 10 & 98.77 & 0.003 & 1.33 \\
   &  & K=5 & 1933 & 10 & 99.48 & 0.006 & 10 & 99.48 & 0.006 & 1.00 \\
   &  & K=6 & 3962 & 10 & 99.75 & 0.010 & 10 & 99.75 & 0.009 & 1.11 \\
   &  & K=8 & 12400 & 10 & 99.92 & 0.017 & 10 & 99.92 & 0.015 & 1.13 \\
   & N=5 & K=2 & 141 & 12 & 91.49 & 0.001 & 12 & 91.49 & 0.002 & 0.50 \\
   &  & K=3 & 1050 & 12 & 98.86 & 0.007 & 12 & 98.86 & 0.005 & 1.40 \\
   &  & K=4 & 4244 & 12 & 99.72 & 0.015 & 12 & 99.72 & 0.010 & 1.50 \\
   &  & K=5 & 12709 & 12 & 99.91 & 0.018 & 12 & 99.91 & 0.017 & 1.06 \\
   &  & K=6 & 31383 & 12 & 99.96 & 0.034 & 12 & 99.96 & 0.025 & 1.36 \\
   &  & K=8 & 131521 & 12 & 99.99 & 0.113 & 12 & 99.99 & 0.074 & 1.53 \\
   & N=6 & K=2 & 335 & 14 & 95.82 & 0.004 & 14 & 95.82 & 0.003 & 1.33 \\
   &  & K=3 & 3759 & 14 & 99.63 & 0.015 & 14 & 99.63 & 0.009 & 1.67 \\
   &  & K=4 & 20884 & 14 & 99.93 & 0.025 & 14 & 99.93 & 0.021 & 1.19 \\
   &  & K=5 & 78784 & 14 & 99.98 & 0.112 & 14 & 99.98 & 0.052 & 2.15 \\
   &  & K=6 & 234210 & 14 & 99.99 & 0.126 & 14 & 99.99 & 0.135 & 0.93 \\
   &  & K=8 & 1312334 & 14 & 100.00 & 0.427 & 14 & 100.00 & 0.347 & 1.23 \\
  leader-sync & N=3 & K=2 & 26 & 8 & 69.23 & 0.002 & 8 & 69.23 & 0.002 & 1.00 \\
  (time) &  & K=3 & 69 & 8 & 88.41 & 0.002 & 8 & 88.41 & 0.001 & 2.00 \\
   &  & K=4 & 147 & 8 & 94.56 & 0.002 & 8 & 94.56 & 0.003 & 0.67 \\
   &  & K=5 & 273 & 8 & 97.07 & 0.004 & 8 & 97.07 & 0.003 & 1.33 \\
   &  & K=6 & 459 & 8 & 98.26 & 0.006 & 8 & 98.26 & 0.003 & 2.00 \\
   &  & K=8 & 1059 & 8 & 99.24 & 0.005 & 8 & 99.24 & 0.005 & 1.00 \\
   & N=4 & K=2 & 61 & 10 & 83.61 & 0.002 & 10 & 83.61 & 0.002 & 1.00 \\
   &  & K=3 & 274 & 10 & 96.35 & 0.004 & 10 & 96.35 & 0.003 & 1.33 \\
   &  & K=4 & 812 & 10 & 98.77 & 0.005 & 10 & 98.77 & 0.005 & 1.00 \\
   &  & K=5 & 1933 & 10 & 99.48 & 0.007 & 10 & 99.48 & 0.007 & 1.00 \\
   &  & K=6 & 3962 & 10 & 99.75 & 0.015 & 10 & 99.75 & 0.010 & 1.50 \\
   &  & K=8 & 12400 & 10 & 99.92 & 0.022 & 10 & 99.92 & 0.018 & 1.22 \\
   & N=5 & K=2 & 141 & 12 & 91.49 & 0.003 & 12 & 91.49 & 0.002 & 1.50 \\
   &  & K=3 & 1050 & 12 & 98.86 & 0.006 & 12 & 98.86 & 0.005 & 1.20 \\
   &  & K=4 & 4244 & 12 & 99.72 & 0.015 & 12 & 99.72 & 0.015 & 1.00 \\
   &  & K=5 & 12709 & 12 & 99.91 & 0.024 & 12 & 99.91 & 0.020 & 1.20 \\
   &  & K=6 & 31383 & 12 & 99.96 & 0.042 & 12 & 99.96 & 0.032 & 1.31 \\
   &  & K=8 & 131521 & 12 & 99.99 & 0.158 & 12 & 99.99 & 0.085 & 1.86 \\
   & N=6 & K=2 & 335 & 14 & 95.82 & 0.006 & 14 & 95.82 & 0.004 & 1.50 \\
   &  & K=3 & 3759 & 14 & 99.63 & 0.015 & 14 & 99.63 & 0.014 & 1.07 \\
   &  & K=4 & 20884 & 14 & 99.93 & 0.029 & 14 & 99.93 & 0.028 & 1.04 \\
   &  & K=5 & 78784 & 14 & 99.98 & 0.113 & 14 & 99.98 & 0.060 & 1.88 \\
   &  & K=6 & 234210 & 14 & 99.99 & 0.137 & 14 & 99.99 & 0.150 & 0.91 \\
   &  & K=8 & 1312334 & 14 & 100.00 & 0.444 & 14 & 100.00 & 0.375 & 1.18 \\
  \midrule
  nand (reliable) & N=20 & K=1 & 78332 & 39982 & 48.96 & 0.704 & 49047 & 37.39 & 0.682 & 1.03 \\
   &  & K=2 & 154942 & 102012 & 34.16 & 1.998 & 125657 & 18.90 & 1.552 & 1.29 \\
   &  & K=3 & 231552 & 164042 & 29.16 & 3.874 & 202267 & 12.65 & 3.255 & 1.19 \\
   &  & K=4 & 308162 & 226072 & 26.64 & 6.227 & 278877 & 9.50 & 5.792 & 1.08 \\
   & N=40 & K=1 & 1004862 & 559699 & 44.30 & 30.777 & 629473 & 37.36 & 18.398 & 1.67 \\
   &  & K=2 & 2003082 & 1443559 & 27.93 & 114.795 & 1627693 & 18.74 & 65.469 & 1.75 \\
   &  & K=3 & 3001302 & 2327419 & 22.45 & 202.328 & 2625913 & 12.51 & 139.294 & 1.45 \\
   &  & K=4 & 3999522 & 3211279 & 19.71 & 347.300 & 3624133 & 9.39 & 245.758 & 1.41 \\
   & N=60 & K=1 & 4717592 & 2726958 & 42.20 & 335.940 & 2959373 & 37.27 & 110.240 & 3.05 \\
   &  & K=2 & 9420422 & 7046448 & 25.20 & 996.621 & 7662203 & 18.66 & 459.294 & 2.17 \\
   &  & K=3 & 14123252 & 11365938 & 19.52 & 1999.478 & 12365033 & 12.45 & 771.213 & 2.59 \\
  \midrule
  oscillators & T=6 & N=3 & 57 & 57 & 0.00 & 0.001 & 57 & 0.00 & 0.001 & 1.00 \\
  (power) &  & N=6 & 463 & 463 & 0.00 & 0.004 & 463 & 0.00 & 0.002 & 2.00 \\
   & T=8 & N=6 & 1717 & 1717 & 0.00 & 0.010 & 1717 & 0.00 & 0.004 & 2.50 \\
   &  & N=8 & 6436 & 6436 & 0.00 & 0.034 & 6436 & 0.00 & 0.010 & 3.40 \\
   & T=10 & N=6 & 5006 & 5006 & 0.00 & 0.030 & 5006 & 0.00 & 0.008 & 3.75 \\
   &  & N=7 & 11441 & 11441 & 0.00 & 0.045 & 11441 & 0.00 & 0.013 & 3.46 \\
   &  & N=8 & 24311 & 24311 & 0.00 & 0.089 & 24311 & 0.00 & 0.020 & 4.45 \\
  oscillators & T=6 & N=3 & 57 & 53 & 7.02 & 0.002 & 53 & 7.02 & 0.002 & 1.00 \\
  (time) &  & N=6 & 463 & 460 & 0.65 & 0.009 & 460 & 0.65 & 0.005 & 1.80 \\
   & T=8 & N=6 & 1717 & 1710 & 0.41 & 0.035 & 1710 & 0.41 & 0.021 & 1.67 \\
   &  & N=8 & 6436 & 6429 & 0.11 & 0.090 & 6429 & 0.11 & 0.058 & 1.55 \\
   & T=10 & N=6 & 5006 & 4999 & 0.14 & 0.105 & 4999 & 0.14 & 0.059 & 1.78 \\
   &  & N=7 & 11441 & 11434 & 0.06 & 0.148 & 11434 & 0.06 & 0.095 & 1.56 \\
   &  & N=8 & 24311 & 24304 & 0.03 & 0.347 & 24304 & 0.03 & 0.219 & 1.58 \\
  \bottomrule
\end{longtable}
\end{center}
\end{landscape}

\end{document}